\newcommand{\cmark}{{\color{ForestGreen}\ding{51}}}%
\newcommand{\xmark}{{\color{BrickRed}\ding{55}}}
\g@addto@macro\bfseries{\boldmath}
\newcommand{\MYhref}[3][blue]{\href{#2}{\color{#1}{#3}}}%
\renewcommand\onecolumngrid{
\do@columngrid{one}{\@ne}
\def\set@footnotewidth{\onecolumngrid}
\def\footnoterule{\kern-6pt\hrule width 1.5in\kern6pt}%
}
\renewcommand\twocolumngrid{
        \def\footnoterule{
        \dimen@\skip\footins\divide\dimen@\thr@@
        \kern-\dimen@\hrule width.5in\kern\dimen@}
        \do@columngrid{mlt}{\tw@}
}
\theoremstyle{definition}
\newtheorem{proposition}{Proposition}
\newtheorem{definition}{Definition}
\newtheorem{theorem}{Theorem}
\newtheorem{corollary}{Corollary}
\newtheorem*{remark}{Remark}
\DeclareMathOperator{\poly}{poly}
\DeclareMathOperator*{\argmin}{arg\,min}
\newcommand{\II}{\mathds{1}}  
\newcommand{\CC}{\mathds{C}}  
\newcommand{\inner}[2]{\langle #1 \vert #2 \rangle}
\newcommand{\proj}[1]{\ket{#1}\!\!\bra{#1}}
\newcommand{\abs}[1]{\lvert #1 \rvert}
\newcommand{\norm}[1]{\lVert #1 \rVert}
\newcommand{\Abs}[1]{\left\lvert #1 \right\rvert}
\newcommand{\Norm}[1]{\left\lVert #1 \right\rVert}
\newcommand{\ketbra}[2]{\ket{#1}\!\!\bra{#2}}
\newcommand{\Tr}[2]{\operatorname{Tr}_{#1} \left[ #2 \right]}
\newcommand{\superket}[1]{\left| #1 \right\rrangle}
\newcommand{\superbra}[1]{\left\llangle #1 \right|}
\newcommand{\superketbra}[2]{\superket{#1}\!\!\superbra{#2}}
\newcommand{\choi}{\mathrm{C}}
\newcommand{\approxrep}[1]{\vphantom{#1}^\filledtriangledown \mspace{-5mu} #1}
\newcommand{\preapproxrep}[1]{\vphantom{#1}^\smalltriangledown \mspace{-5mu} #1}
\DeclareMathOperator{\ii}{i}
\DeclareMathOperator{\ee}{e}
\definecolor{checkgreen}{rgb}{0.3, 0.8, 0.3}
\definecolor{newgreen}{rgb}{0.1, 0.6, 0.05}
\definecolor{warnred}{rgb}{0.8, 0.15, 0.1}
\definecolor{sorange}{rgb}{0.8, 0.3, 0}
\definecolor{forestgreen}{rgb}{0.0, 0.7, 0.0}
\definecolor{yellows}{RGB}{255, 190, 0}
\definecolor{oranges}{RGB}{248, 98, 0}
\definecolor{reds}{RGB}{233, 0, 34}
\definecolor{purple1s}{RGB}{90, 43, 144}
\definecolor{purple2s}{RGB}{171, 0, 139}
\definecolor{pink1s}{RGB}{205, 0, 101}
\definecolor{pink2s}{RGB}{255, 0, 102}
\definecolor{blue1s}{RGB}{0, 78, 139}
\definecolor{blue2s}{RGB}{0, 160, 216}
\definecolor{blue3s}{RGB}{52, 205, 255}
\definecolor{green1s}{RGB}{0, 140, 56}
\definecolor{green2s}{RGB}{101, 205, 0}
\newcommand{\tikzineq}[1]{\vcenter{\hbox{\begin{tikzpicture} #1 \end{tikzpicture}}}}
\begin{document}
\title{Classical simulation of noisy quantum circuits via locally entanglement-optimal unravelings}

\author{\MYhref[black]{https://orcid.org/0000-0002-9409-193X}{Simon~Cichy}}
\email{simon.cichy@fu-berlin.de}
\affiliation{Dahlem Center for Complex Quantum Systems, Freie Universit\"{a}t Berlin, 14195 Berlin, Germany}
	
\author{\MYhref[black]{https://orcid.org/0000-0002-8706-1732}{Paul~K.~Faehrmann}}
\affiliation{Dahlem Center 
for Complex Quantum Systems, Freie Universit\"{a}t Berlin, 14195 Berlin, Germany}

\author{\MYhref[black]{https://orcid.org/0000-0003-1626-2761}{Lennart~Bittel}}
\affiliation{Dahlem Center for Complex Quantum Systems, Freie Universit\"{a}t Berlin, 14195 Berlin, Germany}

\author{\MYhref[black]{https://orcid.org/0000-0003-3033-1292}{Jens~Eisert}}
\email{jense@zedat.fu-berlin.de}
\affiliation{Dahlem Center for Complex Quantum Systems, Freie Universit\"{a}t Berlin, 14195 Berlin, Germany}
\affiliation{Helmholtz-Zentrum Berlin f\"{u}r Materialien und Energie, Hahn-Meitner-Platz 1, 14109 Berlin, Germany}
\affiliation{Fraunhofer Heinrich Hertz Institute, 10587 Berlin, Germany}

\author{\MYhref[black]{https://orcid.org/0000-0003-3782-7602}{Hakop~Pashayan}}
\email{hakop.pashayan@foxconn.com}
\affiliation{Dahlem Center for Complex Quantum Systems, 
Freie Universit\"{a}t Berlin, 14195 
Berlin, Germany}
\affiliation{Hon Hai (Foxconn) Research Institute, Taipei
114699, Taiwan}

\date{\today}

\begin{abstract}
Classical simulations of noisy quantum circuits are instrumental to our understanding of the behavior of real-world quantum systems and the identification of regimes where one expects quantum advantage. In this work, 
we present a highly parallelizable tensor-network-based classical algorithm -- equipped with rigorous accuracy guarantees -- for simulating $n$-qubit quantum circuits with arbitrary single-qubit noise. Our algorithm represents the state of a noisy quantum system by a particular ensemble of matrix product states from which we stochastically sample. 
Each pure state evolved under a single qubit noise process is then represented by the ensemble of states that achieves the minimal average entanglement (the entanglement of formation) between the noisy qubit and the remainder.
This approach lets us use a more compact representation of the quantum state for a given accuracy requirement and noise level. For a given maximum bond dimension $\chi$ and circuit, our algorithm comes with an upper bound on the simulation error, runs in $\poly(n,\chi)$-time and improves upon related prior work (1) in scope: by extending analytic methods from the three commonly considered noise models to general single qubit noise (2) in performance: 
by deriving an exact, closed-form solution to the local entanglement minimization problem -- previously approached via variational heuristics -- thereby guaranteeing local optimality without numerical optimization overhead,
and (3) in conceptual contribution: by showing that the fixed unraveling used in prior work becomes equivalent to our choice of unraveling in the special case of depolarizing, dephasing and amplitude damping noise acting on a maximally entangled state.
\end{abstract}
\maketitle

\section{Introduction}

Quantum computers have the potential to outperform classical computers in certain structured computational problems. Significant effort has been directed towards delineating the boundary between those that admit an efficient classical 
solution and those that can be used to demonstrate quantum advantage. Some important results in this direction have focused on the specific structure of a problem necessary for achieving a quantum advantage~\cite{AaronsonAmbainis2014, bendavid2020symmetric,childs2020graph,bendavid_childs2020}.
Others have shown the hardness of efficient classical simulation~\cite{aaronsonComputationalComplexityLinear2013, Bremner2016IQP_hardness, bouland2018ConjugatedCliffordsHardness, pashayanEstimationQuantumProbabilities2020, Yoganathan2019, Oszmaniec2022FermionSamplingHardness, SupremacyReview} by relying on plausible complexity-theoretic assumptions and an anti-concentration property of certain classes of quantum circuits
\cite{SupremacyReview}.
However, the bulk of scientific effort has focused on 
the \emph{classical simulation of quantum computation}. 

Classical simulation has served as an invaluable tool for the prediction and study of quantum phenomena as well as the precise exploration of the quantum advantage boundary.
Efficient classical simulation algorithms for stabilizer circuits~\cite{gottesmanHeisenbergRepresentationQuantum1998, aaronsonImprovedSimulationStabilizer2004}, matchgate circuits~\cite{Valiant2001,Terhal2002Fermion} 
(also known as non-interacting fermions) and 
non-entangling circuits identify necessary resources for quantum computational speedup. Extensions of these classical simulation techniques beyond the polynomial runtime regime have established operationally meaningful quantifications of these resources. Prominent examples include the stabilizer rank~\cite{BravyiGosset2016Improved,Qassim2021improvedupperbounds}, the stabilizer extent~\cite{bravyiSimulationQuantumCircuits2019,Heimendahl2021stabilizerextentis} which has been generalized to the mixed state setting~\cite{seddonQuantifyingQuantumSpeedups2021}, extended to measures of non-matchgateness~\cite{ReardonSmith2024improvedsimulation,Dias2024classicalsimulation,reardonsmith2024fermionic} and measures related to entanglement such as a \emph{matrix 
product state's} (MPS) \cite{MPSReps,vidalEfficientClassicalSimulation2003, vidalEfficientSimulationOneDimensional2004} bond dimension
in a \emph{tensor network description} \cite{Orus_2019,RevModPhys.93.045003,AreaReview}.
As the quantum computational task moves away from the islands of classically efficiently simulable quantum circuits, these resources quantifiably bound the exponential runtime growth of classical simulators, breaking the direct exponential dependence on the number of qubits. 

Particularly relevant to the recent thrust of quantum technological development is the deep and fascinating nuance introduced by the presence 
of \emph{noise} in quantum computation. 
At sufficiently low noise levels, in principle, quantum error correction addresses the challenges posed by the inevitable presence of quantum noise, but it introduces significant overheads~\cite{Roads}. 
Without error correction, the stochastic nature of noise can diminish quantum computational resources such as non-stabilizerness (also known as magic)~\cite{Virmani2005, seddonQuantifyingQuantumSpeedups2021, garca2024pauli} and entanglement~\cite{Virmani2005}.

Growing evidence suggests that, at 
logarithmic depth or beyond, typical quantum circuits in certain families that are hard to efficiently classically simulate become efficiently classically simulable in the presence of local depolarizing noise 
\cite{francaLimitationsOptimizationAlgorithms2021, deshpandeTightBoundsConvergence2022, aharonovPolynomialtimeClassicalAlgorithm2022, schusterPolynomialtimeClassicalAlgorithm2024, meleNoiseinducedShallowCircuits2024a, nohEfficientClassicalSimulation2020}, so that an eventual quantum advantage would disappear. However, these results are limited in scope, both in the types of circuits and noise that they can accommodate. Indeed, they
apply to typical random quantum circuits, limiting the insight one can draw about what quantum circuit properties computationally challenge a classical approach. Also, beyond the standard depolarizing noise model, even less is known about how noise affects the hardness of classical simulation.

Generalization to the rich landscape of local noise models is crucial for transforming a fragmented patchwork of techniques into a unified and principled theory that delineates the classical-quantum boundary.
First, this landscape exhibits stark phenomenological differences.
Indeed, while local depolarizing noise appears to aid classical simulability, local coherent noise can hinder it\footnote{For example, Ref.~\cite{bennink2017} has tackled the non-trivial task of classically simulating stabilizer circuits in the presence of \emph{low levels} of single qubit coherent noise.}. Thus, in the absence of this generalization, a quantitative or even qualitative characterization of classical simulability in the presence of noise is incomplete. Second,  
such generalization can be invaluable in connecting the well-understood pockets of parameter space and facilitating the transfer of tools and observations between previously disconnected settings. 

In this work, we focus on and build on matrix product state-based unraveling simulators from an entirely new perspective. This is a classical simulation technique that is highly parallelizable and amenable to large-scale simulation~\cite{isakovSimulationsQuantumCircuits2021, Hintermueller}. Importantly, this approach makes explicit use of the fact that noise can decay entanglement, making it particularly well suited to simulation via tensor network techniques.

This work, and MPS-based unraveling simulators more generally, focus on utilizing the presence of noise to find a low-entanglement representation of the quantum state. This approach is underpinned by the observation that, in contrast to pure quantum states, a mixed quantum state, $\rho$, can be \emph{non-uniquely} decomposed into a convex combination of pure quantum states $\rho=\sum_i p_i \ketbra{\phi_i}{\phi_i}$ also known as an unraveling. In some sense, this degree of freedom in the decomposition of $\rho$ `grows' as the purity of $\rho$ decreases, and this can be utilized to reduce entanglement 
\footnote{An extreme but illustrative example is to note that the maximally mixed state on two qubits can be decomposed as a uniform mixture of four Bell states (each of which has maximal entanglement); however, it can also be decomposed as a uniform mixture of computational basis states (each of which has zero entanglement).}. However, this beautiful idea brings with it significant implementation challenges.  
In an $n$-qubit system, it is not obvious how to choose a suitable entanglement-based cost function that both lends itself to optimization and connects to meaningful properties of a classical simulator, such as runtime or simulation accuracy. This is exacerbated by the vast degree of freedom in the mixed state decompositions. Prior works have predominantly dealt with this complexity either by simplifying the system~\cite{vovkEntanglementOptimalTrajectoriesManyBody2022, kolodrubetzOptimalityLindbladUnfolding2023, chengEfficientSamplingNoisy2023, chenOptimizedTrajectoryUnraveling2024} or resorting to heuristic optimization algorithms~\cite{chenOptimizedTrajectoryUnraveling2024, vovkQuantumTrajectoryEntanglement2024, darabanNonunitarityMaximizingUnraveling2025}. These simplifications involve restricting consideration to specific noise models combined with simpler systems, such as translationally invariant systems or random circuits.
The alternative approach also suffers from significant drawbacks, as heuristic optimization algorithms such as gradient descent can be slow and lack a proof showing that local minima can be avoided. 

Our work substantially builds upon many of the significant benefits that MPS-based unraveling simulators already possess.
This leads to a local optimization problem: minimizing the entanglement between the target qubit for the noise channel and the rest of the system. While Ref.~\cite{chenOptimizedTrajectoryUnraveling2024} identified this objective and approached it using variational heuristics, we tackle the problem analytically. 
We make the crucial observation that an $n$-qubit mixed state generated in this way can be mapped, via an isometry, to a $2$-qubit mixed state where, by utilizing Ref.~\cite{woottersEntanglementFormationArbitrary1998}, we can compute closed-form solutions that easily map back to the provably entanglement-optimal unraveling of the $n$-qubit system. 
With this approach, we achieve two highly desirable and competing goals. 
First, and most crucially, we attain a highly flexible simulation algorithm where we do not need to compromise the generality of the noise model or impose restrictions on the class of circuits (except one-dimensional spatial locality). 
Second, we attain closed-form analytical solutions that not only make performance improvements upon the prior art but also lend themselves to analytical investigation that can aid future progress on this topic. 
We demonstrate this by using the conceptual insights gained from our analysis to connect to and generalize previous results for random circuits.

Our work makes a conceptually enlightening connection to other important works on this topic. Our entanglement-optimal mixed state decompositions depend not only on the single qubit noise channel and its target location but also on the state of the $n$-qubit system immediately prior to the action of the noise channel. We show that when this state is fixed to be Haar random, then our unraveling reduces to that of prior works ~\cite{chenOptimizedTrajectoryUnraveling2024, chengEfficientSamplingNoisy2023}. These works focus on random circuits and specific single-qubit noise models and claim optimality (with respect to the R\'enyi 2-quasientropy) in this setting. Additionally, we present another conceptually meaningful cost function that is minimized by our unraveling applied to Haar random states.
In a precise sense, this new cost function minimizes the unitarity of the Kraus decomposition, favoring Kraus operators that act like rank-1 operators that disentangle the target qubit from the rest of the system. This generalizes fixed unraveling strategies for random circuits to arbitrary single-qubit noise while unifying two conceptually very distinct approaches.

Our final significant contribution is to provide circuit and bond dimension-specific upper bounds on the trace distance error between the final state of the true system and that of any given MPS-based unraveling simulator. Our error bounds are not generic. Rather, they are sensitive to the MPS-based unraveling simulators and apply to the previous body of work on the topic~\cite{vovkEntanglementOptimalTrajectoriesManyBody2022, kolodrubetzOptimalityLindbladUnfolding2023, chengEfficientSamplingNoisy2023, vovkQuantumTrajectoryEntanglement2024, chenOptimizedTrajectoryUnraveling2024, darabanNonunitarityMaximizingUnraveling2025}. 
This contribution addresses a shortcoming of all MPS-based unraveling simulators and may be of potential interest beyond this setting.

This work is structured as follows.
In Section~\ref{sec:preliminaries}, we introduce key theoretical background covering unravelings, entanglement measures and matrix product states.
In Section~\ref{sec:unraveling}, we focus on performance guarantees. There, we characterize a broad class of simulators, state our theorem bounding the errors committed by such algorithms and connect this to bounds on errors in sampling and expectation estimate type classical simulation tasks.
In Section~\ref{sec:our_algorithm}, we present our locally entanglement-optimal unraveling algorithm and compare it to relevant prior work.
In Section~\ref{sec:restricted_settings}, we consider our unraveling in restricted settings, such as noisy random circuits and Lindbladian systems. There, we also discuss our ``least unitary'' unraveling.
In Section~\ref{sec:performance}, we present numerical results comparing our MPS-based unraveling simulator to other similar work. We also touch on conditions for efficient classical simulability and discuss other simulation methods (including other tensor network methods and Pauli propagation methods) in comparison to MPS-based unraveling simulators.
In Section~\ref{sec:limitations}, we discuss some limitations of our work and conclude in Section~\ref{sec:conclusion}.

\section{Preliminaries}
\label{sec:preliminaries}

In this section, we recap a few notions that help understand the principles of quantum trajectory unraveling.
We introduce the tool set that is necessary to understand how to unravel open system dynamics into trajectories.

\subsubsection*{Mixed state decompositions}

    A quantum state which 
    is not pure can always be written as a convex combination of normalized pure states 
    \begin{equation}
\label{eq:def_ensemble_decompositions}
        \rho = \sum_{i=1}^{r} p_i \ketbra{\phi_i}{\phi_i} \, .
    \end{equation}
    We call this an ensemble decomposition of the state $\rho$. 
    A familiar example is the eigendecomposition, which also satisfies the orthonormality condition $\braket{\phi_i | \phi_j}=\delta_{i,j}$. 
    However, a richer family of decompositions can be found when an orthogonality condition is not imposed.
    These decompositions are related by a unitary. 
    Given the eigendecomposition 
    $\{\lambda_i, \ket{v_i}\}_{i=1}^{r}$ of $\rho$, by choosing any $r'\in \{r, r+1,\ldots\}$ and unitary $U\in \mathbb{U}(r')$, one can construct another decomposition$\{p_{i'}, \ket{\phi_{i'}}\}_{i'=1}^{r'}$ of $\rho$ defined by the below equation 
    \begin{equation}
    \label{eq:freedom_ensembles_from_eigen}
        \sqrt{p_{i'}} \ket{\phi_{i'}} = \sum_{i=1}^{r} U_{i,i'} \sqrt{\lambda_{i}} \ket{v_{i}} \quad \forall i' \in \{ 1, \dots, r' \} \, .
    \end{equation}
    This also implies that any two decompositions of $\rho$ are also related by a unitary in a similar way.

\subsubsection*{Quantum channel descriptions}

Any completely positive map (also called a quantum channel) has a Kraus representation. 
That is a set of operators $\{K_i\}_{i=1}^{r}$ such that
\begin{equation}
\label{eq:def_kraus}
    \mathcal{N} : \rho \mapsto \sum_{i=1}^{r} K_i \rho K_i^\dagger.
\end{equation}
This is trace-preserving if and only if $\sum_{i=1}^{r} K_i^\dagger K_i = \II$.
Such a Kraus decomposition induces an ensemble decomposition of the output state.
Say we apply a quantum channel $\mathcal{N}$ to some state vector $\ket{\psi}$. 
Then the outcome is given by 
\begin{equation}
\label{eq:Kraus_to_ensemble}
     \mathcal{N}[\ketbra{\psi}{\psi}] 
        = \sum_{i=1}^{r} K_i \ketbra{\psi}{\psi} K_i^\dagger
        = \sum_{i=1}^{r} p_i \ketbra{\phi_i}{\phi_i},
\end{equation}
    where $\{p_i\}_i$ forms a valid probability distribution with $p_i = \| K_i \ket{\psi} \|_2^2 = \bra{\psi} K_i^\dagger K_i \ket{\psi}$ and the normalized state vectors $\ket{\phi_i}$ are $\ket{\phi_i} = 
    p_i^{-1/2}
    K_i \ket{\psi}$.
    Those are the \emph{pure-state trajectories} when unraveling the noisy evolution, and they are not unique. 
    Like ensemble decompositions, Kraus decompositions 
    have a unitary (or rather, isometric) degree of freedom. 
    Given two decompositions 
    $\{K_i\}_{i=1}^{r}$ and $\{K'_{i'}\}_{i'=1}^{r'}$ of the same channel, there exists a unitary such that 
    \begin{equation}
    \label{eq:unitarity_kraus}
        K'_{i'} = \sum_{i=1}^{r} U_{i,i'} K_{i} \quad \forall i' \in \{ 1, \dots, r' \} \, .
    \end{equation}
    This freedom, which we will also refer to as the choice of unraveling, is essentially synonymous with the steering of quantum channels.
    
    \begin{definition}[Unraveling]
    Given a channel $\mathcal{N}$ and an initial state vector $\ket{\psi}$, we call $\{p_i,\ket{\phi_i}\}_i$ an unraveling of the evolution of $\ket{\psi}$ under $\mathcal{N}$ when Eq.~\eqref{eq:Kraus_to_ensemble} is satisfied. 
    \end{definition}
    We will employ this freedom in the choice of unraveling to aid classical simulation by reducing the entanglement. The choice of Kraus decomposition in Eq.~\eqref{eq:Kraus_to_ensemble} that minimizes entanglement will generally depend on the state $\ket{\psi}$.
    By slight abuse of nomenclature, we sometimes refer to any specific Kraus decomposition of a channel $\mathcal{N}$ as an unraveling of this channel.

    \begin{figure}
        \centering
        \includegraphics[scale=1]{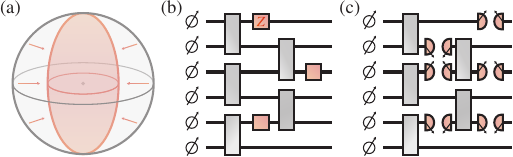}
        \caption{Effect of single-qubit dephasing noise on the Bloch sphere (a) and two interpretations of the unraveling of single-qubit dephasing noise into pure-state trajectories. Circuit corresponding to the unraveling of one trajectory in the (b) mixed-unitary picture and (c) projective picture. The trajectories (and their entanglement) have very 
        different behaviors in both.}
        \label{fig:dephasing_unraveling}
    \end{figure}
    
    Consider the example of the dephasing channel.
    It is often described by its {mixed-unitary} representation 
    \begin{equation}
        \mathcal{N}[\rho] = \left(1-\frac{p}{2}\right) \rho + \frac{p}{2}Z \rho Z 
    \end{equation}
    with Kraus operators $\{K_i\}_{i=1}^{2} =  \{ \sqrt{1-{p}/{2}} \II, \sqrt{{p}/{2}} Z \}$
    and noise rate $p \in [0,1]$.
    This is a probabilistic application of a unitary gate.
    For a given input $\ket{\psi}$, the mixed-unitary unraveling results in the two 
    state vectors $\ket{\psi}$ and $Z \ket{\psi}$ with probabilities $1-\frac{p}{2}$ and $\frac{p}{2}$ respectively.
    The same channel can also be written in terms of projectors on the computational basis
    \begin{equation}
        \mathcal{N}[\rho] = \left(1-p\right) \rho + p \ketbra{0}{0} \rho \ketbra{0}{0} + p \ketbra{1}{1} \rho \ketbra{1}{1}
    \end{equation}
    with $\{K'_i\}_{i=1}^{3} = \left\{ \sqrt{1-p} \II, \sqrt{p} \ketbra{0}{0}, \sqrt{p} \ketbra{1}{1} \right\}$.
    In this case, the outcome of the noise process is unraveled into $\{ \ket{\psi}, \ket{0}, \ket{1} \}$ with probabilities $\{ 1-p, p |\inner{0}{\psi}|^2, p |\inner{1}{\psi}|^2 \}$.
    This second decomposition can be obtained from the first through the unitary
    \begin{equation}
        \begin{pmatrix}
            K_1' \\ K_2' \\ K_3'
        \end{pmatrix} 
        = 
        \begin{pmatrix}
            \sqrt{\frac{2-2p}{2-p}} & 0 & \sqrt{\frac{p}{2-p}} \\ 
            \frac{\sqrt{p}}{\sqrt{2(2-p)}} & \frac{1}{\sqrt{2}} & \sqrt{\frac{1-p}{2-p}} \\ 
            \frac{\sqrt{p}}{\sqrt{2(2-p)}} & -\frac{1}{\sqrt{2}} & \sqrt{\frac{1-p}{2-p}} 
        \end{pmatrix}
        \begin{pmatrix}
            K_1 \\ K_2 \\ 0
        \end{pmatrix} \, .
    \end{equation}
    Note that the study of the specific unraveling of the dephasing channel into the $\ketbra{0}{0}$ and $\ketbra{1}{1}$ projectors, when applied to random circuits, is known as the field of measurement-induced phase transitions (see, e.g., the references within Refs.~\cite{kolodrubetzOptimalityLindbladUnfolding2023, chengEfficientSamplingNoisy2023, chenOptimizedTrajectoryUnraveling2024, vovkQuantumTrajectoryEntanglement2024}).
    A representation of the loss of phase information in the computational basis, as well as examples of circuits resulting from the unraveling of single trajectories, are represented in Fig.~\ref{fig:dephasing_unraveling}.
    Any other unraveling (i.e., any Kraus decomposition) can also be interpreted as the coupling of the target qubit to an auxiliary system followed by a measurement thereof. 
    
    In Appendix~\ref{sec:unitary_freedom}, we discuss more details and a simple example where the unitary degree of freedom of unravelings affects entanglement. We also demonstrate how the relations above are derived and discuss some implications.
    
\subsubsection*{Entanglement measures}

    Here we define notions of entanglement that are central to our work. 
    For pure states, given a partition $A:B$ of the system, it can be quantified by the 
    von Neumann entanglement entropy
    \begin{equation}
        \label{def:entanglement_entropy}
        E(\ket{\psi}_{A:B}) = -\Tr{}{\rho_A \log \rho_A}
        \quad \text{where} \quad
        \rho_A = \Tr{B}{\ketbra{\psi}{\psi}} \, .
    \end{equation}
    In the case of mixed states, for each decomposition of the form~\eqref{eq:def_ensemble_decompositions}, one can define an ensemble-averaged entanglement entropy
    \begin{equation}
    \label{def:ensemble-averaged_entanglement}
        E_{\mathrm{av}}(\{ p_i, \ket{\phi_i} \}) = \sum_{i} p_i E(\ket{\phi_i}_{A:B}) 
        \!\quad \text{with} \!\quad 
        \sum_i p_i \ketbra{\phi_i}{\phi_i} = \rho \, .
    \end{equation}
    These values can vary vastly, and the minimal reachable one over all decompositions is the \emph{entanglement of formation} of the state, defined as
    \begin{align}
    \label{def:entanglement_of_formation}
        E_{\mathrm{oF}}(\rho_{A:B}) & 
        = \inf_{\substack{\{p_i, \ket{\phi_i}\} : \\ \sum_i p_i \ketbra{\phi_i}{\phi_i}=\rho}} \sum_{i} p_i E(\ket{\phi_i}_{A:B}) \, 
    \end{align}
    as the convex hull of the reduced entropy function,
    also known as the \emph{convex roof extension} 
    of the entanglement entropy~\cite{uhlmannRoofsConvexity2010, bennettMixedstateEntanglementQuantum1996}.
    Note that although one can construct ensemble decompositions with an arbitrary number of elements, it is sufficient to consider decompositions with $r^2$ elements where $r$ is the rank of $\rho$~\cite{uhlmannEntropyOptimalDecompositions1998, uhlmannRoofsConvexity2010, audenaertVariationalCharacterizationsSeparability2001}.
    
\subsubsection*{Matrix product states}

    A \emph{matrix product state} (MPS)~\cite{vidalEfficientClassicalSimulation2003, MPSReps,vidalEfficientSimulationOneDimensional2004, schollwoeckDensitymatrixRenormalizationGroup2011} is a representation of 
    a quantum state  where the coefficient of each element (usually expressed in the computational basis) of the state vector is given by a sequence of matrix products as
    \begin{equation} \label{eq:def_MPS}
        \Ket{\psi} = 
         \sum_{\boldsymbol{\sigma}\in\{0,1\}^n} A^{[1]}_{\sigma_1} A^{[2]}_{\sigma_2} \dots A^{[n]}_{\sigma_n} \ket{\boldsymbol{\sigma}} \, .
    \end{equation}
    When viewing the basis label as an open index, an MPS is thus a linear tensor network where the contraction of the tensor train gives the full state vector.
    Note that all $A^{[k]}$ are rank-3 tensors 
    $A^{[k]} \in \CC^{2 \times d_k \times d_{k+1}}$
    except for $A^{[1]}$ and $A^{[n]}$ which are rank-2 tensors
    \footnote{Here the word rank is used for the number of indices of a tensor, so a scalar is a rank-$0$ tensor, a vector is rank $1$, and a matrix is rank $2$. It does not refer to concepts related to the rank of a matrix (maximal number of linearly independent columns).}. 
    In the worst case, the bond dimensions (dimensions of the contracted indices, i.e., dimensions of the matrices in Eq.~\eqref{eq:def_MPS}) can scale exponentially with $d_k \leq 2^{\min (k, n-k)}$. 
    Still, these constructions are commonly used when dealing with states with low entanglement~\cite{verstraeteMatrixProductStates2006, schuchEntropyScalingSimulability2008}.
    As one is most often given a fixed computational budget, the matrices are truncated. 
    The error committed in this procedure can be bounded using the singular values of the tensors.
    Indeed, the Schmidt decomposition of the state for any cut $1:k \ | \ k+1:n$ of the line can be obtained via an adequate singular value decomposition of one tensor.
    Then the truncation error when retaining the $d_k$ largest singular values (which are the Schmidt coefficients) $s_i$ (from $d_{k,{\max}} = 2^{\min (k, n-k)}$) is given by
    \begin{equation}
        \epsilon_k(d_k) = \sum_{i=d_k+1}^{d_{k,{\max}}} s_i^2 \, .
    \end{equation}
    Truncating all bonds for all bipartitions along the line, the $2$-norm difference between the true state and the approximation is upper-bounded by 
    \begin{equation}
    \label{eq:MPS_truncation_error}
        \| \ket{\psi} - \ket{\psi'} \|_2^2 \leq 2 \sum_{k=1}^{n-1} \epsilon_k(d_k) 
    \end{equation}
    where $\ket{\psi'}$ denotes the (sub-normalized) approximate version of $\ket{\psi}$ where each bond has been truncated (represented graphically in Fig.~\ref{fig:overview}, top right).

    For simplicity, let us consider that all bonds are truncated to a maximum bond dimension $\chi$.
    MPS then allow for the efficient computation in $\poly(n, \chi)$ time of many quantities of interest.
    These include inner products with other MPS of bond dimension $\leq \chi$, expectation values of MPO observables, index marginalization, and, importantly, computing the Schmidt decomposition across a cut in the MPS.
    The latter implies the computation of the entanglement entropy across any cut of the linear structure, as the squared singular values over the cut (after adequately shifting the orthogonality center) are the eigenvalues of the reduced density matrix obtained from tracing out the left or right part of the cut.  

    \begin{figure*}
        \centering
        \includegraphics[scale=1]{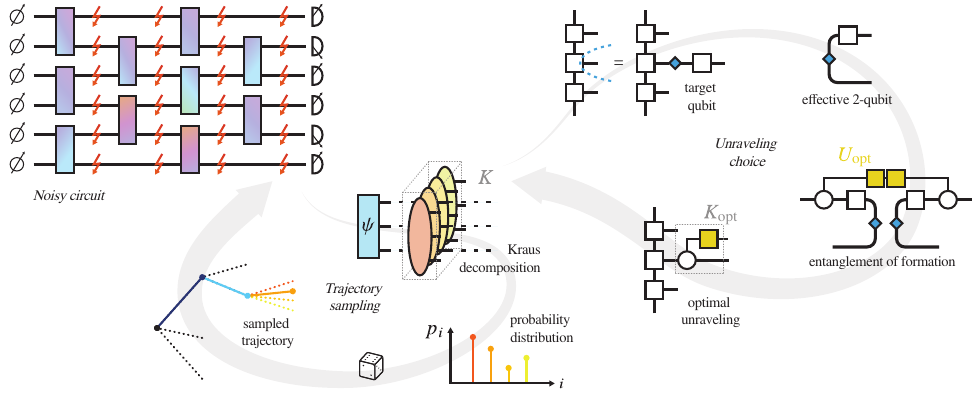}
        \caption{
        Graphical representation of the unraveling of quantum trajectories based on matrix product states, for noisy circuits
        Consider the setting of one-dimensional noisy circuits, composed of (two-)local unitary gates, followed by single-qubit noise channels (top left). 
        The unitary gates are dealt with by the standard MPS contraction. 
        The noise results in a statistical mixture of pure states. 
        This mixture is then unraveled to sample one trajectory, and the resulting state is truncated to the allowed bond dimension.
        The trajectory sampling results from computing the probabilities of each path and sampling one of them, thus applying the corresponding Kraus operator (bottom). 
        Before that, the unraveling is chosen. 
        In our case, we do so based on the computation of the entanglement of formation of the affected effective two-qubit state (top right). 
        }
        \label{fig:overview}
    \end{figure*}

\section{Simulation through unraveled quantum trajectory sampling}
\label{sec:unraveling}

    This section introduces a general class of MPS-based classical simulation algorithms that contains our proposed algorithm as a special case.
    We call these trajectory sampling simulators, but these connect to long-established techniques~\cite {zollerQuantumJumpsAtomic1987, dumMonteCarloSimulation1992, carmichaelOpenSystemsApproach1993, molmerMonteCarloWavefunction1993, molmerMonteCarloWavefunctions1996, plenioQuantumjumpApproachDissipative1998, daleyQuantumTrajectoriesOpen2014} also referred to as quantum jump methods or Monte Carlo wave function methods. Trajectory sampling simulators have a controlled runtime but an uncontrolled simulation error parameter $\epsilon \in [0,1]$. That is, the runtime of the algorithm can be directly controlled by an initial parameter choice that, in the special case of our algorithm, is the MPS bond dimension. In contrast, the simulation error is not directly controlled. In our algorithm, the error is indirectly dependent on the MPS bond dimension and approaches zero at or before the MPS bond dimension approaches its maximal value of $2^{n/2}$.
    
    Algorithms in this class perform a task we call {Truncated Trajectory Sampling} to varying degrees of error $\epsilon$. In this section, we state one of our main results. 
    We show that all such simulators can be used to compute an a posteriori probabilistic upper bound on this, circuit-dependent, simulation error. 
    Our error bound is based on an estimator computed during the simulation procedure, which has a mean that can vary significantly depending on the specific properties of the simulator. 
    Here, we also relate the output of this class of algorithms to more commonly used notions of classical simulation, such as simulators that sample from the output distribution of a quantum circuit and simulators that estimate expectation values with respect to a given observable.

    After defining the concept of truncated trajectory sampling in Section~\ref{sec:TTS_def} and describing an abstract class of algorithms performing such a simulation in Section~\ref{sec:TTS_alg}, we give our bounds on the errors committed by such algorithms in Section~\ref{sec:error_bounds} and show how it results in bounds on errors in simulation tasks like sampling from the Born distribution or estimating expectation values of observables, Section~\ref{sec:reductions}.

\subsection{Truncated Trajectory Sampling Task}
\label{sec:TTS_def}

    At a high level, trajectory sampling simulators break state evolution into layers and perform the following process at the level of each layer:
    \begin{enumerate}
        \item Use an appropriate data structure to classically represent the quantum state inputted into the computation layer.
        \item Apply the quantum channel associated with the computation layer.
        \item Decompose the (mixed) state into a convex combination of (usually pure) states.
        \item Sample a state from the decomposition. 
        \item Approximate the sampled state with a nearby state that has preferable memory and runtime properties, and use this approximation as input to the next computation layer.
    \end{enumerate} 
    Our work focuses on utilizing the MPS data structure as described in point 1 above. 
    Matrix product states are a powerful tool to simulate pure quantum states with little entanglement, in particular 
    in \emph{one spatial dimension} (1D).
    This provides a natural data structure for the trajectory state vectors.
    As discussed in Section~\ref{sec:preliminaries}, mixed states provide an isometric degree of freedom in the choice of unraveling or decomposition in point 3 above. 
    Since, typically, states with low entanglement can be approximated more accurately by a low bond dimension MPS, we employ this isometric degree of freedom to choose unravelings that lead to lower entanglement states in point 4 (and hence also incur small approximation errors when approximated by a low bond dimension MPS in point 5). 
    We now properly define what we mean by an algorithm performing truncated trajectory sampling.
    
Let $\mathcal{D}$ be a classical description of a noisy quantum circuit on $n$ qubits consisting of 
        \begin{itemize}
            \item a qubit ordering: A linear ordering of the $n$-qubits by labeling them from $1$ to $n$,
            \item an initial state: An initial  state vector $\ket{\psi^{(0)}}$ given as an MPS (e.g. $\ket{0}^{\otimes n}$) with respect to the given qubit ordering,
            \item and a string of transformations: 
            A sequence of efficiently representable and applicable quantum channels on $n$ qubits (for instance, each comprised of one or two qubit unitary gates on disjoint sets of qubits composed with single-qubit noise channels).
        \end{itemize}
We use $\rho$ to denote the final quantum state associated with $\mathcal{D}$. Given $\chi\in \mathbb{N}$ and a classical description $\mathcal{D}$, the goal is to sample from an ensemble of MPS states with bond dimension at most $\chi$ such that the expected state with respect to this ensemble approximates $\rho$ in trace distance.

 To be more precise, we first define a key concept that we then use to define {Truncated Trajectory Sampling}.

\begin{definition}[$(\epsilon, \chi)$-convex MPS]
    \label{def:convex-MPS}
    Let $\rho$ be an $n$-qubit quantum state with a specified qubit ordering. Let $\epsilon\geq 0$ and $\chi\in \mathds{N}$. Then we say that $\rho$ is an $(\epsilon, \chi)$-convex MPS with respect to its qubit ordering iff there exists a probability distribution $\{q_i\}_i$ and a corresponding set of pure quantum states $\{\ket{\psi_i}\}_i$ such that
    \begin{enumerate}
        \item for all $i$, $\ket{\psi_i}$ admits a classical description in the form of an MPS with bond dimension $\leq \chi$ with respect to the specified qubit ordering, and
        \item the ensemble $\{q_i,\ket{\psi_i}\}_{i}$ satisfies
            \begin{equation}\label{eq:trace_distance_UB}
            \| \rho - \sum_i q_i \ketbra{\psi_i}{\psi_i} \|_{\mathrm{Tr}} \leq \epsilon \, .
            \end{equation}
    \end{enumerate}
    We call any such ensemble, $\{q_i,\ket{\psi_i}\}_{i}$, an $(\epsilon, \chi)$-convex MPS witness for $\rho$.
\end{definition}

Note that each MPS representation has an inherent qubit ordering, and different choices of qubit ordering are possible. Further, the minimal bond dimension of the MPS representations of the same state with respect to different qubit orderings can be extremely different. For this reason, we specify the qubit ordering of the MPS representations.
In many settings, however, a natural ordering is already imposed by the system's geometry.
That is the case when studying one-dimensional quantum circuits or open systems with nearest-neighbor interactions on a line. 

\begin{definition}[Truncated Trajectory Sampling]
    \label{def:TTS}
    Given a noisy circuit description $\mathcal{D}$ on $n$ qubits (as specified above), let $\rho$ be the final quantum state. Given a maximal bond dimension $\chi$, we say that an algorithm 
    performs $\epsilon\textsc{-tts}(\mathcal D, \chi)$ 
    iff $\rho$ is an $(\epsilon, \chi)$-convex MPS, and for some $\{q_i,\ket{\psi_i}\}_{i}$ that is an $(\epsilon, \chi)$-convex MPS witness for $\rho$, with probability $q_i$, the algorithm returns a classical description of a state vector $\ket{\psi_i}$ in the form of an MPS with bond dimension at most $\chi$.
\end{definition}
Below we define a general class of algorithms that perform \emph{Truncated Trajectory Sampling} to varying degrees of accuracy $\epsilon \in [0,1]$. We discuss how these algorithms are constructed and how to bound $\epsilon$ for any given algorithm, $\chi$, and $\mathcal{D}$.

\subsection{Class of Truncated Trajectory Sampling Algorithms}
\label{sec:TTS_alg}

    We now describe a general class of classical simulation algorithms that perform truncated trajectory sampling. 
    The simulators in this class use three key subprocedures, which we refer to as the Layering, Kraus Tree, and the Approximation subprocedures. 
    Although other implementation details and inputs may vary, conditioned on the details of these subprocedures, these algorithms all perform an equivalent high-level procedure, which we will describe shortly.
\begin{enumerate}
    \item Layering Subprocedure: This algorithm specifies how the circuit appearing in the circuit description $\mathcal{D}$ is to be broken up into a string of transformations (unitary gates or noisy processes) that will be applied, in order, to a quantum state representation.
    \item Kraus Tree Subprocedure: Each time a transformation is applied, this can be represented by one of (potentially) many Kraus decompositions\footnote{We note that unitary transformations are represented as a single-element Kraus decomposition.}. We will consider a simulation procedure where a Kraus element is sampled from a particular Kraus decomposition associated with each noisy transformation. After the first Kraus operator is sampled from the decomposition of the first noisy transformation, subsequent Kraus decompositions may depend on the sampled Kraus element. This defines a tree structure where:
    \begin{itemize}
        \item Each node is associated with an input and output state (for now, these can be treated identically; this distinction will become relevant when we introduce the approximation subprocedure), with the initial state of the system equal to the input state of the root node.
        \item Each edge from a node to one of its children is associated with exactly one Kraus element, where the choice of Kraus decomposition defines the set of all such edges. The latter can depend on the unique path from the root to the parent node. 
        The update of the state along an edge follows Eq.~\eqref{eq:Kraus_to_ensemble}.
        Given the output state of the parent node, $\ket{\psi}$, and a Kraus element $K_i$ associated with an edge to a child node, the probability of that edge is $p_i = \norm{K_i \ket{\psi}}_2$ and a state $ p_i^{-1/2} K\ket{\psi}$ is associated with the input state of the child node.
    \end{itemize} 
    \begin{figure}
        \centering
        \includegraphics[scale=1]{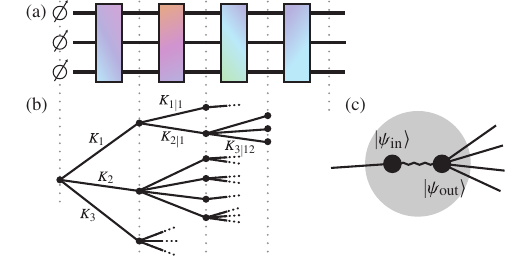}
        \caption{(a) Simple layered process of quantum channels (represented as colored boxes, although they may be non-unitary) and (b) the corresponding  Kraus tree, with Kraus elements of each edge possibly depending on the history of the path. 
        (c) Each node has an input and an output state, related by some approximation procedure (represented as a wiggly line).}
        \label{fig:kraus_tree}
    \end{figure}
    We refer to the tree labeled with Kraus operators, input/output states, and transition probabilities as the Kraus tree. Although this tree may have a number of nodes that is growing exponentially in the circuit gate count, it never needs to be explicitly constructed and can be fully specified by an algorithmic procedure that, given a node and the output state associated with that node, determines how to choose the Kraus decomposition for the next transformation to be applied. We refer to this as the Kraus Tree Subprocedure. As a simple example, one can consider Kraus Tree Subprocedures, where the Kraus decomposition is independent of the node and state, depending only on the next transformation to be applied, with each type of transformation in the circuit assigned a fixed Kraus decomposition. 

    \item Approximation Subprocedure: This subprocedure determines the mapping from the input state to the output state of each node of the Kraus tree. This subprocedure is intended to be used to simplify a state's representation by employing a suitable approximation. In our setting, where an MPS data structure is employed, a natural example is an approximation subprocedure that truncates the input state to one with a maximum permitted bond dimension.
\end{enumerate}

We consider the set of all algorithms that specify a Layering, Kraus Tree, and Approximation subprocedure and implement a simulation by sampling a complete path from the root to a leaf of the Kraus tree based on the transition probabilities assigned to the edges. 
These algorithms 
ultimately return the output state at the leaf of the sampled branch of the Kraus tree.
We call this one (truncated) \emph{trajectory}. 

We observe that if the approximation subprocedure is chosen to be the trivial identity map (i.e., the state is always represented exactly), then the expectation value over trajectories
is exactly the (mixed) state $\rho$ produced immediately prior to measurement. However, if non-trivial approximations are made at any point in the Kraus tree, the output state $\rho'$ will generally differ from $\rho$.  

In Appendix~\ref{proof:sampled_trajectories}, we show how to construct a probabilistic upper bound for the total trace distance between $\rho$ and $\rho'$ using only a polynomial number of samples from branches of the Kraus tree. 
At a high level, we sample a branch from the Kraus tree and consider the sum of trace distances between the input and output state of each node along the branch. 
We show that the expectation value of this random variable (and related constructions) acts as an upper bound on the trace distance between $\rho$ and $\rho'$. 
This permits us to estimate an upper bound of the expectation value as the mean of a number of samples of these random variables. 
To produce a probabilistic upper bound, we add a small buffer term to this mean to ensure that, with high probability, we are overestimating the expectation. 
This buffer term can be made smaller by increasing the number of samples used. 

In this work, we focus on the setting of the MPS data structure where the approximation subprocedure is used to impose an upper bound on the bond dimension of the output states at all nodes. With this additional restriction, the set of algorithms consistent with the above description perform \emph{Truncated Trajectory Sampling} to varying degrees of accuracy $\epsilon \in [0,1]$. Theorem~\ref{result:sampled_trajectories} adapts our probabilistic upper bound on the accuracy $\epsilon$ to the Truncated Trajectory Sampling setting. 
We emphasize that although our accuracy upper bounds apply to a broad class of algorithms, the size of the upper bound is strongly dependent on the specific details of the algorithm. 
Hence, sharper algorithms can achieve a smaller upper bound on the trace distance between $\rho$ and $\rho'$. 
In Section~\ref{sec:optimal_single_qubit_unraveling}, we present our algorithm, a special case of the set of algorithms described here. 
This provides a specific example of how unraveling can be used to reduce the accuracy parameter $\epsilon$ associated with Truncated Trajectory Sampling. 
But first, we introduce some more commonly used notions of classical simulation and discuss their connection to Truncated Trajectory Sampling.

\subsection{Upper bounds on the simulation errors}
\label{sec:error_bounds}

    In this section, we present our main contribution to the rigorous analysis of the errors committed in such quantum trajectory protocols (Theorem~\ref{result:sampled_trajectories}).
    Let us start by giving labels to the key quantities, and then state the main result.

    Each trajectory $\ket{\hat{\psi}_i}$ is drawn from the decomposition of some state $\sigma = \sum_i q_i \ketbra{\psi_i}{\psi_i}$. 
    This equates to saying 
    $\mathds{E} [ \ketbra{\hat{\psi}_i}{\hat{\psi}_i} ] = \sigma$.
    In the case of no approximations, one has $\sigma = \rho$, the exact state resulting from the noisy evolution, 
    while $\| \rho - \sigma \|_{\mathrm{Tr}} \geq 0$ otherwise.
    And although the error is hard to compute exactly efficiently, given $N$ trajectories sampled independently, 
    there is an estimator $\hat{\varepsilon}_N$ that can be computed in $\poly(n,\chi,N)$ time. 
    This is associated with the accuracy of trajectory sampling simulation algorithms.

    For a trajectory $\ket{\hat{\psi}_i}$ with $\varepsilon_{\mathrm{approx}}(\hat{\psi}_i, \ell)$ the approximation error at layer $\ell$, we can define a function of the errors at each layer bounding the cumulated error of the trajectory.
    This (bounded) total error of the trajectory is the sum of errors  
    $\varepsilon_{\mathrm{bound}} (\hat{\psi}_i) = \sum_{\ell=1}^L \varepsilon_{\mathrm{approx}}(\hat{\psi}_i, \ell)$ if this sum is smaller than some threshold $\varepsilon_{\mathrm{max}}-2$ and is set to $\varepsilon_{\mathrm{max}}$ otherwise. 
    With this, averaging over the sampled trajectories, one can compute an estimator of the error on the simulated state
    \begin{equation}
        \hat{\varepsilon}_N = \frac{1}{N} \sum_{i=1}^N \varepsilon_{\mathrm{bound}} (\hat{\psi}_i) \, . 
    \end{equation}
    Here, $\varepsilon_{\mathrm{max}}$ is the maximum tolerated approximation error in a single trajectory, which can be freely chosen between $2$ and $2L$.
    This allows us to state our first main result (selecting $\varepsilon_{\mathrm{max}} = 2$).

    \begin{restatable}[Truncated Trajectory Sampling error]{theorem}{trajectoryerror}
    \label{result:sampled_trajectories}
        Given a noisy circuit description $\mathcal{D}$ with a final state $\rho$ and a maximal bond dimension $\chi\in \mathbb{N}$, there exist an $\epsilon\geq 0$ such that with runtime $\poly(n, \chi, N)$ the trajectory sampling algorithm returns samples from an $(\epsilon, \chi)$-convex MPS witness for $\rho$. 
        Furthermore, for any $\delta>0$, with probability at least $1-\delta$,
        \begin{equation}
            \epsilon \leq \hat{\varepsilon}_N + \sqrt{\frac{2}{N} \log \left( \frac{1}{\delta} \right)}.
        \end{equation}
    \end{restatable}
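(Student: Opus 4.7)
The plan is to first bound the trace distance $\|\rho - \sigma\|_{\mathrm{Tr}}$ between the exact final state $\rho$ and the effective ensemble-averaged output $\sigma := \mathds{E}[\ketbra{\hat{\psi}}{\hat{\psi}}]$ of the sampler by the expectation of a bounded per-trajectory quantity, and then to lift this mean bound into a high-probability bound on the empirical estimator $\hat{\varepsilon}_N$ via Hoeffding's inequality.

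For the mean bound, I would use a layer-wise telescoping decomposition. Let $\rho_\ell$ be the exact state after the first $\ell$ of the $L$ layers of $\mathcal{D}$, and let $\sigma_\ell := \mathds{E}[\ketbra{\hat{\psi}^{(\ell)}}{\hat{\psi}^{(\ell)}}]$ denote the simulator ensemble-averaged state after Kraus sampling and approximation at layer $\ell$. Inserting the telescoping intermediates $\mathcal{T}_\ell(\sigma_{\ell-1})$ (with $\mathcal{T}_\ell$ the physical channel at layer $\ell$), applying the triangle inequality and the monotonicity of the trace norm under CPTP maps yields
\begin{equation}
\|\rho - \sigma\|_{\mathrm{Tr}} \leq \sum_{\ell=1}^{L} \| \mathcal{T}_\ell(\sigma_{\ell-1}) - \sigma_\ell \|_{\mathrm{Tr}} \,.
\end{equation}
Because averaging over the sampled Kraus branches reconstructs the channel exactly, $\mathcal{T}_\ell(\sigma_{\ell-1}) = \mathds{E}[\mathcal{T}_\ell(\ketbra{\hat{\psi}^{(\ell-1)}}{\hat{\psi}^{(\ell-1)}})]$, and the only discrepancy from $\sigma_\ell$ is the node-local approximation (e.g.\ MPS truncation). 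Joint convexity of the trace norm then gives $\|\mathcal{T}_\ell(\sigma_{\ell-1}) - \sigma_\ell\|_{\mathrm{Tr}} \leq \mathds{E}[\varepsilon_{\mathrm{approx}}(\hat{\psi},\ell)]$, so by linearity $\|\rho-\sigma\|_{\mathrm{Tr}} \leq \mathds{E}[\sum_\ell \varepsilon_{\mathrm{approx}}(\hat{\psi},\ell)]$. Passing to the capped quantity $\varepsilon_{\mathrm{bound}}$ does not decrease the expectation, since trace distance is itself bounded by $\varepsilon_{\max} = 2$, so setting $\epsilon := \|\rho - \sigma\|_{\mathrm{Tr}}$ one obtains $\epsilon \leq \mathds{E}[\varepsilon_{\mathrm{bound}}(\hat{\psi})]$ and $\rho$ admits this $\epsilon$ as a valid $(\epsilon, \chi)$-convex MPS accuracy in the sense of Definition~\ref{def:convex-MPS}.

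For the concentration step, I would note that the $N$ sampled trajectories are i.i.d.\ and that $\varepsilon_{\mathrm{bound}} \in [0,2]$ is bounded, so a one-sided Hoeffding inequality yields $\Pr[\mathds{E}[\varepsilon_{\mathrm{bound}}] > \hat{\varepsilon}_N + t] \leq \exp(-N t^2 / 2)$; setting the right-hand side to $\delta$ gives $t = \sqrt{2 \log(1/\delta)/N}$, which combined with the mean bound of the previous step produces exactly the claimed inequality. The stated $\poly(n,\chi,N)$ runtime is then immediate, since each trajectory requires only $\poly(n,\chi)$ MPS operations and there are $N$ independent trajectories.

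The hard part will be the first step, namely correctly handling the branching, path-dependent Kraus decompositions in the Kraus tree of Section~\ref{sec:TTS_alg}. Because the Kraus decomposition chosen at layer $\ell$ is allowed to depend on the entire sampled history up to layer $\ell-1$, one must verify that averaging over Kraus outcomes conditional on any fixed history still reconstructs the physical channel $\mathcal{T}_\ell$, so that the telescoping identity and the convexity estimate go through with the correct conditional structure and a single pass of linearity of expectation. Once that conditional-averaging structure is unpacked, the remaining ingredients -- monotonicity under CPTP maps, the cap at $\varepsilon_{\max}$, and the Hoeffding tail bound -- are essentially mechanical.
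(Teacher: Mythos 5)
Your overall architecture coincides with the paper's: the telescoping over layers using monotonicity of the trace norm under channels is Proposition~\ref{result:error_per_layer}, the convexity step over branches is Proposition~\ref{result:error_per_pure_state}, the reorganization of the resulting double sum into an expectation of per-trajectory cumulated errors (including the conditional, path-dependent Kraus choices you correctly flag as needing care) is Proposition~\ref{result:error_sums_equivalence}, and the Hoeffding step with range $[0,\varepsilon_{\mathrm{max}}]=[0,2]$ reproduces the stated buffer. Those parts are sound.

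The genuine gap is the capping step. You assert that passing from $\varepsilon_{\mathrm{tot}}$ to the capped $\varepsilon_{\mathrm{bound}}$ ``does not decrease the expectation, since trace distance is itself bounded by $\varepsilon_{\max}=2$''. That is not a valid argument: for any trajectory with $\varepsilon_{\mathrm{tot}}>\varepsilon_{\mathrm{max}}$ the cap strictly reduces its contribution, so $\mathds{E}[\varepsilon_{\mathrm{bound}}]$ can be strictly smaller than $\mathds{E}[\varepsilon_{\mathrm{tot}}]$ --- that is precisely why the cap is needed to make Hoeffding applicable. Nor can the claim be recovered from the two bounds you do have, $\|\rho-\sigma\|_{\mathrm{Tr}}\le\mathds{E}[\varepsilon_{\mathrm{tot}}]$ and $\|\rho-\sigma\|_{\mathrm{Tr}}\le 2$: since $x\mapsto\min(x,c)$ is concave, $\mathds{E}[\min(\varepsilon_{\mathrm{tot}},c)]\le\min(\mathds{E}[\varepsilon_{\mathrm{tot}}],c)$, so the capped expectation sits on the wrong side of both. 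The underlying difficulty is that $\rho$ and $\sigma$ do not share a common trajectory-indexed decomposition (each truncation perturbs all downstream branching probabilities), so there is no term-by-term decomposition of $\|\rho-\sigma\|_{\mathrm{Tr}}$ whose individual terms you may cap at $2$. The paper closes this with a dedicated construction (Proposition~\ref{result:path_error_bound}): the approximation subprocedure is conceptually modified so that any trajectory whose accumulated error first exceeds $\varepsilon_{\mathrm{max}}-2$ is mapped to the zero vector (cost $1$ at that layer) and the correct conditional ensemble is reinjected at the final layer (cost at most $1$), charging each such trajectory at most $(\varepsilon_{\mathrm{max}}-2)+2=\varepsilon_{\mathrm{max}}$; this is also why the threshold in the definition of $\varepsilon_{\mathrm{bound}}$ is $\varepsilon_{\mathrm{max}}-2$ rather than $\varepsilon_{\mathrm{max}}$, a detail your argument would not explain. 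Without some such construction your chain of inequalities terminates at $\mathds{E}[\varepsilon_{\mathrm{tot}}]$, which is unbounded and cannot feed into Hoeffding.
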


    We note that when $\chi\geq 2^{n/2}$, for all $n=1,2,\ldots$, $\hat{\varepsilon}_N=0$. 
   
    This bound is composed of a systematic error and a statistical error.
    First, the \emph{systematic error} is due to the truncations. 
    Even in the hypothetical case where one would compute all (exponentially many) possible trajectories, each would be only an approximate representation of the real trajectories. 
    The difference between the real noisy state $\rho$ and the mixture of \emph{all} unraveled trajectories $\sigma = \mathds{E}[\ketbra{\hat{\psi}_i}{\hat{\psi}_i}]$ is the systematic error of the simulation. 
    This error can be bounded by $\lim_{N \rightarrow \infty} \hat{\varepsilon}_N$. 
    The resulting accuracy can be improved by reducing the approximation errors of each trajectory, i.e., by increasing the bond dimension of the MPS.
    Second, there is the uncertainty due to finite sampling, the \emph{statistical error}. 
    The systematic error is constructed from the sampled trajectories, as a cumulation of the errors in each. 
    The exact computation would require computing all trajectories, which can not be done efficiently. 
    As a result, one needs to estimate the systematic error itself. 
    The bound $\hat{\varepsilon}_N$ obtained by averaging the errors of the sampled trajectories holds with high probability when adding the ``statistical buffer'' $\sqrt{{(\varepsilon_{\mathrm{max}}^2}/{(2N)}) \log \left( {1}/{\delta} \right)}$. 
    The tightness of the bound (or the probability of failure) can be reduced by increasing the number of samples.
    When using MPS, the approximation error $\varepsilon_{\mathrm{approx}}(\hat{\psi}_i, \ell)$ of the trajectory $\ket{\hat{\psi}_i}$ at depth $\ell$ due to the truncations
    can be computed using that 
    $\Norm{\ketbra{\psi}{\psi} - \ketbra{\phi}{\phi}}_{\mathrm{Tr}}=\sqrt{1-\abs{\braket{\psi|\phi}}^2}$ 
    at the cost of having to also compute the non-truncated evolution, 
    or using the bound from Eq.~\eqref{eq:MPS_truncation_error}, resulting in
    \begin{equation}
    \label{eq:MPS_truncation_error_w_sqrt}
        \varepsilon_{\mathrm{approx}}(\hat{\psi}_i, \ell) \leq 4 \sqrt{2 \sum_{k} \epsilon_k(\chi)}.
    \end{equation}
    The full proof of Theorem~\ref{result:sampled_trajectories} and details about the MPS-based errors can be found in Appendix~\ref{sec:upper_bound_proofs}.

    Note that, in general, the bound from Theorem~\ref{result:sampled_trajectories} is not tight.
    For some noise models of interest, it can be made tighter using concentration arguments.
    Then, the error contribution of each layer is scaled by a factor decaying exponentially with the depth of the circuit. 
    In this case, we obtain a new bound (omitting the bounding with $\varepsilon_{\mathrm{max}}-2$ for simplicity)
    \begin{align}
        \hat{\varepsilon}_N = \frac{1}{N} \sum_{i=1}^N \left( \sum_{\ell=1}^L \alpha^{L-\ell}\varepsilon_{\mathrm{approx}}(\hat{\psi}_i, \ell) \right)
    \end{align}
    for some noise-dependent factor $\alpha < 1$. 
    See Appendix~\ref{sec:concentration_bounds} for a presentation of the argument and examples of specific values of $\alpha$.

    We are now ready to introduce some more commonly used notions of classical simulation and demonstrate how our results on Truncated Trajectory Sampling result in bounds on simulation errors.

\subsection{Target classical simulation tasks}
\label{sec:reductions}

In this section, we define two commonly used notions of classical simulation, which we call {Output Distribution Sampling} and {Observable Estimation}, and show their connection to Truncated Trajectory Sampling.  

Given:
\begin{enumerate}
    \item A classical description $\mathcal{D}$ of a noisy quantum circuit on $n$ qubits,
    \item An accuracy parameter $\epsilon>0$ and a failure probability $\delta>0$,
\end{enumerate} 
the goal is to perform the following tasks with probability at least $1-\delta$, over the randomness of the algorithm:

\begin{itemize}
    \item {Output Distribution Sampling:} 
        Sample from any probability distribution within the $\epsilon$-ball (in total variation distance) of the outcome distribution
        \begin{align}
        P(x \mid \rho) & \coloneqq \langle x \mid \rho \mid x \rangle
        \end{align}
        associated with measuring the final state $\rho$ in the computational basis $\{\ket{x}\}_x$.
    \item {Observable Estimation:} 
        Given an observable $O$ in the form of an MPO of bond dimension upper bounded by $\poly(\log n)$, estimate $\braket{O} = \Tr{}{O \rho}$ up to $\pm \epsilon$ in additive error.
\end{itemize}

We denote an instance of each of these problems using $\textsc{Samp}(\mathcal{D},\epsilon,\delta)$ and $\textsc{OEstim}(\mathcal{D},O,\epsilon,\delta)$ respectively. 

    In general\footnote{Without placing specific restrictions on the level and nature of the noise model, this class of circuits includes universal noiseless circuits.}, we cannot expect to find an efficient classical algorithm for these simulation tasks that works for all instances. This is because solving the {Output Distribution Sampling} task in $\poly(n,\epsilon^{-1}, \log \delta^{-1})$ time is as hard as $\mathrm{SampBQP}$, the class of sampling problems efficiently solvable by a (noiseless) universal quantum computer. 
    It is also easy to show that solving the {Observable Estimation} task in $\poly(n)$ time for any $\epsilon + \delta<1/6$ and $O=\ketbra{0}{0}\otimes I^{\otimes n-1}$ is $\mathrm{BQP}$-hard\footnote{$\mathrm{BQP}$ is the class of decision problems efficiently solvable by a (noiseless) universal quantum computer.}.
    Note that the estimation of Born probabilities and marginals (the probabilities of outcomes in the computational basis) is a special case of {Observable Estimation}. For a detailed discussion of how these notions of classical simulation relate to each other, see Ref.~\cite{pashayanEstimationQuantumProbabilities2020}.

Before introducing our algorithm in more detail, we outline how algorithms for {Truncated Trajectory Sampling} can be used to perform  {Output Distribution Sampling} and {Observable Estimation}. 

\begin{restatable}[Output distribution sampling from trajectory sampling]{lemma}{samplingreduc}
    \label{result:sampling_reduction}
    An algorithm performing $\epsilon \textsc{-tts}(\mathcal{D}, \chi)$ with probability $1-\delta$ can, with $\poly(n,\chi)$-time classical processing, produce a solution to $\textsc{samp}(\mathcal{D}, \frac{1}{2} \epsilon, \delta)$.  
\end{restatable}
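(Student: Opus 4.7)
The plan is to use the $\epsilon\textsc{-tts}(\mathcal{D}, \chi)$ algorithm as a subroutine: first draw a sampled pure-state trajectory $\ket{\psi_i}$, stored as an MPS of bond dimension at most $\chi$, and then locally sample a computational basis outcome $x$ from the Born distribution of $\ket{\psi_i}$. The resulting joint procedure outputs $x$ with probability $Q(x) = \sum_i q_i\,\abs{\inner{x}{\psi_i}}^2 = \bra{x}\sigma\ket{x}$, where $\sigma=\sum_i q_i\ketbra{\psi_i}{\psi_i}$ is the mixed state whose ensemble decomposition is the $(\epsilon, \chi)$-convex MPS witness guaranteed by Definition~\ref{def:convex-MPS}.

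The second sampling step is standard for MPS: having access to the tensors of $\ket{\psi_i}$ with bond dimension $\leq \chi$, one can sequentially sample $x_1,x_2,\dots,x_n$ by computing each conditional marginal $P(x_k\mid x_1,\dots,x_{k-1})$ through a polynomial number of tensor contractions (bringing the MPS into the appropriate canonical form and contracting one physical site at a time). Each such conditional can be evaluated in $O(n\chi^3)$ time, so producing one sample $x$ takes $\poly(n,\chi)$ time in addition to the $\poly(n,\chi)$ cost of drawing $\ket{\psi_i}$ via the TTS algorithm.

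It then remains to bound the total variation distance between the actual Born distribution $P(x\mid\rho)=\bra{x}\rho\ket{x}$ and the produced distribution $Q(x)=\bra{x}\sigma\ket{x}$. Since measurement in the computational basis is a particular POVM,
\begin{equation}
d_{\mathrm{TV}}(P,Q) \;=\; \tfrac{1}{2}\sum_x \abs{\bra{x}(\rho-\sigma)\ket{x}} \;\leq\; \tfrac{1}{2}\Norm{\rho-\sigma}_{\mathrm{Tr}} \;\leq\; \tfrac{\epsilon}{2},
\end{equation}
where the first inequality is the standard fact that measurement contracts trace distance (equivalently, it is a special case of Hölder with the fact that $\{\ketbra{x}{x}\}_x$ is a rank-one PVM), and the second uses the TTS accuracy guarantee of Eq.~\eqref{eq:trace_distance_UB}. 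Propagating the $1-\delta$ success probability of the TTS primitive yields exactly a solution to $\textsc{samp}(\mathcal{D},\tfrac{1}{2}\epsilon,\delta)$.

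There is no real obstacle here; the statement is essentially a packaging of two well-known facts (efficient MPS marginal sampling and contractivity of trace distance under measurement). The only detail that needs mild care is ensuring that one draws a \emph{single} trajectory per sample request (so that the output distribution is genuinely $Q$, not an average over multiple independent re-drawings for each coordinate), and that the MPS canonicalization used by the conditional sampler does not blow up the bond dimension beyond $\chi$, both of which are standard.
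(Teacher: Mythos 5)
Your proposal is correct and follows essentially the same route as the paper's own proof: sample one trajectory per output request, perform standard sequential conditional sampling of the MPS (the paper cites Ferris and Vidal for this step), identify the resulting distribution with $\bra{x}\sigma\ket{x}$, and bound the total variation distance by $\tfrac{1}{2}\Norm{\rho-\sigma}_{\mathrm{Tr}}\leq\tfrac{1}{2}\epsilon$ via contractivity of the trace distance under measurement. No substantive differences.
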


\begin{restatable}[Expectation value estimation from trajectory sampling]{lemma}{observablereduc}
    \label{result:expectation_reduction}
    For any $\eta>0$, $N = 2\|O\|_{\mathrm{op}}^2 \eta^{-2} \log (2\delta'^{-1})$ independent outputs of an algorithm performing $\epsilon \textsc{-tts}(\mathcal{D},\chi)$ with probability $1-\delta$ can, with $\poly(n,\chi)$-time classical processing, produce a solution to
    \begin{itemize}
        \item $\textsc{Oestim}(\mathcal{D}, O, \epsilon', \delta')$ where $\epsilon' = [\epsilon + 2\delta] \|O\|_{\mathrm{op}} + \eta$ or 
        \item $\textsc{Oestim}(\mathcal{D}, O, \epsilon', \delta+\delta')$ where $\epsilon' = \epsilon \|O\|_{\mathrm{op}} + \eta$.
    \end{itemize}
\end{restatable}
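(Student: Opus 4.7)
The plan is to construct the natural empirical estimator $\hat{X}=\frac{1}{N}\sum_{j=1}^{N}\langle\hat{\psi}_j|O|\hat{\psi}_j\rangle$, where $\ket{\hat{\psi}_j}$ are the $N$ independent MPS samples produced by the $\epsilon\textsc{-tts}$ algorithm, and to analyze its bias and statistical fluctuations separately. Computing each individual expectation $\langle\hat{\psi}_j|O|\hat{\psi}_j\rangle$ is a standard MPS/MPO contraction whose runtime is $\poly(n,\chi)$ because $\ket{\hat{\psi}_j}$ has bond dimension at most $\chi$ and $O$ has bond dimension at most $\poly(\log n)$; this accounts for the classical-processing overhead.

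For the bias, I would split on whether the TTS algorithm succeeds. Let $\sigma_{\mathrm{good}}=\sum_i q_i\ketbra{\psi_i}{\psi_i}$ denote the expected sampled state on the $1-\delta$ success branch; by Definitions~\ref{def:convex-MPS} and~\ref{def:TTS} we have $\|\rho-\sigma_{\mathrm{good}}\|_{\mathrm{Tr}}\leq\epsilon$. Combined with the operator-trace duality $|\tr(OA)|\leq\|O\|_{\mathrm{op}}\|A\|_{\mathrm{Tr}}$, this immediately yields $|\tr(O\sigma_{\mathrm{good}})-\tr(O\rho)|\leq\epsilon\|O\|_{\mathrm{op}}$. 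On the $\delta$-probability failure branch the expected sampled state $\sigma_{\mathrm{bad}}$ satisfies only the trivial bound $\|\rho-\sigma_{\mathrm{bad}}\|_{\mathrm{Tr}}\leq 2$, so the unconditional expected state $\sigma=(1-\delta)\sigma_{\mathrm{good}}+\delta\sigma_{\mathrm{bad}}$ satisfies $\|\rho-\sigma\|_{\mathrm{Tr}}\leq(1-\delta)\epsilon+2\delta\leq\epsilon+2\delta$ by convexity of the trace norm.

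For the statistical fluctuations, since $\langle\hat{\psi}_j|O|\hat{\psi}_j\rangle\in[-\|O\|_{\mathrm{op}},\|O\|_{\mathrm{op}}]$ almost surely, Hoeffding's inequality gives
\[
\Pr\!\left[|\hat{X}-\mathbb{E}\hat{X}|>\eta\right]\leq 2\exp\!\left(-\frac{N\eta^2}{2\|O\|_{\mathrm{op}}^2}\right).
\]
Setting the right-hand side equal to $\delta'$ and solving for $N$ recovers the stated sample complexity $N=2\|O\|_{\mathrm{op}}^2\eta^{-2}\log(2\delta'^{-1})$.

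Finally, I would combine these two ingredients in two distinct ways, one per bullet. For the first bullet, keeping everything unconditional, the triangle inequality $|\hat{X}-\tr(O\rho)|\leq|\hat{X}-\tr(O\sigma)|+|\tr(O(\sigma-\rho))|$ gives overall failure probability $\delta'$ (from Hoeffding alone) and additive error $\eta+(\epsilon+2\delta)\|O\|_{\mathrm{op}}$. For the second bullet, I would condition on the $\epsilon\textsc{-tts}$ algorithm succeeding so that $\sigma_{\mathrm{good}}$ replaces $\sigma$ and the bias tightens to $\epsilon\|O\|_{\mathrm{op}}$; a union bound then assigns probability at most $\delta$ to the conditioning failing and $\delta'$ to Hoeffding failing, producing overall failure probability $\delta+\delta'$ and error $\epsilon\|O\|_{\mathrm{op}}+\eta$. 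The only real subtlety is the convexity step that converts the algorithm's failure probability into the additive $2\delta$ penalty; this is exactly what differentiates the two bullets and makes the error/confidence trade-off explicit.
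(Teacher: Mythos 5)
Your proposal is correct and follows essentially the same route as the paper's proof: the same empirical estimator, the same systematic/statistical error split via the triangle inequality, the same Hoeffding bound yielding the stated $N$, and the same two ways of handling the TTS failure probability (folding it into the bias as a $2\delta$ trace-norm penalty for the first bullet versus union-bounding it into the confidence for the second). No gaps.
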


    Lemma~\ref{result:sampling_reduction} derives from the fact that each trajectory, independent of the choice of unraveling, is a pure state drawn from a valid ensemble decomposition of the approximate output state of the circuit. 
    With this, the measurement statistics averaged over the elements of the decomposition match those of the mixed state. 
    Leveraging the fact that any bound on the trace distance between states is a bound on the TV distance between the respective distributions leads to the result above.
    
    For Lemma~\ref{result:expectation_reduction}, one also has to take into account that the finite sampling of trajectories does not fully reconstruct the approximate state. 
    Since not all trajectories are sampled but only a subset, one can only guarantee that the unraveled mixed state is within a ball around the target state with high probability. 
    The incurred statistical error is reflected in the additional term.
    The detailed proofs of these statements can be found in Appendix~\ref{sec:reduction_proofs}.

    With this, Theorem~\ref{result:sampled_trajectories} allows us to state bounds on the errors of the simulation tasks presented above. 
    \begin{corollary}[Output distribution sampling]
    \label{result:sampling_task}
        Given the setting of Theorem~\ref{result:sampled_trajectories}, $N$ samples from the trajectory sampling algorithm allow for computing $N$ samples from a probability distribution $\mathcal{P}'$ which is close to $\mathcal{P}$
        \begin{equation}
            TV(\mathcal{P},\mathcal{P}')
            \leq \hat{\varepsilon}_N + \sqrt{\frac{\varepsilon_{\mathrm{max}}^2}{2N} \log \left( \frac{1}{\delta} \right)}
        \end{equation}
        with probability $1-\delta$ 
        where $\mathcal{P} \coloneqq \{ P(x|\rho) \}_x$ is the distribution of outcomes generated when the noisy circuit described above is applied to the $\ket{\psi^{(0)}}$ state and all qubits are subsequently measured in the computational basis.
    \end{corollary}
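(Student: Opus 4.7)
The plan is to combine Theorem~\ref{result:sampled_trajectories} with the standard inequality that bounds the total variation distance between measurement outcome distributions by the trace distance of the underlying quantum states. Conceptually, this Corollary is the probabilistic counterpart of Lemma~\ref{result:sampling_reduction}: the fixed trace-distance parameter $\epsilon$ of an $\epsilon$-TTS algorithm is replaced by its empirical upper bound established in the Theorem.

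Concretely, I would use each of the $N$ trajectories $\ket{\hat{\psi}_1},\ldots,\ket{\hat{\psi}_N}$ returned by the algorithm and perform one computational-basis measurement on each via the standard $\poly(n,\chi)$-time MPS sampling subroutine. Since each trajectory is drawn from some $(\epsilon,\chi)$-convex MPS witness $\{q_i,\ket{\psi_i}\}_i$ for $\rho$, the measurement outcomes are $N$ i.i.d.\ samples from the Born distribution
\begin{equation}
    \mathcal{P}'(x) \;=\; \mathds{E}\bigl[\,|\inner{x}{\hat{\psi}_i}|^2\,\bigr] \;=\; \bra{x}\sigma\ket{x}
\end{equation}
of the averaged state $\sigma = \mathds{E}[\ketbra{\hat{\psi}_i}{\hat{\psi}_i}]$. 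By Definition~\ref{def:convex-MPS}, this ensemble satisfies $\|\rho-\sigma\|_{\mathrm{Tr}} \leq \epsilon$.

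I would then invoke the contractivity of trace distance under the computational-basis measurement to obtain $TV(\mathcal{P},\mathcal{P}') \leq \|\rho-\sigma\|_{\mathrm{Tr}} \leq \epsilon$, and finally substitute the probabilistic upper bound on $\epsilon$ from Theorem~\ref{result:sampled_trajectories}, inheriting the failure probability $\delta$. I do not anticipate a substantive obstacle: the argument is a short chaining of an already-established theorem with an elementary measurement bound, and all non-trivial content (construction of $\hat{\varepsilon}_N$ and its concentration inequality) already resides inside the proof of Theorem~\ref{result:sampled_trajectories}. The only minor point to double-check is that the trace-norm convention adopted in Definition~\ref{def:convex-MPS} matches the prefactor in front of $\hat{\varepsilon}_N$ in the Corollary's statement; note that Lemma~\ref{result:sampling_reduction} in fact yields the sharper $\tfrac{1}{2}\epsilon$ bound on the TV distance, so the form stated here is a slightly loose but cleaner version of that reduction.
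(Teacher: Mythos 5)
Your proposal is correct and takes essentially the same route as the paper, whose entire proof is the one-line observation that the corollary is a direct application of Lemma~\ref{result:sampling_reduction} to Theorem~\ref{result:sampled_trajectories}; your unpacking of the reduction (MPS Born sampling, contractivity of trace distance under measurement, substitution of the probabilistic bound on $\epsilon$) is exactly the content of that lemma's proof. Your side remark about the factor of $\tfrac{1}{2}$ is also accurate: Lemma~\ref{result:sampling_reduction} yields the sharper $TV \leq \tfrac{1}{2}\epsilon$, so the bound as stated in the corollary is a slight loosening.
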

    \begin{proof}
        This is a direct application of Lemma~\ref{result:sampling_reduction} to the result of Theorem~\ref{result:sampled_trajectories}.
    \end{proof}
    
    \begin{corollary}[Expectation value estimation]
    \label{result:expectation_task}
        Given the setting of Theorem~\ref{result:sampled_trajectories} and a bounded observable $O$ with $\|O\|_{\mathrm{op}} \leq 1$, 
        $N$ samples from the trajectory sampling algorithm can be used to construct an estimator $\hat{O}$ for the expectation value $\braket{O}$  with
        \begin{equation}
            \left| \hat{O} - \braket{O} \right| 
            \leq \hat{\varepsilon}_N + \sqrt{\frac{\varepsilon_{\mathrm{max}}^2}{2N} \log \left( \frac{1}{\delta} \right)} + \sqrt{\frac{2}{N} \log \left( \frac{2}{\delta'} \right)}
        \end{equation}
        with probability at least $1-\delta-\delta'$.
    \end{corollary}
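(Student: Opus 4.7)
The plan is to estimate $\braket{O}$ by the natural empirical average $\hat{O} \coloneqq \frac{1}{N} \sum_{i=1}^{N} \bra{\hat{\psi}_i} O \ket{\hat{\psi}_i}$ computed from the $N$ sampled trajectories, and then control two independent sources of error via the triangle inequality. Letting $\sigma = \mathbb{E}[\ketbra{\hat{\psi}_i}{\hat{\psi}_i}]$ denote the mixed state whose decomposition the trajectory sampler draws from, I would write
\begin{equation*}
\bigl| \hat{O} - \braket{O} \bigr| \;\leq\; \bigl| \hat{O} - \tr(O\sigma) \bigr| \;+\; \bigl| \tr(O\sigma) - \tr(O\rho) \bigr| ,
\end{equation*}
so that the right-hand side splits into a purely statistical term (finite-sample fluctuation of the mean around $\tr(O\sigma)$) and a systematic term (mismatch between the unraveled and the true state).

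For the systematic term, I would invoke H\"older's inequality $|\tr(OM)| \leq \|O\|_{\mathrm{op}}\, \|M\|_{\mathrm{Tr}}$ applied to $M = \sigma - \rho$, together with the assumption $\|O\|_{\mathrm{op}} \leq 1$, to obtain $|\tr(O\sigma) - \tr(O\rho)| \leq \|\sigma - \rho\|_{\mathrm{Tr}}$. Theorem~\ref{result:sampled_trajectories} (equivalently, Corollary~\ref{result:sampling_task}, whose underlying trace-distance bound is what powers both statements) then yields $\|\sigma - \rho\|_{\mathrm{Tr}} \leq \hat{\varepsilon}_N + \sqrt{\varepsilon_{\mathrm{max}}^2/(2N)\cdot \log(1/\delta)}$ with probability at least $1-\delta$.

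For the statistical term, note that the random variables $X_i \coloneqq \bra{\hat{\psi}_i} O \ket{\hat{\psi}_i}$ are i.i.d.\ with mean $\tr(O\sigma)$ and, since $\|O\|_{\mathrm{op}}\leq 1$, take values in $[-1, 1]$. Hoeffding's inequality applied to $\hat{O} = \frac{1}{N}\sum_i X_i$ then gives, for any $t > 0$, $\Pr[\,|\hat{O}-\tr(O\sigma)| > t\,] \leq 2\exp(-N t^2 / 2)$. Setting the right-hand side equal to $\delta'$ and solving for $t$ yields $|\hat{O} - \tr(O\sigma)| \leq \sqrt{(2/N)\log(2/\delta')}$ with probability at least $1-\delta'$.

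Finally, a union bound over the two failure events shows that, with probability at least $1 - \delta - \delta'$, both bounds hold simultaneously, and summing them gives the claimed inequality. The only subtlety worth flagging is that $\hat{\varepsilon}_N$ and $\hat{O}$ are generally computed from the same batch of trajectories and are therefore correlated; however, the union bound does not require independence of the two events, so this causes no difficulty. Consequently, there is no real obstacle here beyond bookkeeping — both ingredients (Hoeffding for bounded i.i.d.\ samples and H\"older plus Theorem~\ref{result:sampled_trajectories} for the trace-distance gap) are already in hand, and the corollary follows immediately by assembling them.
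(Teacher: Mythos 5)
Your proposal is correct and is essentially the paper's own argument: the paper proves this corollary by citing Lemma~\ref{result:expectation_reduction}, whose proof is exactly your triangle-inequality split into a systematic term (bounded via H\"older and Theorem~\ref{result:sampled_trajectories}) and a statistical term (bounded via Hoeffding on the i.i.d.\ values $\bra{\hat{\psi}_i}O\ket{\hat{\psi}_i}$), combined by a union bound. Your remark that the union bound needs no independence between the two failure events is a nice touch but does not change the argument.
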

    \begin{proof}
        This is a direct application of Lemma~\ref{result:expectation_reduction} to the result of Theorem~\ref{result:sampled_trajectories}.
    \end{proof}

    We can now focus entirely on how MPS-based quantum trajectories allow us to compute solutions to truncated trajectory sampling tasks. 
    We argue that our algorithm achieves a state-of-the-art performance among unraveling-based simulators of noisy quantum circuits.

\section{Our algorithm}
\label{sec:our_algorithm}

    In this section, we will dive into the details of our algorithm. 
    For this, we review MPS-based trajectory sampling algorithms (as presented in Section~\ref{sec:TTS_alg} and studied in Refs.~\cite{vovkEntanglementOptimalTrajectoriesManyBody2022, kolodrubetzOptimalityLindbladUnfolding2023, chengEfficientSamplingNoisy2023, vovkQuantumTrajectoryEntanglement2024, chenOptimizedTrajectoryUnraveling2024, darabanNonunitarityMaximizingUnraveling2025}) in Section~\ref{sec:MPS_based_details}.
    We then present our locally entanglement-optimal choice of unraveling and how it is derived in Section~\ref{sec:optimal_single_qubit_unraveling}.
    In Section~\ref{sec:unraveling_comparison}, we give an overview and comparison of state-of-the-art choices of unraveling for MPS-based trajectory sampling.
    
\subsection{Details of MPS-based trajectory sampling}
\label{sec:MPS_based_details}
    
    The central idea of quantum trajectory unraveling is to interpret open system evolution as a stochastic process over (non-unitary) normalized pure-state evolutions. 
    For each layer, starting from a state vector $\ket{\psi}$ the effect of the computational layer $\Lambda$ is unraveled by specifying a particular Kraus representation $\{ K_i \}_i$ of the channel and obtaining the induced ensemble decomposition as per Eq.~\eqref{eq:Kraus_to_ensemble}. 
    Put into words, Eq.~\eqref{eq:Kraus_to_ensemble} is a statement that the output state is a probabilistic average over the distribution $\boldsymbol{p} = \{ p_i \}_i$ of normalized state vectors $\{ \ket{\phi_i} = p_i^{-1/2}K_i \ket{\psi} \}_i$.
    This procedure is represented graphically in Fig.~\ref{fig:overview} (bottom left).
    Note that one has the freedom to choose the Kraus decomposition of the channel. 
    This different choice of unraveling is synonymous with a different $U$ in Eq.~\eqref{eq:unitarity_kraus}, and it is precisely this freedom we will leverage to optimize the algorithm in Section~\ref{sec:optimal_single_qubit_unraveling}.

    We now expose how to perform the update efficiently when using MPS and few-qubit unitaries, followed by single-qubit noise channels.
    From here on out, we consider 1D architectures of geometrically local, few-qubit, unitary gates composed with local (single-qubit) noise channels, like for instance the brickwork-
like circuit structure shown in Fig.~\ref{fig:overview} (top left). 
    This setting could be made more general, but it already contains all the key elements to set the stage for our locally entanglement-optimal unraveling.
    Let us describe each block individually.
    
    Given a pure state represented as an MPS of bond dimension $\chi$, the action of a unitary gate is obtained by contracting the (four-legged) tensor with the affected MPS tensors and then recovering the MPS structure through SVD.
    \begin{equation*}
        \includegraphics{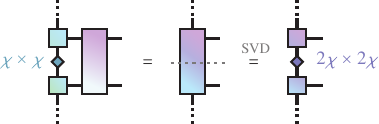}
    \end{equation*}
    One layer of two-qubit, nearest-neighbor, unitaries acting on disjoint sets of qubits will result in a new MPS with increased bond dimension of at most $2\chi$. 
    
    Each single-qubit noise channel is also applied locally. 
    They can be represented by a three-legged tensor $K = \sum_i K_i \otimes \bra{i}$, 
    then Eq.~\eqref{eq:def_kraus} can be formulated as
    \begin{equation*}
        \includegraphics{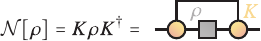}
    \end{equation*}
    After selecting the Kraus decomposition (the unraveling), the probabilities of each path (at most four in our choice of unraveling) are computed
    \begin{align}
        p_i = & 
        \braket{K_i^\dagger K_i} = \bra{\psi} K_i^\dagger K_i \ket{\psi} \\ & 
        \includegraphics{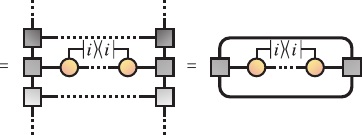} 
    \end{align}
    and one of them is sampled. 
    The non-unitary evolution is then implemented by contraction of the respective Kraus operator and renormalization by the corresponding probability. 
    \begin{equation*}
        \includegraphics{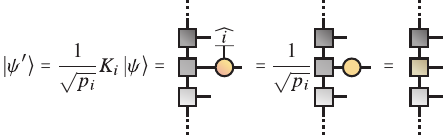}
    \end{equation*}
    This sampling is repeated for each qubit, giving a valid trajectory of the full first layer.
    
    The MPS may then be truncated to remain within the maximum allowed bond dimension.
    This is done by discarding the smallest singular values of each bond (i.e., the smallest Schmidt coefficients of each bipartition)~\cite{verstraeteMatrixProductStates2006}.
    \begin{equation*}
        \includegraphics{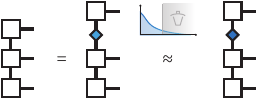}
    \end{equation*}
    One is thus given a trade-off between cost and simulation error.
    The approximation error for the layer, computed efficiently, is added to the cumulated error of simulating the trajectory.
    Note that in practical implementations, not focused on deterministic calculations, one could variationally look for the closest MPS with the target bond dimension and compute an approximation error from the resulting overlap~\cite{vanheckeTangentspaceMethodsTruncating2021,Hintermueller}.
    
    Repeating this procedure (unitary evolution, noise process, truncation) for every layer returns one of the (at most $4^{nL}$) trajectories of the full noisy circuit, as visualized in Fig.~\ref{fig:overview}. 
    An algorithmic description of the MPS-based unraveling of a single noise channel is presented in Algorithm~\ref{alg:channel_unraveling}.

    Note that the layering process also affects the accuracy and cost of the simulation.
    Although not central to our work, we outline two methods for interleaving unitary action, noise process, and truncation to illustrate the concept.
    One layering option is to define the unitary layer as $\lfloor n/2 \rfloor$ two-qubit unitary gates, the noise layer as the tensor product of $n$ single-qubit noise channels, and then a truncation on all bonds
    \begin{equation}
        \operatorname{Trunc}^{\otimes n-1} \circ \mathcal{N}^{\otimes n} \circ \mathcal{U}^{\otimes \lfloor n/2 \rfloor} \, .
    \end{equation}
    Alternatively, one can choose to perform the unitary, noise, and truncation locally on the two qubits affected by the unitary
    \begin{equation}
        \left( \operatorname{Trunc}^{\otimes 1} \circ \mathcal{N}^{\otimes 2} \circ \mathcal{U} \right)^{\otimes \lfloor n/2 \rfloor} \, .
    \end{equation}
The first strategy will result in lower truncation errors, as the noise on the last qubit may affect the first bond. 
    However, it requires an intermediate representation of the state as an MPS with bond dimension $2\chi$ on every second bond.
    The first approach does not leverage the effect of the noise on all qubits to reduce the bond dimension before truncating, but never needs to construct an MPS with a bond dimension larger than $\chi$.
    In general, the intermediate bond dimension will depend on the interplay between the channel layers and the truncation imposed by the choice of layering. 
    This results in a trade-off between computational cost (bond dimension) and precision (truncation error).

 \begin{figure}          
    \begin{algorithm}[H]
    \caption{Channel unraveling subprocedure} 
    \label{alg:channel_unraveling}
    \begin{algorithmic}[1]
        \State \textbf{\MakeUppercase{input}} 
            \State initial state vector $\ket{\psi}$ as an MPS with bond dimension $\chi$
            \State noise channel description $\mathcal{N}$ 
            \State location of qubit affected by noise $q$
        \State \textbf{\MakeUppercase{output}} 
            \State state $\ket{\psi'}$ as an MPS with bond dimension $\chi'$
        \State \textbf{\MakeUppercase{procedure}} 
                \State choose Kraus decomposition:
                $\{ K_i \}_i \gets \operatorname{KrausUnraveling}(\mathcal{N}, \ket{\psi_0}, q)$
                \State compute path probabilities:
                $\boldsymbol{p}  \gets \{ \bra{\psi} K^{\dagger}_i K_i \ket{\psi}\}_i$
                \State sample a trajectory:
                $i \gets i \sim \boldsymbol{p}$
                \State apply Kraus operator:
                $ \ket{\psi'} \gets \frac{1}{\sqrt{p_i}} K_i \ket{\psi}$
    \end{algorithmic}
    \end{algorithm}
    \end{figure}

\subsection{Locally entanglement-optimal unraveling}
\label{sec:optimal_single_qubit_unraveling}

    Like previous works, we aim to minimize the entanglement in the trajectories to reduce the computational cost of simulating those with MPS. 
    In this section, we present the method for obtaining a locally entanglement-optimal unraveling.
    We will specify the optimization problem we aim to solve and state our result on the optimality of our unraveling in Section~\ref{sec:optimal_unraveling:setting_result}. 
    The remainder of this section will serve to elucidate the steps of the argument by showing how to map the $n$-qubit problem to a $2$-qubit problem (Section~\ref{sec:effective_problem}), for which one can obtain the optimal decomposition of the mixed state using long-standing results (Section~\ref{sec:optimal_decomp}) and finally how to recover the optimal unitary (the optimal unraveling) from that (Section~\ref{sec:optimal_unitary}).

\subsubsection{Problem statement and main result}
\label{sec:optimal_unraveling:setting_result}

    As their name suggests, bipartite measures of entanglement, as defined in Section~\ref{sec:preliminaries}, are always relative to a partition of the state.
    One could pick any bipartition, e.g., the one between the target qubit and the next, or the half-system partition. 
    Both of these translate to optimizing one specific bond of the MPS.
    In this work, we target the minimization of the entanglement between the noisy qubit and the rest of the state (both right and left). This local optimization objective was previously independently proposed in Ref.~\cite{chenOptimizedTrajectoryUnraveling2024} in the context of a heuristic variational search. Here, we show that this minimization problem admits an exact analytical solution.
    
    Formally, given an initial state vector $\ket{\psi}$ and a Kraus description of a single-qubit noise channel $\{K_i\}_i$, we find the optimal unitary $U_{\mathrm{opt}}$ for the noisy state $\rho = \sum_i K_i \ketbra{\psi}{\psi} K_i^\dagger$ such that 
    \begin{align}
        & E_{\mathrm{av}}(\{ p^{\mathrm{opt}}_i, \ket{\phi^{\mathrm{opt}}_i}_{\mathrm{target:rest}} \}) = E_{\mathrm{oF}}(\rho_{\mathrm{target:rest}}) \nonumber \\
        & \text{with }
        \ket{\phi^{\mathrm{opt}}_i} = \frac{1}{\sqrt{p^{\mathrm{opt}}_i}} K^{\mathrm{opt}}_i \ket{\psi}
        \text{ and } 
        p^{\mathrm{opt}}_i = \| K^{\mathrm{opt}}_i \ket{\psi} \|_2^2 \label{eq:target_optimal_unitary} \\ 
        & \text{for }
        K^{\mathrm{opt}}_i = \sum_j [U_{\mathrm{opt}}]^*_{j,i} K_{j} \, . \nonumber
    \end{align}

    \begin{theorem}
    \label{result:wootters_unraveling}
        Given some state vector $\ket{\psi}$, in a representation where computing its Schmidt decomposition is efficient (e.g., an MPS), 
        \begin{equation}
        \label{eq:schmidt_1_n-1}
            \ket{\psi} = s \ket{u_0}_{\mathrm{target}} \otimes \ket{v_0}_{\mathrm{rest}} + \sqrt{1-s^2} \ket{u_1}_{\mathrm{target}} \otimes \ket{v_1}_{\mathrm{rest}}
        \end{equation}
        and a single-qubit noise channel
        $\mathcal{N}[\cdot] = \sum_i K_i (\cdot) K_i^\dagger$, there exists an efficient and deterministic algorithm that returns the von Neumann optimal local unraveling.
        That is, it returns the unitary $U_\mathrm{opt}$ transforming the Kraus representation
        (according to Eq.~\eqref{eq:unitarity_kraus})
        such that
        \begin{equation}
            U_\mathrm{opt} = \argmin_{U} \left\{ \sum_{i=1}^{r'} p'_{i} E \left(\frac{1}{\sqrt{p'_{i}}} K'_{i} \ket{\psi}_{\mathrm{target:rest}} \right) \right\} \,,
        \end{equation}
        where $E$ is the von Neumann entanglement entropy.
    \end{theorem}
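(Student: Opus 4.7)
The plan is to reduce the $n$-qubit optimization problem to an equivalent problem on an effective $2 \otimes 2$ system, apply Wootters' closed-form solution for the entanglement of formation of two-qubit states, and then lift the resulting optimal ensemble back to an unraveling of the noise channel.

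First, I would exploit the dimension of the target qubit: since it is two-dimensional, the Schmidt decomposition of $\ket{\psi}$ across the target:rest cut has rank at most $2$, as written in the theorem statement. Because each Kraus operator $K_i$ acts only on the target qubit, the noisy state $\rho = \sum_i K_i \ketbra{\psi}{\psi} K_i^\dagger$ is supported on the four-dimensional subspace $\mathrm{span}\{\ket{u_0},\ket{u_1}\} \otimes \mathrm{span}\{\ket{v_0},\ket{v_1}\}$ of the full Hilbert space. This is the crucial observation that allows a reduction to a constant-size problem.

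Next, I would introduce the isometry $W : \CC^2 \to \CC^{2^{n-1}}$ defined by $W = \ketbra{v_0}{0} + \ketbra{v_1}{1}$, together with the effective two-qubit pure state $\ket{\tilde{\psi}} = s\ket{u_0}\ket{0} + \sqrt{1-s^2}\ket{u_1}\ket{1}$ and mixed state $\tilde{\rho} = \sum_i (K_i \otimes \II)\ketbra{\tilde{\psi}}{\tilde{\psi}}(K_i^\dagger \otimes \II)$, so that $\rho = (\II \otimes W)\tilde{\rho}(\II \otimes W^\dagger)$. Because $W$ is an isometry, it preserves bipartite entanglement between the two marginals and establishes a bijective correspondence, with matched average entanglement, between ensemble decompositions of $\tilde{\rho}$ and ensemble decompositions of $\rho$ supported in the distinguished subspace. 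Since the infimum defining $E_{\mathrm{oF}}(\rho_{\mathrm{target:rest}})$ may, without loss of optimality, be restricted to ensembles living in the support of $\rho$, this reduces the $n$-qubit problem to minimizing the average entanglement over ensemble decompositions of $\tilde{\rho}$.

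Third, I would invoke Wootters' construction~\cite{woottersEntanglementFormationArbitrary1998}, which provides an explicit, efficiently computable ensemble $\{p^{\mathrm{opt}}_i, \ket{\tilde{\phi}^{\mathrm{opt}}_i}\}_i$ of at most four pure states saturating the entanglement of formation of any two-qubit mixed state. Lifting this decomposition via $(\II \otimes W)$ yields an optimal ensemble for $\rho$ with $\ket{\phi^{\mathrm{opt}}_i} = (\II \otimes W)\ket{\tilde{\phi}^{\mathrm{opt}}_i}$.

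The final and most delicate step is to exhibit the unitary $U_{\mathrm{opt}}$ that transforms the given Kraus operators into ones producing exactly this optimal ensemble. When $s \in (0,1)$, the reduced state of $\ket{\psi}$ on the target qubit has full rank, so each ensemble element $\sqrt{p^{\mathrm{opt}}_i}\ket{\phi^{\mathrm{opt}}_i}$ uniquely determines a $2 \times 2$ operator $K^{\mathrm{opt}}_i$ acting on the target qubit such that $(K^{\mathrm{opt}}_i \otimes \II)\ket{\psi}$ reproduces it; the resulting family $\{K^{\mathrm{opt}}_i\}$ automatically satisfies $\sum_i (K^{\mathrm{opt}}_i)^\dagger K^{\mathrm{opt}}_i = \II$ and is therefore a valid Kraus representation of $\mathcal{N}$. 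The unitary-freedom statement Eq.~\eqref{eq:unitarity_kraus}, applied to $\{K_i\}$ and $\{K^{\mathrm{opt}}_i\}$ (padding with zero Kraus operators as needed so that both have the same number of elements), then produces $U_{\mathrm{opt}}$ by solving a small linear system. The degenerate cases $s \in \{0,1\}$ correspond to a product $\ket{\psi}$, for which $E_{\mathrm{oF}}(\rho_{\mathrm{target:rest}}) = 0$ for any unraveling, so any choice of $U_{\mathrm{opt}}$ is trivially optimal. I expect the main obstacle to be the bookkeeping required to rigorously establish this lifting — in particular, verifying that the correspondence between ensembles of $\tilde{\rho}$ and Kraus decompositions of $\mathcal{N}$ is tight in the full-Schmidt-rank case. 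Efficiency is immediate: the Schmidt decomposition of the MPS, the construction of $W$, the Wootters formula evaluated on a $4 \times 4$ matrix, and the final linear system for $U_{\mathrm{opt}}$ all run in $\poly(n,\chi)$ time.
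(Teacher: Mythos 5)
Your proposal is correct and follows essentially the same route as the paper: reduce to an effective two-qubit problem via the rank-$\le 2$ Schmidt decomposition and the isometry on the ``rest'' factor, apply Wootters' closed-form optimal decomposition, and lift back to a Kraus unraveling. The only (harmless) difference is in the last step: the paper extracts $U_{\mathrm{opt}}$ by composing Wootters' unitary with the unitary $U_0$ obtained from an SVD relating the induced decomposition to the eigendecomposition, whereas you reconstruct the optimal Kraus operators directly from the optimal ensemble (using full Schmidt rank) and read off $U_{\mathrm{opt}}$ from the Kraus unitary-freedom relation --- both are valid and efficient.
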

    This unraveling is optimal in that it minimizes the ensemble-averaged von Neumann entanglement entropy between the noisy qubit and the rest of the state by reaching the entanglement of formation. 
    To build up to this result, we will first show how to reduce the problem stated in Eq.~\eqref{eq:target_optimal_unitary} to an 
    effective two-qubit problem.
    With this observation and some long-standing results on the computation of the entanglement of formation at hand, we show that this problem can be solved exactly, resulting in the optimal choice for the unitary freedom.

\subsubsection{Reduction to the effective two-qubit problem}
\label{sec:effective_problem}

    We start from an arbitrary pure state on $n$ qubits and compute the Schmidt decomposition with respect to the cut isolating the target noisy 
    qubit from the remaining $n-1$ qubits, i.e.,  
    \begin{equation}
        \ket{\psi} = s \ket{u_0}_{\mathrm{target}} \otimes \ket{v_0}_{\mathrm{rest}} + \sqrt{1-s^2} \ket{u_1}_{\mathrm{target}} \otimes \ket{v_1}_{\mathrm{rest}} \, .
    \end{equation} 
    Here, we have that $\ket{u_i} \in \CC^2$ and $\ket{v_i} \in \CC^{2^{n-1}}$ for all $i$.
    We are then interested in the action of 
    $\mathcal{N}_{\mathrm{target}} \otimes \mathcal{I}_{\mathrm{rest}}$ on $\ketbra{\psi}{\psi}$,
    where $\mathcal{I}$ is the identity channel.
    Since the noisy map only affects the first qubit, the resulting noisy state remains within Hilbert space of $\CC^2 \otimes \operatorname{span}\{\ket{v_0},\ket{v_1}\}$, which is four-dimensional.
    This is a consequence of the Schmidt decomposition having (at most) two non-zero singular values.
    With this, we can interpret the process as an effective two-qubit evolution. 
    We make it explicit by constructing the isometry which rotates the single-qubit states $\{\ket{0},\ket{1}\}$ to $\{\ket{v_0},\ket{v_1}\}$ as
    \begin{equation}
        V = \ketbra{v_0}{0} + \ketbra{v_1}{1} \, .
    \end{equation}
    With that, we can write
    \begin{equation}
        \ket{\psi} = (\II \otimes V)(
        \underbrace{s \ket{u_0} \otimes \ket{0} + \sqrt{1-s^2} \ket{u_1} \otimes \ket{1}}_{\text{two-qubit state}}
        ) \, .
    \end{equation}
    This isometry could be completed to form a unitary, but it is unnecessary and would even be detrimental to our analysis. 
    Note, however, that for any state in $\CC^2 \otimes \operatorname{span}\{\ket{v_0},\ket{v_1}\}$, the operation $(\II \otimes V)(\II \otimes V^\dagger)$ leaves the state invariant. 
    It acts as the identity on the subspace of interest.
    
    Now let us define the channels $\mathcal{V}[\cdot] = V(\cdot)V^\dagger$ and $\mathcal{V}^\dagger[\cdot] = V^\dagger(\cdot)V$ which respectively map from $\{\ket{0},\ket{1}\}$ to $\{\ket{v_0},\ket{v_1}\}$ and back.
    Let us also slightly abuse notation and use $\mathcal{V}$ for $(\mathcal{I}_{\mathrm{target}} \otimes \mathcal{V}_{\mathrm{rest}})$ and $\mathcal{N}$ for $(\mathcal{N}_{\mathrm{target}} \otimes \mathcal{I}_{\mathrm{rest}})$ when dealing with either the two or $n$ qubit state. 
    Then, the noise channel and the isometric transformation act non-trivially on disjoint sets of qubits (the noise on the first qubit and the isometry on the rest) and therefore commute.
    Consequently, 
    \begin{equation}
        \mathcal{V} \circ \mathcal{N} \circ \mathcal{V}^\dagger [\ketbra{\psi}{\psi}]
        = \mathcal{V} \circ \mathcal{V}^\dagger \circ \mathcal{N} [\ketbra{\psi}{\psi}]
        = \mathcal{N} [\ketbra{\psi}{\psi}] \, .
    \end{equation}
    The first equality is due to the commutation relation, and the second is due to the restriction to the four-dimensional subspace.
    We can now see how this helps us find the optimal decomposition of the $n$-qubit state when the noise channel acts only on a single qubit.
    Defining the effective two-qubit state vector $\ket{\psi^{\mathrm{eff}}} \in \CC^2 \otimes \CC^2$ as
    \begin{equation}
        \ket{\psi^{\mathrm{eff}}} = (\II \otimes V^\dagger)\ket{\psi}
         = s \ket{u_0} \otimes \ket{0} + \sqrt{1-s^2} \ket{u_1} \otimes \ket{1} \, ,
    \end{equation}
    the optimal unraveling of the channel $\mathcal{N}$ for $\ket{\psi^{\mathrm{eff}}}$ is also optimal for $\ket{\psi}$.

    Now, $V$ is in principle a $2 \times 2^{n-1}$ matrix, thus not practical. 
    We can, however, take advantage of the matrix product state structure.
    When starting with an MPS, obtaining the isometry $V$ is almost free: one only has to sever a bond and later stitch it back together.
    Graphically, on an MPS, the Schmidt decomposition above is given by the SVD of a single tensor between the physical and the bond indices (up to bending some wires).
    \begin{equation}
        \label{eq:MPS_SVD_cut}
        \vcenter{\hbox{\includegraphics{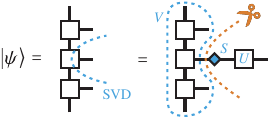}}}
    \end{equation}
    Here $U = \ketbra{u_0}{0} + \ketbra{u_1}{1}$, $S = s \ketbra{0}{0} + \sqrt{1-s^2} \ketbra{1}{1}$ and $V = \ketbra{v_0}{0} + \ketbra{v_1}{1}$.
    The state vector $\ket{\psi^{\mathrm{eff}}}$ is then simply obtained by keeping only the $U$ and $S$ tensors, and bending the index between $S$ and $V$ to act as the effective second qubit
    \begin{equation}
    \label{eq:effective_two_qubit_state}
        \includegraphics{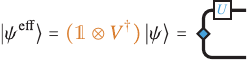}\,.
    \end{equation}
    In that sense, the isometry $V$ is already defined by the rest of the matrix product state and does not need to be calculated (nor applied) explicitly.
    
\subsubsection{Wootters' optimal decomposition}
\label{sec:optimal_decomp}

    The optimal unraveling can then be obtained using Wootters' method for computing optimal convex decompositions of two-qubit states~\cite{woottersEntanglementFormationArbitrary1998}.
    Given a two-qubit state in its eigendecomposition $\rho = \sum_j \lambda_j \ketbra{\lambda_j}{\lambda_j}$, Wootters shows how to find the optimal decomposition $\rho = \sum_i p^{\mathrm{opt}}_i \ketbra{\phi^{\mathrm{opt}}_i}{\phi^{\mathrm{opt}}_i}$ such that 
    \begin{equation}
        E_{\mathrm{av}}(\{ p^{\mathrm{opt}}_i, \ket{\phi^{\mathrm{opt}}_i} \}) = E_{\mathrm{oF}}(\rho)
    \end{equation}
    as well as the unitary as per Eq.~\eqref{eq:freedom_ensembles_from_eigen} that maps from one to the other.
    See Appendix~\ref{sec:wootters_optimal_decomp} for an introduction to Wootters' decomposition and a practical (numerically-friendly) implementation.
    The idea of how to use this to find unravelings of channels is straightforward. 
    We begin by rotating the non-noisy part from the pure state to its single-qubit equivalent subspace, obtaining $\ket{\psi^{\mathrm{eff}}}$ as per Eq.~\eqref{eq:effective_two_qubit_state}.
    Then, we apply the noise to the target qubit, resulting in the mixed state
    \begin{equation}
        \rho^{\mathrm{eff}} = (\mathcal{N} \otimes \mathcal{I})[\ketbra{\psi^{\mathrm{eff}}}{\psi^{\mathrm{eff}}}]
        \, .
    \end{equation}
    From this, one can apply Wootters' method to obtain the optimal convex decomposition
    \begin{equation}
    \label{eq:optimal_effective_decomp}
        \rho^{\mathrm{eff}} = \sum_{i=1}^4 p^{\mathrm{opt}}_i \ketbra{\phi^{\mathrm{eff}}_i}{\phi^{\mathrm{eff}}_i}
    \end{equation}
    where $\ket{\phi^{\mathrm{eff}}_i}$ are two-qubit states and this decomposition minimizes the average entanglement entropy. 
    Rotating back gives an $n$-qubit state
    \begin{equation}
        \rho = (\II \otimes V) \rho^{\mathrm{eff}} (\II \otimes V^\dagger)
        = \sum_{i=1}^4 p^{\mathrm{opt}}_i (\II \otimes V) \ketbra{\phi^{\mathrm{eff}}_i}{\phi^{\mathrm{eff}}_i} (\II \otimes V^\dagger).
    \end{equation}
    Since the rotation is a local operation with respect to the bipartition of interest (it does not affect the target qubit in any way), it also does not change the entanglement between the two subsystems. 
    We have therefore found an optimal decomposition of the noisy $n$-qubit state when the noise channel is only applied to a single qubit.

    In practice, on MPS, after applying the single qubit noise model and finding the optimal decomposition, one can sample one state from the decomposition~\eqref{eq:optimal_effective_decomp} and then contract the non-physical index back with the rest of the MPS from Eq.~\eqref{eq:MPS_SVD_cut}. 
    This is sampling one state vector $\ket{\phi^{\mathrm{eff}}_i}$ and retrieving the sample of the $n$-qubit evolution by computing $\ket{\phi^{\mathrm{opt}}_i} = (\II \otimes V) \ket{\phi^{\mathrm{eff}}_i}$.
    A graphical overview of the complete procedure is shown in Fig.~\ref{fig:optimal-single-qubit-unraveling-procedure}.

    \begin{figure}
        \centering
        \includegraphics[width=\linewidth]{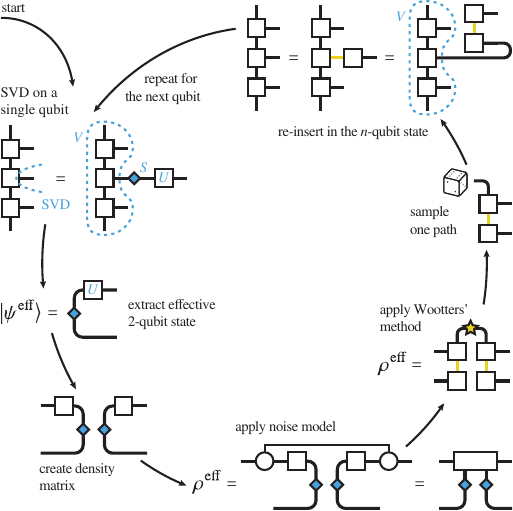}
        \caption{Graphical representation of the procedure to find the optimal single-qubit unraveling.}
        \label{fig:optimal-single-qubit-unraveling-procedure}
    \end{figure}
    
    \begin{table*}[!ht]
        \centering
        \begin{tabular}{l l c | c c c c c}
            \textbf{Work} & & & \textbf{System} & \textbf{Adaptive} & \textbf{Analytical} & \textbf{Cost function} & \textbf{Noise model} \\
            Vovk \& Pichler \# 1 & 2022 & \cite{vovkEntanglementOptimalTrajectoriesManyBody2022} & TITIL & \cmark & \cmark & entropy rate (MC) & any \\
            Kolodrubetz & 2023 & \cite{kolodrubetzOptimalityLindbladUnfolding2023} & RC & \xmark & \xmark & N/A & DF \\
            Cheng \& Ippoliti & 2023 & \cite{chengEfficientSamplingNoisy2023} & RC & \xmark & \cmark & quasientropy & unital \& AD \\
            Chen et al. & 2024 & \cite{chenOptimizedTrajectoryUnraveling2024} & RC & \xmark & \cmark & quasientropy & DF \& AD \\
            Chen et al. & 2024 & \cite{chenOptimizedTrajectoryUnraveling2024} & any & \cmark & \xmark & entropy (NQR) & any\textsuperscript{+} \\
            Vovk \& Pichler \# 2 & 2024 & \cite{vovkQuantumTrajectoryEntanglement2024} & any & \cmark & \xmark & entropy (MC) & any\textsuperscript{+} \\
            Daraban et al. & 2025 & \cite{darabanNonunitarityMaximizingUnraveling2025} & any & \cmark & \xmark & ``unitarity'' & any\textsuperscript{+} \\
            This work & & & any & \cmark & \cmark & entropy (NQR) & any\textsuperscript{+} \\
        \end{tabular}
        \caption{
            Comparison overview of the state-of-the-art choices of unraveling based on several criteria: 
            {\bf System:} (restricted vs general) the kind of system the results apply to (time-independent \emph{translationally-invariant Lindbladian} systems (TITIL), \emph{random circuits} (RC) or arbitrary systems), 
            {\bf Adaptive:}(fixed vs adaptive) whether the unraveling is fixed or adaptive (state-dependent), 
            {\bf Analytical:}(numerical vs analytical) whether the unraveling optimisation (if any) is computed using analytic or numerical methods,
            {\bf Cost function:} what measure of quality of the unraveling is used and optimized (for entropy based measures, the bipartition is specified: MC for middle cut or NQR for noisy qubit vs rest of the system), and 
            {\bf Noise model:} to which single-qubit noise models the results apply (DP: depolarizing, DF: dephasing, AD: amplitude damping, \textsuperscript{+}: applicable even to non-fixed noise models, possibly changing for each qubit and each depth).
            Note that Ref.~\cite{chenOptimizedTrajectoryUnraveling2024} appears twice: once for their analytical fixed unraveling (spin-model mapping) and once for their adaptive heuristic strategy (numerical optimization).
        }
        \label{tab:unraveling_comparison}
    \end{table*}
    
\subsubsection{Optimal unraveling unitary}
\label{sec:optimal_unitary}

    The procedure presented is fully functional and implemented for the numerical simulations shown in Section~\ref{sec:numerical_results}.
    However, Wootters' method for computing the optimal decomposition of a two-qubit state takes the eigendecomposition as input, not any arbitrary decomposition. 
    That is, it returns a unitary $U_{\mathrm{Wootters}}$ such that the decomposition
    \begin{equation}
        \sqrt{p^{\mathrm{opt}}_i} \ket{\phi^{\mathrm{opt}}_i} = \sum_{j=1} [U_{\mathrm{Wootters}}]^*_{j,i} \sqrt{\lambda_j} \ket{\lambda_j}
    \end{equation}
    is optimal.
    This is no major issue since the eigendecomposition of a two-qubit state can be computed, 
    but it does not fully solve the problem stated in \eqref{eq:target_optimal_unitary} of finding the optimal unitary $U_{\mathrm{opt}}$, since in general the induced decomposition from Eq.~\eqref{eq:Kraus_to_ensemble} will not be the eigendecomposition.
    Thus, we want to find the unitary that maps the induced decomposition to the eigendecomposition
    \begin{equation}
        \rho = \sum_i K_i \ketbra{\psi}{\psi} K_i^\dagger = \sum_i p_i \ketbra{\phi_i}{\phi_i} 
        \xrightarrow{U_0} 
        \rho = \sum_j \lambda_j \ketbra{\lambda_j}{\lambda_j} \, .
    \end{equation}
    We will show here that this can also be done efficiently.
    Doing so, we will obtain the optimal unitary to apply directly to the Kraus decomposition of the noise channel instead of only one valid sample from the optimal distribution, thus allowing us to implement the procedure depicted in Fig.~\ref{fig:overview}. 

    For the rest of this section, we will use the matrix notation of ensemble decompositions presented in Appendix~\ref{sec:matrix_formulation}.
    Wootters' method starts with the eigendecomposition of some input state $\rho$, i.e.,
    \begin{equation}
        \rho = \sum_i\lambda_i \ketbra{\lambda_i}{\lambda_i} 
        = v v^\dagger
        \quad \text{with} \quad 
        v = \sum_i \sqrt{\lambda_i} \ketbra{\lambda_i}{i} \, .
    \end{equation}
    However, in general, we are given an arbitrary decomposition of the state
    \begin{equation}
        \rho = \sum_i p_i \ketbra{\phi_i}{\phi_i} 
        = \phi \phi^\dagger
        \quad \text{with} \quad 
        \phi = \sum_i \sqrt{p_i} \ketbra{\phi_i}{i} \, ,
    \end{equation}
    and want to apply Wootters' method.
    The question now is, how we can find a mapping from the arbitrary decomposition $\phi$ to the eigendecomposition $v$, i.e., some unitary $U_0$, such that
    \begin{equation}
    \label{eq:v_to_phi}
        v = \phi U_0^\dagger \, .
    \end{equation}
    To this end, we identify
    \begin{equation}
        v v^\dagger = V \Lambda V^\dagger
        \quad \text{with} \quad 
         V = \sum_{i=1}^r \ketbra{\lambda_i}{i}
        \ \text{ and } \ 
        \Lambda = \sum_{i=1}^r \lambda_i \ketbra{i}{i} \,
    \end{equation}
    and find $v = V \sqrt{\Lambda}$.
    Combining this with the target unitary freedom~\eqref{eq:v_to_phi} results in the equation to satisfy
    \begin{equation}
        V \sqrt{\Lambda} = \phi U_0^\dagger 
        \quad \Leftrightarrow \quad 
        \phi = V \sqrt{\Lambda} U_0 \, ,
    \end{equation}
    which resembles a singular value decomposition of $\phi$. 
    Indeed, we can associate
    \begin{equation}
        \mathrm{SVD}(\phi) 
        = U_\phi^{\vphantom{\dagger}} \Sigma_\phi^{\vphantom{\dagger}} V_\phi^\dagger 
        = \underbrace{V\sqrt{\Lambda}}_{v}U_0
        \quad \Rightarrow \begin{cases}
            V & = U_\phi^{\vphantom{\dagger}} \, , \\
            \Lambda & = \Sigma^2 \, , \\
            U_0 & = V_\phi^\dagger \, .
        \end{cases}
    \end{equation}
    From this, we find that the optimal unitary freedom to apply to the Kraus decomposition of the channel to solve~\eqref{eq:target_optimal_unitary} is given by $U_{\mathrm{opt}} = U_{\mathrm{Wootters}} U_0$.

\subsection{Comparing unraveling strategies}
\label{sec:unraveling_comparison}

    In recent years, several works have explored trajectory sampling algorithms, aiming to reduce computational cost and ideally find or improve noise thresholds, resulting in efficient classical simulations.
    As mentioned above, sampling a trajectory from an evolution under a quantum channel admits a degree of freedom in the decomposition to sample from (the choice of unraveling) and 
    the precise value of the heralded error from Theorem~\ref{result:sampled_trajectories} strongly depends on this choice. 
    For the same evolution, some unravelings result in large truncation errors while others produce states that can be well represented as MPS.
    Some motivating examples of the impact of this choice of unraveling are presented in Appendix~\ref{sec:unitary_unraveling_choice}. 
    In this section, we present and compare the state-of-the-art choices of unraveling. 
    An overview of the unravelings considered here is presented in Table~\ref{tab:unraveling_comparison}.
    
    We say that statements about an unraveling are \emph{general} if they hold for any evolution within the setting considered in Section~\ref{sec:MPS_based_details}, while they are \emph{system-restricted} if further restricting assumptions are made. 
    That would be the case for average-case results over brickwork circuits of Haar random gates, or if the statements only hold for translationally invariant systems.
    We call an unraveling choice \emph{fixed} if it is independent of the current state of the system. 
    Nonetheless, one is generally not restricted to unraveling the noise in the same way at every application of the noise.
    An unraveling choice is said to be \emph{adaptive} if it depends on the state to which the noise is applied.
    We also distinguish methods that deterministically find a cost-function optimal unraveling
    from those obtained from a numerical optimization without guarantees (e.g., gradient descent).
    This is particularly relevant for the adaptive unravelings, where the optimization happens at every application of the noise.
    We also emphasized whether the obtained unraveling is computed by \emph{analytical} methods or by numerical optimization and describe the target cost function used to optimize the unraveling. Finally, we specify the noise models that can be handled by each of the unraveling methods. 

    Physically motivated or handpicked unravelings have been studied, in particular in the setting of time-independent translationally invariant Lindbladian systems~\cite{vovkEntanglementOptimalTrajectoriesManyBody2022} or noisy random circuits~\cite{kolodrubetzOptimalityLindbladUnfolding2023}. 
    Different fixed unravelings have been shown numerically to result in significantly different performance, in particular different growths of average entanglement.
    To provide analytical insights on how to choose a good fixed unraveling, approximate mappings of random circuits to classical spin models have been leveraged~\cite{chengEfficientSamplingNoisy2023, chenOptimizedTrajectoryUnraveling2024}, resulting in fixed unravelings with improved performance for random circuits. 
    However, these methods really optimize a quasientropy which is only indirectly related to the cost of the simulation.
    On the other hand, adaptive unravelings have first been considered in restricted settings (time-independent translationally invariant Lindbladian systems), where it was shown that to minimize the (continuous) entanglement change rate it is sufficient to consider only two choices (homodyne and number measurements), resulting in an efficient greedy optimization algorithm~\cite{vovkEntanglementOptimalTrajectoriesManyBody2022}.
    Adaptive unravelings can also be computed with numerical methods, like gradient descent on the unitary freedom in Eq.~\eqref{eq:unitarity_kraus}.
    Versions thereof have been implemented to pick adaptive unravelings minimizing different cost functions, like entanglement entropies~\cite{vovkQuantumTrajectoryEntanglement2024, chengEfficientSamplingNoisy2023} or measures of non-unitarity~\cite{darabanNonunitarityMaximizingUnraveling2025}.

    Adaptive unravelings offer greater freedom than fixed ones, potentially leading to better performance. 
    In particular, in addition to their analytical fixed unraveling, \cite{chenOptimizedTrajectoryUnraveling2024} also present an adaptive heuristic strategy that employs the same cost function as our work but uses numerical optimization to compute the unraveling.
    However, numerical approaches such as gradient decent are often non-deterministic, have the risk of not finding a good minimum for their cost function and can be significantly slower to implement on each of the numerous calls of the unraveling subprocedure.
    Thus, it would be valuable from both practical and conceptual perspectives to find analytically motivated choices of adaptive unraveling.
    Our unraveling achieves precisely that.
    We provide an efficient adaptive unraveling strategy with analytical guarantees for any single-qubit noise channel.
    Our unraveling can be computed deterministically in constant time 
    (as opposed to the $\mathcal{O}(\chi^2)$ or $\mathcal{O}(\chi^3)$ scaling of previous adaptive unravelings)
    and optimally disentangles the target qubit from the rest of the state.

\section{Application to specific systems}
\label{sec:restricted_settings}

    In this section, we discuss the application of trajectory sampling methods and the impact of our results on two frequently considered settings: time-independent Lindbladian systems (Section~\ref{sec:time_indep_Lindblad}) and noisy random circuits (Section~\ref{sec:random_circuits}). 
    In particular, we define a measure of unitarity of (Kraus) operators, demonstrate how to compute an unraveling that minimizes this measure, and show that this fixed unraveling is near-optimal for random circuits.

\subsection{Time-independent Lindbladian systems}
\label{sec:time_indep_Lindblad}

    Our method can also be applied to Lindblad equations with single-qubit jump operators.
    In this section, we discuss how to construct a quantum circuit that matches the language above, starting with Trotterizing the evolution of the open system.
    Then, the choice of unraveling can be directly applied to the resulting Kraus operators.
    
    The Lindbladian description of the evolution of open quantum systems 
    is given by
    equations of motion of the form \begin{equation}
        \frac{\partial \rho}{\partial t} 
        = -\ii[H, \rho] + \sum_{j} \gamma_j \left( c_j \rho c_j^\dagger -\frac{1}{2} \left\{ c_j^\dagger c_j, \rho \right\}\right)
    \end{equation}
    where $H$ is the Hamiltonian governing the coherent evolution and $\{ c_j \}_j$ are the jump operators.
    We explicitly show in Appendix~\ref{sec:Lindblad_to_circuits} how to compute the noisy quantum circuit corresponding to a Trotterization of the evolution above exactly. 
    For a coherent part composed of two-local nearest-neighbor 
    interactions
    \begin{equation}
        H = \sum_{i=1}^n h_{i, i+1}
        \quad \text{with} \quad 
        h_{i, i+1} \in \CC^{4 \times 4} \ \forall i 
        \, ,
    \end{equation}
    the gates from the resulting brickwork circuit are simply $U_{\mathrm{coh}} = \ee^{-\ii h dt}$.
    For the incoherent part, the Kraus operators of the corresponding noise channel are obtained by computing the eigendecomposition of the Choi state of the local jump process. 
    The Choi state in this setting is given by a reshaping of the vectorized jump superoperator
    $\ee^{L_j dt}$
    with 
    \begin{equation}
    L_j = \gamma_j \left( c_j \otimes c_j^* -\frac{1}{2} c_j^\dagger c_j \otimes \II -\frac{1}{2} \II \otimes c_j^\top c_j^* \right).
    \end{equation}
    This standard procedure returns an exact (and orthogonal) Kraus decomposition of the noise channel resulting from the Trotterized jump process.
    
    We also show in Appendix~\ref{sec:noise_jump_equivalence} that, for a jump rate $\gamma$ and Trotterization time step $dt$, choosing 
    $c = \left( \begin{smallmatrix} 1 & 0 \\ 0 & 0 \end{smallmatrix}\right)$
    results in a dephasing channel with rate $p^{\mathrm{DF}} = 1 - \ee^{-\gamma dt/ 2}$
    and choosing 
    $c = \left( \begin{smallmatrix} 0 & 1 \\ 0 & 0 \end{smallmatrix}\right)$
    results in an amplitude damping channel with noise-rate $\gamma^{\mathrm{AD}} = 1 - \ee^{- \gamma dt}$.
    Additionally, the resulting Kraus decomposition coincides with the decomposition typically found in textbooks (labeled in the remainder as {mixed-unitary} decomposition for dephasing and {Orthogonal} decomposition for amplitude damping, see also Table~\ref{tab:unraveling_comparison}).

\subsection{Noisy random circuits}
\label{sec:random_circuits}

    In this section, we look at the example of random quantum circuits. 
    This setting has been heavily studied in the trajectory unraveling literature and the related field of measurement-induced phase transitions. 
    We show in Section~\ref{sec:random_circuits_unraveling} how our unraveling fits in this literature by explaining how previous unravelings can be shown to be near-optimal using our arguments based on entanglement of formation.
    This is done by observing that, when restricting to random circuits, our protocol can be simplified by approximating the effective two-qubit states with a Bell state (maximally entangled state on two qubits). 
    With this, we give analytical footing to the proxy proposed in Ref.~\cite{kolodrubetzOptimalityLindbladUnfolding2023} and obtain that the unraveling choices presented by Chen et al.~\cite{chenOptimizedTrajectoryUnraveling2024} and Cheng et al.~\cite{chengEfficientSamplingNoisy2023} happen to be optimal in our sense for the noisy Bell state.
    We also address one open question from Ref.~\cite{chengEfficientSamplingNoisy2023} about the improvement of thresholds when using locally and adaptively optimized unravelings. 
    In the setting of Haar random circuits, such improvement appears to be minimal, as the two strategies converge.
    In Section~\ref{sec:least_unitary_unraveling}, we then relate this optimality statement to the ``projectiveness'' or ``unitarity'' of the obtained Kraus decompositions.

\subsubsection{Near-optimal fixed unraveling for random circuits}
\label{sec:random_circuits_unraveling}

    This argument is based on the observation that states in random circuits are locally approximately maximally entangled. 
    As presented in Section~\ref{sec:optimal_single_qubit_unraveling}, the problem of finding a good unraveling for single-qubit channels can be reduced to a two-qubit problem. 
    The Schmidt coefficients in Eq.~\eqref{eq:schmidt_1_n-1} characterize the entanglement of the initial effective two-qubit state.

    If the singular values are equal (thus both $1/\sqrt{2}$), then the state is effectively maximally entangled. 
    This case may seem very specific and unlikely, but it is close to what happens for Haar random states.
    Indeed, it is already well known that, for sufficiently large Haar random states, any subset of qubits is ``more or less maximally entangled'' with the rest of the system.
    For a Haar random state, the distribution of eigenvalues of the reduced density matrix (squared singular values) follows a Marchenko-Pastur distribution~\cite{marcenkoDistributionEigenvaluesSets1967, znidaricEntanglementRandomVectors2006}.
    For states drawn from a $2$-design ensemble~\cite[Example 50]{meleIntroductionHaarMeasure2024} and a single target qubit, we have
    \begin{equation}
        \operatorname*{\mathds{E}}_{\ket{\psi} \sim \nu} \left[ \Tr{}{\rho_1^2}\right] = \frac{2+2^{n-1}}{2^n + 1}=\frac{1}{2}+\mathcal{O}\left(\frac{1}{2^n}\right)
    \end{equation}
    where $\nu$ is a $2$-design ensemble and $\Tr{}{\rho_1^2} = \lambda_1^2 + \lambda_2^2$ is the purity of the reduced density matrix on that qubit.
    Now, since $\Tr{}{\rho_1} = \lambda_1 + \lambda_2 = 1$, we have 
    $\lambda_i = \frac{1}{2} + \mathcal{O}\left( \frac{1}{\sqrt{2^n}} \right)$.
    Since the eigenvalues of the reduced density matrix are the squared Schmidt coefficients, 
    \begin{equation}
        s_i = \sqrt{\lambda_i}= \frac{1}{\sqrt{2}} + \mathcal{O}\left( \frac{1}{\sqrt{2^n}} \right) \, .
    \end{equation}
    Interpreting the rest of the $n-1$ qubits as a single effective qubit as we did in Section.~\ref{sec:effective_problem}, we obtain
    \begin{equation}
        \ket{\psi^{\mathrm{eff}}} \approx \frac{1}{\sqrt{2}}(\ket{u_1} \otimes \ket{v_1} + \ket{u_2} \otimes \ket{v_2}) \, ,
    \end{equation}
    which is a maximally entangled state.
    Furthermore, in the case of two degenerate Schmidt coefficients $s_1 = s_2 = 1/\sqrt{2}$, the local basis is irrelevant and one can enforce that one of the two Schmidt bases is the computational basis. 
    Using this, we choose the form
    $\ket{\psi} \approx \frac{1}{\sqrt{2}}(\ket{0} \otimes \ket{w_1} + \ket{1} \otimes \ket{w_2})$. 
    This can be seen from the singular value decomposition of the matrix of coefficients (from which the Schmidt decomposition is computed), as the matrix of singular values is proportional to the identity, and thus commutes with the two basis-defining unitaries.
    With this, the case $s_1 = s_2 = 1/\sqrt{2}$ reduces to considering $\ket{\psi^{\mathrm{eff}}} = \ket{\Phi^+}$.
    \begin{equation*}
        \includegraphics{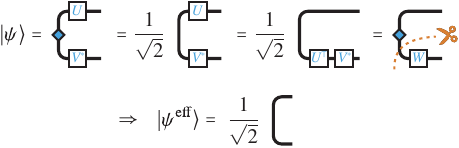}
    \end{equation*}

    When dealing with random circuits, one can simplify the unraveling procedure: 
    Computing the optimal unraveling of the noise channel of interest for the Bell state once.
    We call this unraveling \emph{Haar Optimal}.
    This yields a near-optimal unraveling in a state-independent manner.
    We show in Appendix~\ref{sec:choi_optimal_kraus_examples} that the unravelings proposed in Refs.~\cite{chenOptimizedTrajectoryUnraveling2024} and~\cite{chengEfficientSamplingNoisy2023} are optimal in this sense.
    Our argument offers a different intuition for why the Kraus decompositions found in the literature perform well in unraveling noisy random circuits based on MPS.
    Additionally, if one lacks information about the state and consequently assumes it to be random, this decomposition is the best estimate.
    In a sense, they are the best state-agnostic choice of unraveling.

\subsubsection{Least unitary unraveling}
\label{sec:least_unitary_unraveling}

    As a consequence of the result in the previous section, we also find the (state-independent) unraveling, which can be interpreted as being as far as possible from a mixed-unitary unraveling. 
    Finding the optimal unraveling for the Bell state (and thus also for random circuits) is equivalent to computing the optimal decomposition of the Choi state of the noise channel (see Appendix~\ref{sec:optimizing_Kraus_projectiveness} for details). 
    One can then ask about input-independent properties of unravelings.
    In particular, single-qubit channels are special in that they cannot create entanglement and only reduce entanglement by projecting out part of the state.
    Multi-qubit channels, on the other hand, can disentangle states in a unitary way.
    We ask if one can find the Kraus decomposition of a single-qubit channel that is as far from being mixed-unitary as possible, i.e., where each Kraus operator is as close to rank-$1$ as possible.

    \begin{corollary}[Least unitary unraveling]
    \label{result:average_projective_unraveling}
        Let us quantify the unitarity of an operator $K$ by the uniformity of its normalized singular values $\{ \sigma_i / \norm{K}_F\}$, measured by the Shannon entropy
        \begin{equation}
        \label{def:operator_unitarity}
            \operatorname{unitarity}(K) = - \sum_{i=1}^d \left( \frac{\sigma_i}{\norm{K}_F} \right)^2 \log_2 \left( \frac{\sigma_i}{\norm{K}_F} \right)^2 \, .
        \end{equation}
        Then, given a single-qubit channel as in Eq.~\eqref{eq:def_kraus}, the locally entanglement-optimal unraveling for the Bell state is also the Kraus decomposition which minimizes the average unitarity of the operators
    \begin{equation}
        \operatorname{unitarity}_{\mathrm{av}}(\{ K_i \}) = \frac{1}{d} \sum_i \norm{K_i}^2_F \operatorname{unitarity}(K_i) \, .
    \end{equation}
    \end{corollary}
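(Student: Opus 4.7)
The plan is to show that, when the input state is the Bell pair $\ket{\Phi^+}$, the ensemble-averaged entanglement of the induced unraveling and the average unitarity functional coincide pointwise on the set of Kraus decompositions of $\mathcal{N}$. Once this identity is established, Theorem~\ref{result:wootters_unraveling} applied to $\ket{\Phi^+}$ immediately identifies the minimizer of both functionals as the locally entanglement-optimal unraveling, proving the corollary. The strategy leverages exactly the Bell-state reduction already used in Section~\ref{sec:random_circuits_unraveling}.

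Concretely, I would proceed in three short steps. First, I would apply $\mathcal{N}$ with Kraus operators $\{K_i\}$ to the first qubit of $\ket{\Phi^+}=\tfrac{1}{\sqrt{2}}(\ket{00}+\ket{11})$ and read off from Eq.~\eqref{eq:Kraus_to_ensemble} the induced probabilities $p_i=\bra{\Phi^+}(K_i^\dagger K_i\otimes\mathds{1})\ket{\Phi^+}=\tfrac{1}{2}\Norm{K_i}_F^2$ with trajectories $\ket{\phi_i}\propto(K_i\otimes\mathds{1})\ket{\Phi^+}$. Second, I would invoke the standard Choi--Jamiolkowski-type identity: writing the singular value decomposition $K_i=\sum_k\sigma_k^{(i)}\ketbra{u_k^{(i)}}{v_k^{(i)}}$ one obtains
\begin{equation}
(K_i\otimes\mathds{1})\ket{\Phi^+}=\tfrac{1}{\sqrt{2}}\sum_k\sigma_k^{(i)}\,\ket{u_k^{(i)}}\otimes\ket{(v_k^{(i)})^*},
\end{equation}
which is already a Schmidt decomposition across the target:rest cut, so the normalized state $\ket{\phi_i}$ has Schmidt coefficients $\sigma_k^{(i)}/\Norm{K_i}_F$. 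Third, comparing with Eq.~\eqref{def:operator_unitarity} identifies the von Neumann entanglement entropy of $\ket{\phi_i}$ with $\operatorname{unitarity}(K_i)$, and substituting gives
\begin{equation}
E_{\mathrm{av}}(\{p_i,\ket{\phi_i}\})=\sum_i p_i\,E(\ket{\phi_i})=\frac{1}{2}\sum_i\Norm{K_i}_F^2\,\operatorname{unitarity}(K_i)=\operatorname{unitarity}_{\mathrm{av}}(\{K_i\}),
\end{equation}
where the final equality uses $d=2$ for single-qubit Kraus operators. Since the two functionals are literally equal, their minimizers coincide, and Theorem~\ref{result:wootters_unraveling} specialized to $\ket{\Phi^+}$ completes the argument.

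There is no substantial mathematical obstacle here: the result is essentially a one-line identification of two cost functions on the Kraus manifold, made possible by the Choi--Jamiolkowski correspondence between $(K_i\otimes\mathds{1})\ket{\Phi^+}$ and the singular value structure of $K_i$. The only care required is matching the $1/d$ prefactor in the definition of $\operatorname{unitarity}_{\mathrm{av}}$ with the $1/2$ Bell-state normalization, and applying the usual $0\log 0=0$ convention for rank-deficient Kraus operators.
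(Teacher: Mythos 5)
Your proposal is correct and follows essentially the same route as the paper's Appendix proof: the paper phrases the identification via vectorized Kraus operators and the Choi state, $\superket{K_i} = (K_i\otimes\II)\ket{\Omega}$, but this is exactly your Bell-state/Choi--Jamiolkowski computation, and both arguments reduce the claim to the observation that the Schmidt coefficients of the induced trajectory states are the normalized singular values of the Kraus operators, so $E_{\mathrm{av}} = \operatorname{unitarity}_{\mathrm{av}}$ pointwise. The bookkeeping of the $1/d$ versus $1/2$ prefactor and the $0\log 0$ convention is handled the same way in the paper.
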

    The normalization with the Frobenius norm guarantees that we compute the entropy of a valid probability distribution.
    Note that the unitarity, as defined above, reaches its maximal value of $1$ for unitary operators and the minimum $0$ for rank-$1$ operators, e.g., rank-$1$ projectors. 
    A longer reflection on the choice of definition as well as the full derivation of this result can be found in Appendix~\ref{sec:optimizing_Kraus_projectiveness}.
    During the writing of this manuscript, we were made aware of Ref.~\cite{darabanNonunitarityMaximizingUnraveling2025} where a different notion of (non-) unitarity is introduced, which is then optimized numerically for an adaptive unraveling.

    \begin{figure*}[t!]
        \centering
        \includegraphics[width=\textwidth]{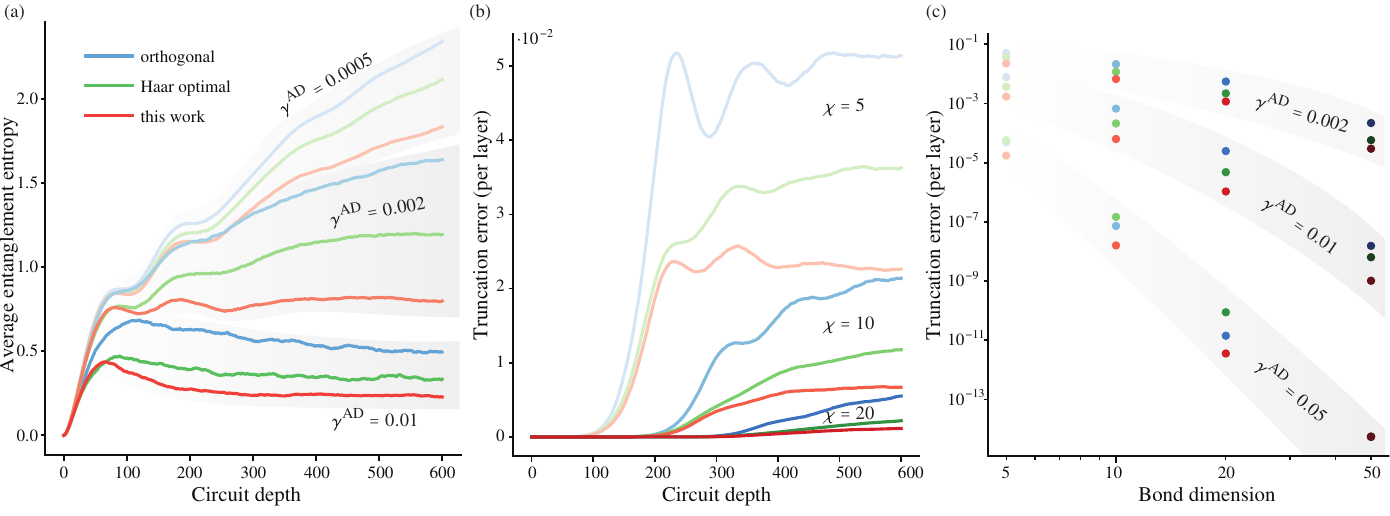}
        \caption{
        Simulation of a $16$ qubit system evolving under a Heisenberg Hamiltonian $H = \sum_i Y_iY_{i+1} + 0.35 X_i + 0.35 Y_i + 0.5 Z_i$ (open boundary conditions) with amplitude damping noise. 
        All results are an average of $500$ independently sampled trajectories and repeated for the \emph{Orthogonal} and \emph{Haar Optimal} fixed unravelings (see Table~\ref{tab:unravelings}) as well as our locally entanglement-optimal unraveling.
        (a) Average entanglement as a function of circuit depth plotted for various noise rates and unravelings in a simulation with bond dimension $\chi = 50$.
        (b) Average upper bound on truncation error (Eq.~\eqref{eq:MPS_truncation_error}) as a function of circuit depth, plotted for various bond dimensions in a simulation with noise rate $\gamma^{\mathrm{AD}} = 0.002$.
        (c) Scaling of the truncation error (at depth $600$) with bond dimension for various noise rates.
        }
        \label{fig:Fixed_Hamiltonian}
    \end{figure*}
    
    \begin{table*}[t!]
        \begin{tabular}{lccc}
            \textbf{Noise model} & 
            \textbf{Orthogonal} & 
            \textbf{Projective} & 
            \textbf{Haar Optimal} \\ 
            \vspace{0.15cm}
            Dephasing  & 
            $\Big \{ \sqrt{1-\frac{p}{2}} \II$, $\sqrt{\frac{p}{2}} Z \Big \}$ & 
            $\big \{ \sqrt{1-p} \II$, $\sqrt{p} \ketbra{0}{0}$, $\sqrt{p} \ketbra{1}{1} \big \}$ & 
            $U = H $ \\ 
            \vspace{0.50cm}
            Depolarizing & 
            \makecell[cc]{
                $\Big \{ \sqrt{1-\frac{3p}{4}} \II$, $\frac{\sqrt{p}}{2} X$, \\ $\frac{\sqrt{p}}{2} Y$, $\frac{\sqrt{p}}{2} Z \Big \}$
            } & 
            \makecell[cc]{
                $\big \{ \sqrt{1-p} \II$, $\sqrt{\frac{p}{2}} \ketbra{0}{0}$, $\sqrt{\frac{p}{2}} \ketbra{0}{1}$, \\ $\sqrt{\frac{p}{2}} \ketbra{1}{0}$, $\sqrt{\frac{p}{2}} \ketbra{1}{1} \big \}$ 
            } & 
            $U = H \otimes H$ \\
            \vspace{0.15cm}
            Amplitude damping & 
            $ \Big \{ \left(\begin{smallmatrix} 1 & 0 \\ 0 & \sqrt{1-\gamma} \end{smallmatrix} \right)$, $\left( \begin{smallmatrix} 0 & \sqrt{\gamma} \\ 0 & 0 \end{smallmatrix} \right) \Big \}$ & 
            --- &
            $U = H$ \\
        \end{tabular}
        \caption{
            Overview of some unravelings considered in the literature and implemented in the numerical simulations. 
            We also call the {Orthogonal} unraveling of the depolarizing and dephasing channels \emph{mixed-unitary}.
            The {Haar Optimal} unraveling is obtained by rotating the {Orthogonal} unraveling according to Eq.~\eqref{eq:unitarity_kraus} using the specified unitary, where $H$ is the Hadamard matrix $H = \frac{1}{\sqrt{2}} \left( \begin{smallmatrix} 1 & 1 \\ 1 & -1 \end{smallmatrix} \right)$. 
            More information can be found in Appendix~\ref{sec:common_unravelings}.
        }
        \label{tab:unravelings}
    \end{table*}
    
\section{Performance}
\label{sec:performance}
    In this section, we discuss the performance of our unraveling choice relative to other quantum trajectory sampling methods and in the broader context of classical simulation algorithms for noisy quantum dynamics.
    We first present numerical results for Lindbladian systems and random circuits and compare the numerical performance of our unraveling to alternative unravelings.
    To convey a sense of the scale of noise rate sufficient for efficient simulability using our algorithm, we discuss known noise thresholds from the trajectory sampling and measurement-induced phase transition literature. 
    Finally, we compare the family of methods based on sampling of trajectories with other simulation methods, including the Pauli path method and other tensor network methods based on matrix product operators and locally purified density operators.

\subsection{Numerical performance comparison}
\label{sec:numerical_results}

    \begin{figure*}[t!]
        \centering
        \includegraphics[width=\textwidth]{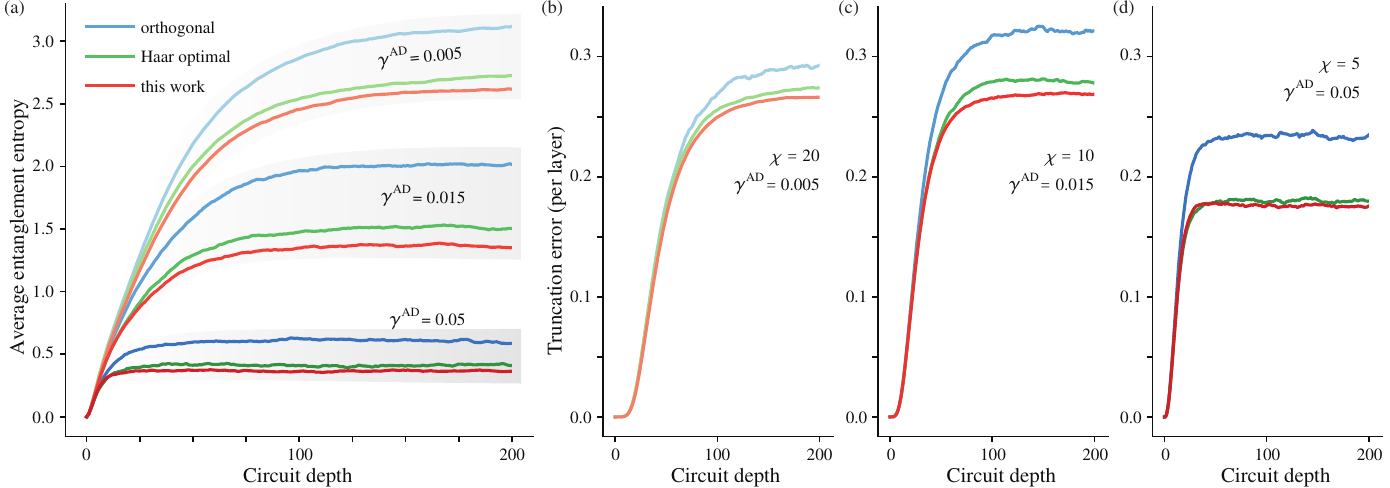}
        \caption{
            Simulation of a $12$ qubit open quantum system under random translationally invariant and time-independent Lindbladian evolutions with amplitude damping noise. 
            For each random evolution, we consider a brickwork circuit composed of a fixed gate repeated at each site.
            The gate itself is obtained by a short-time unitary evolution $U=\ee^{-\ii H dt}$ of a random Hamiltonian $H$ (a random Hermitian matrix with normally distributed real and imaginary parts of each entry) and a time step $dt=0.01$.
            Each trace is the average of 1000 independently computed trajectories truncated to a fixed bond dimension.
            (a) Average entanglement as a function of circuit depth plotted for various noise rates and unravelings in a simulation with bond dimension $\chi = 20$.
            (b) to (d) Average upper bound on truncation error (Eq.~\eqref{eq:MPS_truncation_error}) as a function of circuit depth, plotted for various combinations of bond dimensions and noise rates.
            }
        \label{fig:numerics_random_Hamiltonians}
    \end{figure*}

    To verify the performance of our algorithm,
    we present numerical simulations comparing our choice of unraveling to other unraveling algorithms under various settings. 
    In  Fig~\ref{fig:Fixed_Hamiltonian}, we considered an open quantum system evolving under a particular Hamiltonian, while in Fig~\ref{fig:numerics_random_Hamiltonians}, we present aggregated results based on many randomly generated Hamiltonians. In line with the literature on the simulation of noisy quantum circuits, in Fig~\ref{fig:numerics_random_circuits} we simulated random one-dimensional quantum circuits with independent local noise and variants thereof. 

    The various noisy quantum dynamics are simulated using our locally entanglement-optimal unraveling algorithm and three other types of 
    unraveling algorithms that we refer to as the \emph{Orthogonal}, \emph{Projective} and \emph{Haar Optimal} unravelings. Here, our numerical comparison focuses on three commonly considered noise models (depolarizing, dephasing and amplitude damping), Table~\ref{tab:unravelings} provides a Kraus decomposition associated with these unravelings (see Appendix~\ref{sec:common_unravelings} for more details).
    We point out that this comparison alone does not fully capture the performance advantages of our unraveling, which applies to arbitrary single-qubit noise models. In contrast, an explicit construction of the {Haar Optimal} unraveling
    —the more competitive of the three alternatives—
    is only known for specific noise channels and lacks this level of generality.

    Kraus decompositions can always be constructed from ensemble decompositions of the Choi state of the CPTP map~\cite[Section 6.4]{renesQuantumInformationTheory2022a},~\cite[Section 2.2]{watrousTheoryQuantumInformation2018}. The {Orthogonal} Kraus decomposition is defined as the set of operators obtained from the eigendecomposition, which are necessarily orthogonal.
    In the case of unital noise channels such as the depolarizing and dephasing noise, the eigenoperators are unitary and hence we call these \emph{mixed-unitary} unravelings. In the case of amplitude damping noise, the eigenoperators are not unitary. 
    This is the representation typically given in standard expositions of quantum noise for the three noise channels considered here.
    
    We also consider the \emph{Haar Optimal} unraveling, obtained by rotating the {Orthogonal} Kraus representation using the Hadamard matrix as the unitary freedom in Eq.~\eqref{eq:unitarity_kraus}. This unraveling is derived in Refs.~\cite{chengEfficientSamplingNoisy2023, chenOptimizedTrajectoryUnraveling2024} via an approximate mappings of random circuits to classical spin models. As discussed in Section~\ref{sec:random_circuits}, our locally entanglement-optimal unraveling also becomes the {Haar Optimal} unraveling when we restrict to Haar random states.
    Finally, in the case of depolarizing and dephasing noise, we also consider the \emph{Projective} unravelings defined in Table~\ref{tab:unravelings}.

    While Ref.~\cite{chenOptimizedTrajectoryUnraveling2024} also introduces an adaptive heuristic strategy based on minimizing local entanglement, the source code for their optimization routine is not publicly available and we do not include a direct numerical comparison here. However, because our method provides the exact analytical solution to the local entanglement minimization problem approximated by their heuristic, we expect the resulting trajectories 
    to be identical (assuming the heuristic successfully finds the optimum). Our analytical approach finds the optimum deterministically with $O(1)$ overhead, avoiding the convergence uncertainties and computational cost of numerical optimization.
    We focus our comparison to analytically optimal unraveling methods including the Haar optimal unraveling which is state-of-the-art for the noise models we consider.

\subsubsection{Time-independent Lindbladian systems}

    In Fig.~\ref{fig:Fixed_Hamiltonian} we demonstrate that there are settings where our locally entanglement-optimal unraveling results in a significant reduction of entanglement cost and truncation errors. Using a Heisenberg Hamiltonian $H = \sum_i Y_iY_{i+1} + 0.35 X_i + 0.35 Y_i + 0.5 Z_i$ with amplitude damping noise, we observe a marked performance improvements when going from the {Orthogonal} to the {Haar Optimal} unraveling and again going from the {Haar Optimal} to our unraveling. 
    We see this pattern more generally across many instances of Hamiltonians, although sometimes the separations can be less striking. 
    In Fig.~\ref{fig:numerics_random_Hamiltonians}, we demonstrate this by considering the entanglement and truncation error behavior of competing unravelings when averaged over many randomly selected Hamiltonians. As per Fig.~\ref{fig:Fixed_Hamiltonian}, we observe a hierarchy of performance where our unraveling improves on the {Haar Optimal} one, which itself outperforms the {Orthogonal} one. 
    We note that performance improvements within this hierarchy tend to be less marked for extremal levels of noise and bond dimension. As expected, our performance improvement over the {Haar Optimal} unraveling diminishes in settings with low noise and strongly coupled non-integrable Hamiltonians where the local entanglement structure of the system tends to be similar to that of Haar random states.

    We note that for some instances, our unraveling does not result in the best performance. 
    This may be explained by the greediness of our algorithm.
    As we will discuss in Section~\ref{sec:limitations}, a somewhat sub-optimal local sampling (such as given by the fixed {Haar Optimal} unraveling) could result in more favorable initial conditions for the unraveling in the next time step.
    
    \begin{figure*}[!ht]
        \centering
        \includegraphics[width=\textwidth]{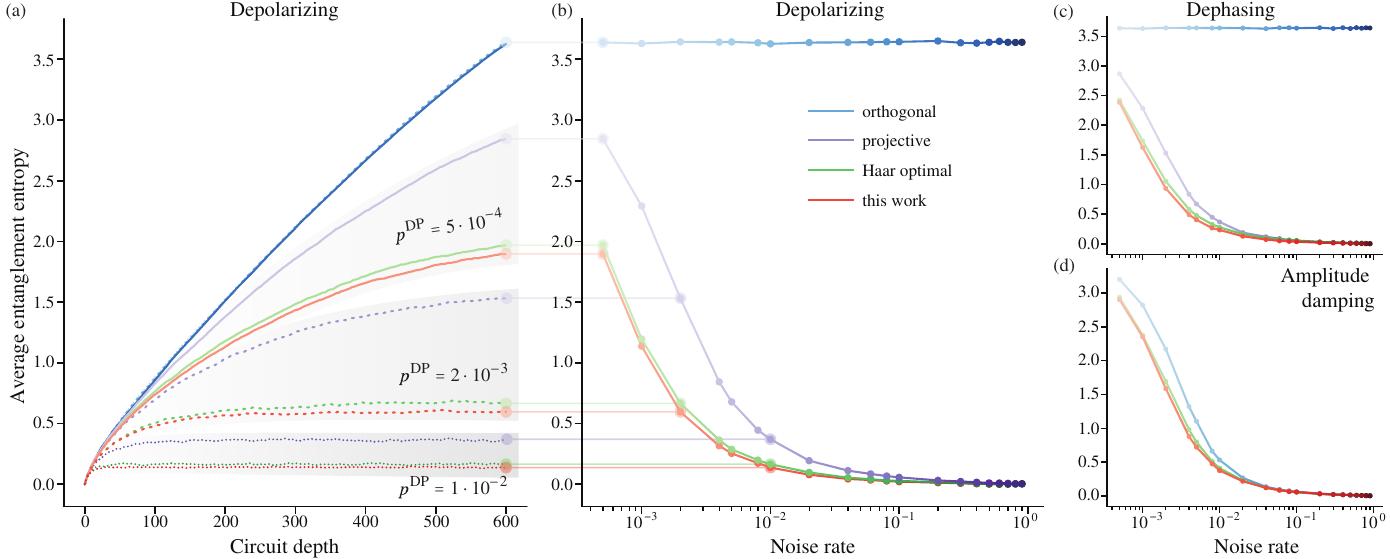}
        \caption{
            Average entanglement entropy across the middle bond of the MPS in the numerical implementation of the trajectory unraveling algorithm to different random circuits on 12 qubits with noise for different choices of unraveling. 
            Each run results from one unraveled trajectory of a newly sampled circuit, and each data point is the average over 1000 runs. 
            For all settings, we computed the average entanglement after each layer. 
            Panel (a) shows the growth of the entanglement with circuit depth for some noise rates of depolarizing noise on low-entangling random circuits. 
            Note that the shaded area labeling noise rates excludes the {mixed-unitary} unraveling, as those curves overlap for all noise rates.
            Panels (b) to (d) show the value of the entropy after the last layer ({600}) as a function of noise rates for
            (b) low-entangling random circuits with depolarizing noise (as in (a)), 
            (c) dephasing noise and 
            (d) low-entangling random circuits with local random rotations with amplitude damping noise. In these settings, the final value at depth {600} is representative of the relation between the different unravelings for all depths, as in panel (a).
            The entangling angle is set at $0.05$ for all simulations.
        }
        \label{fig:numerics_random_circuits}
    \end{figure*}

\subsubsection{Noisy random circuits}

    \begin{figure}
        \centering
        \includegraphics[scale=1]{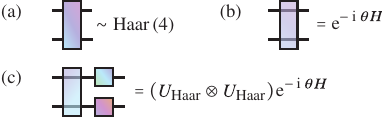}
        \caption{
            Construction of the $2$-qubit unitaries for the three kinds of random circuits considered in the numerical simulations. 
            (a) Two-qubit unitaries drawn from the Haar distribution. 
            (b) Little-entangling gates $U(H, \theta) = \ee^{-\ii \theta H}$ where $H$ is a Hermitian matrix with real and imaginary part of the entries drawn from a normal distribution with mean $0$ and standard deviation $1$. 
            The rotation angle $\theta$ is picked to be small, e.g., $0.1$.
            (c) Little-entangling gates with local rotation $U_{\mathrm{tot}} = (U_{\mathrm{Haar}} \otimes U_{\mathrm{Haar}}) U(H, \theta)$ where the local (single-qubit) unitaries are drawn from the Haar distribution while the entangling gate is constructed like in (b).
        }
        \label{fig:random_circuit_blocks}
    \end{figure}
    
    In the case of noisy quantum circuits, we consider one-dimensional circuits with brickwork layouts of two-qubit unitary gates, each followed by single-qubit noise channels on the affected qubits. 
    The unitary gates, as shown in Fig.~\ref{fig:random_circuit_blocks}, are either drawn from the Haar random distribution, small rotations around the axis defined by a random Hermitian matrix, or the latter followed by local Haar random unitaries.
    A graphical representation of such circuits can be found in Fig.~\ref{fig:overview} (top left).
    We simulated systems of {12} qubits with {600} layers of unitary gates and noise channels. 
    For each noise rate, we show the average of {1000} trajectories. 
    We repeated such simulations for depolarizing, dephasing, and amplitude-damping noise models.

    Let us first discuss the case where each gate is drawn randomly from the Haar distribution. As discussed in Section~\ref{sec:random_circuits}, for Haar random states, our unraveling and the {Haar Optimal} unraveling become equivalent. As a consequence, in this Haar random gates setting we expect and observe these two unraveling to perform similarly. For all three noise models considered here, the simulations yielded similar ensemble-averaged entanglement entropy values for our adaptive unraveling and the {Haar Optimal} one (see Appendix~\ref{sec:auxiliary_numerics}).
    For the two unital channels (depolarizing and dephasing), the {Orthogonal} unraveling is a mixed-unitary unraveling and hence does not affect the entanglement growth, while our unraveling perform best.
    As previously shown in Refs.~\cite{kolodrubetzOptimalityLindbladUnfolding2023, chengEfficientSamplingNoisy2023, chenOptimizedTrajectoryUnraveling2024}, we observed that the {Projective} unravelings result in smaller values of ensemble-averaged entanglement entropy than the {mixed-unitary} unravelings but under-perform against our unraveling. 
    For amplitude damping, our unraveling also outperforms the {Orthogonal} one.
    
    We now consider the case of small rotations of random circuits, where each gate is a small-angle rotation around a randomly generated Hermitian matrix (as per Fig.~\ref{fig:random_circuit_blocks}).
    For such circuits, when the acting noise is depolarizing or dephasing, our  unraveling shows consistently lower values of entanglement entropies than all other unravelings, including the {Haar Optimal} one, as shown in Fig.~\ref{fig:numerics_random_circuits} (a) to (c).
    First, note that, again, the {mixed-unitary} unraveling exhibits no dependence on noise rate. 
    Then, in most simulations, we observed that the {Haar Optimal} unraveling outperformed the {Projective} unraveling (or {Orthogonal} unraveling for amplitude damping).
    In particular, note that the saturation value of the entanglement when using the {Projective} unraveling for a depolarizing rate of $\gamma = 5 \cdot 10^{-3}$ is nearly the same as that of the {Haar Optimal} unraveling for $\gamma = 2 \cdot 10^{-3}$.
    In this case, an evolution with $2$ times less noise can be simulated at the same ``entanglement cost''.
    The {Haar Optimal} unraveling itself is being outperformed by our locally entanglement-optimal unraveling. 
    
    Surprisingly, the improvement from the locally entanglement-optimal over {Haar Optimal} unraveling was not to be found for amplitude damping in the low-entangling random circuits (see Appendix~\ref{sec:auxiliary_numerics}).
    We associate this observation with the fact that the {Haar Optimal} unraveling is not only optimal for Bell states under amplitude damping noise but for any state whose Schmidt decomposition is in the computational basis (see Appendix~\ref{sec:opt_AD_computational}).
    Since the small rotation unitaries do not rotate the state much, and the noise pushes it towards the all-zero state, the states remain close to states with Schmidt decompositions in the computational basis.
    This hypothesis is confirmed by slightly changing the circuit definition to include Haar random single-qubit rotations after each two-qubit small-angle unitary, as shown in Fig.~\ref{fig:random_circuit_blocks} (c).
    The single-qubit gates do not affect the entanglement between qubits, but they randomize the basis of the Schmidt decomposition.
    For random circuits composed of such gates, our unraveling again outperforms the {Haar Optimal} unraveling, as shown in Fig.~\ref{fig:numerics_random_circuits} (d).

    We also note that the depth of the simulation is insufficient to reach the saturation value of the average entanglement for the smaller noise rates. 
    This is visible in Fig.~\ref{fig:numerics_random_circuits} (a) for the curves corresponding to $\gamma = 5 \cdot 10^{-4}$ as opposed to those for $\gamma = 2 \cdot 10^{-3}$ or $\gamma = 10^{-2}$ which saturated (for all except the {mixed-unitary} unraveling). 
    For that reason, panels (b-d) should be seen rather as an indication of the behavior. 
    Still, our results indicate that the trend of separations should be conserved.

\subsection{Simulability thresholds}
\label{sec:thresholds}

    In simulations of noisy random 1D circuits based on unraveling into trajectories, the average growth of entanglement with system size can undergo a noise rate-dependent transition from a volume law to an area law.
    Based on the relation between entanglement scaling and representability using MPS, the entanglement phase transition is interpreted as a threshold between non-simulable and \emph{efficiently} simulable noise regimes (on average over random circuits). 
    The exact value of the threshold depends, however, on the choice of unraveling, and is thus used as a criterion for the quality of an unraveling choice. 
    This directly connects to the study of \emph{measurement-induced phase transitions} (MIPT)~\cite{skinnerMeasurementInducedPhaseTransitions2019, zabaloCriticalPropertiesMeasurementinduced2020, baoTheoryPhaseTransition2020, jianMeasurementinducedCriticalityRandom2020a, suzukiQuantumComplexityPhase2023}.
    Indeed, in the setting of noisy random quantum circuits, 
    MIPT can be understood as one 
    particular choice of unraveling (the Projective unraveling) of the dephasing noise channel into projectors onto the computational basis (see the example in Section~\ref{sec:preliminaries}). 
    
    To convey a sense of the scale of noise rate sufficient for efficient simulability using our algorithm, we discuss known noise thresholds
from the trajectory sampling and MIPT literature. Since we expect and numerically observe a fairly robust performance improvement from our unraveling, we expect our unraveling to achieve noise thresholds matching or below the MIPT and {Haar Optimal} unraveling thresholds discussed below.
    
Mixed-Unitary unravelings admit no noise threshold, as randomly inserted single-qubit unitaries do not affect the average entanglement dynamics of random circuits. 
    In contrast, {Projective} unraveling of the dephasing channel was shown to have a threshold around $0.2$~\cite{skinnerMeasurementInducedPhaseTransitions2019, zabaloCriticalPropertiesMeasurementinduced2020, kolodrubetzOptimalityLindbladUnfolding2023}. 
    The value of the thresholds was reduced further using the {Haar Optimal} unraveling $0.09$~\cite{chengEfficientSamplingNoisy2023, chenOptimizedTrajectoryUnraveling2024}. 
    The threshold is expected to be improvable further using multi-qubit unravelings.

    In the random circuit setting, Ref.~\cite{chenOptimizedTrajectoryUnraveling2024} numerically compared their fixed Haar Optimal unraveling to their adaptive numerical optimization unraveling (a heuristic version of our analytical unraveling method). Somewhat surprisingly at the time, they estimated similar thresholds for the two unravelings.
    As a consequence of Section~\ref{sec:random_circuits}, we now understand why the adaptive methods should reduce to the fixed {Haar Optimal} unraveling in the random circuits setting.
    It is, however, to be expected that our strategy would lead to an improvement of thresholds for other systems compared to the fixed Haar Optimal unraveling.
    Because our method provides the exact analytical solution to the local entanglement minimization problem approximated by the heuristic in Ref.~\cite{chenOptimizedTrajectoryUnraveling2024}, we expect the resulting simulability thresholds to be identical in the limit where the heuristic successfully finds the global optimum. Our contribution is thus unlikely to result in a significant shift in the physical threshold itself, but instead furnishes the computation of this optimum deterministically with $O(1)$ overhead, avoiding the convergence uncertainties and computational cost of variational optimization.

\subsection{Comparison to other simulation methods}
\label{sec:comparison_other_methods}

\subsubsection{Tensor network methods}

When trying to generalize tensor network representations for one-dimensional quantum systems from MPS to mixed states, one is given several options.
    The first approach is to use \emph{matrix product operators} (MPOs), in which one adds an open leg to each tensor for the dual space~\cite{pirvuMatrixProductOperator2010, guthjarkovskyEfficientDescriptionManyBody2020, weimerSimulationMethodsOpen2021}. 
    MPOs represent the density matrix as a train of rank-4 tensors. In addition to the potential computational overhead associated with the quadratic increase in dimension, 
    this approach is known to suffer from technical issues limiting the effectiveness of SVD based truncation.
    Indeed, truncations based on the efficiently implementable SVD\footnote{For $1$-D tensor trains, SVDs are efficiently implementable in the dimension of the local tensors.} minimize the $2$-norm error of the post truncation state. For vector representations of states such as MPS, this is the relevant metric. However, for matrix representations of states such as MPOs, the relevant metric is the trace norm.
    Furthermore, truncation of MPOs may result in negative eigenvalues. 
    As checking whether an MPO has negative eigenvalues is computationally hard~\cite{klieschMatrixProductOperatorsStates2014}, one may also obtain non-physical results. 
    MPO representations are also known to suffer from an entanglement barrier when used to simulate open system evolutions~\cite{nohEfficientClassicalSimulation2020, wellnitzRiseFallSlow2022, chengEfficientSamplingNoisy2023}.
    It has been shown that, in certain settings, simulation based on unraveled trajectories can lead to an exponential reduction in simulation cost compared to MPO simulation and thereby avoid the entanglement barrier plaguing MPOs~\cite{vovkQuantumTrajectoryEntanglement2024}.
    Such results are, however, very sensitive to the specific system being simulated, the rate of decoherence, and the initial state, among others, leading to mixed affirmations on the comparison of the efficiency of both methods~\cite{bonnesSuperoperatorsVsTrajectories2014, wolffNumericalEvaluationTwotime2020, preisserComparingBipartiteEntropy2023, vannieuwenburgDynamicsManybodyLocalization2017}. 
    However, to the best of our knowledge, none of these comparisons have been performed with the improved unraveling strategies including the Haar Optimal unraveling from Refs.~\cite{chengEfficientSamplingNoisy2023, chenOptimizedTrajectoryUnraveling2024}.
    
    An alternative representation of mixed states, which is guaranteed to preserve positivity and which can be truncated like an MPS, is obtained by (locally) purifying the density matrix.
    This gives rise to \emph{locally purified density operators} (LPDOs)~\cite{verstraeteMatrixProductDensity2004, zwolakMixedStateDynamicsOneDimensional2004, wernerPositiveTensorNetwork2016, weimerSimulationMethodsOpen2021, chengSimulatingNoisyQuantum2021, guoLocallyPurifiedDensity2023}\footnote{Note that the naming of this tensor network construction is highly inconsistent in the literature. Some authors refer to them as matrix product density operators (which others use for MPOs with trace $1$, independently of positivity), locally purified density operators, or purified matrix product states.}. 
    These, however, result in looped tensor network structures (when tracing out the auxiliary systems) which can not be optimized locally and for which there is no entanglement optimal form~\cite{cuevasPurificationsMultipartiteStates2013, delascuevasFundamentalLimitationsPurifications2016}.
    Additionally, the required bond dimensions may be unboundedly larger than for the MPO counterpart~\cite{glasserExpressivePowerTensornetwork2019} (although, albeit only for exact representations~\cite{delascuevasApproximateTensorDecompositions2021}).
    In this language, trajectory sampling methods can be interpreted as measuring out the auxiliary systems, thus destroying the purifying bond of the LPDO.
    LPDOs are, however, strictly harder to optimize than trajectory methods, as one needs to minimize the local entanglement both within the state and between the state and the environment.

\subsubsection{Pauli propagation methods}

    Another increasingly prominent method for the simulation of noisy quantum circuits is the Pauli propagation method (also known as the Pauli path method
    or sparse Pauli dynamics
    )~\cite{rallSimulationQubitQuantum2019a, huangClassicalSimulationQuantum2022, aharonovPolynomialtimeClassicalAlgorithm2022, begusicFastClassicalSimulation2023, begusicSimulatingQuantumCircuit2023, fontanaClassicalSimulationsNoisy2023, rudolphClassicalSurrogateSimulation2023, ermakovUnifiedFrameworkEfficiently2024, schusterPolynomialtimeClassicalAlgorithm2024, shaoSimulatingNoisyVariational2024, garca2024pauli, angrisaniClassicallyEstimatingObservables2024, cirstoiuFourierAnalysisFramework2024, lerchEfficientQuantumenhancedClassical2024, angrisaniSimulatingQuantumCircuits2025, martinezEfficientSimulationParametrized2025, rudolphPauliPropagationComputational2025}.
    Pauli propagation and trajectory sampling methods are expected to perform best in different settings.
    While MPS-based trajectory methods are tailored to systems with low entanglement, Pauli propagation is rather related to the amount of magic, as they can represent evolutions with (approximately) sparse decompositions in the Pauli basis. Both methods can be used to perform both Observable Estimation and Output Distribution Sampling (see Sec.~\ref{sec:reductions}). However, efficient Pauli path based Output Distribution Sampling algorithms assume anti-concentration of the circuit output distribution~\cite{schusterPolynomialtimeClassicalAlgorithm2024} with some special exceptions \cite{nelson2026classicalsimulationnoisy, Rajakumar2025NoisyIQP}.

    The Pauli path method has proven to be a great tool for proving the existence of efficient classical Observable Estimation algorithms for typical noisy random quantum circuits. A key strength of this method is that it comes with a priori performance guarantees. However, this method is more limited in scope, applying to specific types of noise in specific location and to random circuits. However, progress continues on this front~\cite{garca2024pauli, schusterPolynomialtimeClassicalAlgorithm2024}. For example, Ref.~\cite{schusterPolynomialtimeClassicalAlgorithm2024} relaxed the location restrictions on the depolarizing noise model they consider and concurrently improved the random circuit setting to a random computational basis state initialization. Additionally, in the Observable Estimation setting, Ref.~\cite{garca2024pauli} removed the circuit randomization entirely for 2D Clifford plus T circuits and depolarizing noise.  
        
    Several other subtle differences differentiate the applicability of each method. 
    While Pauli propagation methods consider observables that are at most combinations of polynomially many Pauli strings,
    MPS trajectories allow for the computation of expectation values for any observable that can be represented as a low bond dimension MPO.
    On the other hand, MPS trajectory simulations require initial states that can be represented as MPS states of small bond dimension, whereas Pauli propagation only requires efficient expectation value calculation for single Pauli strings.
    However, most commonly, the classical simulation literature focuses on simulating circuits with simple input states, like the all-zero state.

\section{Limitations}
\label{sec:limitations}

\subsection{Local vs global optimization}

    \begin{figure}
        \centering
        \includegraphics[scale=1]{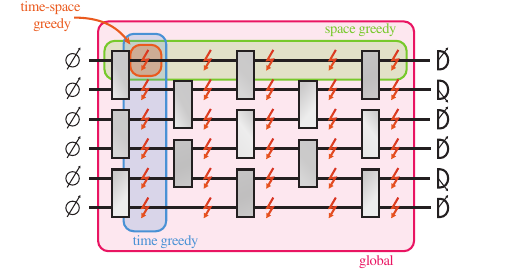}
        \caption{Various possible choices of joint optimization of single-qubit noise channels.}
        \label{fig:greediness}
    \end{figure}
    
    The most desirable setting would be to have the ability to jointly optimize the unitary freedom over all channels, in both time and space.
    Indeed, given two single-qubit channels, optimizing over the unitary freedom of the joint two-qubit channel can lead to better results than optimizing them individually. 
    Such global optimization, however, is not achievable in practice, as it involves, in the most general case, an optimization over a $4^{2nL}$-dimensional unitary.
    Our approach optimizes each channel individually and can be seen as a time and space-greedy optimization method.
    In between, one can consider time- or space-only greedy optimization, but it is unclear if such optimizations are feasible. 
    Take the case of time-only greedy optimization. 
    Finding the entanglement-optimal decomposition of the joint $n$-qubit channel is at least as hard as computing the entanglement of formation of an $n$-qubit mixed state, which is known to be prohibitively hard if one can not leverage some symmetry~\cite{divincenzoOptimalDecompositionsBarely2000, terhalEntanglementFormationIsotropic2000, vollbrechtWhyTwoQubits2000a, vollbrechtEntanglementMeasuresSymmetry2001, gharibianStrongNPhardnessQuantum2010}.
    See Fig.~\ref{fig:greediness} for a graphical representation of the different types of ``greediness'' referred to above.
    
    Our unraveling will, in general, perform worse than joint optimizations would. 
    There is also no guarantee that such a greedy optimization will work at all, as one can construct instances where an ``optimal'' choice in an early step results in disadvantageous initial conditions in a later step.
    We found instances where using a fixed unraveling choice at all steps (in time and space) outperformed our greedy optimization.
    It would be interesting to understand in which settings such greediness performs well and in which it is disadvantageous.  
    For instance, in the setting of time-independent Lindbladian evolutions, one may expect the greedy unraveling to fluctuate less and less as the evolution thermalizes.
    An example where a seemingly irrelevant change in basis significantly affects the performance of the {Haar Optimal} unraveling is presented in Appendix~\ref{sec:greediness_discussion}.

\subsection{Von Neumann entanglement entropy as the choice of cost function}

    It is often stated that MPS offer efficient and good approximations of states with low entanglement, while they fail at faithfully representing highly entangled states.
    This statement, however, is not formally correct. 
    Let us first recall that entanglement can be measured by the more general class of \emph{R\'enyi entropies}
    \begin{equation}
        S_\alpha (\rho) = \frac{1}{1-\alpha} \log \Tr{}{\rho^\alpha} \, ,
        \quad 
        0 \leq \alpha \leq \infty \, .
    \end{equation}
    They generalize the von Neumann entanglement entropy used in Eq.~\eqref{def:entanglement_entropy}, which is recovered in the limit of $\alpha$ tending to $1$.
    It has been shown that classes of states which have entanglement growing logarithmically with system size as measured by some R\'enyi entropy with $\alpha < 1$ can be represented efficiently with MPS~\cite{verstraeteMatrixProductStates2006} while those with R\'enyi $\alpha > 1$ entropies that grow super-logarithmically with system size ($\sim n^\kappa$ even for $\kappa <1$) cannot~\cite{schuchEntropyScalingSimulability2008}. 
    For von Neumann entropies, the non-approximability result holds for linearly scaling entropy, but the remaining guarantees are lost.
    For constant and logarithmically scaling von Neumann entropies, there are even examples of both classes of states, representable states and nonrepresentable states.
        
    Our algorithm is devised to minimize the ensemble-averaged (von Neumann) entanglement entropy in the hope of reducing simulation costs.
    Since it minimizes the von Neumann entanglement entropy, there will be instances where these differences in MPS representability emerge.
    Although one expects the different R\'enyi entropies to have similar behavior in typical settings, there will be examples where optimization of the von Neumann entanglement entropy, some R\'enyi entropy with $\alpha < 1$, or the truncation error will give widely different optimal unravelings.
    In such a setting, optimizing the von Neumann entanglement entropy could lead to poor performance.

\section{Conclusion and discussion}
\label{sec:conclusion}

    In this work, we have aimed to substantially contribute to the body of work on the classical simulation of noisy quantum circuits with trajectory sampling by putting such studies onto a significantly
    more quantitative and rigorous underpinning.
    Concretely, we improve existing methods 
    in two ways. 
    We derive rigorous error bounds on the performance of such algorithms based on MPS, particularly for the relevant tasks of expectation value estimation and measurement outcome distribution sampling. 
    We also propose new ways of computing good decompositions of noise channels. 
    Our first choice of unraveling is provably optimal in a state-dependent, greedy fashion by maximally disentangling the target noisy qubit. 
    At the cost of computational effort similar to that of an SVD of a $4 \times 4$ matrix, we can minimize the average entanglement between the noisy qubit and the rest of the state.
    This unraveling is not tailored to random circuits but is applicable to any circuit. 
    It is also not limited to specific noise channels, as we enable the deterministic treatment of systems where each qubit is affected by a potentially different noise process.

    By readily lending itself to analytic investigation, the same tools allow for the construction of an even simpler, state-independent unraveling for random circuits. In this setting, it provides a new and conceptually significant perspective on what is being optimized. It also reproduces the previous results on fixed unravelings for random circuits and extends these to arbitrary noise models.
    
    Our unraveling strategy has repercussions of a practical and conceptual nature. 
    It translates into improved performance, and perhaps most importantly, an extended scope by permitting locally-optimal unravelings with respect to arbitrary 1D local circuits and single-qubit noise. We believe the latter is an important extension for future progress relating to quantum advantage and the noise-driven decay of quantum computational resources. 
    
    While our results comprehensively answer questions about the optimal (greedy) single-qubit unraveling, many questions remain open, and we raise new ones. 
    The problem of analytically motivated unravelings for multi-qubit noise processes remains untackled. 
    Although in the single-qubit setting, one can focus on disentangling one qubit, which can only be done using non-unitary operators, the multi-qubit case is more nuanced. 
    There, one could minimize the entanglement between the target qubits and the rest of the state or within the target qubits themselves. 
    The first approach bears some similarities with our approach, although there are barely any results on the computation of the entanglement of formation for more than two qubits. 
    The second opens the door for entanglement reduction through unitary Kraus operators. 
    Furthermore, it would be beneficial to lift the limitations of our approach. Developing unravelings based on R\'enyi entropies or even the truncation error would lead to stronger analytical performance guarantees. 
    Some works also focused on the scaling factors of the computational cost and, in particular, on how to reduce the variance in the trajectories. It would be interesting to study our unraveling from this perspective. 

    With the flexibility and generality of our approach, we make progress towards a coherent theory of quantum noise and its impact on quantum advantage. 
    Over the years, in the quest for finding \emph{quantum advantages}, 
    it has been an enormously fruitful endeavor to
run quantum platforms against classical simulations of the same task,
usually by resorting to tensor network models -- actually improving
both research fields at the same time
\cite{Trotzky,IBM_exp2023,Tindall_2024,Zhou_2020,PhysRevResearch.6.013326}. For this reason, having strong predictive power and knowing what the 
precise limitations are for the classical simulation is important, and hence our work 
substantially contributes here.
Our work lends itself to more detailed numerical studies investigating specific classes of circuits and their behavior in the presence of single-qubit noise. Our work provides a posteriori guarantees on simulation accuracy and, since these can in principle take all circuit-level data into account, generally speaking, one might expect such guarantees to be tighter than coarse-grained a priori guarantees. However, a priori performance guarantees are very important for identifying regimes or circuit classes that admit efficient classical simulability. This motivates the development of new techniques that provide a priori performance guarantees on our classical simulation algorithm.

\paragraph*{Code availability:} the code for the simulations is openly accessible under~\cite{unraveling_code}.

    \begin{acknowledgments}
    We would like to thank Frederik vom Ende for useful discussions. Funded by the Deutsche Forschungsgemeinschaft (DFG, German Research Foundation) under Germany´s Excellence Strategy (The Berlin Mathematics Research Center MATH+ EXC-2046/1, project ID: 390685689, the ML4Q
    EXC 2004/1, project ID: 390534769)
    the CRC 183, SPP 2541, the BMFTR (MuniQC-Atoms, QuSol, QPIC-1), the Munich Quantum Valley, Berlin Quantum, the Quantum Flagship (PasQuans2, Millenion), and the European 
    Research Council (DebuQC).
    \end{acknowledgments}

\bibliography{bibliography.bib}

\onecolumngrid
\newpage
\appendix
\begin{center}
\textbf{\large Appendix for\\
``Classical simulation of noisy quantum circuits via locally-entanglement-optimal unravelings''}\\
\vspace{1ex}
\end{center}

\startcontents[app]
\printcontents[app]{l}{1}{}

\section{Unitary freedom in state and channel representations}
\label{sec:unitary_freedom}

    In this section, we recap some key concepts about representations of quantum channels and their implications for the representation of states after applying those channels. 
    First, we will recall the Kraus decomposition and the Stinespring dilation of channels and revise how these implicitly define one possible ensemble decomposition and purification of the output state. 
    We will then focus on the unitary freedom in these different representations and how they relate. 
    Finally, we will dive into how these insights can be used for more efficient classical simulation.

\subsection{Decompositions of states and channels}

\paragraph*{Representations of quantum channels.}
    Any completely positive map $\mathcal{N}$ 
    (hence also any quantum channel we consider) has a Kraus representation, that is a set of $r$ operators $\{K_i\}_{i=1}^{r}$ such that
    \begin{equation}
    \label{eq:def_kraus_2}
        \mathcal{N} : \rho \mapsto \sum_{i=1}^{r} K_i \rho K_i^\dagger.
    \end{equation}
    This map is trace-preserving if and only if $\sum_{i=1}^{r} K_i^\dagger K_i = \II$.
    Equivalently, any channel can be described by its Stinespring dilation. 
    Calling the Hilbert space in which the system evolves $\mathcal{S}$ for \emph{system} and defining an auxiliary Hilbert space $\mathcal{E}$ for \emph{environment}, the Stinespring dilation is given by an isometry $V_{\mathcal{S}\mathcal{E}|\mathcal{S}} : \mathcal{H}_{\mathcal{S}} \rightarrow \mathcal{H}_{\mathcal{SE}}$ from the system $\mathcal{S}$ to the joint system and environment $\mathcal{S}\mathcal{E}$ such that
    \begin{equation}
    \label{eq:def_stinespring}
        \mathcal{N} [\rho_{\mathcal{S}}] = \Tr{\mathcal{E}}{ V_{\mathcal{S}\mathcal{E}|\mathcal{S}}^{\vphantom{\dagger}} \rho_{\mathcal{S}} V_{\mathcal{S}\mathcal{E}|\mathcal{S}}^\dagger },
    \end{equation}
    where the state $\rho$ is only defined on the system Hilbert space.
    Such an isometry can be constructed from the Kraus 
    operators by choosing 
    \begin{equation}
        \label{eq:kraus_to_stinespring}
        V_{\mathcal{S}\mathcal{E}|\mathcal{S}} = \sum_{i=1}^{r} (K_i)_{\mathcal{S}} \otimes \ket{i}_{\mathcal{E}},
    \end{equation}
    which, when inserted into Eq.~\eqref{eq:def_stinespring}, 
    will yield the Kraus decompositions of Eq.~\eqref{eq:def_kraus_2}. 
    For completeness, we note that the Stinespring dilation can be extended to a unitary system-environment evolution by finding an adequate unitary $U_{\mathcal{S}\mathcal{E}}$ and environment state $\eta_{\mathcal{E}}$ such that 
    \begin{equation}
        \mathcal{N} [\rho_\mathcal{S}] = \Tr{\mathcal{E}}{ U_{\mathcal{S}\mathcal{E}} (\rho_{\mathcal{S}} \otimes \eta_{\mathcal{E}}) U_{\mathcal{S}\mathcal{E}}^\dagger }.
    \end{equation}

\paragraph*{Quantum state descriptions.}
    One way of describing a mixed quantum state is through an ensemble decomposition. 
    That is, a set of (pure) states $\{\ket{\phi_i}\}_{i=1}^{r}$ and a corresponding set of probabilities $\{p_i\}_{i=1}^{r}$ such that the mixed state can be interpreted as a probabilistic (convex) mixture of those states
    \begin{equation}
        \rho = \sum_{i=1}^{r} p_i \ketbra{\phi_i}{\phi_i}.
    \end{equation}
    Observe that the eigendecomposition is one such possible ensemble decomposition. 
    Another ensemble decomposition 
    of high interest to us is the one induced directly by the Kraus representation of a channel. 
    Say we apply a quantum channel $\mathcal{N}$ to some state vector $\ket{\psi}$. 
    Then, the outcome is given by 
    \begin{equation}
        \mathcal{N}[\ketbra{\psi}{\psi}] 
        = \sum_{i=1}^{r} K_i \ketbra{\psi}{\psi} K_i^\dagger
        = \sum_{i=1}^{r} p_i \ketbra{\phi_i}{\phi_i}
    \end{equation}
    with $p_i = \| K_i \ket{\psi} \|_2^2 = \bra{\psi} K_i^\dagger K_i \ket{\psi}$ and $\ket{\phi_i} = p_i^{-1/2}K_i \ket{\psi}$.
    Those are the so-called ``pure-state trajectories'' when unraveling the noisy evolution. 
    We will see that the state and the channel decompositions are not unique.

\subsection{Unitary freedom}

    We start with the unitary relation between ensemble decompositions of states and from there derive unitary relations between channels.
    
\paragraph*{Unitary freedom for states.}
    Ensemble decompositions are not unique: there can be many different sets of states and corresponding probability distributions such that their probabilistic mixture result in the same state. 
    This is particularly obvious for states that have degenerate eigenvalues. 
    The most extreme example is the maximally mixed state. 
    Take, for instance, the two-qubit case. 
    It can be decomposed into any orthogonal basis, for instance, the computational and the Bell basis
    \begin{align}
        \frac{\II}{4} &
        = \frac{1}{4} \left( \ketbra{0,0}{0,0} + \ketbra{0,1}{0,1} + \ketbra{1,0}{1,0} + \ketbra{1,1}{1,1} \right) \\ 
        \nonumber&
        = \frac{1}{4} \left( \ketbra{\Phi^+}{\Phi^+} + \ketbra{\Phi^-}{\Phi^-} + \ketbra{\Psi^+}{\Psi^+} + \ketbra{\Psi^-}{\Psi^-} \right) \, .
    \end{align}
    Note, and it will be crucial later, that all states are product states in the first decomposition, while they are maximally entangled in the other.
    
    These ensemble decompositions are all unitarily related. 
    Let $\{p_k, \ket{\phi_k}\}_{k=1}^{r}$ and $\{q_{k'}, \ket{\varphi_{k'}}\}_{k'=1}^{r'}$ be two decompositions of the same state $\rho$.
    Then, there exists an $\ell \times \ell$ dimensional unitary matrix $U$, with $\ell = \max (r, r')$, such that 
    \begin{equation}
    \label{eq:unitarity_ensembles}
        \sqrt{q_{k'}} \ket{\varphi_{k'}} = \sum_{k=1}^{r} U_{kk'} \sqrt{p_{k}} \ket{\phi_{k}} \quad \forall k' \in \{ 1, \dots, r' \} \, .
    \end{equation}
    This result can be obtained through steering, using projective measurements on equivalent, unitarily related purifications of the state. 
    More on this can be found in  Renes' book~\cite[Section 6.2.2]{renesQuantumInformationTheory2022a}.
    An alternative formulation in matrix-algebraic terms is given in Ref.~\cite{audenaertVariationalCharacterizationsSeparability2001}, although first proven by Huhgston, Josza and Wootters in Ref.~\cite{hughstonCompleteClassificationQuantum1993}
    \footnote{
        The order of the indices of the elements of the unitary matrix might be swapped depending on the source. 
        This is of minor importance as one could pick the transpose of the unitary as the new unitary freedom, resulting in the alternative formulation. 
        In Eq.~\eqref{eq:unitarity_kraus} we have used the equivalent formulation but used $U^\top$ instead of $U$, to align with the notation in Ref.~\cite{woottersEntanglementFormationArbitrary1998} and because it simplifies some notation.
    }.
    In graphical notation, the same result can be viewed as adding a unitary and its conjugate in the tensor network, which disappear upon contraction
    \begin{align}
        \rho & 
        = \sum_{k=1}^{r} p_k \ketbra{\phi_k}{\phi_k}
        = \left( \sum_{k=1}^{r} \ketbra{\phi_k}{k} \right) 
          \left( \sum_{k''=1}^{r} p_k \ketbra{k''}{k''} \right)
          \left( \sum_{k'=1}^{r} \ketbra{k'}{\phi_{k'}} \right) && 
        = \tikzineq{
            \Vertex[x=-.5, size=.3, shape=rectangle, color=pink2s, ]{phi}
            \Vertex[x=.5, size=.3, shape=rectangle, color=pink2s, label=$\ket{\phi}$, position=90]{phi'}
            \Vertex[x=0, size=.1, shape=rectangle, color=yellows, label=$p$, position=-135, 
                    style={rotate=45}]{lambda}
            \Edge (phi)(phi')
            \Edge (phi')(.9,0)
            \Edge (phi)(-.9,0)
        } \\
        \nonumber & 
        = \left( \sum_{k=1}^{r} \sqrt{p_k} \ketbra{\phi_k}{k} \right) 
          \II
          \left( \sum_{k'=1}^{r} \sqrt{p_{k'}} \ketbra{k'}{\phi_{k'}} \right) && 
        = \tikzineq{
            \Vertex[x=-1, size=.3, shape=rectangle, color=pink2s, ]{phi}
            \Vertex[x=1, size=.3, shape=rectangle, color=pink2s, label=$\ket{\phi}$, position=90]{phi'}
            \Vertex[x=-.5, size=.1, shape=rectangle, color=yellows, label=$\sqrt{p}$, position=-135, 
                    style={rotate=45}]{p}
            \Vertex[x=-0, size=.05, shape=circle, color=black, label=$\II$, position=90]{I}
            \Vertex[x=.5, size=.1, shape=rectangle, color=yellows, position=-135, 
                    style={rotate=45}]{p'}
            \Edge (phi)(phi')
            \Edge (phi')(-1.4,0)
            \Edge (phi)(1.4,0)
        } \\ 
        \nonumber
        &
        = \left( \sum_{k=1}^{r} \sqrt{p_k} \ketbra{\phi_k}{k} \right) 
          \left( \sum_{i,j=1}^{r'} U_{i,j} \ketbra{i}{j} \right)
          \left( \sum_{i',j'=1}^{r'} U^*_{i'j'} \ketbra{j'}{i'} \right)
          \left( \sum_{k'=1}^{r} \sqrt{p_{k'}} \ketbra{k'}{\phi_{k'}} \right) && 
        = \tikzineq{
            \Vertex[x=-1, size=.3, shape=rectangle, color=pink2s]{phi}
            \Vertex[x=-.5, size=.1, shape=rectangle, color=yellows, 
                    style={rotate=45}]{lambda1}
            \Vertex[x=0, size=.3, shape=rectangle, color=blue1s, label=$U^\dagger$, position=90]{U}
            \Vertex[x=.5, size=.3, shape=rectangle, color=blue1s, label=$U$, position=90]{U'}
            \Vertex[x=1, size=.1, shape=rectangle, color=yellows, 
                    style={rotate=45}, label=$\sqrt{p}$, position=-135]{lambda2}
            \Vertex[x=1.5, size=.3, shape=rectangle, color=pink2s, label=$\ket{\phi}$, position=90]{phi}
            \Edge (phi)(phi')
            \Edge (phi)(-1.4,0)
            \Edge (phi')(1.9,0)
        }  \\
        \nonumber&
        = \left( \sum_{k=1}^{r} \sum_{i,j=1}^{r'} U_{i,j} \sqrt{p_k} \inner{k}{i} \ketbra{\phi_k}{j} \right) 
          \left( \sum_{k'=1}^{r} \sum_{i',j'=1}^{r'} U^*_{i',j'} \sqrt{p_{k'}} \inner{i'}{k'} \ketbra{j'}{\phi_{k'}} \right) &&\\ &
        = \sum_{j=1}^{r'} \left( \sum_{k=1}^{r} U_{k,j} \sqrt{p_k} \ket{\phi_k} \right) \left( \sum_{k'=1}^{r} U_{k',j}^* \sqrt{p_{k'}} \bra{\phi_{k'}} \right) && \\
        \nonumber& 
        = \sum_{j=1}^{r'} \left( \sqrt{q_j} \ket{\varphi_j} \right) \left( \sqrt{q_j} \bra{\varphi_j} \right) && 
        = \tikzineq{
            \Vertex[x=-.5, size=.3, shape=rectangle, color=purple2s]{phi'}
            \Vertex[x=0, size=.1, shape=rectangle, color=green2s, 
                    style={rotate=45}]{lambda1}
            \Vertex[x=.5, size=.1, shape=rectangle, color=green2s, 
                    style={rotate=45}, label=$\sqrt{q}$, position=-135]{lambda2}
            \Vertex[x=1, size=.3, shape=rectangle, color=purple2s, label=$\ket{\varphi}$, position=90]{phi}
            \Edge (phi)(phi')
            \Edge (phi)(1.4,0)
            \Edge (phi')(-.9,0)
        }\;.
        \nonumber
    \end{align}
    Above, we assumed $r' > r$.
    This can be done without loss of generality, as one could prove it the other way around in the other case.
    Note that when increasing the number of elements in the decomposition (i.e., $r' \geq r$), any unitary $U$ of dimension $r' \times r'$ will result in a valid decomposition of the state and can thus be chosen without concerns.
    We have to be more careful when decreasing the number of elements.
    To avoid this, we can start from the decomposition with the minimal number of elements, i.e., the eigendecomposition 
    \begin{equation}
        \rho = \sum_{i=1}^{r} \lambda_i \ketbra{v_i}{v_i} \, .
    \end{equation}
    The rank of the state gives the minimal number of elements. 
    For any choice of $U$
    \begin{equation}
        \sqrt{p_{k}} \ket{\phi_{k}} = \sum_{i=1}^{} U_{ik} \sqrt{\lambda_i} \ket{v_i} 
        \quad \forall k \in \{ 1, \dots, r' \}, 
        \quad r' \geq r
    \end{equation}
    is a valid decomposition of $\rho$.

\paragraph*{Unitary freedom for channels.}
    Kraus representations of quantum channels can be obtained from the ensemble decompositions of their Choi operator~\footnote{
        For a channel $\mathcal{N}$ from a $d$-dimensional Hilbert space $\mathcal{H}$ to itself, 
        its Choi state is defined as ${\choi}(\mathcal{N}) = (\mathcal{N} \otimes \mathcal{I})[\ketbra{\omega}{\omega}]$, 
        where the maximally entangled state on $\mathcal{H} \otimes \mathcal{H}$ is $ \ket{\omega} = \frac{1}{\sqrt{d}} \sum_i \ket{i} \otimes \ket{i}$. 
        Its eigendecomposition is ${\choi} = \sum_j \lambda^{\choi}_j \ketbra{\lambda_j}{\lambda_j}$ (although any decomposition would be a valid choice), 
        and a Kraus decomposition can be obtained by finding a set of operators $\{ K_j \}_j$ such that $\sqrt{\lambda^{\choi}_j} \ket{\lambda_j} = (K_i \otimes \II) \ket{\omega}$.
    }~\cite[Section 6.4]{renesQuantumInformationTheory2022a},~\cite[Section 2.2]{watrousTheoryQuantumInformation2018}. 
    As a result, one finds that Kraus representations are not unique either, 
    and all Kraus representations must be unitarily related. 
    This means that, for any two Kraus representations $\{ K_i \}_{i=1}^{r}$ and $\{ K^{\circlearrowright}_{i} \}_{i=1}^{r'}$, there has to be a unitary $U$ such that 
    \begin{equation}
        K^{\circlearrowright}_{i} = \sum_{j=1}^{r} U_{i,j} K_{j} \quad \forall i \in \{ 1, \dots, r' \}.
    \end{equation}
    For a detailed discussion on Choi operators, ensemble decompositions, and unitary relations, we again recommend Renes' Book~\cite{renesQuantumInformationTheory2022a}, and in particular Sections 5.2, 5.3, and 5.4 together with Sections 6.2, 6.3, and 6.4.
    \begin{proof}[Proof sketch]
        We start by representing channels in the Stinespring picture, as in Eq.~\eqref{eq:def_stinespring},
        where we can confirm that applying a local unitary on the environment does not change the channel. 
        Defining the alternative isometry 
        $V^{\circlearrowright}_{\mathcal{S E}|\mathcal{S}} = (\II_\mathcal{S} \otimes U_\mathcal{E}) V_{\mathcal{S E}|\mathcal{S}}$ and using the circularity of the trace we obtain
        \begin{equation}
            \Tr{{\mathcal{E}}}{ V^{\circlearrowright}_{\mathcal{S}\mathcal{E}|\mathcal{S}} \rho_\mathcal{S} V^{\circlearrowright\dagger}_{\mathcal{S}\mathcal{E}|\mathcal{S}} }
            = \Tr{{\mathcal{E}}}{ ({\II}_{\mathcal{S}} \otimes U_\mathcal{E}) V_{\mathcal{S}\mathcal{E}|\mathcal{S}} \rho_\mathcal{S} V_{\mathcal{S}\mathcal{E}|\mathcal{S}}^\dagger ({\II}_{\mathcal{S}} \otimes U_\mathcal{E})^\dagger }
            = \Tr{{\mathcal{E}}}{ V_{\mathcal{S}\mathcal{E}|\mathcal{S}} \rho_\mathcal{S} V_{\mathcal{S}\mathcal{E}|\mathcal{S}}^\dagger },
        \end{equation}
        which is the original description of the channel.
        Using Eq.~\eqref{eq:kraus_to_stinespring} again, we get that 
        \begin{align}
            \Tr{\mathcal{E}}{ (\II_{\mathcal{S}} \otimes U_{\mathcal{E}}) V_{\mathcal{SE}|\mathcal{S}} \rho_{\mathcal{S}} V_{\mathcal{SE}|\mathcal{S}}^\dagger (\II_{\mathcal{S}} \otimes U_{\mathcal{E}})^\dagger } & 
            = \Tr{\mathcal{E}}{ (\II_{\mathcal{S}} \otimes U_{\mathcal{E}}) \left(\sum_{i} (K_i)_{\mathcal{S}} \otimes \ket{i}_{\mathcal{E}}\right) \rho_{\mathcal{S}} \left(\sum_{i'} (K_{i'})_{\mathcal{S}} \otimes \ket{i'}_{\mathcal{E}}\right)^\dagger (\II_{\mathcal{S}} \otimes U_{\mathcal{E}})^\dagger } \\
            \nonumber 
            & 
            = \Tr{\mathcal{E}}{ \sum_{i, i'} \left( (K_i)_{\mathcal{S}} \otimes U_{\mathcal{E}} \ket{i}_{\mathcal{E}}\right) \rho_{\mathcal{S}} \left( (K_{i'}^\dagger)_{\mathcal{S}} \otimes \bra{i'}_{\mathcal{E}} U_{\mathcal{E}}^\dagger \right) } \\ & 
            = \sum_{i, i', i''} \left( \bra{i''} U \ket{i}_{\mathcal{E}} K_i \right) \rho \left( K_{i'}^\dagger \bra{i'} U^\dagger \ket{i''}_{\mathcal{E}} \right) \\ & 
            = \sum_{i''} K^{\circlearrowright}_{i''} \rho K^{\circlearrowright\dagger}_{i''}
            \nonumber
        \end{align}
        with 
        $K^{\circlearrowright}_{i''} =  \sum_{i} \bra{i''} U \ket{i} K_i$.
    \end{proof}

\paragraph*{Correspondance between unitary freedoms of states and channels.}
    Both unitary freedoms presented above are intimately related.
    Let us formalize the relation between the unitary that transforms Kraus operators and the one that transforms the resulting ensemble decomposition.
    Starting from Eq.~\eqref{eq:Kraus_to_ensemble} and using Eq.~\eqref{eq:unitarity_kraus}, we derive the effect of the unitary freedom of the Kraus operators on the ensemble decomposition as
    \begin{align}
        \mathcal{N}[\ketbra{\psi}{\psi}] & 
        = \sum_{i'=1}^{r'} K^{(U)}_{i'}\ketbra{\psi}{\psi} K^{(U)\dagger}_{i'} \\ 
        \nonumber & 
        = \sum_{i'=1}^{r'} \left(\sum_{i=1}^{r} U_{i'i} K_{i} \right) \ketbra{\psi}{\psi} \left(\sum_{i''=1}^{r} U_{i'i''} K_{i''} \right)^\dagger \\
        \nonumber& 
        = \sum_{i'=1}^{r'} \left(\sum_{i=1}^{r} U_{i'i} K_{i} \ket{\psi} \right) \left(\sum_{i''=1}^{r} \bra{\psi} K_{i''}^\dagger (U^\dagger)_{i''i'} \right) \\
        \nonumber& 
        = \sum_{i'=1}^{r'} \left(\sum_{i=1}^{r} U_{i'i} \sqrt{p_i} \ket{\phi_i} \right) \left(\sum_{i''=1}^{r} \sqrt{p_i} \bra{\phi_i} (U^\dagger)_{i''i'} \right) \\ & 
        = \sum_{i'=1}^{r'} q_{i'} \ketbra{\varphi_{i'}}{\varphi_{i'}}
        \nonumber
        .
    \end{align}
    The first equality uses the alternative definition of the channel.
    The second equality directly applies Eq.~\eqref{eq:unitarity_kraus}. 
    The fourth uses the naming choice from Eq.~\eqref{eq:Kraus_to_ensemble}.
    The last equality is obtained by using Eq.~\eqref{eq:unitarity_ensembles}.
    The different ensembles are thus related by the same unitary relating the different Kraus representations. 

\subsection{Example: the Werner state}

    We already saw the example of the maximally mixed state above. 
    However, this example is unique due to eigenvalue degeneracies. 
    We will now consider a slightly more complicated example, revealing many interesting aspects that come into play in our algorithm.
    The Werner state~\cite{wernerQuantumStatesEinsteinPodolskyRosen1989, bennettMixedstateEntanglementQuantum1996} on two qubits can be written (among several different but equivalent descriptions) as the mixture of a maximally entangled state and a maximally mixed state
    \begin{equation}
    \label{eq:def-Werner-state}
        W(\lambda) = \lambda \ketbra{\Psi^-}{\Psi^-} + (1-\lambda) \frac{\II}{4} .
    \end{equation}
    We want to use this example to discuss the ensemble-averaged entanglement entropy, the unitary freedom, and the entanglement of formation. 

    Let us first decompose the identity part in the arguably most natural way, i.e., in the computational basis
    \begin{equation}
        \frac{\II}{4} = \frac{1}{4} \left( \ketbra{0,0}{0,0} + \ketbra{0,1}{0,1} + \ketbra{1,0}{1,0} + \ketbra{1,1}{1,1} \right) \, .
    \end{equation}
    This induces a decomposition of the Werner state as 
    $\left\{ \sqrt{\lambda} \ket{\Psi^-}, \frac{1}{2}\sqrt{1-\lambda} \ket{0,0}, \frac{1}{2}\sqrt{1-\lambda} \ket{0,1}, \frac{1}{2}\sqrt{1-\lambda} \ket{1,0}, \frac{1}{2}\sqrt{1-\lambda} \ket{1,1} \right\}$,
    which we will refer to as the ``computational'' decomposition.
    Each element of the decomposition of the identity does not contribute to the average entanglement entropy (Eq.~\eqref{def:ensemble-averaged_entanglement}) since $E(\ket{xy})=0$ for all $x,y \in \{ 0,1 \}$.
    Then,
    \begin{equation}
        E_{\mathrm{av}} ^{(\text{comp})} (W(\lambda)) 
        = \lambda E(\ket{\Psi^-}) + \frac{1-\lambda}{4}(0+0+0+0) 
        = \lambda \, .
    \end{equation}
    With this decomposition, the ensemble-averaged entanglement entropy scales linearly in (is even equal to) the parameter $\lambda$ controlling the probability of the Bell state in the decomposition, going from the minimum of zero to the maximum of one.
    However, we will see that it does not describe the entanglement faithfully.
    Taking the decomposition of the maximally mixed state in the Bell basis
    \begin{equation}
        \frac{\II}{4} = \frac{1}{4} \left( \ketbra{\Phi^+}{\Phi^+} + \ketbra{\Phi^-}{\Phi^-} + \ketbra{\Psi^+}{\Psi^+} + \ketbra{\Psi^-}{\Psi^-}\right) \, , 
    \end{equation}
    results in the ``Bell'' decomposition
    $\left\{ \frac{1}{2}\sqrt{3\lambda+1} \ket{\Psi^-}, \frac{1}{2}\sqrt{1-\lambda} \ket{\Phi^+}, \frac{1}{2}\sqrt{1-\lambda} \ket{\Phi^-}, \frac{1}{2}\sqrt{1-\lambda} \ket{\Psi^+} \right\}$
    that has an average entanglement of
    \begin{equation}
        E_{\mathrm{av}}^{(\text{Bell})} (W(\lambda)) 
        = \frac{(3\lambda+1)}{4} E(\ket{\Psi^-}) + \frac{(1-\lambda)}{4} \left( E(\ket{\Phi^+}) + E(\ket{\Phi^-}) + E(\ket{\Psi^+}) \right)
        = \frac{(3\lambda+1)}{4} \cdot 1 + \frac{(1-\lambda)}{4} \left( 1 + 1 + 1 \right)
        = 1 .
    \end{equation}
    In this decomposition, the average entropy is independent of the parameter $\lambda$ and maximal.
    
    However, it is known that two-qubit Werner states are separable for $\lambda \leq 1/3$. 
    Thus, there has to exist an ensemble decomposition for which $E_{\mathrm{av}}^{(\text{opt})} = 0$ if $\lambda \leq 1/3$.
    An example is the decomposition $\left\{ \frac{1}{2} \ket{\phi_1}, \frac{1}{2} \ket{\phi_2}, \frac{1}{2} \ket{\phi_3}, \frac{1}{2} \ket{\phi_4} \right\}$ where the $\ket{\phi_i}$ are  defined by
    \begin{align}
         \ket{\phi_1} & = \frac{1}{2}\sqrt{3\lambda+1} \ket{\Psi^-} + \frac{i}{2}\sqrt{1-\lambda} \ket{\Phi^+} + \frac{1}{2}\sqrt{1-\lambda} \ket{\Phi^-} + \frac{1}{2}\sqrt{1-\lambda} \ket{\Psi^+} ,\\
         \ket{\phi_2} & = \frac{1}{2}\sqrt{3\lambda+1} \ket{\Psi^-} + \frac{i}{2}\sqrt{1-\lambda} \ket{\Phi^+} - \frac{1}{2}\sqrt{1-\lambda} \ket{\Phi^-} - \frac{1}{2}\sqrt{1-\lambda} \ket{\Psi^+}, \\
        \ket{\phi_3} & = \frac{1}{2}\sqrt{3\lambda+1} \ket{\Psi^-} - \frac{i}{2}\sqrt{1-\lambda} \ket{\Phi^+} + \frac{1}{2}\sqrt{1-\lambda} \ket{\Phi^-} - \frac{1}{2}\sqrt{1-\lambda} \ket{\Psi^+} ,\\
        \ket{\phi_4} & = \frac{1}{2}\sqrt{3\lambda+1} \ket{\Psi^-} - \frac{i}{2}\sqrt{1-\lambda} \ket{\Phi^+} - \frac{1}{2}\sqrt{1-\lambda} \ket{\Phi^-} + \frac{1}{2}\sqrt{1-\lambda} \ket{\Psi^+}\;,
     \end{align}
    which has an ensemble-averaged entanglement entropy equal to the entanglement of formation for $\lambda \geq 1/3$, achieving zero at $\lambda = 1/3$.
    This unraveling corresponds to applying Eq.~\eqref{eq:unitarity_ensembles}, that is, mixing the four elements of the decomposition in the Bell basis with the parameters from the unitary 
    \begin{equation}
        U = \frac{1}{2} \begin{pmatrix}
            1 & i & 1 & 1 \\
            1 & i & -1 & -1 \\
            1 & -i & 1 & -1 \\
            1 & -i & -1 & 1 
        \end{pmatrix}
        = \underbrace{
            \frac{1}{2} \begin{pmatrix}
                1 & 1 & 1 & 1 \\
                1 & 1 & -1 & -1 \\
                1 & -1 & 1 & -1 \\
                1 & -1 & -1 & 1 
            \end{pmatrix}}_{
            4 \times 4 \text{ Hadamard}
        }\underbrace{
            \begin{pmatrix}
                1 & 0 & 0 & 0 \\
                0 & i & 0 & 0 \\
                0 & 0 & 1 & 0 \\
                0 & 0 & 0 & 1 
            \end{pmatrix}}_{
            \text{phase on } \ket{\Phi^+}
        } \, .
    \end{equation}
    This specific and some alternative unravelings with their respective ensemble-averaged entanglement entropies are shown in Fig.~\ref{fig:Werner_unraveling}.
    More on optimal decompositions for the separable Werner states can be found in works by Wootters~\cite{woottersEntanglementFormationArbitrary1998} or Azuma and Ban~\cite{azumaAnotherConvexCombination2006}.
    
    \begin{figure}
        \centering
        \includegraphics{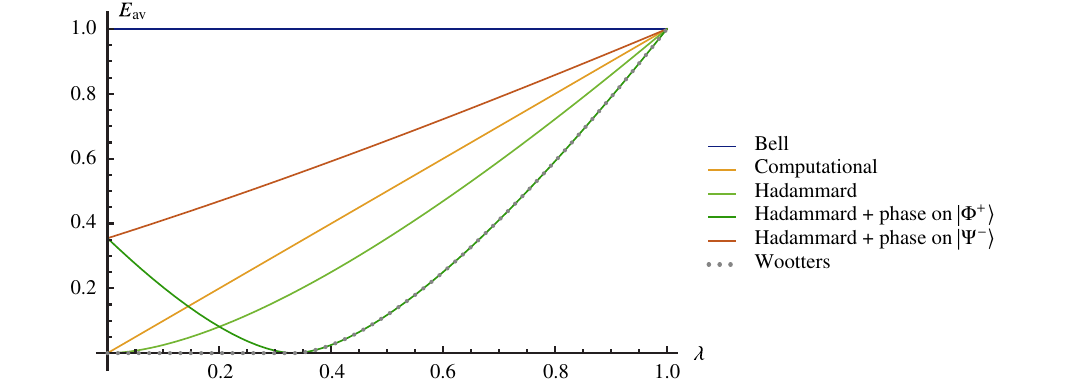}
        \caption{Ensemble-averaged entanglement entropy for some unravelings of the Werner state.
        }
        \label{fig:Werner_unraveling}
    \end{figure}

\subsection{Unitary degree of freedom and the choice of unraveling}
\label{sec:unitary_unraveling_choice}

    We have seen above that the freedom in decomposing states can result in widely varying values of the ensemble-averaged entanglement entropy. 
    We want to use this observation to discuss how to choose the Kraus representation of quantum channels to minimize the average entanglement of the induced decomposition.

\paragraph*{The Werner state as a depolarized Bell state.}
    The Werner state can be seen as the application of the (single qubit) depolarizing channel with noise parameter $p=1-\lambda$
    \begin{equation}
        \mathcal{N}_{DP}[\rho] = (1-p) \rho + p \frac{\II}{2}
    \end{equation}
    on the first qubit of the Bell state vector 
    $\ket{\Psi^-} = \frac{1}{\sqrt{2}}(\ket{0,1} - \ket{1,0})$, i.e.,  
    \begin{equation}
         (\mathcal{N}_{DP} \otimes \mathcal{I})[\ketbra{\Psi^-}{\Psi^-}]
         = (1-p) \ketbra{\Psi^-}{\Psi^-} + p \frac{\II}{2} \otimes \Tr{1}{\ketbra{\Psi^-}{\Psi^-}}
         = (1-p) \ketbra{\Psi^-}{\Psi^-} + p \frac{\II}{2} \otimes \frac{\II}{2}.
    \end{equation}
    Due to simple habit, we will now substitute $\ket{\Psi^-}$ with 
    $\ket{\Phi^+} = \frac{1}{\sqrt{2}}(\ket{0,0} + \ket{1,1})$.
    This has no consequence in the following.

    Now let us find a (unitarily equivalent) Kraus decomposition of the depolarizing channel. 
    From the definition above, we can convince ourselves that the channel can be written as
    \begin{equation}
        \mathcal{N}_{DP}[\rho] = (1-p) \II \rho \II + \frac{p}{2} (\ketbra{0}{0} \rho \ketbra{0}{0} + \ketbra{0}{1} \rho \ketbra{1}{0} + \ketbra{1}{0} \rho \ketbra{0}{1} + \ketbra{1}{1} \rho \ketbra{1}{1}).
    \end{equation}
    Consequently, one valid choice of Kraus operators, that we call ``{Projective}'', is given by 
    \begin{equation}
        \left\{ K_i^{\mathrm{proj}}\right\}_{i=1}^5 = \left\{ \sqrt{1-p} \II, \sqrt{\frac{p}{2}} \ketbra{0}{0}, \sqrt{\frac{p}{2}} \ketbra{0}{1}, \sqrt{\frac{p}{2}} \ketbra{1}{0}, \sqrt{\frac{p}{2}} \ketbra{1}{1} \right\}.
    \end{equation}
    A different, well-known decomposition using the Pauli matrices is
    \begin{equation}
        \mathcal{N}_{DP}[\rho] = \left( 1-\frac{3p}{4} \right) \II \rho \II + \frac{p}{4} (X \rho X + Y \rho Y + Z \rho Z)
    \end{equation}
    using that for all states $\rho$ it holds that
    $\II \rho \II + X \rho X + Y \rho Y + Z \rho Z = \II$.
    Thus, our second valid set of Kraus operators, which we call ``unitary'', is given by
    \begin{equation}
        \left\{ K_i^{\mathrm{unit}}\right\}_{i=1}^4 = \left\{ \sqrt{1-\frac{3p}{4}} \II, \frac{\sqrt{p}}{2} X, \frac{\sqrt{p}}{2} Y, \frac{\sqrt{p}}{2} Z \right\}.
    \end{equation}
    Although we do not prove it here, the rank of the Choi isomorphism of the depolarizing channel is four, so the number of Kraus operators in any decomposition has to be at least four. 
    Thus, the unitary decomposition is minimal. 

    Applying the {Projective} decomposition of the depolarizing channel to the Bell state results in the computational decomposition of the Werner state, whereas the unitary decomposition induces the Bell decomposition. 
    As discussed previously, one can also mix the Kraus operators by choosing any unitary matrix. 
    We show a few examples of the effect of the unitary freedom on the ensemble-averaged entanglement entropy in Fig.~\ref{fig:Bell-depolarizing}, where, to keep it simple, we consider the $2$-parameter, $4\times 4$ unitary matrix
    \begin{equation}
    \label{eq:U_phi_theta}
        U(\theta, \phi) = \begin{pmatrix}
            \cos \theta \cos \phi & \cos \theta \sin \phi & \sin \theta \cos \phi & \sin \theta \sin \phi \\
            -\cos \theta \sin \phi & \cos \theta \cos \phi & -\sin \theta \sin \phi & \sin \theta \cos \phi \\
            -\sin \theta \cos \phi & -\sin \theta \sin \phi & \cos \theta \cos \phi & \cos \theta \sin \phi \\
            \sin \theta \sin \phi & -\sin \theta \cos \phi & -\cos \theta \sin \phi & \cos \theta \cos \phi
        \end{pmatrix} \, .
    \end{equation}
    
    Similarly, decompositions of the noise channel exist that induce ensembles of states that minimize the average entanglement. 
    Since simulating states with a lot of entanglement using MPS is expensive, reducing the ensemble-averaged entanglement entropy can lead to lower simulation costs.
    
    \begin{figure}
        \centering
        \includegraphics{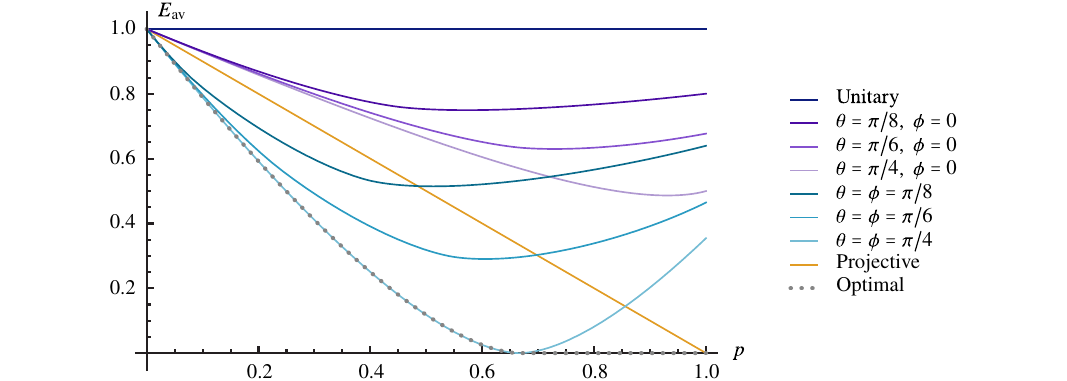}
        \caption{Average entanglement of a depolarized Bell state for different choices of unraveling.
        The unravelings labeled by $\theta$ and $\phi$ refer to rotations of the unitary decomposition with the unitary from Eq.~\ref{eq:U_phi_theta}.
        }
        \label{fig:Bell-depolarizing}
    \end{figure}

\paragraph*{Using the unitary freedom for improved simulation performance.}

    The differences in average entanglement entropy could potentially be used in our algorithm. 
    Different decompositions of a channel result in different Schmidt spectra and thus different computational costs and errors. 
    First of all, some Kraus representations have more elements than others. 
    That leads to a decision tree for sampling pure states with more leaves, but it does not change the general functioning of the algorithm as long as one can efficiently sample from them.
    It might affect the sampling overhead, but not the asymptotic scaling of the simulation.
    Furthermore, as seen above, the entanglement properties of the resulting pure state ensemble decomposition of the mixed state differ depending on the choice of Kraus decomposition. 
    Since entanglement is an expensive resource for simulation with matrix product state methods and leads to errors in truncation, choosing a Kraus decomposition that leads to low-entangled states can improve the algorithm's performance, all without changing the simulated physical process.
    
    As stated above, the error committed by the algorithm is related to the average entanglement entropy of each pure state in the ensemble decomposition. 
    Our goal is then to minimize this quantity, that is, finding
    \begin{equation}
        \inf_{\{p_i, \ket{\phi_i}\}} \sum_{i} p_i E(\ket{\phi_i})
        \quad \text{or equivalently} \quad 
        \inf_{\{K_i\}} \sum_{i} \bra{\psi} K^\dagger_i K_i \ket{\psi} E \left( \frac{K_i \ket{\psi}}{\sqrt{\bra{\psi} K^\dagger_i K_i \ket{\psi}}} \right)
    \end{equation}
    with $E(\ket{\psi})$ being the entanglement entropy of the state $\ket{\psi}$ over some bipartition (Eq.~\eqref{def:entanglement_entropy}).
    This quantity is already known in the literature as the \emph{entanglement of formation}~\cite{bennettConcentratingPartialEntanglement1996, bennettMixedstateEntanglementQuantum1996, horodeckiQuantumEntanglement2009}.
    One possible choice of partition is to separate the first $q$ qubits (where qubit $q$ could be the one affected by the noise channel) from the remaining $n-q$ qubits
    \begin{equation}
        E(\ket{\psi}_{1\dots q : q+1 \dots n}) = -\Tr{}{\rho_{1\dots q} \log \rho_{1\dots q}} 
        \, .
    \end{equation}
    This choice of bipartition would naturally translate to properties of the bond in an MPS between qubits $q$ and $q+1$.
    However, finding the entanglement of formation of some arbitrary mixed state is a hard problem in the worst case~\cite{gharibianStrongNPhardnessQuantum2010, zhuImprovedLowerUpper2012, kimEntanglementFormationMonogamy2021}, so one might have to resign to finding good approximations or reduce the problem to simpler instances which have known solutions.
    To do so, we will consider the partition between the affected target qubit and the rest of the state. 
    This reduces the problem to the computation of entropies of single-qubit states and represents the minimization of the entanglement between the target qubit and all other qubits in the state.

\subsection{Some common unravelings}
\label{sec:common_unravelings}

    In this section, we will describe a few unravelings that have been used before for the depolarizing, dephasing, and amplitude-damping noise channels. 
    
\subsubsection*{Dephasing}

    We start with the \emph{mixed-unitary} (or {orthogonal}) representation of the dephasing channel
    \begin{equation}
        \mathcal{N}[\rho] = \left(1-\frac{p}{2}\right) \rho + \frac{p}{2}Z \rho Z 
        \quad \Rightarrow \quad
        \{K_i\}_{i=1}^{2} = \left\{ \sqrt{1-\frac{p}{2}} \II, \sqrt{\frac{p}{2}} Z \right\} \, .
    \end{equation}
    In this representation, each Kraus operator is unitary, up to a rescaling related to the respective probabilities.
    Next, we have the \emph{Projective} decomposition into the identity (up to constant factor) and rank-$1$ projectors on the computational basis
    \begin{equation}
        \mathcal{N}[\rho] = \left(1-p\right) \rho + p \ketbra{0}{0} \rho \ketbra{0}{0} + p \ketbra{1}{1} \rho \ketbra{1}{1}
        \quad \Rightarrow \quad
        \{K_i\}_{i=1}^{3} = \left\{ \sqrt{1-p} \II, \sqrt{p} \ketbra{0}{0}, \sqrt{p} \ketbra{1}{1} \right\} \, .
    \end{equation}
    Finally, we also consider the \emph{Haar Optimal} decomposition, 
    obtained from applying the Hadamard gate as unitary freedom on the {mixed-unitary} decomposition as proposed in Refs.~\cite{chengEfficientSamplingNoisy2023, chenOptimizedTrajectoryUnraveling2024}
    \begin{align}
        \{K_i\}_{i=1}^{2} & 
        = \left\{ \sqrt{\frac{1-p/2}{2}} \II + \frac{\sqrt{p}}{2} Z, \sqrt{\frac{1-p/2}{2}} \II - \frac{\sqrt{p}}{2} Z \right\} \\ 
         \nonumber&
        = \left\{ 
            \frac{1}{\sqrt{2}} \begin{pmatrix}
                \sqrt{1-p/2} + \sqrt{p/2} & 0 \\ 0 & \sqrt{1-p/2} + \sqrt{p/2} 
            \end{pmatrix}, 
            \frac{1}{\sqrt{2}} \begin{pmatrix}
                \sqrt{1-p/2} + \sqrt{p/2} & \\ 0 & \sqrt{1-p/2} - \sqrt{p/2}
            \end{pmatrix}
        \right\}
        \, .
    \end{align}

\subsubsection*{Depolarizing}

    Just like dephasing, we also introduce a {mixed-unitary} ({Orthogonal}) unraveling, what we label {Projective} unraveling and a {Haar Optimal} unraveling of the depolarizing noise channel.
    The {mixed-unitary} unraveling is a decomposition into the four rescaled Pauli operators
    \begin{equation}
        \mathcal{N}[\rho] = \left(1-\frac{3p}{4}\right) \rho + \frac{p}{4} (X \rho X + Y \rho Y + Z \rho Z) 
        \quad \Rightarrow \quad
        \{K_i\}_{i=1}^{4} = \left\{ \sqrt{1-\frac{3p}{4}} \II, \frac{\sqrt{p}}{2} X, \frac{\sqrt{p}}{2} Y, \frac{\sqrt{p}}{2} Z \right\} \, .
    \end{equation} 
    Our so-called {Projective} unraveling is again formed of the rescaled identity operator and some rank-$1$ operators
    \begin{multline}
        \mathcal{N}[\rho] = \left(1-p\right) \rho + \frac{p}{2} (\ketbra{0}{0} \rho \ketbra{0}{0} + \ketbra{0}{1} \rho \ketbra{1}{0} + \ketbra{1}{0} \rho \ketbra{0}{1} + \ketbra{1}{1} \rho \ketbra{1}{1}) \\
        \Rightarrow \quad
        \{K_i\}_{i=1}^{5} = \left\{ \sqrt{1-p} \II, \sqrt{\frac{p}{2}} \ketbra{0}{0}, \sqrt{\frac{p}{2}} \ketbra{0}{1}, \sqrt{\frac{p}{2}} \ketbra{1}{0}, \sqrt{\frac{p}{2}} \ketbra{1}{1} \right\} \, .
    \end{multline} 
    Although it is not properly a decomposition into projectors, it is reminiscent of the {Projective} unraveling of the dephasing channel as it acts trivially with some probability, or measures the state into the computational basis, but this time followed by a random preparation of one of the two computational basis states.
    The {Haar Optimal} unraveling is given by the unitary freedom (Eq.~\eqref{eq:unitarity_kraus}) on the {mixed-unitary} unraveling with
    \begin{equation}
        U = \frac{1}{2} \begin{pmatrix}
            1 & 1 & 1 & 1 \\
            1 & 1 & -1 & -1 \\
            1 & -1 & 1 & -1 \\
            1 & -1 & -1 & 1 
        \end{pmatrix} 
        \quad \Rightarrow \quad
        \{K_i\}_{i=1}^{4} = \begin{Bmatrix}
            \frac{1}{2} \sqrt{1-\frac{3p}{4}} \II + \frac{\sqrt{p}}{4}(X + Y + Z), \\
            \frac{1}{2} \sqrt{1-\frac{3p}{4}} \II + \frac{\sqrt{p}}{4}(X - Y - Z), \\
            \frac{1}{2} \sqrt{1-\frac{3p}{4}} \II + \frac{\sqrt{p}}{4}(-X + Y - Z), \\
            \frac{1}{2} \sqrt{1-\frac{3p}{4}} \II + \frac{\sqrt{p}}{4}(-X - Y + Z)
        \end{Bmatrix}
        \, .
    \end{equation}

\subsubsection*{Amplitude damping}

    The amplitude damping channel does not admit a mixed-unitary decomposition and is usually described in terms of what we call the \emph{Orthogonal} decomposition
    \begin{equation}
        \{K_i\}_{i=1}^{2} = \left\{
            \begin{pmatrix}
                1 & 0 \\ 0 & \sqrt{1-\gamma}
            \end{pmatrix}, 
            \begin{pmatrix}
                0 & \sqrt{\gamma} \\ 0 & 0
            \end{pmatrix}
        \right\} \, .
    \end{equation} 
    Like for the two channels above, we also consider the {Haar Optimal} version thereof, given by applying the Hadamard matrix
    \begin{equation}
        H = \frac{1}{\sqrt{2}} \begin{pmatrix} 1 & 1 \\ 1 & -1 \end{pmatrix} 
        \quad \Rightarrow \quad
        \{K_i\}_{i=1}^{2} = \left\{ \sqrt{\frac{p}{2}} \begin{pmatrix} 1 & \sqrt{\gamma} \\ 0 & \sqrt{1-\gamma} \end{pmatrix}, \sqrt{\frac{p}{2}} \begin{pmatrix} 1 & -\sqrt{\gamma} \\ 0 & \sqrt{1-\gamma} \end{pmatrix} \right\}
    \end{equation} 
     as unitary freedom.

\section{Proofs of the error bounds}
\label{sec:upper_bound_proofs}

    This section shows how to compute a posteriori error bounds for trajectory sampling techniques. 
    In particular, after establishing some notation (Section~\ref{sec:notation}), we prove Theorem~\ref{result:sampled_trajectories} (Section~\ref{proof:sampled_trajectories}) and show how the bound can be computed in practice when using MPS as an underlying data structure (Section~\ref{sec:MPS-based_errors}).
    Note that the simulation error depends on the choice of decomposition of the density matrices, which is not unique.
    Nonetheless, the analysis we will do now holds for any choice of unraveling.
    The a posteriori error bound, computed along the way, will then differ depending on the unraveling choice.

\subsection{Clarifying some notation}
\label{sec:notation}

    Let us first introduce some notation choices we will use to derive results on the performance of the quantum trajectory unraveling algorithm. 
    Those are also summarized in Table~\ref{tab:notation_memo}.
    We consider an initial state vector $\ket{\psi^{(0)}}$, which can be any state efficiently representable as an MPS (for instance the all-zero state $\ket{\psi^{(0)}} = \ket{0}^{\otimes n}$). 
    The corresponding density matrix is $\rho^{(0)} = \ketbra{\psi^{(0)}}{\psi^{(0)}}$. 
    We also consider a sequence of $L$ quantum channels $\{\Lambda^{(\ell)}\}_{\ell=1}^L$ and call $\rho^{(L)}$ the state after the full process
    \begin{equation}
        \label{eq:exact_noisy_state}
        \rho^{(L)} = \left( \bigcirc_{\ell=1}^L \Lambda^{(\ell)} \right) \left[ \ketbra{\psi^{(0)}}{\psi^{(0)}} \right] \, .
    \end{equation}
    This state is decomposed into pure states through the trajectory unraveling procedure.
    Let us see how this decomposition is constructed. 
    Applying the first channel to the initial state, we obtain some ensemble decomposition of the mixed state induced by the choice of Kraus decomposition, as shown in Eq.~\eqref{eq:Kraus_to_ensemble}
    \begin{equation}
        \Lambda^{(1)} \left[ \ketbra{\psi^{(0)}}{\psi^{(0)}} \right]
        = \sum_s K^{(1)}_s \ketbra{\psi^{(0)}}{\psi^{(0)}} K^{(1) \dagger}_s
        = \sum_s p({s}) \ketbra{{\psi}_{{s}}}{{\psi}_{{s}}} 
        \quad \text{for} \quad
        \Lambda^{(1)} \left[ \cdot \right]
        = \sum_s K^{(1)}_s (\cdot) K^{(1)\dagger}_s
        \, .
    \end{equation}
    Here, the label ${s}$ runs over all the Kraus operators of $\Lambda^{(1)}$.
    Then comes the application of the next channel, $\Lambda^{(2)}$.
    For each element of the decomposition of the state after the first layer, we are free to choose a different Kraus decomposition of the second layer
    $\Lambda^{(2)} \left[ \cdot \right]
    = \sum_{s'} K^{(2)}_{s'} (\cdot) K^{(2)\dagger}_{s'}$
    where the decomposition $\{K^{(2)}_{s'}\}_{s'}$ is a function of the input state vector.
    Still, we can enumerate the Kraus operators and use an integer label $s'$
    \begin{equation}
        \Lambda^{(2)} \circ \Lambda^{(1)} \left[ \ketbra{\psi^{(0)}}{\psi^{(0)}} \right]
        = \sum_{{s}} p({s}) \sum_{{s'}}  p({s}'|{s})\ketbra{{\psi}_{{s, s'}}}{{\psi}_{{s, s'}}}
        = \sum_{{s, s'}} p(s,s') \ketbra{{\psi}_{{s, s'}}}{{\psi}_{{s, s'}}} \, .
    \end{equation}
    Repeating at every layer, a tree of possibilities is generated where each branch is one possible trajectory from the process, and each leaf is one of the possible sampled final states. 
    From now on, we will use arrays of integers denoted $\boldsymbol{s}^{(L)}$ to label the branches. 
    Each label $\boldsymbol{s}^{(L)}$ is composed of $L$ integers, representing which Kraus operator has been sampled at each layer. 
    That way, the labels $\boldsymbol{s}^{(L)}$ store the whole history of sampling of one trajectory $\ket{{\psi}_{\boldsymbol{s}^{(L)}}}$.
    Then, for any $\ell < L$, the label of the intermediate state $\ket{{\psi}_{\boldsymbol{s}^{(\ell)}}}$ is simply the array of the first $\ell$ elements of $\boldsymbol{s}^{(L)}$.
    Then we have that, after $L$ layers, the state can be written as
    \begin{equation}
        \label{eq:exact_noisy_state_decomposition}
        \rho^{(L)}
        = \sum_{\boldsymbol{s}^{(L)}} p(\boldsymbol{s}^{(L)}) \ketbra{{\psi}_{\boldsymbol{s}^{(L)}}}{{\psi}_{\boldsymbol{s}^{(L)}}} \, . 
    \end{equation}
    A small example of such a tree of trajectories is shown in Fig.~\ref{fig:unraveling_tree}.

    \begin{figure}
        \centering
        \includegraphics[width=\linewidth]{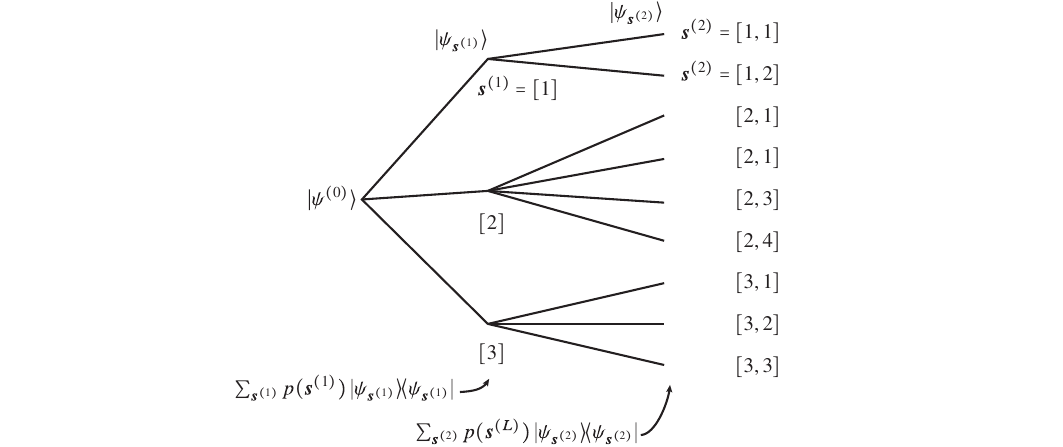}
        \caption{Representation of the tree of possible trajectories in a quantum trajectory unraveling procedure.}
        \label{fig:unraveling_tree}
    \end{figure}

    \begin{table}[]
        \centering
        \begin{tabular}{lll}
            Exact state & 
            $\rho^{(L)} 
            = \left( \bigcirc_{\ell=1}^L \Lambda^{(\ell)} \right) \left[ \ketbra{\psi^{(0)}}{\psi^{(0)}} \right] $ & 
            $= \sum_{\boldsymbol{s}} {p}_{\boldsymbol{s}} \ketbra{{\psi}_{\boldsymbol{s}}}{{\psi}_{\boldsymbol{s}}}$ \\
            Approximate representation & 
            $\approxrep{\rho}^{(L)} 
            = \left( \bigcirc_{\ell=1}^L \operatorname{Approx} \circ \Lambda^{(\ell)} \right) \left[ \ketbra{\psi^{(0)}}{\psi^{(0)}} \right]$ & 
            $= \sum_{\boldsymbol{s}} \approxrep{p}_{\boldsymbol{s}} \ketbra{\approxrep{\psi}_{\boldsymbol{s}}}{\approxrep{\psi}_{\boldsymbol{s}}}$ \\
            Before the last approximation & 
            $\preapproxrep{\rho}^{(L)} 
            = \Lambda^{(L)} [\approxrep{\rho}^{(L-1)}]$ & 
            $= \sum_{\boldsymbol{s}} \approxrep{p}_{\boldsymbol{s}} \ketbra{\preapproxrep{\psi}_{\boldsymbol{s}}}{\preapproxrep{\psi}_{\boldsymbol{s}}}$ \\
            Estimator state & 
            \multicolumn{2}{l}{
            $\hat{\rho} 
            = \frac{1}{N} \sum_{i=1}^N  \ketbra{\hat{\psi}_{i}}{\hat{\psi}_{i}}$
            \quad with  
            $ \ket{\hat{\psi}_{i}}$ sampled from $\approxrep{\rho}^{(L)}$
            }\\
        \end{tabular}
        \caption{Overview of notations and definitions used for the different states appearing in the proofs of the approximation errors.}
        \label{tab:notation_memo}
    \end{table}

    This would be the representation if we could always store each element of the decomposition exactly.
    However, to ensure computational efficiency, the state at each step has to be approximated by another state that can be treated efficiently. 
    In the case of MPS, this corresponds to truncating to a fixed bond dimension. 
    We will use $\operatorname{Approx}[\rho]$ to denote this procedure of approximating the state, leaving it generic for now.
    Any approximation procedure we consider is an approximation that acts on the state vector level, so for any state $\rho$ and any decomposition of it $\rho = \sum_i p_i \ketbra{\psi_i}{\psi_i}$
    the approximation procedure approximates each element individually as
    \begin{equation}
    \label{eq:approx_on_decomposition}
        \operatorname{Approx}[\rho] = \sum_i p_i \ketbra{\operatorname{Approx}[\psi_i]}{\operatorname{Approx}[\psi_i]} \, .
    \end{equation}
    To refer to the approximate representation of $\rho^{(L)}$, where approximations are performed at each layer, we will use
    \begin{equation}
        \label{eq:approx_noisy_state}
        \approxrep{\rho}^{(L)} = \left( \bigcirc_{\ell=1}^L \operatorname{Approx} \circ \Lambda^{(\ell)} \right) \left[ \ketbra{\psi^{(0)}}{\psi^{(0)}} \right] \, .
    \end{equation}
    Analogously to the exact case, Eq.~\eqref{eq:exact_noisy_state_decomposition}, each channel is decomposed into a Kraus representation (depending on the input state vector), resulting in a pure-state decomposition of the output state, where each element is approximated. 
    At the same time, the probability distributions differ from the exact case since, after performing a truncation at one layer, the states at the next layer and their corresponding probabilities are affected.
    Then, at any depth, there is a decomposition of the approximated state, obtained from the repeated application of Kraus decompositions and approximation
    \begin{equation}
        \label{eq:approx_noisy_state_decomposition}
        \approxrep{\rho}^{(\ell)} 
        = \sum_{\boldsymbol{s}^{(\ell)}} \approxrep{p}(\boldsymbol{s}^{(\ell)})   \ketbra{\approxrep{\psi}_{\boldsymbol{s}^{(\ell)}}}{\approxrep{\psi}_{\boldsymbol{s}^{(\ell)}}} \, .
    \end{equation}
    To make things explicit, let us also define the state resulting from applying the channel $\Lambda^{(\ell)}$ to the approximate state from the previous layer $\approxrep{\rho}^{(\ell-1)}$, but before performing the approximation, 
    \begin{equation}
    \label{eq:pre-aprox_noisy_state}
        \preapproxrep{\rho}^{(\ell)} 
        = \Lambda^{(\ell)} \left[ \approxrep{\rho}^{(\ell-1)} \right] 
        = \sum_{\boldsymbol{s}^{(\ell)}} \approxrep{p}(\boldsymbol{s}^{(\ell)}) \ketbra{\preapproxrep{\psi}_{\boldsymbol{s}^{(\ell)}}}{\preapproxrep{\psi}_{\boldsymbol{s}^{(\ell)}}} \, .
    \end{equation}
    Then we obtain Eq.~\eqref{eq:approx_noisy_state_decomposition} through the approximation of each state vector $\ket{\approxrep{\psi}_{\boldsymbol{s}^{(\ell)}}} = \ket{\operatorname{Approx}[\preapproxrep{\psi}_{\boldsymbol{s}^{(\ell)}}]}$ and $\approxrep{\rho}^{(\ell)} = \operatorname{Approx} [\preapproxrep{\rho}^{(\ell)}]$.
    Let us call $\varepsilon(\approxrep{\psi}_{\boldsymbol{s}^{(\ell)}})$ the error in approximating the state $\ket{\preapproxrep{\psi}_{\boldsymbol{s}^{(\ell)}}}$
    \begin{equation}
    \label{eq:single_approx_error}
        \varepsilon(\approxrep{\psi}_{\boldsymbol{s}^{(\ell)}})
        = \left\| \ketbra{\preapproxrep{\psi}_{\boldsymbol{s}^{(\ell)}}}{\preapproxrep{\psi}_{\boldsymbol{s}^{(\ell)}}} - \ketbra{\approxrep{\psi}_{\boldsymbol{s}^{(\ell)}}}{\approxrep{\psi}_{\boldsymbol{s}^{(\ell)}}} \right\|_{\mathrm{Tr}} \, . 
    \end{equation}
    If the approximation is given by the truncation of the MPS, then the error above can be bounded using Eq.~\eqref{eq:MPS_truncation_error}, as we will see in Section~\ref{sec:MPS-based_errors}.
    We will also use $\varepsilon_{\mathrm{tot}}(\approxrep{\psi}_{\boldsymbol{s}^{(L)}})$ for the accumulation of the errors up to layer $L$
    \begin{equation}
    \label{eq:cumulated_trajectory_error}
        \varepsilon_{\mathrm{tot}}(\approxrep{\psi}_{\boldsymbol{s}^{(L)}})
        = \sum_{\ell = 1}^L 
        \varepsilon(\approxrep{\psi}_{\boldsymbol{s}^{(\ell)}}) \, . 
    \end{equation}
    This quantity does not translate directly to a trace distance error between some exact and approximate states, but it will appear in our analysis.
    These notation choices can be found summarized in Table~\ref{tab:notation_memo}.

    With this, we will make a few observations on the tree of trajectories, the probability distributions, and the accumulation of errors.
    First, due to Bayes' rule, the probabilities of the leaves can easily be written in terms of the probabilities at each branching
    \begin{equation}
        \approxrep{p} (\boldsymbol{s}^{(L)})
        = \approxrep{p} (\boldsymbol{s}^{(L)} | \boldsymbol{s}^{(L-1)}) \approxrep{p} (\boldsymbol{s}^{(L-1)})
        = \dots 
        = \left( \prod_{\ell = 1}^L \approxrep{p} (\boldsymbol{s}^{(\ell)} | \boldsymbol{s}^{(\ell-1)}) \right) \approxrep{p} (\boldsymbol{s}^{(0)})
    \end{equation}
    where $\boldsymbol{s}^{(0)}$ is the empty array and $\approxrep{p} (\boldsymbol{s}^{(0)}) = 1$. 
    Based on that, at any step and for any input pure state, the distribution of outputs forms a valid probability distribution.
    That means that at each node of the tree, conditioned on being at that node, the sum of probabilities at any depth to the right has to sum up to one, i.e.,
    \begin{equation}
    \label{eq:conditional_tree_distributions}
         \sum_{\boldsymbol{s}^{(\ell')}} \approxrep{p} (\boldsymbol{s}^{(\ell')} | \boldsymbol{s}^{(\ell)}) = 1
         \quad \forall \ell < \ell' \, .
    \end{equation}

\subsection{Proof of Theorem~\ref{result:sampled_trajectories}}
\label{proof:sampled_trajectories}

\trajectoryerror*

    Before proceeding with the proof, let us reconcile the (simplified) notation from the main text with the notation presented in the section above. 
    To reduce notational overhead, in the statement of Theorem~\ref{result:sampled_trajectories} we have used $\rho$ for $\rho^{(L)}$ and $\sigma$ for $\approxrep{\rho}^{(L)}$. 
    Correspondingly, the label $i$ stands for the label of each final trajectory $\boldsymbol{s}^{(L)}$ with respective probabilities $ q_i =  \approxrep{p}_{\boldsymbol{s}^{(L)}} =  \approxrep{p}(\boldsymbol{s}^{(L)})$ and states $\ket{\psi_i} = \ket{\approxrep{\psi}_{\boldsymbol{s}^{(L)}}}$.
    Throughout this section, we will use the nomenclature presented in Section~\ref{sec:notation}, in particular the notation for the exact state, the approximate state and the state resulting from sampling trajectories.

\begin{proof}
    To prove the statement of Theorem~\ref{result:sampled_trajectories}, we will first prove that the approximation error from the unraveling procedure is bounded by the sum of error bounds from each branch of the tree of possible trajectories. 
    Each of them can be bounded by the sum of errors of each layer in one trajectory (Proposition~\ref{result:systematic_error})
    or by a constant (Proposition~\ref{result:path_error_bound}). 
    Then, in Proposition~\ref{result:systematic_error_bound}, we show that although the quantity from Proposition~\ref{result:systematic_error} is not known, we compute an estimator that is probably approximately correct.

    \begin{proposition}
        \label{result:systematic_error}
        Given
        \begin{itemize}
            \item An initial state vector ($\ket{\psi^{(0)}}$) as an MPS (eg. $\ket{\psi^{(0)}} = \ket{0}^{\otimes n}$),
            \item A sequence of quantum channels $\{\Lambda^{(\ell)}\}_{\ell=1}^L$,
            \item An approximation procedure $\operatorname{Approx}[\cdot]$ which acts as in Eq.~\eqref{eq:approx_on_decomposition},
        \end{itemize}
        then the error in the approximate representation of the evolution $\approxrep{\rho}^{(L)}$ as defined in Eq.~\eqref{eq:approx_noisy_state_decomposition} can be bounded
        \begin{equation}
            \| \rho^{(L)} - \approxrep{\rho}^{(L)} \|_{\mathrm{Tr}} = \varepsilon_{\mathrm{sys}}
            \quad \text{with} \quad 
            \rho^{(L)} \coloneqq \Lambda^{(L)} \circ \cdots \circ \Lambda^{(1)} [\rho^{(0)}]
        \end{equation}
        where 
        \begin{equation}
        \label{eq:systematic_error}
            \varepsilon_{\mathrm{sys}} \leq \sum_{\boldsymbol{s}^{(L)}} \approxrep{p}_{\boldsymbol{s}^{(L)}} \varepsilon_{\mathrm{tot}}(\approxrep{\psi}_{\boldsymbol{s}^{(L)}}) 
            \quad \text{with} \quad 
             \varepsilon_{\mathrm{tot}} (\cdot)
             \text{ as defined in Eq.~\eqref{eq:cumulated_trajectory_error}} \, .
        \end{equation}
    \end{proposition}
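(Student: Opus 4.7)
My plan is to prove the bound by a standard telescoping argument that decomposes the global error into the errors introduced by each individual approximation step, then uses the joint convexity of the trace distance to turn each step error into an ensemble average over partial trajectories, and finally resums everything along the tree of trajectories.

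First I would define the hybrid interpolating states
\begin{equation}
\sigma^{(\ell)} = \Lambda^{(L)} \circ \cdots \circ \Lambda^{(\ell+1)}\bigl[\approxrep{\rho}^{(\ell)}\bigr],
\end{equation}
so that $\sigma^{(0)} = \rho^{(L)}$ (no approximation is performed) and $\sigma^{(L)} = \approxrep{\rho}^{(L)}$. Using the defining identity $\preapproxrep{\rho}^{(\ell)} = \Lambda^{(\ell)}[\approxrep{\rho}^{(\ell-1)}]$ and the triangle inequality on the telescoping difference $\rho^{(L)} - \approxrep{\rho}^{(L)} = \sum_{\ell=1}^L (\sigma^{(\ell-1)} - \sigma^{(\ell)})$, I obtain
\begin{equation}
\bigl\| \rho^{(L)} - \approxrep{\rho}^{(L)} \bigr\|_{\mathrm{Tr}} \leq \sum_{\ell=1}^L \bigl\| \Lambda^{(L)} \circ \cdots \circ \Lambda^{(\ell+1)}\bigl[\preapproxrep{\rho}^{(\ell)} - \approxrep{\rho}^{(\ell)}\bigr] \bigr\|_{\mathrm{Tr}}.
\end{equation}

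The second step is to invoke the contractivity of CPTP maps in trace distance to drop the remaining channels, reducing each summand to $\|\preapproxrep{\rho}^{(\ell)} - \approxrep{\rho}^{(\ell)}\|_{\mathrm{Tr}}$. Then, using the assumption \eqref{eq:approx_on_decomposition} that $\operatorname{Approx}$ acts at the level of the pure-state ensemble, both states share the same probability distribution $\{\approxrep{p}(\boldsymbol{s}^{(\ell)})\}$ over partial trajectories, and differ only by the replacement $\ket{\preapproxrep{\psi}_{\boldsymbol{s}^{(\ell)}}} \mapsto \ket{\approxrep{\psi}_{\boldsymbol{s}^{(\ell)}}}$. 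Joint convexity of the trace norm then yields
\begin{equation}
\bigl\| \preapproxrep{\rho}^{(\ell)} - \approxrep{\rho}^{(\ell)} \bigr\|_{\mathrm{Tr}} \leq \sum_{\boldsymbol{s}^{(\ell)}} \approxrep{p}(\boldsymbol{s}^{(\ell)})\, \varepsilon(\approxrep{\psi}_{\boldsymbol{s}^{(\ell)}}),
\end{equation}
with $\varepsilon(\approxrep{\psi}_{\boldsymbol{s}^{(\ell)}})$ as in \eqref{eq:single_approx_error}.

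Finally I would re-express each partial-trajectory sum as a marginal of the full-trajectory distribution: since the conditional probabilities along the tree sum to one (Eq.~\eqref{eq:conditional_tree_distributions}) and $\varepsilon(\approxrep{\psi}_{\boldsymbol{s}^{(\ell)}})$ depends only on the prefix $\boldsymbol{s}^{(\ell)}$ of $\boldsymbol{s}^{(L)}$, we have
\begin{equation}
\sum_{\boldsymbol{s}^{(\ell)}} \approxrep{p}(\boldsymbol{s}^{(\ell)})\, \varepsilon(\approxrep{\psi}_{\boldsymbol{s}^{(\ell)}}) = \sum_{\boldsymbol{s}^{(L)}} \approxrep{p}(\boldsymbol{s}^{(L)})\, \varepsilon(\approxrep{\psi}_{\boldsymbol{s}^{(\ell)}}).
\end{equation}
Swapping the sums over $\ell$ and $\boldsymbol{s}^{(L)}$ and invoking the definition~\eqref{eq:cumulated_trajectory_error} of $\varepsilon_{\mathrm{tot}}$ produces the claimed bound~\eqref{eq:systematic_error}. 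None of the steps is technically subtle; the only point that requires a bit of care is verifying that the pre- and post-approximation states genuinely share the same probability weights $\approxrep{p}(\boldsymbol{s}^{(\ell)})$ on partial trajectories, which is precisely what the pure-state-level assumption \eqref{eq:approx_on_decomposition} guarantees and what makes the convexity step go through without losing factors.
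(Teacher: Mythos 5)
Your proposal is correct and follows essentially the same route as the paper's proof: your telescoping sum over the hybrid states $\sigma^{(\ell)}$ is just the unrolled form of the paper's inductive layer-by-layer bound (its Proposition on the cumulated approximation error), and your convexity and tree-resummation steps match its subsequent two propositions exactly. The one point you flag as needing care—that the pre- and post-approximation ensembles share the weights $\approxrep{p}(\boldsymbol{s}^{(\ell)})$—is indeed exactly what the paper's assumption on $\operatorname{Approx}$ acting elementwise on the decomposition provides.
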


    \begin{proof}
    We will show this in three steps. 
    First, we show that for such a layered process, the total approximation error is bounded by the sum of approximation errors in each layer (Proposition~\ref{result:error_per_layer}). 
    Second, we show that for each layer, the approximation error is bounded by the sum of approximation errors of each element of the obtained pure state decomposition (Proposition~\ref{result:error_per_pure_state}).
    Third, we show that the resulting sum over layers, of the sum over elements is equivalent to a sum over all final trajectories, of the sum at each layer (Proposition~\ref{result:error_sums_equivalence}).
    One could also study the error committed individually in the samples given by each full trajectory (all layers) and average them according to the probability distribution.
    However, the interplay of the truncation errors and the resulting error in the sampling probabilities makes the analysis cumbersome. 
    
    We begin by showing that given some exactly representable state vector $\ket{\psi^{(0)}}$ (e.g.,
    the state $\ketbra{0^{\otimes n}}{0^{\otimes n}}$) and an evolution given by $L$ layers of quantum channels $ \left\{ \Lambda^{(\ell)} \right\}_{\ell=1}^L$ (e.g. $L$ layers of unitary operations composed with noise channels), the total approximation error is upper-bounded by the sum of approximation errors at each layer.

    \begin{proposition}[Cumulated approximation error]
    \label{result:error_per_layer}
        Let $\rho^{(L)}$ the state after the process (Eq.~\eqref{eq:exact_noisy_state})
        \begin{equation}
            \rho^{(L)} = \left( \bigcirc_{\ell=1}^L \Lambda^{(\ell)} \right) \left[ \ketbra{\psi^{(0)}}{\psi^{(0)}} \right] \, .
        \end{equation}
        Let $\operatorname{Approx}[\cdot]$ denote the procedure that returns some approximate input representation.
        Consequently, the approximate representation of $\rho^{(L)}$, where approximations are performed at each layer, is (Eq.~\eqref{eq:approx_noisy_state})
        \begin{equation}
            \approxrep{\rho}^{(L)} = \left( \bigcirc_{\ell=1}^L \operatorname{Approx} \circ \Lambda^{(\ell)} \right) \left[ \ketbra{\psi^{(0)}}{\psi^{(0)}} \right] \, .
        \end{equation}
        Then, the cumulated approximation error after $L$ layers is upper-bounded by the sum of approximation errors at each layer 
        \begin{equation}
        \label{eq:error_per_layer}
            \left\| \rho^{(L)} - \approxrep{\rho}^{(L)} \right\|_{\mathrm{Tr}} 
            \leq \sum_{\ell = 1}^L \Biggl\| \underbrace{\Lambda^{(\ell)} \left[ \approxrep{\rho}^{(\ell-1)} \right]}_{\preapproxrep{\rho}^{(\ell)}} 
            - \underbrace{\operatorname{Approx} \circ \Lambda^{(\ell)} \left[ \approxrep{\rho}^{(\ell-1)} \right]}_{\approxrep{\rho}^{(\ell)}} \Biggr\|_{\mathrm{Tr}}
            \, .
        \end{equation}
    \end{proposition}

    \begin{proof}
        Let $\operatorname{Approx}[\rho]$ denote some approximate representation of $\rho$, inducing some error. 
        Starting from some exactly representable state, the noisy circuit is composed of $L$ layers of channels $ \left\{ \Lambda^{(\ell)} \right\}_{\ell=1}^L$.
        The error committed at layer $\ell + 1$ fulfills
        \begin{align}
            \left\| \rho^{(\ell+1)} - \approxrep{\rho}^{(\ell+1)} \right\|_{\mathrm{Tr}} &
            = \left\| \Lambda^{(\ell+1)} \left[ \rho^{(\ell)} \right] - \operatorname{Approx} \circ \Lambda^{(\ell+1)} \left[ \approxrep{\rho}^{(\ell)} \right] \right\|_{\mathrm{Tr}} \nonumber \\ 
            \nonumber & 
            = \Big\| \Lambda^{(\ell+1)} \left[\rho^{(\ell)}\right] \underbrace{- \Lambda^{(\ell+1)} \left[ \approxrep{\rho}^{(\ell)} \right] + \Lambda^{(\ell+1)} \left[ \approxrep{\rho}^{(\ell)} \right]}_{=0} - \operatorname{Approx} \circ \Lambda^{(\ell+1)} \left[ \approxrep{\rho}^{(\ell)} \right] \Big\|_{\mathrm{Tr}} \nonumber \\ & 
            \leq \underbrace{ \left\| \Lambda^{(\ell+1)} \left[\rho^{(\ell)}\right] - \Lambda^{(\ell+1)} \left[ \approxrep{\rho}^{(\ell)} \right] \right\|_{\mathrm{Tr}}}_{ \leq \| \rho^{(\ell)} - \approxrep{\rho}^{(\ell)} \|_{\mathrm{Tr}}} 
            + \underbrace{ \left\| \Lambda^{(\ell+1)} \left[ \approxrep{\rho}^{(\ell)} \right] - \operatorname{Approx} \circ \Lambda^{(\ell+1)} \left[ \approxrep{\rho}^{(\ell)} \right] \right\|_{\mathrm{Tr}}}_{\mathrm{approximation error}}.
            \nonumber
        \end{align}
        The definition of a layer in our model gives the first equality. 
        The second equality results from adding and subtracting the same element, thus adding zero.
        The third line is obtained using the triangle inequality and the final inequality on the left summand is a consequence of the fact that quantum channels reduce the distinguishability of two states.
        Thus, by induction, the error committed on the expected state after the $L$ layers is upper-bounded by the sum of errors committed by each truncation step,
        \begin{equation}
            \left\| \rho^{(L)} - \approxrep{\rho}^{(L)} \right\|_{\mathrm{Tr}} 
            \leq \sum_{\ell = 1}^L \left\| \Lambda^{(\ell)} \left[ \approxrep{\rho}^{(\ell-1)} \right] - \operatorname{Approx} \circ \Lambda^{(\ell)} \left[ \approxrep{\rho}^{(\ell-1)} \right] \right\|_{\mathrm{Tr}}
        \end{equation}
        with $\approxrep{\rho}^{(0)} = \rho^{(0)} = \ketbra{\psi^{(0)}}{\psi^{(0)}}$.
    \end{proof}

    Next, we can focus on the approximation error at each step.
    To this end, we use the notation defined in Section~\ref{sec:notation}.
    The exact noisy state $\rho^{(L)}$ can be decomposed into (subnormalized) state vectors $\{\sqrt{{p}_{\boldsymbol{s}}} \ket{\psi_{\boldsymbol{s}}} \}$ (Eq.~\eqref{eq:exact_noisy_state_decomposition}).
    The approximate representation of it, $\approxrep{\rho}^{(L)}$, is decomposed into pure states as $\{\sqrt{\approxrep{p}_{\boldsymbol{s}}} \ket{\approxrep{\psi}_{\boldsymbol{s}}} \}$ (Eq.~\eqref{eq:approx_noisy_state_decomposition}).
    Finally, the state resulting from applying the channel $\Lambda^{(L)}$ to the approximate state from the previous layer, but before performing the approximation, $\preapproxrep{\rho}^{(L)}$ and its decomposition $\{\sqrt{\approxrep{p}_{\boldsymbol{s}}} \ket{\preapproxrep{\psi}_{\boldsymbol{s}}} \}$ as in Eq.~\eqref{eq:pre-aprox_noisy_state}.

    \begin{proposition}
    \label{result:error_per_pure_state}
        Given a layered quantum process described by $L$ quantum channels
        $ \left \{ \Lambda^{(\ell)} \right \}_{\ell = 1}^L$, 
        resulting in density matrices decomposed into pure states
        \begin{equation}
            \preapproxrep{\rho}^{(\ell)} 
            = \Lambda^{(\ell)} \left[ \approxrep{\rho}^{(\ell-1)} \right] 
            = \sum_{\boldsymbol{s}^{(\ell)}} \approxrep{p}_{\boldsymbol{s}^{(\ell)}} \ketbra{\preapproxrep{\psi}_{\boldsymbol{s}^{(\ell)}}}{\preapproxrep{\psi}_{\boldsymbol{s}^{(\ell)}}}
        \end{equation}
        and some approximation procedure on the state vectors
        \begin{equation}
            \ket{\approxrep{\psi}_{\boldsymbol{s}}} = \ket{\operatorname{Approx}[\preapproxrep{\psi}_{\boldsymbol{s}}]}
        \end{equation}
        acting on mixed states as in Eq.~\eqref{eq:approx_on_decomposition}, then the approximation error of the exactly evolved state $\rho^{(L)}$ and the one resulting from approximating at each layer $\approxrep{\rho}^{(L)}$ is bounded by the sum of approximation errors of each element of the ensemble at each layer
        \begin{equation}
        \label{eq:simulation_error_per_sample}
            \left\| \rho^{(L)} - \approxrep{\rho}^{(L)} \right\|_{\mathrm{Tr}} 
            \leq \sum_{\ell = 1}^L \sum_{\boldsymbol{s}^{(\ell)}} \approxrep{p}_{\boldsymbol{s}^{(\ell)}} \underbrace{ \left\| \ketbra{\preapproxrep{\psi}_{\boldsymbol{s}^{(\ell)}}}{\preapproxrep{\psi}_{\boldsymbol{s}^{(\ell)}}} - \ketbra{\approxrep{\psi}_{\boldsymbol{s}^{(\ell)}}}{\approxrep{\psi}_{\boldsymbol{s}^{(\ell)}}} \right\|_{\mathrm{Tr}} }_{\varepsilon(\approxrep{\psi}_{\boldsymbol{s}^{(\ell)}})} \, .
        \end{equation}
    \end{proposition}

    \begin{proof}
        First, we observe that for any state $\rho$, for any decomposition of it
        \begin{equation}
            \rho = \sum_{\boldsymbol{s}} p_{\boldsymbol{s}} \ketbra{\psi_{\boldsymbol{s}}}{\psi_{\boldsymbol{s}}}
        \end{equation}
        and an approximation procedure that takes a pure state decomposition of the mixed state and approximates each element individually as in Eq.~\eqref{eq:approx_on_decomposition},
        then the approximation error on the full state after approximating each pure state individually can be bounded by the sum of the approximation errors on each pure state,
        \begin{align}
            \left\| \rho - \operatorname{Approx}[\rho] \right\|_{\mathrm{Tr}} & 
            = \left\| \sum_{\boldsymbol{s}} {p}_{\boldsymbol{s}}  \ketbra{\psi_{\boldsymbol{s}}}{\psi_{\boldsymbol{s}}} - \ketbra{\operatorname{Approx}[\psi_{\boldsymbol{s}}]}{\operatorname{Approx}[\psi_{\boldsymbol{s}}]} \right\|_{\mathrm{Tr}} \\ 
             \nonumber
            & \leq \sum_{\boldsymbol{s}} {p}_{\boldsymbol{s}}  \left\| \ketbra{\psi_{\boldsymbol{s}}}{\psi_{\boldsymbol{s}}} - \ketbra{\operatorname{Approx}[\psi_{\boldsymbol{s}}]}{\operatorname{Approx}[\psi_{\boldsymbol{s}}]} \right\|_{\mathrm{Tr}} \, .
        \end{align}
        That is a direct result of the triangle inequality.
        Applying this to a single layer of our process and adopting the notation from Table~\ref{tab:notation_memo}, 
        where we replace $\rho$ with the output state from each channel layer $\Lambda^{(\ell)} \left[ \approxrep{\rho}^{(\ell-1)} \right]$, 
        results in 
        \begin{align}
            \left\| \preapproxrep{\rho}^{(\ell)} - \approxrep{\rho}^{(\ell)} \right\|_{\mathrm{Tr}} &
            = \left\| \Lambda^{(\ell)} \left[ \approxrep{\rho}^{(\ell-1)} \right] - \operatorname{Approx} \circ \Lambda^{(\ell)} \left[ \approxrep{\rho}^{(\ell-1)} \right] \right\|_{\mathrm{Tr}} \\
            \nonumber &
            = \left\| \sum_{\boldsymbol{s}^{(\ell)}} \approxrep{p}_{\boldsymbol{s}^{(\ell)}} \left( \ketbra{\preapproxrep{\psi}_{\boldsymbol{s}^{(\ell)}}}{\preapproxrep{\psi}_{\boldsymbol{s}^{(\ell)}}} - \ketbra{\approxrep{\psi}_{\boldsymbol{s}^{(\ell)}}}{\approxrep{\psi}_{\boldsymbol{s}^{(\ell)}}} \right) \right\|_{\mathrm{Tr}} \\ 
            \nonumber&
            \leq \sum_{\boldsymbol{s}^{(\ell)}} \approxrep{p}_{\boldsymbol{s}^{(\ell)}}  \left\| \ketbra{\preapproxrep{\psi}_{\boldsymbol{s}^{(\ell)}}}{\preapproxrep{\psi}_{\boldsymbol{s}^{(\ell)}}} - \ketbra{\approxrep{\psi}_{\boldsymbol{s}^{(\ell)}}}{\approxrep{\psi}_{\boldsymbol{s}^{(\ell)}}} \right\|_{\mathrm{Tr}} \, .
        \end{align}
        Putting things together, we get a bound on the approximation error for the full evolution.
        Considering again the full layered quantum process described by $L$ quantum channels
        $ \left \{ \Lambda^{(\ell)} \right \}_{\ell = 1}^L$
        and inserting the result from Proposition~\ref{result:error_per_layer} results in the desired expression
        \begin{equation}
            \left\| \rho^{(L)} - \approxrep{\rho}^{(L)} \right\|_{\mathrm{Tr}} 
            \leq \sum_{\ell = 1}^L \sum_{\boldsymbol{s}^{(\ell)}} \approxrep{p}_{\boldsymbol{s}^{(\ell)}}  \underbrace{\left\| \ketbra{\preapproxrep{\psi}_{\boldsymbol{s}^{(\ell)}}}{\preapproxrep{\psi}_{\boldsymbol{s}^{(\ell)}}} - \ketbra{\approxrep{\psi}_{\boldsymbol{s}^{(\ell)}}}{\approxrep{\psi}_{\boldsymbol{s}^{(\ell)}}} \right\|_{\mathrm{Tr}}}_{\varepsilon(\approxrep{\psi}_{\boldsymbol{s}^{(\ell)}})} \, ,
        \end{equation}
        which ends the proof.
    \end{proof}

        As a last intermediate step, let us show that it is equivalent to summing the errors of the elements of the decomposition at each layer, weighted by the respective probability, or summing the cumulated errors of each final trajectory, weighted by the final probability. 
    
        \begin{proposition}
        \label{result:error_sums_equivalence}
            Given a state $\approxrep{\rho}^{(L)}$ obtained from consecutive applications of quantum channels and approximations as in Eq.~\eqref{eq:approx_noisy_state_decomposition},
            the probabilities $\approxrep{p}_{\boldsymbol{s}^{(\ell)}}$ of each element $\ket{\approxrep{\psi}_{\boldsymbol{s}^{(\ell)}}}$ of the decomposition at depth $\ell$ and
            the corresponding approximation errors $\varepsilon(\approxrep{\psi}_{\boldsymbol{s}^{(\ell)}})$,
            then the sum of errors layer by layer of the weighted errors of each element is equivalent to the weighted sum of the cumulated error of each full trajectory
            \begin{equation}
            \label{eq:layer_to_trajectory_errors}
                \sum_{\ell = 1}^L \sum_{\boldsymbol{s}^{(\ell)}} \approxrep{p}_{\boldsymbol{s}^{(\ell)}} \varepsilon(\approxrep{\psi}_{\boldsymbol{s}^{(\ell)}})
                = \sum_{\boldsymbol{s}^{(L)}} \approxrep{p}_{\boldsymbol{s}^{(L)}} \varepsilon_{\mathrm{tot}}(\approxrep{\psi}_{\boldsymbol{s}^{(L)}}) \, ,
            \end{equation}
            where $\varepsilon_{\mathrm{tot}}(\cdot)$ is the cumulated errors of one trajectory as defined in Eq.~\eqref{eq:cumulated_trajectory_error}.
        \end{proposition}
        \begin{proof}
            First, we show that ``splittings of the tree at deeper layers do not matter'' when dealing with errors at layer $\ell$
            \begin{equation}
            \label{eq:error_sums_equivalence}
                \sum_{\boldsymbol{s}^{(L)}} \approxrep{p} (\boldsymbol{s}^{(L)}) \varepsilon(\approxrep{\psi}_{\boldsymbol{s}^{(\ell)}})
                =
                \sum_{\boldsymbol{s}^{(\ell)}} \approxrep{p} (\boldsymbol{s}^{(\ell)}) \varepsilon(\approxrep{\psi}_{\boldsymbol{s}^{(\ell)}}) \, .
            \end{equation}
            For this, we divide the sum to separate the splittings of the tree before and after layer $\ell$. 
            Then we use the fact that everything that happens to the right does not matter since it sums up to a factor of 1. 
            Lastly, the remaining sum of probabilities reduces to the probability of the node of interest in the tree,
            \begin{align}
                \sum_{\boldsymbol{s}^{(L)}} \approxrep{p} (\boldsymbol{s}^{(L)}) \varepsilon(\approxrep{\psi}_{\boldsymbol{s}^{(\ell)}}) & 
                = \sum_{\boldsymbol{s}^{(\ell)}} \sum_{\boldsymbol{s}^{(L)} | \boldsymbol{s}^{(\ell)}} \approxrep{p} (\boldsymbol{s}^{(L)}) \varepsilon(\approxrep{\psi}_{\boldsymbol{s}^{(\ell)}}) \\
                \nonumber & 
                = \sum_{\boldsymbol{s}^{(\ell)}} \sum_{\boldsymbol{s}^{(L)} | \boldsymbol{s}^{(\ell)}} \approxrep{p} (\boldsymbol{s}^{(\ell)}) \approxrep{p} (\boldsymbol{s}^{(L)} | \boldsymbol{s}^{(\ell)}) \varepsilon(\approxrep{\psi}_{\boldsymbol{s}^{(\ell)}}) \\ 
                 \nonumber& 
                = \sum_{\boldsymbol{s}^{(\ell)}} \approxrep{p} (\boldsymbol{s}^{(\ell)}) \varepsilon(\approxrep{\psi}_{\boldsymbol{s}^{(\ell)}}) \underbrace{\sum_{\boldsymbol{s}^{(L)} | \boldsymbol{s}^{(\ell)}}  \approxrep{p} (\boldsymbol{s}^{(L)} | \boldsymbol{s}^{(\ell)})}_{=1} \\ 
                 \nonumber& 
                = \sum_{\boldsymbol{s}^{(\ell)}} \approxrep{p} (\boldsymbol{s}^{(\ell)}) \varepsilon(\approxrep{\psi}_{\boldsymbol{s}^{(\ell)}}). 
                 \nonumber
            \end{align}
            The first equality results from rewriting the sum over all leaves as the sum over all intermediate nodes at depth $\ell$ and the sum over all leaves conditioned on the intermediate state at depth $\ell$.
            The second is again the application of Bayes' rule.
            The third is commuting the sum over the labels after depth $\ell$ with $\approxrep{p} (\boldsymbol{s}^{(\ell)})$ and $\varepsilon(\approxrep{\psi}_{\boldsymbol{s}^{(\ell)}})$, since they do not depend on the last $L-\ell$ elements of the label. 
            The last equation uses Eq.~\eqref{eq:conditional_tree_distributions}.
            Now, to the equation we want to prove, starting from the right-hand side
            \begin{align}
                \sum_{\boldsymbol{s}^{(L)}} \approxrep{p} (\boldsymbol{s}^{(L)}) \varepsilon_{\mathrm{tot}}(\approxrep{\psi}_{\boldsymbol{s}^{(L)}}) & 
                = \sum_{\boldsymbol{s}^{(L)}} \approxrep{p} (\boldsymbol{s}^{(L)}) \left( \sum_{\ell = 1}^L \varepsilon(\approxrep{\psi}_{\boldsymbol{s}^{(\ell)}}) \right) \\
                 \nonumber& 
                = \sum_{\ell = 1}^L \sum_{\boldsymbol{s}^{(L)}} \approxrep{p} (\boldsymbol{s}^{(L)}) \varepsilon(\approxrep{\psi}_{\boldsymbol{s}^{(\ell)}}) \\ &
                = \sum_{\ell = 1}^L \sum_{\boldsymbol{s}^{(\ell)}} \approxrep{p} (\boldsymbol{s}^{(\ell)}) \varepsilon(\approxrep{\psi}_{\boldsymbol{s}^{(\ell)}}). 
                 \nonumber
            \end{align}
            The first equality is given by the definition of the cumulated error in one full trajectory, Eq.~\eqref{eq:cumulated_trajectory_error}.
            Then we commuted the order of the sums. 
            Lastly, the final identity is given by Eq.~\eqref{eq:error_sums_equivalence}.
        \end{proof}
        Using Proposition~\ref{result:error_sums_equivalence} on the outcome of Proposition~\ref{result:error_per_pure_state} gives us the result we aimed for
        \begin{equation}
            \left\| \rho^{(L)} - \approxrep{\rho}^{(L)} \right\|_{\mathrm{Tr}} 
            \leq  \sum_{\boldsymbol{s}^{(L)}} \approxrep{p}_{\boldsymbol{s}^{(L)}} \varepsilon_{\mathrm{tot}}(\approxrep{\psi}_{\boldsymbol{s}^{(L)}}) \, ,
        \end{equation}
    which again ends the proof.
    \end{proof}

    Proposition~\ref{result:systematic_error} states that the systematic error can be bounded by the weighted sum over all paths of the sums of the approximation errors (or upper bounds thereof) at each layer. 
    This error per trajectory could become arbitrarily large (for infinitely deep circuits).
    We will show that it can be bounded by a constant independent of system size and circuit depth.
    
    \begin{proposition}[Path error bound]
        \label{result:path_error_bound}
        Given the setting from Proposition~\ref{result:systematic_error}, the contribution of each path to the total systematic error can be bounded by any user-defined threshold $2 \leq \varepsilon_{\mathrm{max}} \leq 2L$
        \begin{equation}
        \label{eq:systematic_error_with_constant_bound}
            \varepsilon_{\mathrm{sys}} \leq \sum_{\boldsymbol{s}^{(L)}} \approxrep{p}_{\boldsymbol{s}^{(L)}} \varepsilon_{\mathrm{bound}}(\approxrep{\psi}_{\boldsymbol{s}^{(L)}}) 
            \quad \text{with} \quad 
            \varepsilon_{\mathrm{bound}} (\cdot) = 
            \begin{cases}
                \varepsilon_{\mathrm{tot}}(\cdot) & \text{if } \varepsilon_{\mathrm{tot}}(\cdot) \leq \varepsilon_{\mathrm{max}} -2 \\ 
                \varepsilon_{\mathrm{max}} & \text{otherwise} \, .
            \end{cases}
        \end{equation}
    \end{proposition}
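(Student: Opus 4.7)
The plan is to upgrade Proposition~\ref{result:systematic_error} by exploiting the trivial bound $\|\rho-\sigma\|_{\mathrm{Tr}} \leq 2$ between any two density matrices as a safety net on trajectories whose cumulative approximation error would otherwise grow unboundedly. The idea is to not compare $\approxrep{\rho}^{(L)}$ directly to $\rho^{(L)}$, but instead to route through an auxiliary ``clipped'' state $\tilde{\rho}^{(L)}$ in which every branch's cumulative error is uniformly controlled.

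Concretely, I would introduce $\tilde{\rho}^{(L)}$ via a modified approximation procedure $\operatorname{Approx}'$ that, on each trajectory $\boldsymbol{s}^{(\ell)}$ at each layer $\ell$, applies the usual $\operatorname{Approx}$ only while that trajectory's cumulative approximation error remains at most $\varepsilon_{\mathrm{max}}-2$, and otherwise falls back to the identity. Since $\operatorname{Approx}'$ still acts elementwise on ensemble decompositions as required by Eq.~\eqref{eq:approx_on_decomposition}, Proposition~\ref{result:systematic_error} applies verbatim to the modified process. Calling a trajectory \emph{good} when its cumulative error under the \emph{original} $\approxrep{\rho}^{(L)}$ never exceeds $\varepsilon_{\mathrm{max}}-2$ and \emph{bad} otherwise, and letting $P_{\mathrm{bad}}$ denote the probability mass of bad trajectories, the bound splits as
\begin{equation*}
    \left\| \rho^{(L)} - \tilde{\rho}^{(L)} \right\|_{\mathrm{Tr}} \leq \sum_{\boldsymbol{s}^{(L)}\mathrm{\ good}} \approxrep{p}_{\boldsymbol{s}^{(L)}}\,\varepsilon_{\mathrm{tot}}(\approxrep{\psi}_{\boldsymbol{s}^{(L)}}) + (\varepsilon_{\mathrm{max}}-2)\, P_{\mathrm{bad}}.
\end{equation*}
Indeed, on good trajectories $\operatorname{Approx}$ and $\operatorname{Approx}'$ never disagree so their contributions coincide, while on bad trajectories $\operatorname{Approx}'$ by construction stops accumulating error at most $\varepsilon_{\mathrm{max}}-2$.

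Next I would bound $\|\tilde{\rho}^{(L)} - \approxrep{\rho}^{(L)}\|_{\mathrm{Tr}}$. Writing both density matrices as convex combinations over their respective trajectory trees, the good-trajectory sub-ensembles coincide exactly --- same sampled Kraus operators, same pure states, and same probabilities at every layer --- and hence cancel in the difference. What remains are two positive semidefinite operators $B$ and $C$ of common trace $P_{\mathrm{bad}}$, so the triangle inequality yields $\|B-C\|_{\mathrm{Tr}} \leq \|B\|_{\mathrm{Tr}}+\|C\|_{\mathrm{Tr}} = 2\,P_{\mathrm{bad}}$.

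Combining the two bounds via one more triangle inequality gives
\begin{equation*}
    \left\| \rho^{(L)} - \approxrep{\rho}^{(L)} \right\|_{\mathrm{Tr}} \leq \sum_{\boldsymbol{s}^{(L)}\mathrm{\ good}} \approxrep{p}_{\boldsymbol{s}^{(L)}}\,\varepsilon_{\mathrm{tot}}(\approxrep{\psi}_{\boldsymbol{s}^{(L)}}) + \varepsilon_{\mathrm{max}}\, P_{\mathrm{bad}},
\end{equation*}
which, by definition of $\varepsilon_{\mathrm{bound}}$, is precisely $\sum_{\boldsymbol{s}^{(L)}} \approxrep{p}_{\boldsymbol{s}^{(L)}}\,\varepsilon_{\mathrm{bound}}(\approxrep{\psi}_{\boldsymbol{s}^{(L)}})$. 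The main subtlety I expect to handle carefully is the identification of good branches between the two trajectory trees: since Kraus decompositions may be chosen adaptively from the current state, this identification is only guaranteed up to the first layer where the two evolutions diverge --- but on good trajectories, by definition, no such divergence ever occurs, so the bijection between good branches is canonical.
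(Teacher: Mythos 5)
Your proof is correct, and it arrives at the same constants as the paper's via the same essential idea: once a trajectory's cumulative error crosses $\varepsilon_{\mathrm{max}}-2$, stop letting it accumulate and charge it a flat penalty of $2$ on top. Where you differ is in the implementation of the clipping. The paper modifies the approximation subprocedure to map the offending branch to the zero-amplitude vector at the layer where the threshold is first crossed (cost $1$ per unit of bad probability mass) and then ``resurrects'' it at the final layer by mapping zero back to the mixture the unclipped procedure would have produced (another cost $1$), so that the modified chain terminates exactly at $\approxrep{\rho}^{(L)}$ and the layer-wise bound of Proposition~\ref{result:error_per_layer} applies directly to a single chain. You instead route through an auxiliary clipped state $\tilde{\rho}^{(L)}$ (identity map after the threshold is crossed), apply Proposition~\ref{result:systematic_error} to that modified process, and then pay $2P_{\mathrm{bad}}$ separately via a triangle inequality using positivity and the cancellation of the good sub-ensembles. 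Your version is arguably cleaner: it avoids extending the formalism to sub-normalized zero-amplitude branches and the somewhat ad hoc final-layer resurrection map, at the mild price of a second comparison state. The two subtleties you would need to make fully explicit are already flagged correctly in your sketch: (i) the history-dependence of $\operatorname{Approx}'$ is admissible because the paper's Approximation Subprocedure is allowed to depend on the node of the Kraus tree, and (ii) although the adaptive Kraus choices make the two trees diverge below a first-bad node, the total probability mass of that node's descendants is the same in both trees, so $P_{\mathrm{bad}}$ is unambiguous and the trace-$P_{\mathrm{bad}}$ positive remainders $B$ and $C$ are well defined.
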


    \begin{proof}
        From Proposition~\ref{result:error_per_layer}, we know that the total error of the simulation is bounded by the sum of the errors per layer. 
        Let us consider some layer $\zeta$ at which some path reaches a cumulated error per path (as per Eq.~\eqref{eq:cumulated_trajectory_error}) which exceeds the user-defined threshold.
        Let us call the specific path up to depth $\zeta$ of the unraveling $\boldsymbol{s}^{\lightning(\zeta)}$, 
        such that $\varepsilon_{\mathrm{tot}} (\approxrep{\psi}_{\boldsymbol{s}^{\lightning(\zeta)}}) > \varepsilon_{\mathrm{max}} - 2$ but $\varepsilon_{\mathrm{tot}} (\approxrep{\psi}_{\boldsymbol{s}^{\lightning(\zeta-1)}}) \leq \varepsilon_{\mathrm{max}} - 2$.
        One could chose for instance $\varepsilon_{\mathrm{max}} - 2 = 2$, as in general worst case bounds on distances between states give $\norm{\rho - \sigma}_{\mathrm{Tr}} \leq 2$ for any $\rho$ and $\sigma$.
        The state at this point of the simulation would be $\ket{\approxrep{\psi}_{\boldsymbol{s}^{\lightning(\zeta)}}}$.
        Let us now redefine the approximation procedure from Proposition~\ref{result:error_per_pure_state} at layer $\ell$ such that 
        \begin{equation}
            \operatorname{Approx}^\lightning [\ket{\preapproxrep{\psi}_{\boldsymbol{s}^{(\zeta)}}}]
            = \begin{cases}
                \ket{\approxrep{\psi}_{\boldsymbol{s}^{(\zeta)}}} & 
                \text{if } \boldsymbol{s}^{(\zeta)} \neq \boldsymbol{s}^{\lightning(\zeta)} \\
                0 & 
                \text{if } \boldsymbol{s}^{(\zeta)} = \boldsymbol{s}^{\lightning(\zeta)} \, .
            \end{cases}
        \end{equation}
        Then all trajectories are evolved as before, except the trajectories such that $\boldsymbol{s}^{(L)} [1:\zeta] = \boldsymbol{s}^{\lightning(\zeta)}$ trajectory, which are mapped to the zero amplitude state. 
        The error committed in this layer for these trajectories (Eq.~\eqref{eq:simulation_error_per_sample}) is then 
        \begin{equation}
            \left\| \ketbra{\preapproxrep{\psi}_{\boldsymbol{s}^{\lightning(\zeta)}}}{\preapproxrep{\psi}_{\boldsymbol{s}^{\lightning(\zeta)}}} - \ketbra{\approxrep{\psi}_{\boldsymbol{s}^{\lightning(\zeta)}}}{\approxrep{\psi}_{\boldsymbol{s}^{\lightning(\zeta)}}} \right\|_{\mathrm{Tr}}
            = \left\| \ketbra{{\psi}_{\boldsymbol{s}^{\lightning(\zeta)}}}{{\psi}_{\boldsymbol{s}^{\lightning(\zeta)}}} - 0 \right\|_{\mathrm{Tr}}
            = 1 \, .
        \end{equation}
        Since any quantum channel leaves the zero-amplitude state unaffected, $\Lambda[0] = 0$ $\forall \Lambda$, these trajectories are effectively removed from the consideration of the computation of error. 
        The state at later depths (Eq.~\eqref{eq:approx_noisy_state_decomposition}) can then be written
        \begin{equation}
            \approxrep{\rho}^{(\ell)} 
            = \sum_{\substack{\boldsymbol{s}^{(\ell)} : \\ \mathclap{\boldsymbol{s}^{(\ell)}[1:\zeta] \neq \boldsymbol{s}^{\lightning(\zeta)}}}} 
            \approxrep{p}_{\boldsymbol{s}^{(\ell)}} \ketbra{\approxrep{\psi}_{\boldsymbol{s}^{(\ell)}}}{\approxrep{\psi}_{\boldsymbol{s}^{(\ell)}}}
            + 0 \sum_{\substack{\boldsymbol{s}^{(\ell)} : \\ \mathclap{\boldsymbol{s}^{(\ell)}[1:\zeta] = \boldsymbol{s}^{\lightning(\zeta)}}}} 
            \approxrep{p}_{\boldsymbol{s}^{(\ell)}} 
        \end{equation}
        and the zero trajectories do not contribute to the error (neither in the layer-wise perspective from the left-hand side of Eq.~\eqref{eq:layer_to_trajectory_errors}, nor from the path-wise approach from the right-hand side).
        The convex combination of trajectories results in a sub-normalized state (since $\Tr{}{\approxrep{\rho}^{(\ell)}} \leq 1$) but one can complement the probabilities to form the distribution defined by the evolution where the trajectories where not mapped to $0$.
        Before applying the last layer, the errors of each trajectory are then 
        \begin{equation}
            \varepsilon_{\mathrm{tot}}(\approxrep{\psi}_{\boldsymbol{s}^{(L-1)}})
            = \begin{cases}
                \sum_{\ell = 1}^{L-1} \varepsilon(\approxrep{\psi}_{\boldsymbol{s}^{(\ell)}}) & 
                \text{if } \boldsymbol{s}^{(L-1)}[1:\zeta] \neq \boldsymbol{s}^{\lightning(\zeta)} \\
                \sum_{\ell = 1}^{\zeta-1} \varepsilon(\approxrep{\psi}_{\boldsymbol{s}^{(\ell)}}) + 1 & 
                \text{if } \boldsymbol{s}^{(L-1)}[1:\zeta] = \boldsymbol{s}^{\lightning(\zeta)} \, .
            \end{cases}
        \end{equation}
        We then also redefine the approximation procedure of the last layer such that it performs the normal approximation on states such that $\boldsymbol{s}^{(L)}[1:\zeta] \neq \boldsymbol{s}^{\lightning(\zeta)}$ while it maps the zero-amplitude state to the mixture of states that would be obtained from not having mapped to zero and perform the normal unraveling
        \begin{equation}
            \operatorname{Approx}^\lightning [0] 
            = \sum_{\substack{\boldsymbol{s}^{(L)} : \\ \mathclap{\boldsymbol{s}^{(L)}[1:\zeta] = \boldsymbol{s}^{\lightning(\zeta)}}}} 
            \approxrep{p}_{\boldsymbol{s}^{(L)}} \ketbra{\approxrep{\psi}_{\boldsymbol{s}^{(L)}}}{\approxrep{\psi}_{\boldsymbol{s}^{(L)}}} \, .
        \end{equation}
        The contribution to the error of the last layer (in Eq.~\eqref{eq:error_per_layer}) from this procedure is also bounded as 
        \begin{equation}
            \norm{\sum_{\substack{\boldsymbol{s}^{(L)} : \\ \mathclap{\boldsymbol{s}^{(L)}[1:\zeta] = \boldsymbol{s}^{\lightning(\zeta)}}}} 
            \approxrep{p}_{\boldsymbol{s}^{(L)}} \ketbra{\approxrep{\psi}_{\boldsymbol{s}^{(L)}}}{\approxrep{\psi}_{\boldsymbol{s}^{(L)}}} - 0}_{\mathrm{Tr}}
            \leq \sum_{\substack{\boldsymbol{s}^{(L)} : \\ \mathclap{\boldsymbol{s}^{(L)}[1:\zeta] = \boldsymbol{s}^{\lightning(\zeta)}}}} 
            \approxrep{p}_{\boldsymbol{s}^{(L)}} \norm{\ketbra{\approxrep{\psi}_{\boldsymbol{s}^{(L)}}}{\approxrep{\psi}_{\boldsymbol{s}^{(L)}}} - 0}_{\mathrm{Tr}}
            = \sum_{\substack{\boldsymbol{s}^{(L)} : \\ \mathclap{\boldsymbol{s}^{(L)}[1:\zeta] = \boldsymbol{s}^{\lightning(\zeta)}}}} 
            \approxrep{p}_{\boldsymbol{s}^{(L)}}
            \, .
        \end{equation}
        Then the sum of the errors of the trajectories in each layer from Eq.~\eqref{eq:simulation_error_per_sample} becomes
        \begin{equation}
            \left\| \rho^{(L)} - \approxrep{\rho}^{(L)} \right\|_{\mathrm{Tr}} 
            \leq \sum_{\ell = 1}^{\zeta-1} \sum_{\boldsymbol{s}^{(\ell)}} \approxrep{p}_{\boldsymbol{s}^{(\ell)}} \varepsilon(\approxrep{\psi}_{\boldsymbol{s}^{(\ell)}}) 
            + \sum_{\ell = \zeta}^L \left( \sum_{\substack{\boldsymbol{s}^{(\ell)} : \\ {\boldsymbol{s}^{(\ell)}[1:\zeta] \neq \boldsymbol{s}^{\lightning(\zeta)}}}} \approxrep{p}_{\boldsymbol{s}^{(\ell)}} \varepsilon(\approxrep{\psi}_{\boldsymbol{s}^{(\ell)}})
            + 2 \sum_{\substack{\boldsymbol{s}^{(\ell)} : \\ \mathclap{\boldsymbol{s}^{(\ell)}[1:\zeta] = \boldsymbol{s}^{\lightning(\zeta)}}}} \approxrep{p}_{\boldsymbol{s}^{(\ell)}}
            \right)
            \, .
        \end{equation}
        Now, adopting the trajectory-wise perspective as in Proposition~\ref{result:error_sums_equivalence} allows to rewrite
        \begin{equation}
            \left\| \rho^{(L)} - \approxrep{\rho}^{(L)} \right\|_{\mathrm{Tr}} 
            \leq \sum_{\substack{\boldsymbol{s}^{(L)} : \\ \mathclap{\boldsymbol{s}^{(L)}[1:\zeta] \neq \boldsymbol{s}^{\lightning(\zeta)}}}} \approxrep{p}_{\boldsymbol{s}^{(L)}}
            \sum_{\ell = 1}^L   \varepsilon(\approxrep{\psi}_{\boldsymbol{s}^{(\ell)}})
            + \sum_{\substack{\boldsymbol{s}^{(L)} : \\ \mathclap{\boldsymbol{s}^{(L)}[1:\zeta] = \boldsymbol{s}^{\lightning(\zeta)}}}} \approxrep{p}_{\boldsymbol{s}^{(\ell)}} \left( 2+
            \sum_{\ell = 1}^{\zeta-1}  \varepsilon(\approxrep{\psi}_{\boldsymbol{s}^{(\ell)}}) \right)
            \, .
        \end{equation}
        
        We have only defined one path with large error $\boldsymbol{s}^{\lightning(\zeta)}$ and assumed that all other paths had small error. 
        This procedure can, however, be repeated for all paths crossing the error threshold, even if it happens at different depths. 
        We can now rewrite the error bound in the desired form, using the fact that the sum in the parenthesis is bounded by $\varepsilon_{\mathrm{max}} - 2$ (since it is how we defined $\zeta$ above)
        \begin{equation}
            \left\| \rho^{(L)} - \approxrep{\rho}^{(L)} \right\|_{\mathrm{Tr}} \leq \sum_{\boldsymbol{s}^{(L)}} \approxrep{p}_{\boldsymbol{s}^{(L)}} \varepsilon_{\mathrm{bound}}(\approxrep{\psi}_{\boldsymbol{s}^{(L)}}) 
            \quad \text{with} \quad 
            \varepsilon_{\mathrm{bound}} (\cdot) = 
            \begin{cases}
                \varepsilon_{\mathrm{tot}}(\cdot) & \text{if } \varepsilon_{\mathrm{tot}}(\cdot) \leq \varepsilon_{\mathrm{max}} -2 \\ 
                \varepsilon_{\mathrm{max}} & \text{otherwise} \, .
            \end{cases}
        \end{equation}
    \end{proof}

    With this, we have derived a bound on the systematic error of our simulation. 
    However, its computation would require knowledge of all trajectories. 
    Computing the exact value of the systematic error would thus defeat the purpose of the algorithm. 
    We will next see how it can be estimated, leading to the result of Theorem~\ref{result:sampled_trajectories}.
        
    \begin{proposition}
    \label{result:systematic_error_bound}
        Given the setting from Proposition~\ref{result:path_error_bound} and
        $N$ pure state trajectories $\ket{\hat{\psi}_i} \in \{ \ket{\approxrep{\psi}_{\boldsymbol{s}^{(L)}}} \}_{\boldsymbol{s}^{(L)}}$ sampled according to the probability distribution $\{ \approxrep{p}_{\boldsymbol{s}^{(L)}} \}_{\boldsymbol{s}^{(L)}}$ from the decomposition of $\approxrep{\rho}^{(L)}$, 
        then the bound on the systematic error in Eq.~\eqref{eq:systematic_error} can be approximated with high probability by the average approximation error of the samples.
        \begin{equation}
            \sum_{\boldsymbol{s}^{(L)}} \approxrep{p}_{\boldsymbol{s}^{(L)}} \varepsilon_{\mathrm{bound}}(\approxrep{\psi}_{\boldsymbol{s}^{(L)}})
            \leq \hat{\varepsilon} + \sqrt{\frac{\varepsilon_{\mathrm{max}}^2}{2N} \log \left( \frac{1}{\delta}\right)}
            \quad \text{with probability} \quad 
            1-\delta
        \end{equation}
        where $\hat{\varepsilon} = \frac{1}{N} \sum_{i=1}^N \varepsilon_{\mathrm{bound}} (\hat{\psi}_i)$ and $\varepsilon_{\mathrm{bound}}(\cdot)$ is defined as in Eq.~\eqref{eq:systematic_error_with_constant_bound}.
    \end{proposition}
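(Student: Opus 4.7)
The plan is to recognize the left-hand side as an expectation and the statistical buffer as the standard Hoeffding concentration term. Concretely, the quantity
\[
\sum_{\boldsymbol{s}^{(L)}} \approxrep{p}_{\boldsymbol{s}^{(L)}} \varepsilon_{\mathrm{bound}}(\approxrep{\psi}_{\boldsymbol{s}^{(L)}})
\]
is by definition $\mathbb{E}[\varepsilon_{\mathrm{bound}}(\hat{\psi})]$ where $\hat{\psi}$ is a single trajectory drawn from the decomposition of $\approxrep{\rho}^{(L)}$ specified in Eq.~\eqref{eq:approx_noisy_state_decomposition}. Since the $N$ trajectories $\ket{\hat{\psi}_i}$ are assumed to be sampled independently from this same distribution, the random variables $X_i := \varepsilon_{\mathrm{bound}}(\hat{\psi}_i)$ are i.i.d.\ copies, and $\hat{\varepsilon}$ is precisely their empirical mean.

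Next I would invoke the bound on the range of $X_i$ supplied by Proposition~\ref{result:path_error_bound}: by construction of $\varepsilon_{\mathrm{bound}}(\cdot)$ in Eq.~\eqref{eq:systematic_error_with_constant_bound}, we have $0 \leq X_i \leq \varepsilon_{\mathrm{max}}$ deterministically (either $\varepsilon_{\mathrm{tot}} \leq \varepsilon_{\mathrm{max}} - 2 < \varepsilon_{\mathrm{max}}$ or the capped value $\varepsilon_{\mathrm{max}}$ itself). This is the key qualitative point of the capping step in Proposition~\ref{result:path_error_bound}, which was introduced precisely so that concentration arguments apply with a sample-size-independent range.

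With boundedness and independence in hand, I would apply Hoeffding's inequality~\cite{hoeffdingProbabilityInequalitiesSums1963} in its one-sided form: for i.i.d.\ $X_i \in [0,\varepsilon_{\mathrm{max}}]$ with mean $\mu$,
\[
\Pr\!\left[\mu \geq \frac{1}{N}\sum_{i=1}^N X_i + t\right] \leq \exp\!\left(-\frac{2Nt^2}{\varepsilon_{\mathrm{max}}^2}\right).
\]
Setting the right-hand side equal to $\delta$ and solving for $t$ yields $t = \sqrt{\varepsilon_{\mathrm{max}}^2/(2N)\cdot\log(1/\delta)}$, which is exactly the statistical buffer appearing in the claim. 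Rearranging gives the stated inequality with probability at least $1-\delta$.

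There is no substantive obstacle here; the proof is essentially a one-line application of Hoeffding once the bounded range has been secured. The only thing worth being careful about in the write-up is making clear that (i) the trajectories are drawn i.i.d.\ from the leaf distribution $\{\approxrep{p}_{\boldsymbol{s}^{(L)}}\}$ of the Kraus tree, not from the exact distribution $\{p_{\boldsymbol{s}^{(L)}}\}$, so the empirical mean is unbiased for the correct quantity; and (ii) we use only the one-sided Hoeffding bound, since we only need to certify that the true expectation is not much larger than the empirical mean. Combining this with Proposition~\ref{result:systematic_error} and Proposition~\ref{result:path_error_bound} then immediately yields Theorem~\ref{result:sampled_trajectories}.
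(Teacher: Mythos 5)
Your proposal is correct and follows essentially the same route as the paper: identify the left-hand side as $\mathds{E}[\varepsilon_{\mathrm{bound}}(\hat{\psi})]$ over i.i.d.\ trajectory samples, use the capping in $\varepsilon_{\mathrm{bound}}$ to get the deterministic range $[0,\varepsilon_{\mathrm{max}}]$, and apply one-sided Hoeffding with $t=\sqrt{\varepsilon_{\mathrm{max}}^2\log(1/\delta)/(2N)}$. If anything, you are slightly more careful than the paper's write-up, which states the deviation event with the estimator exceeding the mean rather than the mean exceeding the estimator; your one-sided form is the direction actually needed for the claimed inequality.
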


    \begin{proof}
        Let us show that the bounds on the systematic error can be approximated with high probability by the average of the approximation errors of the sampled trajectories.
        We define the estimator for the systematic error
        \begin{equation}
            \hat{\varepsilon} = \frac{1}{N} \sum_{i=1}^N \varepsilon_{\mathrm{bound}} (\hat{\psi}_i) \, .
        \end{equation}
        We will now bound the difference between the estimator and the actual upper bound for the error (Eq.~\eqref{eq:systematic_error}) using Hoeffding's inequality~\cite{hoeffdingProbabilityInequalitiesSums1963}.
        Given a set of independent random variables $\{ X_i \}_{i=1}^N$ with $a \leq X_i \leq b$ and considering the sum of these random variables $S_N = \sum_{i=1}^N X_i$, then 
        \begin{equation}
            \mathds{P} \left( S_N - \mathds{E}[S_N] \geq \epsilon \right) \leq \ee^{-\frac{2 \epsilon^2}{N(b-a)^2}} \, .
        \end{equation}
        Choosing the set of random variables $X_i = \frac{1}{N} \varepsilon_{\mathrm{bound}} (\hat{\psi}_i)$, 
        they are bounded $0 \leq X_i \leq \frac{\varepsilon_{\mathrm{max}}}{N}$
        and their expectation value is given by
        \begin{equation}
            \mathds{E} \left[ X_i \right]
            = \frac{1}{N} \sum_{\boldsymbol{s}^{(L)}} \approxrep{p}_{\boldsymbol{s}^{(L)}} \varepsilon_{\mathrm{bound}} \left( \approxrep{\psi}_{\boldsymbol{s}^{(L)}} \right) \, .
        \end{equation}
        Taking the sum over $N$ elements, the expectation value of $S_N$ is precisely what we have from Proposition~\ref{eq:systematic_error_with_constant_bound}. 
        Then
        \begin{equation}
            \mathds{P} \left( \hat{\varepsilon} -\sum_{\boldsymbol{s}^{(L)}} \approxrep{p}_{\boldsymbol{s}^{(L)}} \varepsilon_{\mathrm{bound}}(\approxrep{\psi}_{\boldsymbol{s}^{(L)}}) \geq \epsilon \right) \leq \ee^{-\frac{2\epsilon^2 N}{\varepsilon_{\mathrm{max}}^2}} \, .
        \end{equation}
        Thus, for a given probability of failing at estimating the approximation error $\delta$, we have that with probability $1 - \delta$ the error on the systematic error is bounded 
        \begin{equation}
        \label{eq:systematic_error_bound}
            \varepsilon_{\mathrm{sys}} 
            = \left \| \rho^{(L)} - \approxrep{\rho}^{(L)} \right \|_{\mathrm{Tr}}
            \leq \hat{\varepsilon} + \sqrt{\frac{\varepsilon_{\mathrm{max}}^2}{2N} \log \left( \frac{1}{\delta}\right)}
            \, .
        \end{equation}

    \end{proof}
    Combining Proposition~\ref{result:systematic_error} and Proposition~\ref{result:systematic_error_bound} we obtain 
    \begin{equation}
        \varepsilon_{\mathrm{sys}} 
        = \left \| \rho^{(L)} - \approxrep{\rho}^{(L)} \right \|_{\mathrm{Tr}}
        \leq  \sum_{\boldsymbol{s}^{(L)}} \approxrep{p}_{\boldsymbol{s}^{(L)}} \varepsilon_{\mathrm{bound}}(\approxrep{\psi}_{\boldsymbol{s}^{(L)}})
        \leq \hat{\varepsilon} + \sqrt{\frac{\varepsilon_{\mathrm{max}}^2}{2N} \log \left( \frac{1}{\delta}\right)}
    \end{equation}
    where the second inequality is satisfied with probability $1-\delta$. 
    Thus, the first is fulfilled with at least the same probability, completing the proof.
\end{proof}

\subsection{Errors based on MPS truncations}
\label{sec:MPS-based_errors}

    We now want to restrict to the specific setting where the underlying data structure of the simulation are MPS, 
    and the approximation procedure is the truncation of the MPS to a certain maximum bond dimension, followed by a renormalization. 
    In particular, we want to show how to compute $\varepsilon_{\mathrm{sys}}$ or $\hat{\varepsilon}$. 
    Both are the result of the sum (over trajectories and over layers) of the single-layer approximation error in Eq.~\eqref{eq:single_approx_error}. 
    Each is the trace norm of the difference between two pure-state density matrices (rank-$1$ projectors).
    We can use Proposition~\ref{result:bounds_trace_to_2_norm} (1) to relate it to $2$-norm vector distances 
    \begin{equation}
        \norm{\ketbra{\psi}{\psi} - \ketbra{\phi}{\phi}}_{\mathrm{Tr}} 
        \leq 2 \Norm{\ket{\psi} - \ket{\phi}}_2 \, .
    \end{equation}
    On the other hand, from Eq.~\eqref{eq:MPS_truncation_error}, we have that the (vector) $2$-norm truncation error (before renormalization) at one layer is bounded by the sum over all bonds of the sum of the squared discarded singular values.
    Let us call this value $\varepsilon_{\mathrm{trunc}}(\psi)$
    \begin{equation}
        \| \ket{\psi} - \ket{\psi'} \|_2 
        = \varepsilon_{\mathrm{trunc}}(\psi)
        \leq \sqrt{2 \sum_{k=1}^{n-1} \epsilon_k(d_k)} 
        \quad \text{with} \quad 
        \epsilon_k(d_k) = \sum_{i=d_k+1}^{d_{k,{\max}}} s_i^2 \, .
    \end{equation}
    Due to the triangle inequality, 
    we have that $\norm{\ket{\psi'}}_2 \geq 1-\varepsilon_{\mathrm{trunc}}(\psi)$.
    Considering now the renormalized version of $\ket{\psi'}$, we have
    \begin{align}
        \Norm{\ket{\psi} - \frac{\ket{\psi'}}{\norm{\ket{\psi'}}_2}}_2 & 
        \leq \Norm{\ket{\psi} - \ket{\psi'}} + \Norm{\ket{\psi'} - \frac{\ket{\psi'}}{\norm{\ket{\psi'}}_2}}_2 \\ 
        \nonumber & 
        = \varepsilon_{\mathrm{trunc}}(\psi) - \Norm{\ket{\psi'}}_2 \left|1 - \frac{1}{\norm{\ket{\psi'}}_2} \right| \\ 
        \nonumber
        & 
        = \varepsilon_{\mathrm{trunc}}(\psi) - \Norm{\ket{\psi'}}_2 + 1 \\ & 
        \leq \varepsilon_{\mathrm{trunc}}(\psi) - (1-\varepsilon_{\mathrm{trunc}}(\psi)) + 1 \\ & 
        = 2 \varepsilon_{\mathrm{trunc}}(\psi) \, .
        \nonumber
    \end{align}
    Then, putting all steps  
    together, we have that 
    \begin{align}
        \varepsilon(\approxrep{\psi}_{\boldsymbol{s}^{(L)}}, \ell)
        = \varepsilon(\approxrep{\psi}_{\boldsymbol{s}^{(\ell)}}) & 
        = \left\| \ketbra{\preapproxrep{\psi}_{\boldsymbol{s}^{(\ell)}}}{\preapproxrep{\psi}_{\boldsymbol{s}^{(\ell)}}} - \ketbra{\approxrep{\psi}_{\boldsymbol{s}^{(\ell)}}}{\approxrep{\psi}_{\boldsymbol{s}^{(\ell)}}} \right\|_{\mathrm{Tr}} \\ 
        \nonumber& 
        \leq 2 \left\| \ket{\preapproxrep{\psi}_{\boldsymbol{s}^{(\ell)}}} - \ket{\approxrep{\psi}_{\boldsymbol{s}^{(\ell)}}} \right\|_{\mathrm{2}} \\ 
        \nonumber& 
        \leq 4  \varepsilon_{\mathrm{trunc}}(\approxrep{\psi}_{\boldsymbol{s}^{(\ell)}})
        \nonumber
        \, .
    \end{align}
    This results in a computable estimator
    \begin{equation}
        \hat{\varepsilon} \leq \frac{4}{N} \sum_{i=1}^N \left( \sum_{\ell=1}^L \varepsilon_{\mathrm{trunc}}(\hat{\psi}_i, \ell) \right)
    \end{equation}
    where $\varepsilon_{\mathrm{trunc}}(\hat{\psi}_i, \ell)$ is the square root of the sum of discarded singular values on all bonds as defined above.

\subsection{Improvements of the error bounds due to concentration}
\label{sec:concentration_bounds}

    In proof of Theorem~\ref{result:sampled_trajectories}, we use the inequality
    \begin{align}
         \left\| \Lambda^{(\ell+1)} \left[\rho^{(\ell)}\right] - \Lambda^{(\ell+1)} \left[ \approxrep{\rho}^{(\ell)} \right] \right\|_{\mathrm{Tr}} \leq \| \rho^{(\ell)} - \approxrep{\rho}^{(\ell)} \|_{\mathrm{Tr}}=\epsilon_\ell .
    \end{align}
    This inequality relies on the fact that the application of a quantum channel cannot increase the trace distance of our approximated state and the true state. 
    However, in general, this is not tight for non-unitary channels and guaranteed not to be if $\Lambda^{\ell+1}$ is a single fixed-point channel. 
    As such, we can write 
    \begin{align}
        \left\| \Lambda^{(\ell+1)} \left[\rho^{(\ell)}\right] - \Lambda^{(\ell+1)} \left[ \approxrep{\rho}^{(\ell)} \right] \right\|_{\mathrm{Tr}} \leq \alpha \| \rho^{(\ell)} - \approxrep{\rho}^{(\ell)} \|_{\mathrm{Tr}}=\alpha \epsilon_\ell 
    \end{align}
    with $\alpha < 1$. 
    In this case, we obtain a new bound
    \begin{align}
        \hat{\varepsilon} = \frac{1}{N} \sum_{i=1}^N \left( \sum_{\ell=1}^L \alpha^{L-\ell}\varepsilon_{\mathrm{trunc}}(\hat{\psi}_i, \ell) \right)
    \end{align}
    giving an exponential suppression of old errors.
    In the case of local depolarizing noise with noise rate $p$, where it is known that very strong convergence to the maximally mixed state occurs, we can leverage the strong data processing inequality to obtain
    \begin{align}
        \hat{\varepsilon} = \frac{1}{N} \sum_{i=1}^N \left( \sum_{\ell=1}^L \min \left( \sqrt{8n}(1-p)^{L-\ell},1 \right) \varepsilon_{\mathrm{trunc}}(\hat{\psi}_i, \ell) \right)
    \end{align}
    For general noise channels, the worst-case guarantees are not that strong, meaning $\alpha$ is very close to $1$ in the worst case. As we consider one particular evolution and set of observables, the effect can be expected to be much stronger for practical implementations. Similar effects have been found for random circuits for many more noise models.

\section{Proofs of the reductions}
\label{sec:reduction_proofs}

\subsection{Proof of Lemma~\ref{result:sampling_reduction}: Output distribution sampling from trajectory sampling}

\samplingreduc*

\begin{proof}
    First, we explain how performing the trajectory sampling and measuring each trajectory follows the same statistics as measuring the full approximate state.
    The output of an $\epsilon \textsc{-tts}(\mathcal{D}, \chi)$ algorithm can be thought of as a matrix-valued random variable. 
    An observation of this random variable is a state vector $\ketbra{\hat{\psi}_i}{\hat{\psi}_i}$ where $\ket{\hat{\psi}_i}$ is an MPS with bond dimension at most $\chi$. 
    Hence one can classically sample from the distribution $P(x|\hat{\psi}_i)$ in runtime $\poly(n,\chi)$~\cite{ferrisPerfectSamplingUnitary2012}. 
    Conditioned on the success of the $\epsilon \textsc{-tts}(\mathcal{D},\chi)$ algorithm (the probability of success is $\geq 1-\delta$), the solution can be viewed as a matrix-valued random variable that is the mixed state $\sigma = \sum_i q_i \ketbra{\psi_i}{\psi_i}$. 
    Indeed each sampled trajectory $\ket{\hat{\psi}_i}$ is sampled from the set $\{ \ket{\psi} \}_i$ with respective probabilities $\{ q_i \}_i$. 
    Then
    \begin{equation}
        \mathds{E}[\ketbra{\hat{\psi}_i}{\hat{\psi}_i}] 
        = \sum_i q_i \ketbra{\psi_i}{\psi_i}
        = \sigma \, .
    \end{equation}
    With this, it also follows that the output probabilities of measuring an instance of the trajectory sampling are the same as when measuring the $\sigma$
    \begin{equation}
        p_{\textsc{tts}}(x) 
        = \sum_i p(\psi_i) p(x| \psi_i) 
        = \sum_i q_i \inner{x}{\psi_i} \! \inner{\psi_i}{x}
        = \bra{x} \sigma \ket{x} 
        = p(x | \sigma)
        \, .
    \end{equation}

    Next, we show that the resulting distribution is close to the target distribution.
    As per the definition of $\epsilon \textsc{-tts}(\mathcal{D}, \chi)$, 
    $\sigma$ is within the $\epsilon$-ball of the target state $\rho$. 
    \begin{equation}
        \norm{\rho - \sigma}_{\mathrm{Tr}} \leq \epsilon \, .
    \end{equation}
    As a result, the error from sampling the measurement output of trajectories with respect to the target distribution (measurement outcomes from the exact state $\rho$) can also be bounded by $\epsilon$.
    This is due to the fact that measuring (as any quantum channel) reduces distances.
    Then 
    \begin{equation}
        TV(\mathcal{P}_{\rho}, \mathcal{P}_{\sigma}) 
        = \frac{1}{2} \sum_x |p(x|\rho) - p(x|\sigma)| 
        \leq D(\rho, \sigma) 
        = \frac{1}{2} \| \rho - \sigma \|_{\mathrm{Tr}} 
        \, .
    \end{equation}
    As a result, the random variable $X$ generated by sampling from the distribution $P(x|\hat{\psi})$ where $\hat{\psi}$ is a solution to $\epsilon \textsc{-tts}(\mathcal{D}, \chi)$ has at most $\frac{1}{2} \epsilon$ total variation distance from the target distribution conditioned on the success of the $\epsilon\textsc{-tts}$ procedure.
\end{proof}

\subsection{Proof of Lemma~\ref{result:expectation_reduction}: Expectation value estimation from trajectory sampling}

\observablereduc*

\begin{proof}
    The task is to estimate the expectation value $\braket{O} = \Tr{}{O \rho}$, where $\rho$ is the exact noisy state, while 
    the trajectory sampling protocol returns samples $\ket{\hat{\psi}_i}$ drawn from $\sigma = \sum_i q_i \ketbra{\psi_i}{\psi_i}$, with 
    \begin{equation}
        \| \rho - \sigma \|_{\mathrm{Tr}} \leq \epsilon
    \end{equation}
    with probability $1-\delta$.
    Let us consider a set of $N$ such trajectories sampled independently, 
    \begin{equation}
        \left\{ \ket{\hat{\psi}_j} \ : \ \ket{\hat{\psi}_j} = \ket{\psi_i} \text{ with probability } q_i \right\}_j \, .
    \end{equation}
     We will construct the estimator of the expectation value as the average of the expectation value of $O$ with respect to the sampled states
    \begin{equation}
        \hat{O} 
        = \frac{1}{N} \sum_{i=1}^N \hat{O}_i
        \quad \text{with} \quad 
        \hat{O}_i = \bra{\hat{\psi}_i} O \ket{\hat{\psi}_i}
         \, .
    \end{equation}
    For observables that can be written as a $\poly(\chi,n)$ bond dimension MPO, $\bra{\psi_i}O\ket{\psi_i}$ can be classically computed in $\poly(\chi,n)$ time.
    The error incurred by this choice of estimator can be divided into two components by application of the triangle inequality
    \begin{equation}
    \label{eq:estimation_error_systematic_statistical}
        \left| \braket{O} - \hat{O} \right| 
        \leq \underbrace{ \left| \Tr{}{\rho O} - \Tr{}{\sigma O} \right| }_{\Delta O_{\mathrm{sys}}} 
        + \underbrace{ \left| \Tr{}{\sigma O} - \hat{O} \right| }_{\Delta O_{\mathrm{stat}}}
    \end{equation}
    which we will call the systematic and the statistical errors.

    The first approach is to interpret the systematic error as given directly by the accuracy parameter of the trajectory sampling simulation $\epsilon$ by using that $\Delta O_{\mathrm{sys}} = |\Tr{}{O \rho} - \Tr{}{O \sigma} \leq \norm{O}_{\mathrm{op}} \norm{\rho - \sigma}_{\mathrm{Tr}}$ (see Section~\ref{sec:norms_toolbox} for more on matrix norms definitions and inequalities).
    The statistical error is bound using Hoeffding's inequality~\cite{hoeffdingProbabilityInequalitiesSums1963}.
    Given the set of random variables $\{ \frac{1}{N}\hat{O}_i \}_{i=1}^N$ with $-\norm{O}_{\mathrm{op}} \leq \hat{O}_i \leq \norm{O}_{\mathrm{op}}$, 
    the resulting estimator $\hat{O}$ is unbiased
    \begin{equation}
        \mathds{E} \left[ \hat{O} \right]
        = \mathds{E} \left[ \hat{O}_i \right]
        = \sum_i q_i \bra{\psi_i} O \ket{\psi_i}
        = \Tr{}{O \sigma} \, .
    \end{equation}
    Then, Hoeffding's inequality gives
    \begin{equation}
        \mathds{P} \left( \left| \hat{O} - \Tr{}{O \sigma} \right| \geq \eta \right) \leq 2 \ee^{-\frac{2 \eta^2 N}{2 \norm{O}^2_{\mathrm{op}}}} \, .
    \end{equation}
    This means that 
    \begin{equation}
        \left| \hat{O} - \Tr{}{O \sigma} \right| \leq \sqrt{\frac{2\norm{O}^2_{\mathrm{op}}}{N} \log\left( \frac{2}{\delta'} \right)}
        \quad \text{with probability } 1 - \delta' 
    \end{equation}
    or alternatively but equivalently that one requires $N = \frac{2\norm{O}^2_{\mathrm{op}}}{\eta^2} \log\left( \frac{2}{\delta'} \right)$ such that $\left| \hat{O} - \Tr{}{O \sigma} \right| \leq \eta$ with probability $1 - \delta'$.
    Eq.~\eqref{eq:estimation_error_systematic_statistical} results in the nesting of the two probably-approximately-correct statements.
    By the union bound, the probability of either bound failing is smaller than $\delta + \delta'$, thus 
    \begin{align}
        \left| \braket{O} - \hat{O} \right| & 
        \leq \left| \Tr{}{\rho O} - \Tr{}{\sigma O} \right| + \left| \Tr{}{\sigma O} - \hat{O} \right| \\ &
        \leq \norm{O}_{\mathrm{op}} \epsilon + \eta
    \end{align}
    with probability at least $1 - \delta - \delta'$. 
    This gives the second formulation of the Lemma.

    Alternatively, one can include the statistical uncertainty of the trajectory sampling into the systematic error of the observable estimation.
    The output of the trajectory sampling algorithm can be thought of as a matrix-valued random variable $\sigma$, which is a convex combination of $\sigma_{succ}$ and some unknown quantum state $\sigma_{fail}$ that is produced conditioned on failure of $\epsilon \textsc{-tts}(\mathcal{D},\chi)$, 
    \begin{equation}
        \sigma = (1-\delta) \sigma_{succ} + \delta \sigma_{fail} \, .
    \end{equation}
    Using Eq.~\eqref{eq:trace_distance_UB}, and the fact that $\| \rho-\sigma_{fail} \|_{\mathrm{Tr}} \leq 2$, it is easy to show that $\| \rho-\sigma \|_{\mathrm{Tr}} \leq (1-\delta) \epsilon + 2\delta$.
    Consequently, one can estimate the expectation value $\Tr{}{O \rho}$ using $\Tr{}{O \sigma}$, as the two quantities deviate by at most an additive error 
    \begin{equation}
        \Delta O_{\mathrm{sys}} 
        = | \Tr{}{O \rho} - \Tr{}{O \sigma} | 
        \leq [(1-\delta) \epsilon + 2\delta] \|O\|_{\mathrm{op}} \, .
    \end{equation}
    To produce an estimate of $\Tr{}{O \sigma}$ accurate up to additive error $\Delta O_{\mathrm{stat}} = \eta$ with probability $\geq 1-\delta'$, a total of $N = 2 \|O\|_{\mathrm{op}}^2 \eta^{-2} \log (2\delta'^{-1})$ samples are sufficient. 
    This results, through the inequality from Eq.~\eqref{eq:estimation_error_systematic_statistical}, in an overall additive error upper bound of 
    \begin{equation}
        \left| \braket{O} - \hat{O} \right|
        \leq [\epsilon + 2 \delta'] \|O\|_{\mathrm{op}} + \eta 
        \quad \text{with probability } 1 - \delta' \, .
    \end{equation}
    This gives the first formulation of Lemma~\ref{result:expectation_reduction}.

\end{proof}

\section{Operator norms toolbox}
\label{sec:norms_toolbox}

\subsection{Definitions}

    Since several different norms are used in this work, here is a summary of their definitions and some useful properties. 
    First, we remind ourselves of the definition of vector $p$-norms. 
    Given a vector $\Vec{x}$ with $d$ elements, 
    \begin{equation}
        \| \Vec{x} \|_p = \left( \sum_{i=1}^d |x_i|^p \right)^{\frac{1}{p}} \, .
    \end{equation}
    The most famous ones are the $1$, $2$ and $\infty$-norms
    \begin{align}
        \| \Vec{x} \|_1 & = \sum_{i=1}^d |x_i|, &
        \| \Vec{x} \|_2 & = \sqrt{\sum_{i=1}^d |x_i|^2} = \sqrt{ \Vec{x}^\dagger \Vec{x}}, &
        \| \Vec{x} \|_\infty & = \max_i{|x_i|}.
    \end{align}
    For matrices, several ways of defining norms exist.
    Most statements in this work are in terms of Schatten norms of operators. 
    They are generally written in terms of the singular values of the matrix
    \begin{equation}
        \|A\|_p = \left( \sum_{i=1}^d \sigma_i(A)^p \right)^{\frac{1}{p}} 
        \quad \text{where } \sigma_i(A) \text{ are the singular values of } A
        \, .
    \end{equation}
    One has to be wary, as they share the same notation as other operator norms but are not equivalent.
    For instance, the vector-norm-induced norms, as given away by the name, are based on the norms of vectors.
    It is the maximum $p$-norm of the vector obtained by applying the matrix to a normalized vector
    \begin{equation}
         \|A\|_p = \sup_{x \neq 0} \dfrac{\|Ax\|_p}{\|x\|_p} \, .
    \end{equation}
    Notably, the vector-norm induced norm for $p=2$ equals the Schatten norm with $p=\infty$.
    For this reason we use the terminology \emph{trace norm} $\norm{A}_{\mathrm{Tr}}$, \emph{Frobenius norm} $\norm{A}_{\mathrm{F}}$ and \emph{operator norm} $\norm{A}_{\mathrm{op}}$ respectively for the Schatten norms with $p=1,2,\infty$.
    Note also the existence of ``entry-wise'' matrix norms, which are not defined here. 
    In particular, it is notable that the Frobenius norm happens to be one as
    \begin{equation}
        \|A\|_{\mathrm{F}} = \sqrt{\sum_{i,j} |a_{ij}|^2} \, .
    \end{equation}
    A few such alternative definitions and properties are summarized in Table~\ref{tab:norms}.

    \begin{table}
    		\centering
    		\begin{tabular}{cllll}
    			& \multicolumn{2}{c}{$\boldsymbol{p}$\textbf{-norm-induced}} & \multicolumn{2}{c}{\textbf{Schatten}} \\
    			\vspace{0.15cm}
    			$\boldsymbol{p}$ & 
                \multicolumn{2}{c}{$\displaystyle \|A\|_p = \sup_{x \neq 0} \dfrac{\|Ax\|_p}{\|x\|_p}$} & 
                \multicolumn{2}{c}{$\displaystyle \|A\|_p = \left( \sum_{i=1}^d \sigma_i(A)^p \right)^{\frac{1}{p}} $} \\
            $\boldsymbol{1}$ & 
                \makecell[l]{Maximum col-\\umn $1$-norm} & $\displaystyle \|A\|_1 = \max_j \sum_i |a_{ij}| $ & 
                \makecell[l]{Nuclear/Trace \\norm} & $\displaystyle \|A\|_{\mathrm{Tr}} = \sum_{i=1}^d \sigma_i(A) = \Tr{}{\sqrt{A^\dagger A}}$ \\
            $\boldsymbol{2}$ & 
                \makecell[l]{Maximum sin-\\gular value} & $\displaystyle \|A\|_2 = \max_i \sigma_i(A) $ & 
                \makecell[l]{Frobenius/Hilbert-\\Schmidt norm} & $\displaystyle \|A\|_{\mathrm{F}} = \left( \sum_{i=1}^d \sigma_i(A)^2 \right)^{\frac{1}{2}} = \sqrt{\Tr{}{A^\dagger A}}$ \\
            $\boldsymbol{\infty}$ & 
                \makecell[l]{Maximum row \\1-norm} & $\displaystyle \|A\|_\infty = \max_i \sum_j |a_{ij}| $ & 
                \makecell[l]{Spectral/Operator \\norm} & $\displaystyle \|A\|_{\mathrm{op}} = \max_i \sigma_i(A) $ \\
    		\end{tabular}
    		\caption{Overview of a few different matrix norms, alternative definitions, and properties.}
    		\label{tab:norms}
    \end{table}

\subsection{Some inequalities}

    Some useful known relationships between the Schatten norms are 
    \begin{equation}
        \|A\|_{\mathrm{op}} 
        \leq \|A\|_{\mathrm{F}} 
        \leq \|A\|_{\mathrm{Tr}} 
        \leq \sqrt{\text{Rank}(A)} \|A\|_{\mathrm{F}} 
        \leq \sqrt{d} \|A\|_{\mathrm{F}}
        \leq d \|A\|_{\mathrm{op}} 
        \, .
    \end{equation}
    Also, given the alternative definition of the trace norm
    \begin{equation}
        \norm{A}_{\mathrm{Tr}} 
        = \sup_{\norm{B}_{\mathrm{op}}=1} |\Tr{}{AB}|
        = \sup_{B} \frac{|\Tr{}{AB}|}{\norm{B}_{\mathrm{op}}} \, ,
    \end{equation}
    it follows directly that
    \begin{equation}
        \norm{A}_{\mathrm{Tr}} \geq \frac{|\Tr{}{AB}|}{\norm{B}_{\mathrm{op}}} \quad \forall B
        \quad \Rightarrow \quad
        |\Tr{}{AB}| \leq \norm{B}_{\mathrm{op}} \norm{A}_{\mathrm{Tr}} \, .
    \end{equation}
    In particular, given an observable $O$ and two states $\rho$ and $\sigma$ we have
    \begin{equation}
        |\Tr{}{O(\rho - \sigma)} \leq \norm{O}_{\mathrm{op}} \norm{\rho - \sigma}_{\mathrm{Tr}} \, .
    \end{equation}

    We are also interested in relating the distance between rank-$1$ projectors with the distance of their corresponding generating vectors. 
    For this, we derive a few bounds relating the (operator) trace norm and the (vector) $2$-norm of states.

    \begin{proposition}[Bounds on the trace distance of two pure states]
    \label{result:bounds_trace_to_2_norm}
        Given two  state vectors $\ket{\psi}, \ket{\phi} \in \mathcal{H}$, 
        the trace distance between the two can be bounded or computed as 
        \begin{alignat}{3}
            (1) & \quad & 
            \Norm{\ketbra{\psi}{\psi} - \ketbra{\phi}{\phi}}_{\mathrm{Tr}} & 
            \leq 2 \Norm{\ket{\psi} - \ket{\phi}}_2 \, , \\
            (2) & \quad & 
            \Norm{\ketbra{\psi}{\psi} - \ketbra{\phi}{\phi}}_{\mathrm{Tr}} & 
            \leq \sqrt{2}\sqrt{\Norm{\ket{\psi}}_2^4+\Norm{\ket{\phi}}_2^4-2\Abs{\inner{\psi}{\phi}}^2} \, , \\
            (3) & \quad & 
            \Norm{\ketbra{\psi}{\psi} - \ketbra{\phi}{\phi}}_{\mathrm{Tr}} & 
            = \sqrt{\left(\Norm{\ket{\psi}}_2^2+\Norm{\ket{\phi}}_2^2\right)^2-4\Abs{\inner{\psi}{\phi}}^2} \, .
        \end{alignat}
    \end{proposition}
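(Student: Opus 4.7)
The plan is to handle (3) first by direct spectral computation, then derive (2) as an immediate algebraic consequence, and finally prove (1) via a telescoping identity.

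For (3), I would start by noting that $A \coloneqq \ketbra{\psi}{\psi}-\ketbra{\phi}{\phi}$ is Hermitian and of rank at most $2$, with support contained in $\operatorname{span}\{\ket{\psi},\ket{\phi}\}$. The two (possibly nonzero) eigenvalues $\lambda_1,\lambda_2$ are therefore fully determined by
\begin{align}
\lambda_1+\lambda_2 &= \tr(A) = \Norm{\ket{\psi}}_2^2 - \Norm{\ket{\phi}}_2^2,\\
\lambda_1^2+\lambda_2^2 &= \tr(A^2) = \Norm{\ket{\psi}}_2^4+\Norm{\ket{\phi}}_2^4-2\Abs{\inner{\psi}{\phi}}^2,
\end{align}
both obtained by a straightforward expansion of $A$ and $A^2$. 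From these, the product is $\lambda_1\lambda_2 = \Abs{\inner{\psi}{\phi}}^2 - \Norm{\ket{\psi}}_2^2 \Norm{\ket{\phi}}_2^2$, which is non-positive by Cauchy--Schwarz. The eigenvalues therefore have opposite signs (or vanish), so $\Norm{A}_{\mathrm{Tr}} = |\lambda_1|+|\lambda_2| = \sqrt{(\lambda_1-\lambda_2)^2} = \sqrt{(\lambda_1+\lambda_2)^2-4\lambda_1\lambda_2}$. Substituting the two identities above and simplifying then gives the claimed equality (3).

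For (2), I would simply combine (3) with the elementary inequality $(a+b)^2 \leq 2(a^2+b^2)$ applied to $a = \Norm{\ket{\psi}}_2^2$ and $b = \Norm{\ket{\phi}}_2^2$, so that
\begin{equation}
(\Norm{\ket{\psi}}_2^2+\Norm{\ket{\phi}}_2^2)^2 - 4\Abs{\inner{\psi}{\phi}}^2 \leq 2\bigl(\Norm{\ket{\psi}}_2^4+\Norm{\ket{\phi}}_2^4\bigr) - 4\Abs{\inner{\psi}{\phi}}^2,
\end{equation}
and take the square root. No further work is needed.

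For (1), I would use the telescoping decomposition
\begin{equation}
\ketbra{\psi}{\psi}-\ketbra{\phi}{\phi} = (\ket{\psi}-\ket{\phi})\bra{\psi} + \ket{\phi}(\bra{\psi}-\bra{\phi}),
\end{equation}
apply the triangle inequality for the trace norm, and use $\Norm{\ket{u}\bra{v}}_{\mathrm{Tr}} = \Norm{\ket{u}}_2 \Norm{\ket{v}}_2$ for rank-one operators, obtaining $\Norm{A}_{\mathrm{Tr}} \leq (\Norm{\ket{\psi}}_2 + \Norm{\ket{\phi}}_2)\Norm{\ket{\psi}-\ket{\phi}}_2$. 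Since the trajectories appearing in the paper are normalized or sub-normalized, so that $\Norm{\ket{\psi}}_2,\Norm{\ket{\phi}}_2 \leq 1$, this yields the factor $2$ and completes the proof of (1). I do not anticipate a genuine obstacle here: the main subtlety is recognizing that $A$ is effectively a $2\times 2$ Hermitian matrix whose trace norm can be read off from just $\tr(A)$ and $\tr(A^2)$, once the sign structure $\lambda_1\lambda_2\leq 0$ is established via Cauchy--Schwarz.
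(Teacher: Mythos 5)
Your proof is correct, and parts (2) and (3) take a genuinely different route from the paper's. For (1) you use exactly the paper's argument: the telescoping decomposition, the triangle inequality, and $\Norm{\ketbra{u}{v}}_{\mathrm{Tr}}=\Norm{\ket{u}}_2\Norm{\ket{v}}_2$, with the final factor of $2$ requiring $\Norm{\ket{\psi}}_2,\Norm{\ket{\phi}}_2\leq 1$ — which you flag explicitly and the paper leaves implicit. For (3) the paper parametrizes $\ket{\phi}$ in the orthonormal frame $\{\ket{\psi}/\Norm{\ket{\psi}}_2,\ket{\psi^\perp}/\Norm{\ket{\psi}}_2\}$ with an angle $\theta$, writes out the resulting $2\times 2$ matrix, and diagonalizes it explicitly; you instead extract the two nonzero eigenvalues of $A=\ketbra{\psi}{\psi}-\ketbra{\phi}{\phi}$ from the basis-free invariants $\tr(A)$ and $\tr(A^2)$, establish $\lambda_1\lambda_2=\Abs{\inner{\psi}{\phi}}^2-\Norm{\ket{\psi}}_2^2\Norm{\ket{\phi}}_2^2\leq 0$ via Cauchy--Schwarz, and use $\abs{\lambda_1}+\abs{\lambda_2}=\sqrt{(\lambda_1+\lambda_2)^2-4\lambda_1\lambda_2}$. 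Your computation checks out and is arguably cleaner, since it avoids the explicit parametrization and the sign bookkeeping with $\cos 2\theta$. For (2) the paper proves the bound independently via $\Norm{A}_{\mathrm{Tr}}\leq\sqrt{2}\Norm{A}_{\mathrm{F}}$ for a rank-$\leq 2$ operator together with $\Norm{A}_{\mathrm{F}}^2=\tr(A^2)$, whereas you deduce it from the exact formula (3) using $(a+b)^2\leq 2(a^2+b^2)$; both give the same constant, and your derivation makes the logical dependence (3) $\Rightarrow$ (2) explicit, at the cost of needing (3) first. No gaps.
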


    \begin{proof}
        To obtain the first bound, we add and subtract the outer product between both states to obtain
        \begin{align}
             \Norm{\ketbra{\psi}{\psi} - \ketbra{\phi}{\phi}}_{\mathrm{Tr}} & 
            = \Norm{\ketbra{\psi}{\psi} - \ketbra{\phi}{\psi} + \ketbra{\phi}{\psi} - \ketbra{\phi}{\phi} }_{\mathrm{Tr}} \\ 
             \nonumber& 
            = \Norm{\left( \ket{\psi} - \ket{\phi} \right) \bra{\psi} + \ket{\phi} \left( \bra{\psi} - \bra{\phi} \right) }_{\mathrm{Tr}} \\
             \nonumber&
            \leq \Norm{\left( \ket{\psi} - \ket{\phi} \right) \bra{\psi} }_{\mathrm{Tr}} + \Norm{ \ket{\phi} \left( \bra{\psi} - \bra{\phi} \right) }_{\mathrm{Tr}} \\
            \nonumber
            &
            = \Norm{\ket{\psi} - \ket{\phi}}_2 \Norm{\ket{\psi}}_2 + \Norm{ \ket{\phi} }_2 \Norm{ \ket{\psi} - \ket{\phi} }_2 \\
             \nonumber
            &= \left[\Norm{\ket{\psi}}_2+\Norm{\ket{\phi}}_2\right]\Norm{\ket{\psi} - \ket{\phi}}_2
            \\
             \nonumber
            &
            \leq 2 \Norm{\ket{\psi} - \ket{\phi}}_2 \, .
             \nonumber
        \end{align}
        The second bound is obtained through a detour via the Frobenius norm,
        \begin{align}
            \Norm{\proj{\psi}-\proj{\phi}}_\text{Tr}&\leq\sqrt{2}\Norm{\proj{\psi}-\proj{\phi}}_\text{F}\\
             \nonumber
            &=\sqrt{2}\sqrt{\Tr{}{ \left(\proj{\psi}-\proj{\phi}\right)^2}}\\
             \nonumber
             &=\sqrt{2}\sqrt{\Norm{\ket{\psi}}_2^4+\Norm{\ket{\phi}}_2^4-2\Abs{\inner{\psi}{\phi}}^2},
        \end{align}
        Finally, to derive the identity, we can express $\ket{\phi}$ in terms of $\ket{\psi}$ and its orthogonal component $\ket{\psi^\perp}$ with equal norm, i.e.,
        \begin{equation}
            \ket{\phi} = \Norm{\ket{\phi}}\left(\cos\theta \frac{\ket{\psi}}{\Norm{\ket{\psi}}_2}+\sin\theta\frac{\ket{\psi^\perp}}{\Norm{\ket{\psi}}_2}\right)
        \end{equation}
        and therefore express $\proj{\psi}-\proj{\phi}$ with the basis vectors $\ket{\psi}/\Norm{\ket{\psi}}_2$ and $\ket{\psi^\perp}/\Norm{\ket{\psi}}_2$ as
        \begin{equation}
            \proj{\psi}-\proj{\phi}=\begin{pmatrix}
                \Norm{\ket{\psi}}_2^2-\Norm{\ket{\phi}}_2^2\cos^2\theta & -\Norm{\ket{\phi}}_2^2\cos\theta\sin\theta \\
                -\Norm{\ket{\phi}}_2^2\cos\theta\sin\theta &  -\Norm{\ket{\phi}}_2^2\sin^2\theta
            \end{pmatrix},
        \end{equation}
        whose eigenvalues are then given by
        \begin{equation}
            \lambda_{\pm}=\frac{1}{2}\left(\Norm{\ket{\psi}}_2^2-\Norm{\ket{\phi}}_2^2\pm\sqrt{\Norm{\ket{\psi}}_2^4+\Norm{\ket{\phi}}_2^4-2\Norm{\ket{\psi}}_2^2\Norm{\ket{\phi}}_2^2\cos2\theta}\right).
        \end{equation}
        Since one eigenvalue is positive and the other negative and using $\cos2\theta=\cos^2\theta-1$, we find
        \begin{align}
            \Norm{\proj{\psi}-\proj{\phi}}_{\mathrm{Tr}}^2=\left(\abs{\lambda_+}+\abs{\lambda_-}\right)^2 & = \Norm{\ket{\psi}}_2^4+\Norm{\ket{\phi}}_2^4-2\Norm{\ket{\psi}}_2^2\Norm{\ket{\phi}}_2^2\cos2\theta\\
            & = \Norm{\ket{\psi}}_2^4+\Norm{\ket{\phi}}_2^4+2\Norm{\ket{\psi}}_2^2\Norm{\ket{\phi}}_2^2-\underbrace{4\Norm{\ket{\psi}}_2^2\Norm{\ket{\phi}}_2^2\cos^2\theta}_{\Abs{\inner{\psi}{\phi}}^2}
        \end{align}
        and consequently
        \begin{equation}
            \Norm{\proj{\psi}-\proj{\phi}}_\text{Tr}=\sqrt{\left(\Norm{\ket{\psi}}_2^2+\Norm{\ket{\phi}}_2^2\right)^2-4\Abs{\inner{\psi}{\phi}}^2}.
        \end{equation}
    \end{proof}

\section{Ensemble decomposition matrix formulation}
\label{sec:matrix_formulation}

    Let us briefly introduce a matrix notation for the ensemble decompositions of states. 
    This will simplify the treatment of the unitary freedom, particularly in the derivation of the practical version of Wootters' method.

    Given the ensemble decomposition of a state $\rho$ like in Eq.~\eqref{eq:def_ensemble_decompositions}, we define an ``ensemble decomposition matrix''
    \begin{equation}
        \phi = \sum_{i=1}^r \sqrt{p_i} \ketbra{\phi_i}{i} \, . 
    \end{equation}
    In this formalism, the density matrix is simply 
    $\rho = \phi \phi^\dagger$ 
    and the unitary freedom corresponding to Eq.~\eqref{eq:unitarity_ensembles} becomes 
    \begin{equation}
    \label{eq:matrix_unitary_freedom}
        \varphi = \phi U^\dagger \, .
    \end{equation}
    This is easier to see starting from the right-hand side
    \begin{align}
        \phi U^\dagger & 
        = \left( \sum_{i'} \sqrt{p_{i'}} \ketbra{\phi_{i'}}{i'} \right) \left( \sum_{i,j} U^*_{j,i} \ketbra{i}{j} \right) \\ &  
        = \sum_{i,i',j} U^*_{j,i} \sqrt{p_{i'}} \inner{i'}{i} \ketbra{\phi_{i'}}{j} \\ & 
        = \sum_{i,j} \sqrt{p_i} U^*_{j,i} \ketbra{\phi_i}{j} 
        = \sum_{j} \left( \sum_{i} \sqrt{p_i} U^*_{j,i} \ket{\phi_i} \right) \bra{j} \\ & 
        = \sum_{j} \ketbra{\varphi_j}{j} 
        = \varphi \, .
    \end{align}

\section{Wootters' optimal decomposition}
\label{sec:wootters_optimal_decomp}

    In Ref.~\cite{hillEntanglementPairQuantum1997}, Hill and Wootters provide a closed-form formula for the entanglement of formation of two-qubit states.
    They show that the entanglement of formation of two-qubit states can be computed as
    \begin{equation}
        E_{\mathrm{oF}} (\rho) = h\left( \frac{1 + \sqrt{1-C^2}}{2} \right) \, ,
    \end{equation}
    where $h(x) = -x \log_2(x) - (1-x) \log_2(1-x)$ is the binary entropy, and $C$ is the concurrence of the state.
    This concurrence is defined as
    \begin{equation}
        C = \max \{0, \ \lambda_1 - \lambda_2 - \lambda_3 - \lambda_4\}
    \end{equation}
    where $\{ \lambda_i \}_i$ are the eigenvalues of the matrix $R = \sqrt{ \sqrt{ \rho } \tilde{\rho} \sqrt{ \rho } }$ in decreasing order, and the ``spin flip'' operation 
    \begin{equation}
    \tilde{\rho} = (\sigma_y \otimes \sigma_y) \rho^* (\sigma_y \otimes \sigma_y) \, .
    \end{equation}

    Later, in Ref.~\cite{woottersEntanglementFormationArbitrary1998}, Wootters introduces a method for computing an optimal decomposition achieving that value of ensemble-averaged entanglement entropy.
    The optimal decomposition is obtained through three transformations starting from the eigendecomposition of the density matrix. 
    These decompositions reach the entanglement of formation, i.e., result in a decomposition that minimizes ensemble-averaged entanglement entropy. 
    We will first recap these results following the same structure as Ref.~\cite{woottersEntanglementFormationArbitrary1998}, followed by a constructive (and numerically friendly) way of computing the decomposition. 

\subsection{Optimal decomposition for two-qubit states}
\label{sec:recap_Wootters}

    Wootters' method gives a decomposition for arbitrary two-qubit mixed states. 
    Thus, we always work with $\rho \in \operatorname{End}(\CC^{4 \times 4})$ from now on.
    They begin by relating the entanglement of formation to the concurrence. 
    To compute the latter, we need the so-called \emph{spin flip}, defined for pure and mixed states as
    \begin{align}
        \ket{\tilde{\psi}} & = (Y \otimes Y) \ket{\psi^*} \, , 
        \label{eq:spin-flip-vector} \\
        \tilde{\rho} & = (Y \otimes Y) \rho^* (Y \otimes Y) \, , 
        \nonumber
    \end{align}
    where the conjugation is done in the standard basis, and $Y$ is defined in the basis in which the conjugation is taken. 
    An alternative formulation of the same operation is to ``complex conjugate in the magic basis'', where the magic basis is defined as $ \{ \ket{\Phi^+}, i \ket{\Phi^-}, i \ket{\Psi^+}, \ket{\Psi^-}\}$.
    Then, the concurrence is defined as
    \begin{align}
        C(\ket{\psi}) & = |\inner{\psi}{\tilde{\psi}}| \, , \\
        C(\rho) & = \max \left\{0, \lambda^{(R)}_1 - \lambda^{(R)}_2 - \lambda^{(R)}_3 - \lambda^{(R)}_4 \right\} 
    \end{align}
    where $\lambda^{(R)}_i$ are the eigenvalues of $R = \sqrt{\sqrt{\rho}\tilde{\rho}\sqrt{\rho}}$, sorted in decreasing order.
    From that, we can compute the entanglement of formation as 
    \begin{equation}
        E_{oF}(\rho) = h\left( \frac{1}{2} \left( 1 + \sqrt{1 - C(\rho)^2} \right) \right)
    \end{equation}
    with the binary entropy $h(x) = x \log_2 x + (1-x) \log_2 (1-x)$.
    The state is entangled if the concurrence (and thus the entanglement of formation) is positive, and is separable if it is zero.

    The task solved in Ref.~\cite{woottersEntanglementFormationArbitrary1998} is finding an ensemble of (sub-normalized) states $\{ \ket{w_i} \}$ such that 
    \begin{equation}
        \label{eq:wootters_target_decomp}
        \sum_i \ketbra{w_i}{w_i} = \rho
        \quad \text{and} \quad
        E \left( \frac{\ket{w_i}}{\| \ket{w_i} \|_2} \right) = E_{oF}(\rho) \, \forall i \, .
    \end{equation}
    The procedure is different for entangled and separable states. 
    We first describe the entangled case ($C(\rho) > 0$) and then the separable one ($C(\rho) = 0$).
    The procedure for entangled states requires four steps and three decompositions, building upon one another.

\subsubsection*{Step 0: Eigendecomposition and arbitrary decompositions.}

    The starting point is the eigendecomposition of the state 
    \begin{equation}
        \rho = \sum_{i=1}^4 \lambda_i^{(\rho)} \ketbra{\lambda_i^{(\rho)}}{\lambda_i^{(\rho)}} \, , 
        \quad 
        \lambda_i^{(\rho)} \geq 0 \, \forall i \, , 
        \quad 
        \inner{\lambda_i^{(\rho)}}{\lambda_j^{(\rho)}} = \delta_{i,j} \, .
    \end{equation}
    It will turn out to be more convenient to work with sub-normalized vectors. The sub-normalized eigenstates are relabelled as $\ket{v_i} = \sqrt{ \lambda_i^{(\rho)}} \ket{\lambda_i^{(\rho)}} $, resulting in $\rho = \sum_{i=1}^{4} \ketbra{v_i}{v_i}$.
    From this, any ensemble decomposition can be obtained via 
    \begin{equation}
    \label{eq:subnormalized_ensembles}
        \rho = \sum_{i=1}^{m} \ketbra{w_i}{w_i}
        \quad \text{with} \quad 
        \ket{w_i} = \sum_{j=1}^4 U^*_{i,j} \ket{v_j} 
        \quad \text{for } i \in \{1, \dots, m\} \, ,
    \end{equation}
    where the conjugation of the unitary was added for later convenience.
    This is equivalent to Eq.~\eqref{eq:unitarity_ensembles} but includes the probabilities in the sub-normalized states.

\subsubsection*{Step 1: The spin-flipped-orthogonal decomposition}

    The first decomposition $\{ \ket{x_i} \}$ is constructed such that 
    \begin{equation}
    \label{eq:x_condition}
        \inner{x_i}{\tilde{x}_j} = \delta_{i,j} \lambda_i^{(R)} \, .
    \end{equation}
    This can be done by computing the matrix $\tau$ where $\tau_{i,j} = \inner{v_i}{\tilde{v}_j}$ and finding a unitary $U$ such that $U \tau U^\top = D$ (note the transpose, not Hermitian conjugate).
    Here $D$ is a diagonal matrix with the eigenvalues of $R$ on the diagonal, i.e., $D=\mathrm{diag}(\{\lambda_i^{(R)}\})$.
    Then, constructing 
    \begin{equation}
        \ket{x_i} = \sum_{j=1}^4 U^*_{i,j} \ket{v_j}
    \end{equation}
    fulfills the condition in Eq.~\eqref{eq:x_condition}.
    
\subsubsection*{Step 2: The complexified decomposition}

    Wootters' second decomposition consists of adding a phase to all elements except the first, i.e.,
    \begin{align}
        \ket{y_1} & = \ket{x_1} \\
        \ket{y_i} & = i\ket{x_i} \quad \text{for } i=2, 3, 4 \, .
    \end{align}
    
    This results in an added minus sign when computing $\inner{y_i}{\tilde{y}_j} = - \delta_{i,j} \lambda_i^{(R)}$ for $i=2, 3, 4$, while maintaining the orthogonality of the four elements, ensuring
    \begin{equation}
        \sum_{i=1}^4 \inner{y_i}{\tilde{y}_i} 
        = \lambda^{(R)}_1 - \lambda^{(R)}_2 - \lambda^{(R)}_3 - \lambda^{(R)}_4 \, ,
    \end{equation}
    which is the concurrence $C(\rho)$ of the initial state.
    
\subsubsection*{Step 3: The preconcurrence-averaged decomposition}

    The final step in Wootters' method is to find a decomposition $\{ \ket{z_i} \}$ for which 
    \begin{equation}
        \frac{\inner{z_i}{\tilde{z}_i}}{\inner{z_i}{z_i}} = C(\rho) \, .
    \end{equation}
    This can be achieved using an orthogonal (real unitary) transformation. 
    To find such a transformation, they propose an iterative procedure of mixing two elements at a time, one with a large preconcurrence and one with a small one, until all elements have the correct preconcurrence.

    Once reaching this final condition, one has solved the problem as described in Eq.~\eqref{eq:wootters_target_decomp} since 
    \begin{equation}
        E \left( \frac{\ket{z_i}}{\|\ket{z_i}\|_2} \right) = h\left( \frac{1}{2} \left( 1 + \sqrt{1 - C(\rho)^2} \right) \right) = E_{oF}(\rho) \, .
    \end{equation}
    This value of the entanglement entropy of each element guarantees that the average is minimal
    \begin{equation}
        E_{av}\left( \left\{  \frac{\ket{z_i}}{\|\ket{z_i}\|_2} \right\} \right)
        = \sum_i \|\ket{z_i}\|_2 ^2 E \left( \frac{\ket{z_i}}{\|\ket{z_i}\|_2} \right)
        = E_{oF}(\rho) \sum_i p_i
        = E_{oF}(\rho) \, ,
    \end{equation}
    concluding that the decomposition $ \left\{ \ket{z_i} \right\}$ is optimal.

\subsubsection*{The separable case}

    The above construction only holds for the non-separable case.
    For the separable case, where $\lambda_1^{(R)} - \lambda_2^{(R)}  - \lambda_3^{(R)}  - \lambda_4^{(R)} \leq 0$, Wootters provides a different decomposition. 
    Starting from the $\{ \ket{x_i} \}$ decomposition of Step 1, constructing
    \begin{align}
        \ket{z_1} & = \left( \ee^{\ii\theta_1} \ket{x_1} + \ee^{\ii\theta_2} \ket{x_2} + \ee^{\ii\theta_3} \ket{x_3} + \ee^{\ii\theta_4} \ket{x_4} \right) \, , \\
        \ket{z_2} & = \left( \ee^{\ii\theta_1} \ket{x_1} + \ee^{\ii\theta_2} \ket{x_2} - \ee^{\ii\theta_3} \ket{x_3} - \ee^{\ii\theta_4} \ket{x_4} \right) \, , \\
        \ket{z_3} & = \left( \ee^{\ii\theta_1} \ket{x_1} - \ee^{\ii\theta_2} \ket{x_2} + \ee^{\ii\theta_3} \ket{x_3} - \ee^{\ii\theta_4} \ket{x_4} \right) \, , \\
        \ket{z_4} & = \left( \ee^{\ii\theta_1} \ket{x_1} - \ee^{\ii\theta_2} \ket{x_2} - \ee^{\ii\theta_3} \ket{x_3} + \ee^{\ii\theta_4} \ket{x_4} \right) \, ,
    \end{align}
    where the phase factors $\theta_i$ are chosen such that 
    \begin{equation}
    \label{eq:separating_angles_condition}
        \sum_{j=1}^4\ee^{2\ii\theta_j}\lambda_j=0 \, ,
    \end{equation}
    results in a decomposition with zero average entanglement, thus being optimal.
    These results show that one can always find an optimal decomposition for mixed two-qubit states. 
    Next, we will show how to construct the respective decompositions and explicitly obtain the required unitaries.

\subsection{Practical/closed-form version}
    
    In the following, we will present an explicit blueprint for using Wootters' method practically in a pseudo-algorithmic fashion.
    We will use the matrix notation presented in Section~\ref{sec:matrix_formulation}, and show that the steps of Wootters' method reduce to known matrix decompositions on such objects.

    As described above in Eq.~\eqref{eq:subnormalized_ensembles}, any decomposition of a density matrix can be written as an ensemble of sub-normalized  state vectors $\{ \ket{w_i} \}_i$ with the corresponding decomposition matrix
    \begin{equation}
        w = \sum_{i=1}^m \ketbra{w_i}{i} 
        \quad \text{e.g. the eigendecomposition} \quad 
        v = \sum_{i=1}^4 \underbrace{\sqrt{\lambda_i^{(\rho)}} \ket{\lambda_i^{(\rho)}}}_{\ket{v_i}}  \!\! \bra{i} 
        \, . 
    \end{equation}
    Recall that in this formalism, the density matrix is simply 
    $\rho = v v^\dagger = w w^\dagger$ 
    and the unitary freedom corresponding to Eq.~\eqref{eq:subnormalized_ensembles} becomes 
    \begin{equation}
        w = v U^\dagger \, .
    \end{equation}
    We also extend the spin-flip operation on these objects.
    First, propagating the spin-flip through the unitary freedom in Eq.~\eqref{eq:subnormalized_ensembles} leads to
    \begin{align}
        \ket{\tilde{w}_i} 
        = (Y \otimes Y) \ket{w^*} 
        = (Y \otimes Y) \left( \sum_{j} U^*_{i,j} \ket{v_j} \right)^*
        = (Y \otimes Y) \sum_{j} U_{i,j} \ket{v^*_j}
        = \sum_{j} U_{i,j} (Y \otimes Y) \ket{v^*_j}
        = \sum_{j} U_{i,j}\ket{\tilde{v}_j} \, .
    \end{align}
    Comparing with the decomposition matrix formulation of the unitary freedom in Eq.~\eqref{eq:matrix_unitary_freedom} reveals that
    \begin{equation}
        \tilde{w} = \sum_i \ketbra{\tilde{w}_i}{i}
        = \tilde{v} U^\top \, .
    \end{equation}
    
    We will now construct the three subsequent decompositions following Wootters' method, resulting in an entanglement-minimizing final decomposition.

\subsubsection*{Step 1: The spin-flipped-orthogonal decomposition}

    The condition that the first decomposition has to fulfill, Eq.~\eqref{eq:x_condition}, can on the one hand be rewritten in terms of the corresponding decomposition matrix as
    \begin{equation}
        x^\dagger \tilde{x} 
        = \sum_{i,j} \inner{x_i}{\tilde{x}_j} \ketbra{i}{j}
        = \operatorname{diag} \left( \left\{ \lambda_i^{(R)} \right\} \right)
        \quad \text{for} \quad 
        x = v U_1^\dagger \, .
    \end{equation}
    On the other hand, the matrix $\tau$ is given by 
    \begin{equation}
        \tau = v^\dagger \tilde{v} = \sum_{i,j} \inner{v_i}{\tilde{v}_j} \ketbra{i}{j} \, .
    \end{equation}
    The diagonalization of $\tau$
    \begin{equation}
    \label{eq:tau_diag}
        U_1^{\vphantom{\top}} \tau U_1^\top = \operatorname{diag} \left( \left\{ \lambda_i^{(R)} \right\} \right)
    \end{equation}
    which returns the right unitary freedom $U_1$ 
    is precisely what the factorization by Takagi-Autonne decomposition achieves~\cite[Corollary 4.4.4(c)]{hornMatrixAnalysis2012}.
    The Takagi-Autonne factorization can be applied to any square, complex, symmetric matrix $M$ and can be easily computed from the singular value decomposition of 
    $M = U^{\vphantom{\dagger}}_M \Sigma^{\vphantom{\dagger}}_M V_M^\dagger$ 
    via $U_{\mathrm{Tak}} = U_M^{\vphantom{\top}} \sqrt{(U_M^\top V_M^{\vphantom{\top}})^*}$~\cite{houdeMatrixDecompositionsQuantum2024}~\footnote{
        For some special cases, there are also more numerically stable methods. 
        For instance, for real matrices with negative eigenvalues, 
        $M = U^{\vphantom{\top}}_M \Lambda^{\vphantom{\top}}_M U_M^\top$ 
        and $\lambda_i^{(M)} \leq 0$ for $i \geq i^-$, choosing 
        $U_{\mathrm{Tak}} = U_M \operatorname{diag}(\overbrace{1, \dots , 1}^{< i^-}, \overbrace{\ii, \dots, \ii}^{\geq i^-})$ 
        and $\Sigma_M = | \Lambda_M |$
        gives a valid Takagi-Autonne decomposition.
    }, 
    which will yield $M = U^{\vphantom{\top}}_{\mathrm{Tak}} \Sigma^{\vphantom{\top}}_M U_{\mathrm{Tak}}^\top$.
    With this decomposition, we 
    can now construct
    \begin{equation}
        x = v U_1^\dagger
    \end{equation}
    with $U_1 = U_{\mathrm{Tak}}^\dagger(\tau)$ (modulo some changes in indices to ensure that the $\lambda_i^{(R)}$ are ordered in decreasing order).
    To verify this, we can compute 
    \begin{align}
        x^\dagger \tilde{x} 
        = \left(v U_{\mathrm{Tak}} \right)^\dagger \widetilde{\left(v U_{\mathrm{Tak}} \right)} 
        = U_{\mathrm{Tak}}^\dagger v^\dagger \tilde{v} U_{\mathrm{Tak}}^* 
        = U_{\mathrm{Tak}}^\dagger \tau U_{\mathrm{Tak}}^*
        = \Sigma_\tau \, .
    \end{align}
    Let us now check that the values obtained in the diagonal matrix are the eingenvalues of $R$. 
    For this, we consider the matrix $R^2 = \sqrt{\rho}\tilde{\rho}\sqrt{\rho}$, whose eigenvalues are the eigenvalues of $R$ squared. 
    Then, from the decomposition $\rho = v v^\dagger$, we will make the eigenvalues explicit by normalizing the vectors in the decomposition matrix $v$, resulting in the typical eigendecomposition 
    \begin{equation}
        \rho = v v^\dagger
         = v_N^{\vphantom{\dagger}} \Lambda_\rho v_N^\dagger
    \end{equation}
    where $\Lambda_\rho$ is the diagonal matrix containing the eigenvalues of $\rho$ and $v_N = \sum_{i=1}^4 \ketbra{\lambda_i^{(\rho)}}{i}$. 
    Note that in the construction of $v_N$, each eigenstate is normalized (and orthogonal), so $v_N$ is a unitary matrix.
    Also, $v = v_N \sqrt{\Lambda_\rho}$. 
    With this, we can write
    \begin{equation}
        \sqrt{\rho} 
        = \lefteqn{\overbrace{\phantom{v_N^{\vphantom{\dagger}} \sqrt{\Lambda_\rho}}}^{v}}
        v_N^{\vphantom{\dagger}} \underbrace{\sqrt{\Lambda_\rho} v_N^\dagger}_{v^\dagger} 
        = v^{\vphantom{\dagger}}_N v^\dagger 
        = v v_N^\dagger
    \end{equation}
    Then, inputting this into $R^2$ gives
    \begin{equation}
        R^2 
        = 
        \lefteqn{
            \overbrace{
                \phantom{v_N v^{\vphantom{\dagger}}}
            }^{\sqrt{\rho}}
        } 
        v_N^{\vphantom{\dagger}}
        \underbrace{
            v^\dagger
            \lefteqn{
                \overbrace{
                    \phantom{\tilde{v} \tilde{v}^{{\dagger}}}
                }^{\tilde{\rho}}
            }
            \tilde{v}
        }_{\tau} 
        \underbrace{
            \tilde{v}^\dagger
            \lefteqn{
                \overbrace{
                    \phantom{v v^\dagger}
                }^{\sqrt{\rho}}
            }
            v
        }_{\tau^\dagger} 
        v_N^\dagger
        = v_N^{\vphantom{\dagger}} \tau \tau^\dagger v_N^\dagger
    \end{equation}
    Since $v_N$ is a unitary, $R^2$ shares the same eigenvalues as $\tau \tau^\dagger$. 
    Using the Takagi-Autonne decomposition above gives
    \begin{equation}
        \tau \tau^\dagger
        = U^{\vphantom{\top}}_{\mathrm{Tak}} \Sigma^{\vphantom{\top}}_{\tau} \underbrace{U_{\mathrm{Tak}}^\top U^{*\vphantom{\top}}_{\mathrm{Tak}}}_{\II} \Sigma^{\vphantom{\top}}_{\tau} U_{\mathrm{Tak}}^\dagger
        = U^{\vphantom{\top}}_{\mathrm{Tak}} \Sigma^{2\vphantom{\top}}_{\tau} U_{\mathrm{Tak}}^\dagger \, .
    \end{equation}
    With this, the diagonal matrix $\Sigma^{2}_{\tau}$ is also the matrix of eigenvalues of $R^2$, showing that Eq.~\eqref{eq:tau_diag} holds.
    
    This decomposition then also fulfills 
    $U_1 \tau \tau^* U_1^\dagger = \mathrm{diag}(\{(\lambda_i^{(R)})^2\})$, 
    although note that finding the eigendecomposition of $\tau \tau^*$ does not necessarily fulfill Eq.~\eqref{eq:tau_diag}
    \footnote{
        Take, for instance, the eigendecomposition $M = UDU^{-1}$. 
        Then replacing the unitary $U$ by any choice of the same with an added phase $\ee^{\ii\phi} U$ results in a valid eigendecomposition.
        However, in the Takagi decomposition $M = U \Sigma U^\top$, the added phase results in 
        $(\ee^{\ii\phi} U) \Sigma (\ee^{\ii\phi} U)^\top = \ee^{\ii 2 \phi} M \neq M$.
    }. 
    
\subsubsection*{Step 2: The complexified decomposition}

    The second decomposition can be rewritten in this formulation as
    \begin{equation}
        y = x U_2^\dagger
    \end{equation}
    with $U_2 = \operatorname{diag}(\{1, i, i, i\})$.
    Then the sum of flipped overlaps becomes 
    \begin{equation}
        \sum_{i=1}^4 \inner{y_i}{\tilde{y}_i} 
        = \Tr{}{y^\dagger \tilde{y}} 
        = \lambda_1^{(R)} - \lambda_2^{(R)}  - \lambda_3^{(R)}  - \lambda_4^{(R)} 
        = C(\rho) \, .
    \end{equation}

\subsubsection*{Step 3: The preconcurrence-averaged decomposition}

    In terms of the decomposition matrices, the condition for the final decomposition in Wootters' method is to find a unitary $U_3$ to construct $z = y U_3^\dagger$ such that 
    \begin{equation}
        \Tr{}{z^\dagger \tilde{z}}
        = \Tr{}{y^\dagger \tilde{y}}
        \quad \text{and} \quad
        \frac{\inner{z_i}{\tilde{z}_i}}{\inner{z_i}{z_i}} 
        = \frac{[z^\dagger \tilde{z}]_{ii}}{[z^\dagger z]_{ii}} 
        = C(\rho) \, .
    \end{equation}
    The first condition ensures that the average preconcurrence of the decomposition is preserved and thus stays minimal.
    Indeed, the average preconcurrence of the $y$ decomposition is the concurrence of the state $\rho$, and there can be no decomposition with lower concurrence.
    This condition can be rewritten by applying the unitary freedom as 
    \begin{equation}
        \Tr{}{U_3 y^\dagger \tilde{y} U_3^\top} 
        = \Tr{}{y^\dagger \tilde{y}}
    \end{equation}
    which is fulfilled by choosing the unitary $U_3$ to have real coefficients only (and thus to be an orthogonal matrix).
    The second condition is that the preconcurrence of each element of the decomposition is equal to the concurrence of the state, making the decomposition optimal.
    It is equivalent with 
    \begin{align}
        && \inner{z_i}{\tilde{z}_i} &
        = C(\rho) \inner{z_i}{z_i} && \\ &
        \Leftrightarrow & 
        \bra{i} z^\dagger \tilde{z} \ket{i} & 
        = C(\rho) \bra{i} z^\dagger z \ket{i} && \\ &
        \Leftrightarrow & 
        \bra{i} U_3 y^\dagger \tilde{y} U_3^\top \ket{i} & 
        = C(\rho) \bra{i} U_3 y^\dagger y U_3^\dagger \ket{i} && 
        \text{where } U_3^\dagger = U_3^\top \\ &
        \Leftrightarrow & 
        \bra{i} U_3 M U_3^\top \ket{i} & 
        = 0 && 
        \text{for } M = y^\dagger \tilde{y} - C(\rho) y^\dagger y
        \, .
    \end{align}
    The matrix $M$ is Hermitian as it is the sum of two hermitian matrices since
    \begin{align}
        y^\dagger \tilde{y} & 
        = \operatorname{diag}\left( \left\{ \lambda_1^{(R)}, -\lambda_2^{(R)}, -\lambda_3^{(R)}, -\lambda_4^{(R)} \right\} \right) && 
        \text{is diagonal and real,} \\
        (y^\dagger y)^\dagger & = y^\dagger y && 
        \text{is Hermitian by construction.} 
    \nonumber
    \end{align}
    Also, it has trace $0$ because
    \begin{align}
        \Tr{}{M} & 
        = \Tr{}{y^\dagger \tilde{y} - C(\rho) y^\dagger y}
        = \Tr{}{y^\dagger \tilde{y}} - C(\rho) \Tr{}{y^\dagger y} \\ &
        = C(\rho) - C(\rho) \Tr{}{\rho}
        = 0 \, .
    \end{align}
    We can find an orthogonal matrix $U_3$ fulfilling the two conditions above.

    If we decompose $M$ into its real and imaginary parts, we can write
    \begin{equation}
        M = M_R + i M_I
        \quad \text{with} \quad
        M_R, M_I \in \CC^{4 \times 4}, 
        \quad
        M_R = M_R^\top 
        \quad \text{and} \quad
        M_I = - M_I^\top \, .
    \end{equation}
    Since the imaginary part is antisymmetric, it has only zeros on the diagonal. Thus, the trace of $M$ is carried only by its real part.
    This implies that the real part also has a null trace, $\Tr{}{M_R} = \Tr{}{M} = 0$. 
    Diagonalizing the real part of $M$ gives 
    \begin{equation}
        M_R = Q_R D_R Q_R^\top \, .
    \end{equation}
    Since $M_R$ is symmetric, the diagonalizing matrix $Q_R$ is orthogonal and we have $D_R = Q_R^\top M_R Q_R$, where the diagonal matrix still fulfills $\Tr{}{D_R} = \Tr{}{M_R} = 0$.
    Next, we have to solve the simpler problem of finding an orthogonal transformation that takes a diagonal matrix with zero trace to some matrix where all diagonal elements are zero.
    This can be done by conjugating the diagonal matrix with the $4 \times 4$ Hadamard matrix,
    \begin{equation}A
        H_4 = \frac{1}{2} \begin{pmatrix}
            1 & 1 & 1 & 1 \\
            1 & 1 & -1 & -1 \\
            1 & -1 & 1 & -1 \\
            1 & -1 & -1 & 1 
        \end{pmatrix}.
    \end{equation}
    The diagonal elements of the resulting matrix are then given by $\frac{1}{4} \sum_i [D_R]_{ii} = \frac{1}{4} \Tr{}{D_R} = 0$.
    The product $H_4 Q_R^\top$ thus transforms the real part of $M$ to a matrix with null diagonal elements.
    Let us show that it fulfills the conditions of the third change of basis in Wootters' procedure.

    First, it is a real unitary matrix because it is the product of two real unitary matrices.
    We then use the fact that the asymmetry of matrices is preserved by conjugation with an orthogonal matrix. 
    Consequently, the diagonal of $H_4 Q_R^\top M_I Q_R H_4^\top$ must still be populated by zeros. 
    This results in a matrix in which the diagonal elements are the diagonal elements of $H_4 Q_R^\top M_R Q_R H_4^\top$, which, as shown above, are all zero, i.e.,
    \begin{equation}
        [H_4 Q_R^\top (M_R + i M_I) Q_R H_4^\top]_{ii} = 0 \quad \forall i \, .
    \end{equation}
    With this, we have fulfilled the second condition for Wootters' third step and have $U_3 = H_4 Q_R^\top$.
    \footnote{
        If the state has rank $< 4$ (that is $2$ or $3$, since the rank $1$ case is a pure state for which the entanglement entropy can be directly computed), one can pad the decomposition matrix with zeros, thus effectively making it a decomposition with $4$ elements, although some have probabilities $0$. 
        It is equivalent to truncating the Hadamard matrix to a $4 \times r$ isometry.
        In the case of rank $2$, one can also use the $2 \times 2$ Hadamard instead, thus preserving the cardinality of the decomposition. 
        For rank $3$, however, we are unaware of $3 \times 3$ Hadamard matrices.
    }
    
    Consequently, starting with the sub-normalized states $\{\ket{v_i}\}$, we obtain an explicit, entanglement-minimizing decomposition via
    \begin{equation}
        z = y U_3^\dagger = v U_1^\dagger U_2^\dagger U_3^\dagger.
    \end{equation}

\subsubsection*{The separable case}

    \begin{figure}
        \centering
        \includegraphics[scale=1]{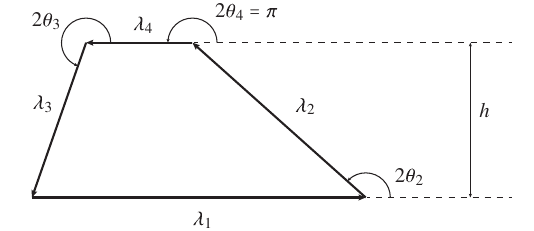}
        \caption{Geometric interpretation of the condition in Eq.~\eqref{eq:separating_angles_condition}.}
        \label{fig:trapezoid-angles}
    \end{figure}
    In the separable case, the condition Eq.~\eqref{eq:separating_angles_condition} must be satisfied to find an optimal decomposition.
    It can be solved by constructing a trapezoid in the complex plane for vectors $\lambda_j^{(R)} \ee^{2\ii\theta_j}$.
    To this end, we start by choosing
    \begin{equation}
        \theta_1=0, \qquad \theta_4=\pi/2
    \end{equation}
    to construct the base and opposite parallel of the trapezoid out of the longest and shortest sides. 
    Note that this construction implicitly contains the case of $\lambda_4 = 0$ since the resulting shape will be a triangle with identical equations for its height and the remaining angles.
    The case of $\lambda_1 = \lambda_2 = \lambda_3 = \lambda_4$ can be treated independently, as it leads to a square in the complex plane and thus choosing $\theta_1=0$, $\theta_2=\pi/4$, $\theta_3=\pi/2$, $\theta_4=3\pi/4$ fulfills Eq.~\eqref{eq:separating_angles_condition}.
    In all other cases, the height of the resulting shape is given by
    \begin{align}
        h & =\frac{2}{\lambda_1-\lambda_4}\sqrt{s(s-\lambda_1+\lambda_4)(s-\lambda_2)(s-\lambda_3)} \\
        s & =\frac{\lambda_1-\lambda_4+\lambda_2+\lambda_3}{2} \, ,
    \end{align}
    resulting in
    \begin{equation}
        2\theta_2 = \pi-\arcsin\left(\frac{h}{\lambda_2}\right) \, .
    \end{equation}
    Now, it might be that the final angle $\theta_3$ of the trapezoid is obtuse, resulting in an erroneous numerical outcome from the $\arcsin$.
    To this end, we first check whether $\cos(2\theta_2)\lambda_2+\lambda_4-\lambda_1>0$.
    If 
    so, we choose
    \begin{equation}
        2\theta_3 = \pi+\arcsin\left(\frac{h}{\lambda_3}\right)
    \end{equation}
    and otherwise
    \begin{equation}
        2\theta_3 = -\arcsin\left(\frac{h}{\lambda_3}\right)
    \end{equation}
    to obtain an explicit, separable decomposition.

\section{Some optimal decompositions}

\subsection{Optimal Kraus decompositions of some common noise channels for their Choi state}
\label{sec:choi_optimal_kraus_examples}

    In this section, we will obtain the optimal Kraus decomposition of some common noise models with respect to the Bell state.
    This is also the Kraus decomposition, which induces an optimal state decomposition of the Choi state of these channels. 
    That is, the resulting decomposition following Eq.~\eqref{eq:Kraus_to_ensemble} minimizes the ensemble-averaged entropy of the Choi state.
    
    We consider noise channels $\mathcal{N}$ described by a valid set of Kraus operators $\{ K_i \}_i$ as in Eq.~\eqref{eq:def_kraus}
    and the unitary freedom $U$ on that representation as in Eq.~\eqref{eq:unitarity_kraus}.
    The Choi state of such a channel is then
    \begin{equation}
        {\choi}(\mathcal{N}) 
        = (\mathcal{N} \otimes \mathcal{I}) [\ketbra{\omega}{\omega}]
        = \sum_{i=1} \left( K^{(U)}_{i} \otimes \II \right) \ketbra{\omega}{\omega} \left( K^{(U)\dagger}_{i} \otimes \II \right) , 
        \quad
        \ket{\omega} = \frac{1}{\sqrt{d}} \sum_{j=0}^{d-1} \ket{jj}
        = \frac{1}{\sqrt{d}} \  \tikzineq{
            \draw [rounded corners, line width=1pt] (.4, .4) -- (0,.4) -- (0,0) -- (.4,0);
        },
    \end{equation}
    where $\{ p_i, \frac{1}{\sqrt{pi}} K_i^{(U)} \ket{\omega} \}_i$ is the induced ensemble decomposition of ${\choi}(\mathcal{N})$.
    We say that the Kraus decomposition is optimal if the ensemble-averaged entanglement entropy of $\{ p_i, \frac{1}{\sqrt{pi}} K_i^{(U)} \ket{\omega} \}_i$ is the entanglement of formation of ${\choi}(\mathcal{N})$ and is thus minimal.
    
    \begin{proposition}
        For the single-qubit dephasing channel, the rotation 
        \begin{equation}
            U = \frac{1}{\sqrt{2}}\begin{pmatrix}
               1 & 1 \\
               1 & -1
            \end{pmatrix},
        \end{equation}
        on the Kraus decomposition 
        \begin{equation}
            \{ K_i \}_{i=1}^2 = \left \{\sqrt{1-\frac{p}{2}} \II, \sqrt{\frac{p}{2}} Z \right \}
        \end{equation}
        is optimal for the Choi state.
    \end{proposition}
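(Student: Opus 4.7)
The plan is to verify the claim by direct computation of the Choi state, of its entanglement of formation via the two-qubit Wootters formula recalled in Appendix~\ref{sec:wootters_optimal_decomp}, and of the ensemble-averaged entanglement entropy of the decomposition induced by the Hadamard-rotated Kraus operators. The optimality will then follow from the fact that each element of the induced decomposition happens to have the same entanglement entropy, and that this common value already equals $E_{\mathrm{oF}}$.

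First, I compute the Choi state. Applying the given Kraus operators to $\ket{\omega}=\ket{\Phi^+}$, and using $(Z\otimes\II)\ket{\Phi^+}=\ket{\Phi^-}$, one obtains
\begin{equation}
\choi(\mathcal{N}) = \left(1-\tfrac{p}{2}\right)\ketbra{\Phi^+}{\Phi^+} + \tfrac{p}{2}\ketbra{\Phi^-}{\Phi^-}.
\end{equation}
This is already the eigendecomposition. The spin-flip acts as $(Y\otimes Y)\ket{\Phi^\pm}=\mp\ket{\Phi^\pm}$, so $\tilde{\choi}=\choi$ and therefore $R=\sqrt{\sqrt{\choi}\tilde\choi\sqrt\choi}=\choi$. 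Its eigenvalues (ordered) are $1-p/2$, $p/2$, $0$, $0$, giving concurrence $C(\choi)=\max\{0,(1-p/2)-p/2\}=1-p$ and
\begin{equation}
E_{\mathrm{oF}}(\choi) = h\!\left(\frac{1+\sqrt{1-(1-p)^2}}{2}\right) = h\!\left(\frac{1+\sqrt{p(2-p)}}{2}\right).
\end{equation}

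Second, I compute the induced decomposition from the Hadamard rotation. Writing $a=\sqrt{1-p/2}$, $b=\sqrt{p/2}$, the rotated Kraus operators are $K'_\pm = \tfrac{1}{\sqrt 2}(a\II\pm b Z)$, which are diagonal. A direct calculation on $\ket{\Phi^+}$ gives $\|K'_\pm\otimes\II\,\ket{\Phi^+}\|^2 = \tfrac14[(a\pm b)^2+(a\mp b)^2]=\tfrac12$, so both branches occur with probability $1/2$. The normalized post-Kraus states are
\begin{equation}
\ket{\phi_\pm} = \tfrac{1}{\sqrt 2}\bigl((a\pm b)\ket{00}+(a\mp b)\ket{11}\bigr),
\end{equation}
which are already in Schmidt form with squared coefficients $\tfrac12(a\pm b)^2$ and $\tfrac12(a\mp b)^2$. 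Both states have the same entanglement entropy $h\!\bigl(\tfrac12(a+b)^2\bigr)$, and since $\tfrac12(a+b)^2=\tfrac{1+2ab}{2}=\tfrac{1+\sqrt{p(2-p)}}{2}$, this equals $E_{\mathrm{oF}}(\choi)$.

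Finally, since both elements of the induced ensemble achieve the entanglement of formation, their convex average does as well, so the Hadamard-rotated Kraus decomposition saturates the convex roof defining $E_{\mathrm{oF}}$ and is therefore optimal. The only mild obstacle is keeping the algebra tidy when evaluating the spin-flip on $\ket{\Phi^\pm}$ and the Schmidt coefficients of $\ket{\phi_\pm}$; everything else is a direct substitution into the two-qubit Wootters formula.
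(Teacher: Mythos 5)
Your proof is correct and follows essentially the same route as the paper's: compute the Choi state, obtain its concurrence and entanglement of formation via the Wootters formula, then check that both elements of the Hadamard-rotated decomposition have equal probability and Schmidt spectrum whose entropy matches $E_{\mathrm{oF}}$. The only cosmetic difference is that you write the post-Kraus states in the computational basis whereas the paper keeps them in the Bell basis; the Schmidt coefficients and the final identity $\tfrac12(a+b)^2 = \tfrac{1+\sqrt{p(2-p)}}{2}$ agree.
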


    \begin{proof}
        Choi states of single-qubit noise channels are two-qubit states, so the results from Ref.~\cite{woottersEntanglementFormationArbitrary1998} presented in Section~\ref{sec:recap_Wootters} apply.
        Let us now look at the noise model of interest.
        The Choi state (based on the {mixed-unitary} decomposition) is 
        \begin{equation}
            {\choi}(\mathcal{N}) = (1-p/2) \ketbra{\Phi^+}{\Phi^+} + p/2 \ketbra{\Phi^-}{\Phi^-} \, .
        \end{equation}
        with $ \ket{\tilde{\Phi}^+} = -\ket{\Phi^+}$ and $ \ket{\tilde{\Phi}^-} = \ket{\Phi^-}$, so the density matrices of both elements of the decomposition are invariant under the spin-flip.
        As a result, $R = \rho$. 
        The eigenvalues are then $\{ \lambda_i \}_i = \{ 1-p/2, p/2, 0, 0 \}$
        and $C({\choi}(\mathcal{N})) = 1-p$. 
        Consequently, the entanglement of formation is 
        \begin{equation}
        E_{\mathrm{oF}} = h\left( \frac{1+\sqrt{(2-p)p}}{2} \right).
        \end{equation}
        
        Let us find out if our candidate for an optimal decomposition reaches this value of average entanglement, meaning that it would be optimal.
        We have the set of unitary Kraus operators $\{ K_i \}_i = \{\sqrt{1-p/2} \II, \sqrt{p/2}Z\}$
        and the candidate unitary $U = \frac{1}{\sqrt{2}} \left( \begin{smallmatrix} 1 & 1 \\ 1 & -1 \end{smallmatrix} \right)$.
        Then the normalized ensemble states are 
        \begin{equation}
            \ket{\phi_i} = \sqrt{2} K_i^{(U)} \ket{\Phi^+} = \sqrt{1-p/2}\ket{\Phi^+} \pm \sqrt{p/2}\ket{\Phi^-} \, .
        \end{equation}
        These states both have the same probability $p=\frac{1}{2}$, and Schmidt coefficients $\frac{1}{2}(\sqrt{2-p} \pm \sqrt{p})$. 
        From this, we have that $E_{av} = E(\ket{\phi_1}) = E(\ket{\phi_2})$.
        Since there are only two Schmidt coefficients and their square sums to one, we have
        \begin{equation}
            E_{av} = h\left(\left(\frac{1}{2}(\sqrt{2-p} + \sqrt{p}\right)^2\right)
            = h\left(\frac{1+\sqrt{(2-p)p}}{2}\right)
            = E_{\mathrm{oF}}
        \end{equation}
        which is the result from above.
    \end{proof}

    \begin{proposition}
        For the single-qubit amplitude-damping channel, the Kraus decomposition 
        \begin{equation}
            K^{(U)}_{i'} = \sum_{i=1}^{2} U_{i'i} K_{i} \quad \forall i' \in \{ 1, 2 \}
            \quad \text{with} \quad
            \left\{ K_i\right\}_{i=1}^2 = \left \{ \left( \begin{smallmatrix} 1 & 0 \\ 0 & \sqrt{1-\gamma} \end{smallmatrix} \right), \left( \begin{smallmatrix} 0 & \sqrt{\gamma} \\ 0 & 0 \end{smallmatrix} \right) \right \}
            \quad \text{and} \quad
            U = \frac{1}{\sqrt{2}}
            \left( \begin{smallmatrix} 
               1 & 1 \\
               1 & -1
            \end{smallmatrix} \right)
        \end{equation}
        is optimal for the Choi state.
    \end{proposition}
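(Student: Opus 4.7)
The plan is to follow the same template just used for the dephasing channel: compute the entanglement of formation of the Choi state via Wootters' closed-form formula, independently compute the ensemble-averaged entanglement entropy induced by the proposed Kraus decomposition applied to $\ket{\Phi^+}$, and check that the two numbers coincide.

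First I would write down the Choi state explicitly. Using the given $\{K_i\}_{i=1}^2$, the vectors $(K_i \otimes \II)\ket{\Phi^+}$ are $\tfrac{1}{\sqrt{2}}(\ket{00}+\sqrt{1-\gamma}\ket{11})$ and $\tfrac{\sqrt{\gamma}}{\sqrt{2}}\ket{01}$, so ${\choi}(\mathcal{N})$ has support only on the three-dimensional subspace $\mathrm{span}\{\ket{00},\ket{01},\ket{11}\}$. The spin-flipped matrix $\tilde{\choi}$ is then also easy to write down. To get the concurrence I would compute $\chi\tilde\chi$; because both matrices have the block structure that couples only $\ket{00}\leftrightarrow\ket{11}$ and leaves $\ket{01}$, $\ket{10}$ on separate $\gamma/2$ and $0$ diagonals, the product reduces to a single $2\times 2$ block with a rank-one structure. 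Its unique nonzero eigenvalue should turn out to be $1-\gamma$, giving $\lambda_1^{(R)}=\sqrt{1-\gamma}$ and $\lambda_2^{(R)}=\lambda_3^{(R)}=\lambda_4^{(R)}=0$, hence $C({\choi}(\mathcal N))=\sqrt{1-\gamma}$. Wootters' formula then yields
\begin{equation}
E_{\mathrm{oF}}({\choi}(\mathcal{N}))=h\!\left(\frac{1+\sqrt{\gamma}}{2}\right).
\end{equation}

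Next I would compute the candidate ensemble. The rotated operators are $K^{(U)}_{\pm}=\tfrac{1}{\sqrt{2}}(K_1\pm K_2)$, and applying them to $\ket{\Phi^+}$ gives
$(K^{(U)}_{\pm}\otimes\II)\ket{\Phi^+}=\tfrac{1}{2}(\ket{00}\pm\sqrt{\gamma}\ket{01}+\sqrt{1-\gamma}\ket{11})$,
each with weight $p_\pm=1/2$. Computing the reduced density matrix on the first qubit for either normalized $\ket{\phi_\pm}$ gives a matrix whose trace is $1$ and whose determinant simplifies to $(1-\gamma)/4$, so its eigenvalues are $(1\pm\sqrt{\gamma})/2$. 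Hence $E(\ket{\phi_\pm})=h((1+\sqrt{\gamma})/2)$, and because the two branches carry the same entropy we get $E_{\mathrm{av}}=h((1+\sqrt{\gamma})/2)=E_{\mathrm{oF}}({\choi}(\mathcal{N}))$, which is the desired optimality.

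The whole argument is essentially a direct calculation, so I do not anticipate a serious obstacle. The only step that requires care is the concurrence computation, because $\choi(\mathcal{N})$ is not a convex combination of spin-flip eigenstates (as it was in the dephasing case), so one cannot read off $R$'s spectrum from inspection and must actually evaluate $\chi\tilde\chi$. I would organise that computation by working in the ordered basis $\{\ket{00},\ket{11},\ket{01},\ket{10}\}$, in which $\chi$ and $\tilde\chi$ are block diagonal with a $2\times 2$ coherent block and two scalar blocks; the nonzero eigenvalue $1-\gamma$ of the product then comes out from the rank-one structure of the coherent blocks, and the three remaining eigenvalues vanish because the $\gamma/2$ weight sits on $\ket{01}$ in $\chi$ but on $\ket{10}$ in $\tilde\chi$. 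Once that is dispatched, matching $E_{\mathrm{av}}$ to the Wootters expression is a one-line check.
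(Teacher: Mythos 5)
Your proposal is correct and follows essentially the same route as the paper: compute the concurrence of the Choi state to get $E_{\mathrm{oF}}=h\bigl((1+\sqrt{\gamma})/2\bigr)$, then check that both branches of the rotated decomposition have equal weight and entanglement entropy $h\bigl((1+\sqrt{\gamma})/2\bigr)$. The only cosmetic difference is that you extract the nonzero eigenvalue $1-\gamma$ from the block structure of $\chi\tilde{\chi}$ directly, whereas the paper obtains $R=\sqrt{1-\gamma}\,\ketbra{\phi_1}{\phi_1}$ via the overlaps $\inner{\phi_i}{\tilde{\phi}_j}$ of the (non-orthogonal) ensemble vectors; both give $C=\sqrt{1-\gamma}$.
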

    
    \begin{proof}
        The proof follows a similar argument to that of the dephasing channel.
        In the case of amplitude damping, there are a few more steps, as the effect of the spin flip is not trivial, but the idea is the same.
        From the {Orthogonal} Kraus decomposition 
        $\left \{ \left( \begin{smallmatrix} 1 & 0 \\ 0 & \sqrt{1-\gamma} \end{smallmatrix} \right), \left( \begin{smallmatrix} 0 & \sqrt{\gamma} \\ 0 & 0 \end{smallmatrix} \right) \right \}$
        we get the ensemble decomposition of the Choi state
        \begin{equation}
            \begin{cases}
                \ket{\phi_1} = \frac{1}{\sqrt{p_1}} K_1 \ket{\Phi^+} = \frac{1}{\sqrt{2-\gamma}}(\ket{0,0} + \sqrt{1-\gamma} \ket{1,1}), & 
                p_1 = \frac{2-\gamma}{2} \\
                \ket{\phi_2} = \frac{1}{\sqrt{p_2}} K_2 \ket{\Phi^+} = \ket{01}, & 
                p_2 = \frac{\gamma}{2} \, .
            \end{cases}
        \end{equation}
        Then,
        \begin{equation}
            \widetilde{{\choi}(\mathcal{N})} = \frac{2-\gamma}{2} \ketbra{\tilde{\phi_1}}{\tilde{\phi_1}} + \frac{\gamma}{2} \ketbra{\tilde{\phi_2}}{\tilde{\phi_2}}
            \quad \text{with} \quad 
            \begin{cases}
                \ket{\tilde{\phi_1}} = \frac{-1}{\sqrt{2-\gamma}}(\sqrt{1-\gamma} \ket{0,0} + \ket{1,1}) \\
                \ket{\tilde{\phi_2}} = \ket{1,0}
            \end{cases} \, .
        \end{equation}
        Using that $\inner{\phi_1}{\tilde{\phi}_1} = -\frac{2\sqrt{1-\gamma}}{2-\gamma}$ while $\inner{\phi_1}{\tilde{\phi}_2} = \inner{\phi_2}{\tilde{\phi}_1} = \inner{\phi_2}{\tilde{\phi}_2} = 0$, we get
        \begin{equation}
            R = \sqrt{ \sqrt{ {\choi}(\mathcal{N}) } \widetilde{{\choi}(\mathcal{N})} \sqrt{ {\choi}(\mathcal{N}) } }
            = \sqrt{1-\gamma} \ketbra{\phi_1}{\phi_1}
        \end{equation}
        and thus the only non-zero eigenvalue is $\sqrt{1-\gamma}$ and we obtain $E_{\mathrm{oF}} = h \left( \frac{1+\sqrt{\gamma}}{2} \right)$. 
        
        Now, to check that our Kraus decomposition
        \begin{equation}
            \left\{ K^{(U)}_i\right\}_{i=1}^2 
            = \left \{ \frac{1}{\sqrt{2}} \left( 
            \begin{smallmatrix} 
                1 & \sqrt{\gamma} \\ 
                0 & \sqrt{1-\gamma} 
            \end{smallmatrix} \right), \frac{1}{\sqrt{2}} \left( 
            \begin{smallmatrix} 
                1 & -\sqrt{\gamma} \\ 
                0 & \sqrt{1-\gamma} 
            \end{smallmatrix} \right)\right \}
        \end{equation}
        reaches this value.
        Again, we have that the probabilities of the two elements are equal, and they have the same Schmidt coefficients, namely $\sqrt{\frac{1 \pm \sqrt{\gamma}}{2}}$. 
        With this, we have that 
        \begin{equation}
            E_{av} = h\left(\sqrt{\frac{1 \pm \sqrt{\gamma}}{2}}^2\right) = E_{\mathrm{oF}} \, .
        \end{equation}
    \end{proof}
    
    \begin{proposition}
        For the single-qubit depolarizing channel, the Kraus decomposition 
        \begin{equation}
            K^{(U)}_{i'} = \sum_{i=1}^{4} U_{i'i} K_{i} \quad \forall i' \in \{ 1, \dots, 4 \}
        \end{equation}
        with
        \begin{equation}
            \left\{ K_i\right\}_{i=1}^4 = \left\{ \sqrt{1-\frac{3p}{4}} \II, \frac{\sqrt{p}}{2} X, \frac{\sqrt{p}}{2} Y, \frac{\sqrt{p}}{2} Z \right\}
            \quad \text{and} \quad
            U = \frac{1}{2}
            \left( \begin{smallmatrix} 
                1 & 1 & 1 & 1 \\
                1 & 1 & -1 & -1 \\
                1 & -1 & 1 & -1 \\
                1 & -1 & -1 & 1
            \end{smallmatrix} \right)
        \end{equation}
        is optimal for the average entanglement of the Choi state below the separability threshold of $p=\frac{2}{3}$.
    \end{proposition}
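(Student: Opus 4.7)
The strategy mirrors the earlier cases for dephasing and amplitude damping: compute the entanglement of formation of the Choi state using Wootters' formula, then check that the candidate decomposition attains this value. The plan is to use symmetry as much as possible, since the depolarizing channel is highly symmetric under Pauli conjugations, and the Hadamard rotation preserves enough of this symmetry to keep all computations tractable.

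First I will compute the Choi state from the mixed-unitary Kraus decomposition. Using the fact that $(X\otimes\II)\ket{\Phi^+}=\ket{\Psi^+}$, $(Y\otimes\II)\ket{\Phi^+}=\ii\ket{\Psi^-}$ and $(Z\otimes\II)\ket{\Phi^+}=\ket{\Phi^-}$, together with $(1-3p/4)\ketbra{\Phi^+}{\Phi^+}+\tfrac{p}{4}(\ketbra{\Phi^-}{\Phi^-}+\ketbra{\Psi^+}{\Psi^+}+\ketbra{\Psi^-}{\Psi^-})=(1-p)\ketbra{\Phi^+}{\Phi^+}+\tfrac{p}{4}\II$, the Choi state is the isotropic (``Werner-like'') state
\begin{equation}
    \choi(\mathcal{N})=(1-p)\ketbra{\Phi^+}{\Phi^+}+\tfrac{p}{4}\II\;.
\end{equation}
Because both $\ketbra{\Phi^+}{\Phi^+}$ and $\II$ are invariant under the spin flip, $\widetilde{\choi(\mathcal{N})}=\choi(\mathcal{N})$, hence $R=\sqrt{\sqrt{\choi}\,\widetilde{\choi}\,\sqrt{\choi}}=\choi(\mathcal{N})$. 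Its eigenvalues are $\{1-3p/4,\,p/4,\,p/4,\,p/4\}$ in decreasing order (exactly as long as $p\leq 1$), so the concurrence is $C(\choi(\mathcal{N}))=\max\{0,\,1-3p/2\}$. This already identifies $p=2/3$ as the separability threshold, which matches the advertised range. For $p\leq 2/3$ I then obtain
\begin{equation}
    E_{\mathrm{oF}}(\choi(\mathcal{N}))=h\!\left(\tfrac{1+\sqrt{1-(1-3p/2)^2}}{2}\right)=h\!\left(\tfrac{2+\sqrt{3p(4-3p)}}{4}\right).
\end{equation}

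Next I will check the candidate. Writing $a=\sqrt{1-3p/4}$ and $b=\sqrt{p}/2$, the $4\times 4$ Hadamard rotation produces four Kraus operators of the form $K^{(U)}_i=\tfrac12(a\II\pm bX\pm bY\pm bZ)$ with signs forming the rows of $H_4$. Each is manifestly Hermitian, and a short Pauli-algebra computation using $\{X,Y\}=\{X,Z\}=\{Y,Z\}=0$ gives $K^{(U)}_i(K^{(U)}_i)^\dagger=\tfrac{1}{4}[\II\pm 2ab(\pm X\pm Y\pm Z)]$ with the same sign pattern, and in particular $\|K^{(U)}_i\|_F^2=\tfrac12$. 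Using the standard identity $\mathrm{tr}_B[(K\otimes\II)\ketbra{\Phi^+}{\Phi^+}(K^\dagger\otimes\II)]=\tfrac12 KK^\dagger$, the probabilities are $p_i=\tfrac12\|K^{(U)}_i\|_F^2=\tfrac14$ and the reduced state of the target qubit of $\ket{\phi_i}=p_i^{-1/2}(K^{(U)}_i\otimes\II)\ket{\Phi^+}$ has Bloch vector of length $2\sqrt{3}\,ab=\sqrt{3p(4-3p)}/2$. Hence each $\ket{\phi_i}$ has the same entanglement entropy, equal to $h\!\left(\tfrac{2+\sqrt{3p(4-3p)}}{4}\right)=E_{\mathrm{oF}}(\choi(\mathcal{N}))$, and since the four weights sum to one the ensemble-averaged entanglement saturates the entanglement of formation.

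The proof is essentially routine; the only subtle point I anticipate is the bookkeeping of signs in the Hadamard-rotated Kraus operators, which could easily make one mistakenly conclude that different $\ket{\phi_i}$ have different entanglement. The clean resolution is the observation that all four $K^{(U)}_i(K^{(U)}_i)^\dagger$ differ only by flips $\sigma_\alpha\to-\sigma_\alpha$, i.e.\ by local unitary conjugation on the target qubit, so they all have Bloch vectors of equal magnitude. Beyond $p=2/3$ the same decomposition is no longer optimal (since $E_{\mathrm{oF}}=0$ there), which is why the statement is restricted to that regime; handling the $p>2/3$ case would instead require invoking the separable branch of Wootters' construction with the trapezoid-angle phases of Eq.~\eqref{eq:separating_angles_condition}, which lies outside the scope of the present proposition.
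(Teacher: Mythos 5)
Your proposal is correct and follows essentially the same route as the paper: compute the Choi state, use its spin-flip invariance to get $R=\choi(\mathcal{N})$ and hence $C=1-3p/2$, and then verify that the four Hadamard-rotated Kraus elements occur with equal probability $1/4$ and each attain entanglement entropy $h\bigl(\tfrac{2+\sqrt{3p(4-3p)}}{4}\bigr)=E_{\mathrm{oF}}$. The only cosmetic difference is that you extract the Schmidt spectrum from the Bloch vector of the reduced state $\propto K^{(U)}_i (K^{(U)}_i)^\dagger$, whereas the paper reads it off from the singular values of the normalized coefficient matrix $K^{(U)}_i/\lVert K^{(U)}_i\rVert_F$ — these are the same computation.
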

    
    \begin{proof}
        A similar argument holds for depolarizing noise.
        Under our candidate unitary, all four elements of the decomposition have the same probability ($1/4$) and Schmidt coefficients. Each state then has an entanglement entropy equal to the entanglement of formation.

        Again, we start from the Choi state based on the unitary decomposition
        \begin{equation}
            {\choi}(\mathcal{N}) 
            = (1-p) \ketbra{\Phi^+}{\Phi^+} + p \frac{\II}{4} 
            = \left(1-\frac{3p}{4}\right) \ketbra{\Phi^+}{\Phi^+} + \frac{p}{4} \left( \ketbra{\Phi^-}{\Phi^-} + \ketbra{\Psi^+}{\Psi^+} + \ketbra{\Psi^-}{\Psi^-} \right) \, .
        \end{equation}
        As $ \ket{\tilde{\Phi}^\pm} = \mp \ket{\Phi^\pm}$ and $ \ket{\tilde{\Psi}^\pm} = \pm \ket{\Psi^\pm}$, 
        it holds again that $\tilde{\rho} = (\sigma_y \otimes \sigma_y) \rho^* (\sigma_y \otimes \sigma_y)=\rho$ and $R = \sqrt{ \sqrt{ \rho } \tilde{\rho} \sqrt{ \rho } } = \rho$.
        The eigenvalues can then be read out directly from the 
        decomposition of the Choi state in the Bell basis
        $\{ \lambda_i \}_i = \{ 1-3p/4, p/4, p/4, p/4 \}$
        and 
        \begin{equation}
            1-\frac{3p}{4} \geq \frac{p}{4} \ \forall p \leq 1
            \quad \Rightarrow \quad
            C({\choi}(\mathcal{N})) = 1-\frac{3p}{2} \, .
        \end{equation}
        For $p \leq \frac{2}{3}$, $C \geq 0$. 
        Then, the entanglement of formation is $E_{\mathrm{oF}} = h \left( \frac{1}{4} \left( 2 + \sqrt{3} \sqrt{(4-3p)p}\right) \right)$.

        Now to show that, for $p \leq \frac{2}{3}$, each element of the optimal decomposition has the same value of entanglement entropy. 
        From the statement above, each (sub-normalized) element of the decomposition can be written as 
        \begin{equation}
            K_i^{(U)} \ket{\Phi^+} = \frac{1}{2} \left( 
            \sqrt{1-\frac{3p}{4}} \ket{\Phi^+} 
            \pm \frac{\sqrt{p}}{2} \ket{\Psi^+} 
            \mp i \frac{\sqrt{p}}{2} \ket{\Psi^-} 
            \pm \frac{\sqrt{p}}{2} \ket{\Phi^-} 
            \right) \, .
        \end{equation}
        Then the probability of each element is 
        \begin{equation}
            p_i = \| K_i^{(U)} \ket{\Phi^+} \|_2^2 
            = \frac{1}{4} \left( 1-\frac{3p}{4} + \frac{p}{4} + \frac{p}{4} + \frac{p}{4} \right)
            = \frac{1}{4}
            \quad \forall i \, .
        \end{equation}
        The Schmidt coefficients are given by the singular values of the coefficient matrix 
        \begin{equation}
            \frac{1}{\sqrt{2}} \begin{pmatrix}
                \sqrt{1-\frac{3p}{4}} \pm \frac{\sqrt{p}}{2} & 
                \frac{\sqrt{p}}{2} (\pm 1 \mp 1) \\ 
                \frac{\sqrt{p}}{2} (\pm 1 \pm 1) & 
                \sqrt{1-\frac{3p}{4}} \mp \frac{\sqrt{p}}{2}
            \end{pmatrix}
            = \frac{K_i^{(U)}}{\left\|K_i^{(U)}\right\|_F} \, .
        \end{equation}
        All four of them share the same pair of singular values, namely 
        $\left\{ \frac{1}{2} \sqrt{ 2 + \sqrt{3} \sqrt{(4-3p)p}}, \frac{1}{2} \sqrt{ 2 - \sqrt{3} \sqrt{(4-3p)p}} \right\}$
        and with this, we obtain that
        \begin{equation}
            E_{av} = \sum_{i=1}^4 \frac{1}{4} h\left(\frac{1}{4} \left( 2 + \sqrt{3} \sqrt{(4-3p)p} \right) \right) = E_{\mathrm{oF}} \, ,
        \end{equation}
        concluding the proof.
    \end{proof}

\subsection{Optimal decomposition of amplitude damping for states with computational Schmidt basis}
\label{sec:opt_AD_computational}

    In the case of amplitude damping, one may even go further and consider the application of the channel of one of the two qubits of a state with a Schmidt decomposition in the computational basis, not necessarily a maximally entangled state.
    
    \begin{remark}
        For the single-qubit amplitude damping channel, and any two-qubit state with a Schmidt decomposition in the computational basis
        \begin{equation}
            \ket{\psi} = s \ket{0} \otimes \ket{0} + \sqrt{1-s^2} \ket{1} \otimes \ket{1}
            \quad \forall s \in [0,1]
        \end{equation}
        the Kraus decomposition 
        \begin{equation}
            K^{(U)}_{i'} = \sum_{i=1}^{2} U_{i'i} K_{i} \quad \forall i' \in \{ 1, 2 \}
            \quad \text{with} \quad
            \left\{ K_i\right\}_{i=1}^2 = \left \{ \left( \begin{smallmatrix} 1 & 0 \\ 0 & \sqrt{1-\gamma} \end{smallmatrix} \right), \left( \begin{smallmatrix} 0 & \sqrt{\gamma} \\ 0 & 0 \end{smallmatrix} \right) \right \}
            \quad \text{and} \quad
            U = \frac{1}{\sqrt{2}}
            \left( \begin{smallmatrix} 
                1 & -1 \\
               1 & 1
            \end{smallmatrix} \right) 
        \end{equation}
        induces an optimal decomposition of the mixed state after applying the channel to one of the two qubits.
    \end{remark}

    \begin{proof}
        To prove this statement, we will show that the induced decomposition (Eq.~\eqref{eq:Kraus_to_ensemble}) is the eigendecomposition of the noisy state, with this obtaining the concurrence.
        Then we show that the first two unitaries from Wootters' procedure can be chosen as the identity.
        Finally, we show that the third unitary can be chosen as the Hadamard matrix.

        We start by computing the induced decomposition
        \begin{align}
            \rho = \mathcal{N}[\ketbra{\psi}{\psi}] & 
            = (s \ket{0,0} + \sqrt{1-s^2}\sqrt{1-\gamma} \ket{1,1})(s \bra{0,0} + \sqrt{1-s^2}\sqrt{1-\gamma} \bra{1,1}) + (1-s^2) \gamma \ketbra{0,1}{0,1} \\ & 
            = (1-(1-s^2) \gamma) \ketbra{\phi_1}{\phi_1} + (1-s^2) \gamma \ketbra{\phi_2}{\phi_2}
            \nonumber
        \end{align}
        with 
        \begin{align}
            \ket{\phi_1} & 
            = \frac{1}{\sqrt{1-(1-s^2) \gamma}} (s \ket{0,0} + \sqrt{1-s^2}\sqrt{1-\gamma} \ket{1,1}), \\
            \ket{\phi_2} & 
            = \ket{0,1} \, .
        \end{align}
        The states vectors $\ket{\phi_1}$ and $\ket{\phi_2}$ are normalized and fulfill the orthonormality property $\inner{\phi_i}{\phi_j} = \delta_{i,j}$, so they are the eigenstates of the mixed state.
        Now let us apply the spin flip (Eq.~\eqref{eq:spin-flip-vector}).
        We have
        \begin{equation}
            Y \otimes Y = \left(\begin{smallmatrix}
                &&&-1 \\ &&1& \\ &1&& \\ -1&&&
            \end{smallmatrix}\right)
            \Rightarrow
            \begin{cases}
                \ket{\tilde{\phi}_1} 
                = (Y \otimes Y) \ket{\phi_1}
                = \frac{-1}{\sqrt{p_1}} (\sqrt{1-s^2}\sqrt{1-\gamma} \ket{0,0} + s \ket{1,1}) \\
                \ket{\tilde{\phi}_2}
                = \ket{1,0}
            \end{cases}
            \Rightarrow
            \inner{\tilde{\phi}_i}{\tilde{\phi}_j} = \delta_{i,j} \, .
        \end{equation}
        With this $\ket{\tilde{\phi}_i}$ and $\ket{\tilde{\phi}_j}$ are the eigenbasis of $\tilde{\rho}$.
        To compute the $R$ matrix, let us first observe some inner products
        \begin{align}
            \inner{\phi_1}{\tilde{\phi}_1} & = \frac{-2}{p_1}s\sqrt{1-s^2}\sqrt{1-\gamma} \, , \\
            \inner{\phi_2}{\tilde{\phi}_2} & = 0 \, ,\\
            \inner{\phi_2}{\tilde{\phi}_1} & = \inner{\phi_1}{\tilde{\phi}_2} = 0 \, .
        \end{align}
        Now the computation of $R^2$ becomes relatively simple, in that
        \begin{align}
            R^2 = \sqrt{\rho} \tilde{\rho} \sqrt{\rho} &
            = \left( \sqrt{1-(1-s^2) \gamma} \ketbra{\phi_1}{\phi_1} + \sqrt{1-s^2} \gamma \ketbra{\phi_2}{\phi_2} \right)
            \left( (1-(1-s^2) \gamma) \ketbra{\tilde{\phi}_1}{\tilde{\phi}_1} + (1-s^2) \gamma \ketbra{\tilde{\phi}_2}{\tilde{\phi}_2} \right) \\
            \nonumber & \qquad \qquad \hfill \times
            \left( \sqrt{1-(1-s^2) \gamma} \ketbra{\phi_1}{\phi_1} + \sqrt{1-s^2} \gamma \ketbra{\phi_2}{\phi_2} \right) 
            \\
            \nonumber& 
            = \sqrt{1-(1-s^2) \gamma} (1-(1-s^2) \gamma) \sqrt{1-(1-s^2) \gamma} \ket{\phi_1} \! \inner{\phi_1}{\tilde{\phi}_1} \inner{\tilde{\phi}_1}{\phi_1} \! \bra{\phi_1} \\
            \nonumber& 
            = (1-(1-s^2) \gamma)^2 |\ketbra{\phi_1}{\tilde{\phi}_1}|^2 \inner{\phi_1}{\phi_1} \\ & 
            = p_1^2 \frac{4}{p_1^2} s^2 (1-s^2) (1-\gamma) \inner{\phi_1}{\phi_1}
            = \left( \lambda_1^{(R)} \right)^2 \inner{\phi_1}{\phi_1} \, .
            \nonumber
        \end{align}
        Then the concurrence is simply 
        given by
        \begin{equation}
            C(\rho) = \lambda_1^{(R)} = 2 s \sqrt{1-s^2} \sqrt{1-\gamma} \, .
        \end{equation}

        Now that we know the value of the concurrence and that there is only one eigenvalue of $R$ contributing, we can turn towards the three unitaries of Wootters' procedure.
        For the first step we observe that $\inner{\phi_1}{\tilde{\phi}_1} = 
        - \frac{1}{p_1} C(\rho)$. 
        Then choosing 
        \begin{align}
            \ket{x_1} & = i K_i \ket{\psi} = i(s \ket{0,0} + \sqrt{1-s^2}\sqrt{1-\gamma} \ket{1,1}) \, , \\
            \ket{x_2} & = K_2 \ket{\psi} = \sqrt{1-s^2} \sqrt{\gamma} \ket{0,1} \, ,
        \end{align}
        we have a valid decomposition $\rho = \ketbra{x_1}{x_1} + \ketbra{x_2}{x_2}$ and that
        \begin{align}
            \inner{x_1}{\tilde{x}_1} & = (-i)^2 (-2) s \sqrt{1-s^2} \sqrt{1-\gamma} = \lambda_1^{(R)} = C(\rho) \, , \\
            \inner{x_2}{\tilde{x}_2} & = \inner{x_1}{\tilde{x}_2} = \inner{x_2}{\tilde{x}_1} = 0 \, .
        \end{align}
        The second unitary applies a phase $i$ to all elements of the first decomposition except the first.
        Observe that this would imply that the product of the first and second unitary applies a phase to all elements of the eigendecomposition
        \begin{equation}
            U_1 U_2 = i \II \, .
        \end{equation}
        Since the same phase to all Kraus operators does not change the entanglement properties, we can ignore this global phase and use $U_1 U_2 = \II$.

        For the last step, remember that we first want to compute the eigendecomposition of the real part of $M = y^\dagger \tilde{y} - C(\rho) y^\dagger y$.
        For this, we first compute 
        \begin{align}
            y^\dagger \tilde{y} & 
            = \inner{y_1}{\tilde{y}_1} \ketbra{1}{1} + \inner{y_2}{\tilde{y}_2} \ketbra{2}{2} 
            = C(\rho) \ketbra{1}{1} \\
            y^\dagger y &
            = \inner{y_1}{y_1} \ketbra{1}{1} + \inner{y_2}{y_2} \ketbra{2}{2}
            = (s^2 + (1-s^2)(1-\gamma))  \ketbra{1}{1} + (1-s^2)\gamma \ketbra{2}{2} \, .
        \end{align}
        Putting the elements of $M$ together, we can see that it is already diagonal.
        Then the orthogonal matrix $O$ diagonalizing the real part of $M$ is simply the identity. 
        The final step of our derivation was applying the Hadamard gate, which leads to our final result
        \begin{equation}
            U_{\mathrm{opt}} = \underbrace{U_\lambda}_{= \II} \underbrace{U_1 U_2}_{= \II} \underbrace{U_3}_{= \II H} \, ,
        \end{equation}
        concluding the proof.
    \end{proof}

\section{Vectorized Kraus operators and the least unitary unraveling}
\label{sec:optimizing_Kraus_projectiveness}

\subsection{Conceptual interpretation of the least unitary unraveling}

    One defining property of unitary matrices is that all their eigenvalues lie on the complex unit circle.
    Then, all their singular values are $1$.
    After renormalizing according to their Frobenius norm $\norm{U}_F = \sqrt{d}$, we obtain a uniform distribution of singular values $d^{-1/2}$.
    On the other hand, projectors have eigenvalues (and thus also singular values) of $1$ or $0$, where the number of non-zero singular values is the rank. 
    Given a rank-$r$ projector $P$, we have $\norm{P}_F = \sqrt{r}$
    and the normalized singular values $\{r^{-1/2}\}_{i=1}^r \cup \{0\}_{i=r+1}^d$.
    In particular, rank-$1$ projectors have only one non-zero singular value.

    Using the fact that the set of squared normalized singular values forms a valid probability distribution, we can try to quantify ``how unitary'' an operator is based on it. 
    As observed above, unitary operators are identified by a uniform distribution, projectors by a Heaviside function, and rank-$1$ operators by a delta function.
    We can then define a measure of the unitarity of an operator as the entropy of its squared normalized singular values, resulting in Eq.~\eqref{def:operator_unitarity}.
    Given an $d$-dimensional square operator $K$, 
    the set of its singular values $\{ \sigma_i \}_{i=1}^d$
    and its Frobenius norm $\norm{K}_F^2 = {\sum_{i=1}^d \sigma_i^2}$, we define its unitarity as
    \begin{equation}
        \operatorname{unitarity}(K) = - \sum_{i=1}^d \left( \frac{\sigma_i}{\norm{K}_F} \right)^2 \log_2 \left( \frac{\sigma_i}{\norm{K}_F} \right)^2 \, .
    \end{equation}
    Note that this definition is agnostic to the basis of the singular vectors, thus not bound to the hermicity of projectors and applicable to any operator.
    It can be seen as a unitary-invariant measure of projectiveness (or rather ``rank-$1$-ness'') versus unitarity.
    With this definition, we can also define the ensemble-averaged unitarity of a set of operators. 
    Given a set of operators $\{ K_i \}$, we define
    \begin{equation}
        \operatorname{unitarity}_{\mathrm{av}}(\{ K_i \}) = \frac{1}{d} \sum_i \norm{K_i}^2_F \operatorname{unitarity}(K_i) \, .
    \end{equation}

\subsection{From Wootters' optimal decomposition to least unitary Kraus decompositions}

    Having defined our measure of unitarity, we can see how to minimize the ensemble-averaged unitarity of a set of Kraus operators of a noise channel.
    The key is to observe that the singular values of the normalized Kraus operators are the Schmidt coefficients of the states in the decomposition obtained from applying the operator to the normalized maximally entangled state.
    Then, computing the optimal decomposition of the Choi state (with respect to the entanglement of formation) is equivalent to computing the optimal Kraus decomposition (with respect to the untarity).
    
    Let us first recall the definition of the entanglement of formation, Eq.\eqref{def:entanglement_of_formation}, in particular when phrased in terms of Schmidt coefficients of the state vectors of a decomposition.
    \begin{equation}
        E_{\mathrm{oF}}(\rho_{A:B})
        = \inf_{\substack{\{p_i, \ket{\phi_i}\} : \\ \sum_i p_i \ketbra{\phi_i}{\phi_i}=\rho}} \sum_{i} p_i E(\ket{\phi_i}_{A:B}) \, ,
    \end{equation}
    with the entanglement entropy, Eq.~\eqref{def:entanglement_entropy},
    \begin{equation}
        E(\ket{\psi}_{A:B}) 
        = -\Tr{}{\rho_A \log \rho_A}
        = - \sum_i s_i^2 \log_2 s_i^2
    \end{equation}
    for a state with Schmidt decomposition $\ket{\psi}_{A:B} = \sum_i s_i \ket{u_i}_A \otimes \ket{v_i}_B$.
    Next, the Choi state written in terms of Kraus operators is
    \begin{equation}
    \label{eq:def-Choi}
        {\choi}(\mathcal{N}) 
        = (\mathcal{N} \otimes \mathcal{I}) [\ketbra{\omega}{\omega}]
        = \sum_{i} ( K_{i} \otimes \II ) \ketbra{\omega}{\omega} ( K^{\dagger}_{i} \otimes \II ) 
        = \sum_{i} p_i \ketbra{\phi_i}{\phi_i}, 
        \quad
        \ket{\omega} = \frac{1}{\sqrt{d}} \sum_{j=0}^{d-1} \ket{jj}
        = \frac{1}{\sqrt{d}} \  \tikzineq{
            \draw [rounded corners, line width=1pt] (.4, .4) -- (0,.4) -- (0,0) -- (.4,0);
        } \, ,
    \end{equation}
    If we consider the non-normalized maximally entangled state instead, $\ket{\Omega} = \sum_{j=0}^{d-1} \ket{jj}$,
    then the states $K_i \ket{\Omega}$ are, by definition, the vectorizations of the Kraus operators.
    Informally, these states are the result of stacking the columns of $K_i$ into a large vector $\superket{K_i}$,
    \begin{alignat}{2}
        && \superket{K_i} & = \sum_{j, j' = 0}^{d-1} [K_i]_{j,j'} \ket{jj'} 
        = \tikzineq{
            \draw [black, line width=1pt, fill=none, color=black] (-2,.2) rectangle (-1.6,-.6);
            \Edge (-1.6,0)(-1.2,0)
            \Edge (-1.6,-.4)(-1.2,-.4)
            \Text[x=-.8, y=-.2]{$=$};
            \Vertex[x=0, size=.4, shape=rectangle, color=white, label=$K_i$]{Ki}
            \Edge (Ki)(.6,0)
            \draw[rounded corners, line width=1.5pt] (Ki) -- (-.5,0) -- (-.5,-.4) -- (.6,-.4);
        } \\ 
        \text{for} &&
         K_i & = \sum_{j, j' = 0}^{d-1} [K_i]_{j,j'} \ketbra{j}{j'}
         = \tikzineq{
            \Vertex[x=0, size=.4, shape=rectangle, color=white, label=$K_i$]{Ki}
            \Edge[label=$j$,position=above](Ki)(.6,0)
            \Edge[label=$j'$,position=above](Ki)(-.6,0)
        } \, .
    \end{alignat}
    With this, the Choi state in Eq.~\eqref{eq:def-Choi} can be written as 
    \begin{equation}
        {\choi}(\mathcal{N}) 
        = \frac{1}{d} \sum_{i} ( K_{i} \otimes \II ) \ketbra{\Omega}{\Omega} ( K^{\dagger}_{i} \otimes \II ) 
        = \frac{1}{d} \sum_{i} \superketbra{K_i}{K_i} \, .
    \end{equation}
    Consequently, the study of the unravelings of the Choi state (in order to find an optimal decomposition) is the study of the entanglement of the (normalized) vectorized Kraus operators of the channel.
    Let us look at this normalization. 
    First, the 2-norm of the state $\superket{K_i}$ is $ \| \superket{K_i} \|_2 = \sqrt{\sum_{j, j' = 0}^{d-1} |[K_i]_{j,j'}|^2}$.
    Also, we know that each state of the decomposition is normalized by its probability 
    \begin{equation}
        p_i = \| (K_i \otimes \II) \ket{\omega} \|^2_2 
        = \frac{1}{d} \bra{\Omega} (K_i^\dagger K_i \otimes \II)\ket{\Omega}
        = \frac{1}{d} \Tr{}{K_i^\dagger K_i} 
        = \frac{1}{d} \| K_i \|_F^2 \, .
    \end{equation}
    Through properties of matrix norms, we have that $ \| \superket{K_i} \|_2 = \sqrt{d \cdot p_i} = \| K_i \|_F$, which is the Frobenius norm of the Kraus operator.
    The elements of the decomposition of the Choi state are then
    \begin{equation}
        \ket{\phi_i} 
        = \frac{(K_i \otimes \II) \ket{\omega}}{\sqrt{p_i}} 
        = \frac{\sqrt{d}}{\norm{K_i}_F} (K_i \otimes \II) \frac{1}{\sqrt{d}} \ket{\Omega}
        = \frac{\superket{K_i}}{\| K_i \|_F}
    \end{equation}
    and their Schmidt coefficients are obtained by computing the singular values of the coefficients matrix, which is exactly $K_i/\| K_i \|_F$ after undoing the vectorization.
    This observation is even more obvious when looking at the diagrammatic representation of these quantities.
    \begin{alignat*}{2}
        && \text{Schmidt}(\superket{K_i}) &
        = \tikzineq{
            \draw [black, line width=1pt, fill=none, color=black] (-2,.2) rectangle (-1.6,-.6);
            \Edge (-1.6,0)(-1.2,0)
            \Edge (-1.6,-.4)(-1.2,-.4)
            \draw[dotted, line width=1.5pt] (-2.2, -.2) -- (-1.4, -.2);
        } 
        = \tikzineq{
            \Vertex[x=0, size=.4, shape=rectangle, color=white]{Ki}
            \Edge (Ki)(.6,0)
            \draw[rounded corners, line width=1.5pt] (Ki) -- (-.5,0) -- (-.5,-.4) -- (.6,-.4);
            \draw[dotted, line width=1.5pt] (0, .3) -- (0, -.3);
        } 
        = \tikzineq{
            \Vertex[x=0, y=.3, size=.3, shape=rectangle, color=white, label=$U$, position=90]{U}
            \Vertex[x=0, y=-.3, size=.3, shape=rectangle, color=white, label=$V^*$, position=270 ]{V}
            \Vertex[x=-.5, size=.1, shape=rectangle, color=white, label=$S$, position=-45, 
                    style={rotate=45}]{S}
            \Edge[bend=-60](U)(S)
            \Edge[bend=-60](S)(V)
            \Edge (U)(.4,.3)
            \Edge (V)(.4,-.3)
        } \\ 
        \sim &&
         \text{SVD}(K_i) &  
        = \tikzineq{
            \Vertex[x=0, size=.4, shape=rectangle, color=white]{Ki}
            \Edge (Ki)(.6,0)
            \Edge (Ki)(-.6,0)
            \draw[dotted, line width=1.5pt] (0, .3) -- (0, -.3);
        }
        = \tikzineq{
            \Vertex[x=-.5, size=.3, shape=rectangle, color=white, label=$V^\dagger$, position=90 ]{V}
            \Vertex[x=.5, size=.3, shape=rectangle, color=white, label=$U$, position=90]{U}
            \Vertex[x=0, size=.1, shape=rectangle, color=white, label=$S$, position=-135, 
                    style={rotate=45}]{lambda}
            \Edge (V)(U)
            \Edge (U)(.9,0)
            \Edge (V)(-.9,0)
        } \, .
    \end{alignat*}
    The vectorization is then more of a tool to gain physical intuition (and be allowed to use 
    the vocabulary established for quantum states) and does not 
    need to be performed in practice.

    We now have that the Schmidt coefficients of the state vectors of the decomposition of the Choi state are the normalized singular values of the corresponding Kraus operator
    \begin{equation}
        s_j (\ket{\phi_i}) 
        = \sigma_j\left(\frac{K_i}{\norm{K_i}_F}\right)
        = \frac{\sigma_j(K_i)}{\norm{K_i}_F} \, .
    \end{equation}
    Note now that the entanglement entropy of each one of the state vectors is the unitarity of the respective Kraus operator
    \begin{equation}
        E(\ket{\phi_i}) 
        = - \sum_j (\ket{\phi_i})^2 \log_2 (\ket{\phi_i})^2
        = - \sum_j \left(\frac{\sigma_j(K_i)}{\norm{K_i}_F}\right)^2 \log_2 \left(\frac{\sigma_j(K_i)}{\norm{K_i}_F}\right)^2
        = \operatorname{unitarity}(K_i)
    \end{equation}
    and the average quantities also match
    \begin{equation}
        E_{\mathrm{av}}(\{ p_i, \ket{\phi_i} \}) 
        = \sum_{i} p_i E(\ket{\phi_i}) 
        = \sum_i \frac{\norm{K_i}^2_F}{d} \operatorname{unitarity}(K_i)
        = \operatorname{unitarity}_{\mathrm{av}}(\{ K_i \}) \, .
    \end{equation}
    With this, finding the 
    least unitary decomposition of the channel is the same as finding the optimal decomposition of the Choi state of the channel. 
    For single-qubit channels, this can be done by applying the procedure described in Section~\ref{sec:optimal_unitary} to the noisy Bell state.
    Thus, the ensemble-averaged entanglement entropy of the Choi state is maximized when the Kraus operators are (proportional to) unitary matrices, and is minimized when they are as ``non-unitary'' as possible, which in our definition means being as close to rank-$1$ operators as possible.

\section{Exact decomposition of Trotterized Lindbladian evolutions}

    In this section, we will show how to construct a (noisy) quantum circuit simulating open system dynamics described by Lindbladian equations. 
    We first Trotterize the vectorized superoperator 
    to obtain local terms and then show how to obtain the unitaries of the circuit and an exact Kraus decomposition of the noise channels induced by the jump processes (Section~\ref{sec:Lindblad_to_circuits}).
    For two typical choices of jump operators, we also show that they exactly result in the well-known noise channels of dephasing and amplitude damping (Section~\ref{sec:noise_jump_equivalence}).
    
\subsection{From Trotterized Lindbladian evolutions to noisy circuits}
\label{sec:Lindblad_to_circuits}
    
    The Lindbladian description of the evolution of open quantum systems is given by
    \begin{equation}
        \label{eq:def-Lindbladian}
        \frac{\partial \rho}{\partial t} 
        = -\ii[H, \rho] + \sum_{j} \gamma_j \left( c_j \rho c_j^\dagger -\frac{1}{2} \{ c_j^\dagger c_j, \rho \}\right) 
        = \mathcal{L}[\rho] \, .
    \end{equation}
    Here, $\mathcal{L}[\cdot]$ is the superoperator that returns its time derivative when applied to the density matrix, giving rise to a quantum dynamical semi-group.
    One can vectorize Eq.~\eqref{eq:def-Lindbladian} such that a vector represents the density matrix, and the superoperator can be written as a matrix. 
    Then, the evolution is given by the matrix-vector product $\partial_t \superket{\rho} = L \superket{\rho}$, where
    \begin{alignat}{3}
        && \superket{\rho} & 
        = (\rho \otimes \II) \ket{\Omega}, 
        \qquad 
        \ket{\Omega} 
        = \sum_i \ket{i,i} \\
        \text{and} &&
        L & 
        = -\ii(H \otimes \II - \II \otimes H^\top)
        + \sum_{j} \gamma_j \left( c_j \otimes c_j^* -\frac{1}{2} c_j^\dagger c_j \otimes \II -\frac{1}{2} \II \otimes c_j^\top c_j^* \right) \, .
    \end{alignat}
    Assuming a time-independent evolution, the solution of the differential equation is $\superket{\rho(t)} = \ee^{Lt} \superket{\rho(0)}$.
    The matrix $\ee^{Lt}$ is a global operator 
    (and thus a $4^n \times 4^n$ matrix), but it can be decomposed into local operations by Trotterization. 
    We first rewrite the exponential encoding of the evolution at time $t$ as the product of evolutions of time $dt$, i.e.,
    \begin{equation}
        \ee^{Lt} = \prod_{\tau=1}^{t/dt} \ee^{Ldt} \, .
    \end{equation}
    Then, the exponential of sums of terms can be written as a product of exponentials, which is accurate up to quadratic order in $dt$.
    In particular, this allows us to separate the coherent and the incoherent components of the evolution
    \begin{equation}
        \ee^{Ldt} = \ee^{-\ii(H \otimes \II - \II \otimes H^\top) dt} {\prod_j} \ee^{\gamma_j \left( c_j \otimes c_j^* -\frac{1}{2} c_j^\dagger c_j \otimes \II -\frac{1}{2} \II \otimes c_j^\top c_j^* \right)dt} + O(dt^2) \, .
    \end{equation}
    Assuming that the Hamiltonian is composed of two-local, nearest-neighbor terms 
    \begin{equation}
        H = \sum_{i=1}^n h^{(2)}_{i, i+1} + \sum_{i=1}^n h^{(1)}_{i} \, ,
    \end{equation}
    with $ h^{(2)}_{i, i+1} \in \CC^{4 \times 4}$, $h^{(1)}_{i} \in \CC^{2 \times 2}$ 
    and that the jump operators are acting on single sites ($c_j \in \CC^{2 \times 2}$),
    the time evolution can be Trotterized into a brickwork circuit structure with local noise. 
    Let us define the Hamiltonian terms acting on neighboring pairs of qubits as
    \begin{equation}
        H_{i, i+1} = h^{(2)}_{i, i+1} + \frac{1}{2} h^{(1)}_{i} \otimes \II + \frac{1}{2} \II \otimes h^{(1)}_{i+1}
        \quad \Rightarrow \quad 
        H = \sum_{i \text{ odd}} H_{i, i+1} + \sum_{i \text{ even}} H_{i, i+1}
    \end{equation}
    and the corresponding vectorized superoperators as
    \begin{align}
        L_{i, i+1} & = -\ii \left( H_{i, i+1} \otimes \II - \II \otimes H^{\top}_{i, i+1} \right), \\
        L^{\mathrm{jump}}_j & = \gamma_j \left( c_j \otimes c_j^* -\frac{1}{2} c_j^\dagger c_j \otimes \II -\frac{1}{2} \II \otimes c_j^\top c_j^* \right) \, .
    \end{align}
    Then, we can write
    \begin{equation}
        \ee^{Ldt} 
        = 
        \left( \bigotimes_{i \text{ odd}} \ee^{ L_{i, i+1} dt} \right) 
        \left( \bigotimes_{j=1}^n \ee^{L^{\mathrm{jump}}_j dt/2} \right) 
        \left( \bigotimes_{i \text{ even}} \ee^{ L_{i, i+1} dt} \right) 
        \left( \bigotimes_{j=1}^n \ee^{L^{\mathrm{jump}}_j dt/2} \right) 
        + O(dt^2) \, .
    \end{equation}
    The result is then a layer of coherent evolution given by local Hamiltonians acting on the odd pairs of neighboring qubits, followed by a layer of single-site incoherent processes and then a layer of Hamiltonian evolution on the even pairs of qubits, again followed by a layer of local incoherent processes. 
    This fits nicely in the brickwork layout, with noise typically considered in one-dimensional noisy quantum circuits.
    One could also use higher-order Trotter formulas (in particular the second order, which comes at no additional computational cost), but as it is not the focus of this work, we leave it to the reader.
    Next, we will explain what the respective unitary evolutions and noise processes look like.

    For the coherent part, one directly retrieves the usual unitary evolution generated by a Hamiltonian. 
    Take
    \begin{align}
        \ee^{L^{\mathrm{coh}} dt} & 
        = \ee^{-\ii(H \otimes \II - \II \otimes H^\top) dt}
        = \ee^{-\ii(H \otimes \II)dt} \ee^{\ii(\II \otimes H^\top) dt} \\ & 
        = (\ee^{-\ii H dt} \otimes \II) (\II \otimes \ee^{\ii H^\top dt})
        = \ee^{-\ii H dt} \otimes \ee^{\ii H^\top dt} \, ,
        \nonumber
    \end{align}
    it is the vectorization of the superoperator $\mathcal{U}[\cdot] = \ee^{-\ii H dt} (\cdot) (\ee^{-\ii H dt})^\dagger$. 
    
    The Kraus operators of the incoherent 
    part can be obtained by computing the eigendecomposition of the Choi state of the corresponding map. 
    In tensor network language, this translates to a certain eigendecomposition of the vectorized four-legged jump tensor.
    Reshaping to group the indices of the first and second qubits, respectively
    \begin{equation}
        \operatorname{reshape}(\ee^{L_j dt}) = (\bra{\Omega} \otimes \II \otimes \II) ( \II \otimes \ee^{L_j dt} \otimes \II) (\II \otimes \II \otimes \ket{\Omega}) \, ,
    \end{equation}
    the eigendecomposition in terms of subnormalized eigenvectors is given by
    \begin{equation}
        \operatorname{reshape}(\ee^{L_j dt}) = \sum_i \superketbra{K_i}{K_i^*} \, ,
    \end{equation}
    where $\superket{K_i} = (K_i \otimes \II) \ket{\Omega}$ are the vectorized Kraus operators.
    NThe reshaped tensor is indeed the Choi operator of the same quantum channel; thus, we reformulated the well-established way of computing Kraus decompositions of channels. 

\subsection{Noise model for some typical jump operators}
\label{sec:noise_jump_equivalence}

    For some typical choices of jump operators, the resulting channels can be identified as known noise channels.
    We will see two examples.
    
    \begin{remark}
        The noise channel resulting from a Lindbladian jump evolution with jump operator $c = \left( \begin{smallmatrix}
            1 & 0 \\ 0 & 0
        \end{smallmatrix}\right)$, 
        jump rate $\gamma$ and Trotterization time step $dt$ is a dephasing channel with rate $p^{\mathrm{DF}} = 1 - \ee^{-\gamma dt/ 2}$. 
        Furthermore, the decomposition obtained is the typical decomposition in terms of $\II$ and $Z$.
    \end{remark}

    \begin{proof}
        We start with some properties of the jump operator
        \begin{equation}
            c = \left( \begin{smallmatrix}
                1 & 0 \\ 0 & 0
            \end{smallmatrix}\right) = \ketbra{0}{0}
            \quad \Rightarrow \quad
            c^\dagger = c^\top = c^* = c^\dagger c = c \, .
        \end{equation}
        From this, the vectorized Lindbladian superoperator is
        \begin{align}
            L & 
            = \gamma \left( c \otimes c^* -\frac{1}{2} c^\dagger c \otimes \II -\frac{1}{2} \II \otimes c^\top c^* \right) \\
            \nonumber 
            & 
            = \gamma \left( \ketbra{0}{0} \otimes \ketbra{0}{0} -\frac{1}{2} \ketbra{0}{0} \otimes \II -\frac{1}{2} \II \otimes \ketbra{0}{0} \right) \\
            \nonumber& 
            = \gamma \left( \ketbra{0}{0} \otimes \ketbra{0}{0} -\frac{1}{2} \ketbra{0}{0} \otimes \ketbra{0}{0} -\frac{1}{2} \ketbra{0}{0} \otimes \ketbra{1}{1} - \frac{1}{2} \ketbra{0}{0} \otimes \ketbra{0}{0} - \frac{1}{2} \ketbra{1}{1} \otimes \ketbra{0}{0}\right) \\
            \nonumber& 
            = -\frac{\gamma}{2} \left( \ketbra{0}{0} \otimes \ketbra{1}{1} + \ketbra{1}{1} \otimes \ketbra{0}{0}\right) + 0 \left( \ketbra{0}{0} \otimes \ketbra{0}{0} + \ketbra{1}{1} \otimes \ketbra{1}{1}\right) \, .
            \nonumber
        \end{align}
        The last equality is already the eigendecomposition 
        of $L$.
        Then, the single time step superoperator can be computed by the exponentiation of the eigenvalues (note not to forget the $0$ eigenvalues)
        \begin{equation}
            \ee^{Ldt} = \ee^{-\frac{\gamma dt}{2}} \left( \ketbra{0}{0} \otimes \ketbra{1}{1} + \ketbra{1}{1} \otimes \ketbra{0}{0}\right) + \ee^0 \left( \ketbra{0}{0} \otimes \ketbra{0}{0} + \ketbra{1}{1} \otimes \ketbra{1}{1}\right).
        \end{equation}
        Further reshaping the tensor results in
        \begin{align}
            \operatorname{reshape}(\ee^{L}) & 
            = \ee^{-\frac{\gamma dt}{2}} \left( \ketbra{0,0}{1,1} + \ketbra{1,1}{0,0}\right) + \ee^0 \left( \ketbra{0,0}{0,0} + \ketbra{1,1}{1,1}\right) \\ 
            \nonumber & 
            = (1 + \ee^{-\gamma dt/ 2}) \ketbra{\Phi^+}{\Phi^+} + (1 - \ee^{-\gamma dt / 2}) \ketbra{\Phi^-}{\Phi^-} \, .
        \end{align}
        Since $\ket{\Phi^+} = \frac{1}{\sqrt{2}}\superket{\II}$ and $\ket{\Phi^-} = \frac{1}{\sqrt{2}}\superket{Z}$, the retrieved Kraus operators are given by
        \begin{equation}
            \left\{ K_i \right\}_i = \left\{ \sqrt{\frac{1 + \ee^{-\gamma dt/ 2}}{2}} \II, \sqrt{\frac{1 - \ee^{-\gamma dt/ 2}}{2}} Z \right\} \, ,
        \end{equation}
        that are the same Kraus operators as for the dephasing channel $\left \{ \sqrt{1-\frac{p^{\mathrm{DF}}}{2}} \II, \sqrt{\frac{p^{\mathrm{DF}}}{2}} Z \right \}$, with $p^{\mathrm{DF}} = 1 - \ee^{-\gamma dt/ 2}$.
    \end{proof}

    \begin{remark}
        The noise channel resulting from a Lindbladian jump evolution with jump operator $c = \left( \begin{smallmatrix}
            0 & 1 \\ 0 & 0
        \end{smallmatrix}\right)$ 
        with jump rate $\gamma$ and Trotterization time step $dt$ is an amplitude damping channel with noise-rate $\gamma^{\mathrm{AD}} = 1 - \ee^{- \gamma dt}$. 
        Furthermore, the decomposition obtained is the typical decomposition in terms of 
        $\left( \begin{smallmatrix} 
            1 & 0 \\ 0 & \sqrt{1-\gamma^{\mathrm{AD}}}
        \end{smallmatrix} \right)$ and 
        $\left( \begin{smallmatrix} 
            0 & \sqrt{\gamma^{\mathrm{AD}}} \\ 0 & 0
        \end{smallmatrix} \right)$.
    \end{remark}

    \begin{proof}
        For $c = \left( \begin{smallmatrix} 1 & 0 \\ 0 & 0 \end{smallmatrix} \right)$, we have
        \begin{equation}
            c^* = c = \ketbra{0}{1}
            \quad \text{and} \quad
            c^\dagger c = (c^\dagger c)^\top = \ketbra{1}{1} \, 
        \end{equation}
        and thus
        \begin{align}
            L & 
            = \gamma \left( c \otimes c^* - \frac{1}{2} c^\dagger c \otimes \II - \frac{1}{2} \II \otimes c^\top c^* \right) \\ 
            \nonumber
            & 
            = \gamma \left( \ketbra{0}{1} \otimes \ketbra{0}{1} - \frac{1}{2} \ketbra{1}{1} \otimes \II - \frac{1}{2} \II \otimes \ketbra{1}{1} \right) \\ 
            \nonumber& 
            = \gamma \ketbra{0}{1} \otimes \ketbra{0}{1}
            - \frac{\gamma}{2} \left( \ketbra{1}{1} \otimes \ketbra{0}{0} + \ketbra{1}{1} \otimes \ketbra{1}{1} + \ketbra{0}{0} \otimes \ketbra{1}{1} + \ketbra{1}{1} \otimes \ketbra{1}{1} \right) 
            \nonumber
        \end{align}
        and 
        \begin{equation}
            \ee^{Ldt} = \ketbra{0,0}{0,0} + \ee^{-\gamma dt/2} \ketbra{0,1}{0,1} + \ee^{-\gamma dt/2} \ketbra{1,0}{1,0} + \ee^{-\gamma dt} \ketbra{1,1}{1,1} + (1-\ee^{-\gamma dt}) \ketbra{0,0}{1,1} \, .
        \end{equation}
        This can be obtained through symbolic computing (e.g., with Mathematica) or diagonalization 
        (note that the eigenvectors are not orthogonal in this case).
        With this, the reshaping results in 
        \begin{equation}
            \operatorname{reshape}(\ee^{L}) 
            = \ketbra{0,0}{0,0} + \ee^{-\gamma dt/2} \ketbra{0,0}{1,1} + \ee^{-\gamma dt/2} \ketbra{1,1}{0,0} + \ee^{-\gamma dt} \ketbra{1,1}{1,1} + (1-\ee^{-\gamma dt}) \ketbra{0,1}{0,1} ,
        \end{equation}
        which has eigenvalues 
        $\{ 1-\ee^{-\gamma dt}, 1+\ee^{-\gamma dt}, 0, 0\}$ 
        and eigenvectors
        \begin{equation}
            \left\{ \ket{0,1}, \frac{1}{\sqrt{1+\ee^{\gamma dt}}}(\ee^{\gamma dt/2} \ket{0,0} + \ket{1,1}), \ket{1,0}, \frac{1}{\sqrt{1+\ee^{-\gamma dt}}}(\ee^{-\gamma dt/2} \ket{0,0} + \ket{1,1}) \right\}.
        \end{equation}
        These result in the Kraus operators of the noise channel being given by
        \begin{equation}
            \left\{ 
            \sqrt{1-\ee^{-\gamma dt}} \left( \begin{smallmatrix}
                0 & 1 \\ 0 & 0
            \end{smallmatrix} \right), 
            \sqrt{\frac{1+\ee^{-\gamma dt}}{1+\ee^{\gamma dt}}}  \left( \begin{smallmatrix}
                \ee^{\gamma dt/2} & 0 \\ 0 & 1
            \end{smallmatrix} \right)
            \right\}
            = \left\{ 
            \left( \begin{smallmatrix}
                0 & \sqrt{\gamma^{\mathrm{AD}}} \\ 0 & 0
            \end{smallmatrix} \right), 
            \left( \begin{smallmatrix}
                1 & 0 \\ 0 & \sqrt{1-\gamma^{\mathrm{AD}}}
            \end{smallmatrix} \right)
            \right\}
        \end{equation}
        with $\gamma^{\mathrm{AD}} = 1-\ee^{-\gamma dt}$, 
        which can be seen after using that $\frac{1+\ee^{-\gamma dt}}{1+\ee^{\gamma dt}} = \ee^{-\gamma dt} $.
    \end{proof}

\section{Additional numerical results}
\label{sec:auxiliary_numerics}

    In Fig.~\ref{fig:bad_random_circuits} we show further results of numerical simulations where the {Haar Optimal} unraveling performs similarly to our locally entanglement-optimal unraveling.
    
    \begin{figure}
        \centering
        \includegraphics[width=\linewidth]{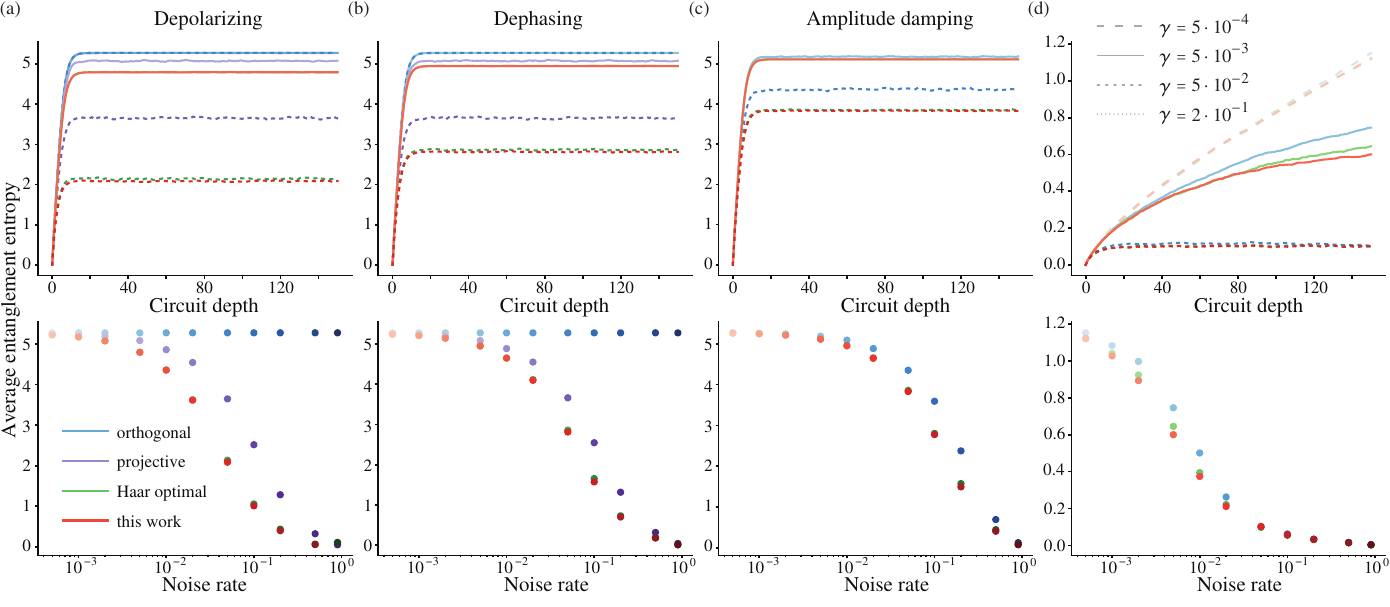}
        \caption{
            Simulation of the average entanglement entropy of states evolving under noisy random circuits. 
            Average entanglement entropy with respect to circuit depth (top row) and average entanglement entropy at depth ($150$) with respect to noise rate (bottom row).
            Brickwork circuits composed of independently sampled Haar random two-qubit gates with depolarizing noise (a), dephasing noise (b) or amplitude damping noise (c) 
            or brickwork circuit of low-entangling random gates without local rotation under amplitude damping noise (d).
        }
        \label{fig:bad_random_circuits}
    \end{figure}

\section{Greediness and effect of local bases}
\label{sec:greediness_discussion}

    \begin{figure}[h!]
        \centering
        \includegraphics[width=\linewidth]{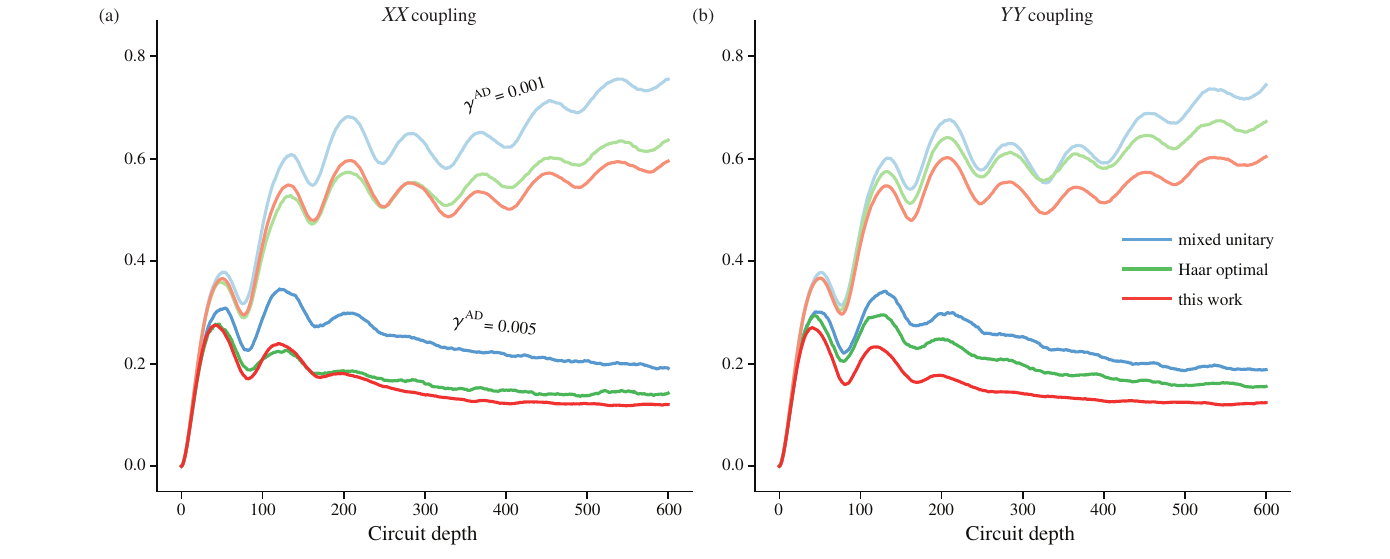}
        \caption{
            Comparison of the simulation of two Hamiltonian when switching Pauli $X$ with Pauli $Y$ operators. 
            Simulations for $H = \sum_i X_iX_{i+1} + a X_i + b Y_i + c Z_i$ (a) or $H = \sum_i Y_iY_{i+1} + b X_i + a Y_i + c Z_i$ (b) under amplitude damping noise, initialized in the $0$ state, with a time step of $0.01$, for various noise rates and three unravelings. 
        }
        \label{fig:XvsY}
    \end{figure}
    
    In this section we discuss how the remaining freedom of choice within the set of ``optimal'' unravelings might have an effect on subsequent unitary evolutions and noise unravelings. 
    In particular, we use the example of the evolution under a Hamiltonian under amplitude damping noise, where all Pauli $X$ and $Y$ operators are switched. 
    Initializing the state in the $0$ state and applying amplitude damping noise isolates the $Z$ axis of the Bloch sphere as special, but the symmetry makes the $X$ and $Y$ axes merely a definition choice. 
    Under this observation, the simulations should be identical (up to statistical fluctuations of the sampling) if replacing all $X$ operators in the Hamiltonian with $Y$ and vice-versa. 
    As seen in Fig.~\ref{fig:XvsY} however, this is not the case. 
    The {Orthogonal} and optimized unravelings show similar evolutions in both settings, but the fixed {Haar Optimal} one performs significantly better for the Hamiltonian with $XX$ coupling compared to the $YY$ coupling. 
    This could be due to the fact that the fixed {Haar Optimal} unraveling is one of the possible ``least unitary unravelings'' (see Section~\ref{sec:least_unitary_unraveling}), but not the unique one. 
    Indeed derivations of optimal decompositions do in some cases leave a certain freedom of choice. 
    Take the {Haar Optimal} decomposition of amplitude damping 
    \begin{equation}
        \left\{ K_i\right\}_{i=1}^2 = \left \{ \left( \begin{smallmatrix} 1 & 0 \\ 0 & \sqrt{1-\gamma} \end{smallmatrix} \right), \left( \begin{smallmatrix} 0 & \sqrt{\gamma} \\ 0 & 0 \end{smallmatrix} \right) \right \}
        \quad 
        \xrightarrow{ U = \frac{1}{\sqrt{2}}
        \left( \begin{smallmatrix} 
           1 & 1 \\
           1 & -1
        \end{smallmatrix} \right)}
        \quad 
        \left\{ K_i^{(U)}\right\}_{i=1}^2 = \left \{ \frac{1}{\sqrt{2}} \left( \begin{smallmatrix} 1 & \sqrt{\gamma} \\ 0 & \sqrt{1-\gamma} \end{smallmatrix} \right), \frac{1}{\sqrt{2}} \left( \begin{smallmatrix} 1 & -\sqrt{\gamma} \\ 0 & \sqrt{1-\gamma} \end{smallmatrix} \right) \right \} \, .
    \end{equation}
    One can add a phase to the unitary, resulting in a different optimal decomposition
    \begin{equation}
        \left\{ K_i\right\}_{i=1}^2 = \left \{ \left( \begin{smallmatrix} 1 & 0 \\ 0 & \sqrt{1-\gamma} \end{smallmatrix} \right), \left( \begin{smallmatrix} 0 & \sqrt{\gamma} \\ 0 & 0 \end{smallmatrix} \right) \right \}
        \quad 
        \xrightarrow{ U = \frac{1}{\sqrt{2}}
        \left( \begin{smallmatrix} 
           1 & 1 \\
           \ii & -\ii
        \end{smallmatrix} \right)}
        \quad 
        \left\{ K_i^{(U)}\right\}_{i=1}^2 = \left \{ \frac{1}{\sqrt{2}} \left( \begin{smallmatrix} 1 & \ii\sqrt{\gamma} \\ 0 & \sqrt{1-\gamma} \end{smallmatrix} \right), \frac{1}{\sqrt{2}} \left( \begin{smallmatrix} 1 & -\ii\sqrt{\gamma} \\ 0 & \sqrt{1-\gamma} \end{smallmatrix} \right) \right \} \, .
    \end{equation}
    This freedom could in principle also be optimized over.

    Now, let us relate this to the greediness of the optimization. 
    For the unraveling of a single time step, both unravelings are equivalent as they are ``equally unitary''. 
    However, let us assume one can find two such unravelings, such that the first projects more on the $X$ basis while the other projects more on the $Y$ basis. 
    Then, the sampled state from such unravelings is the input state of the following unitary layer. 
    Under an evolution with $XX$ coupling, a state which has been projected on the eigenbasis of the $X$ operator will not generate entanglement, while a state projected on the eigenbasis of the $Y$ operator will.
    This could explain why the fixed {Haar Optimal} unraveling seems to have a preferred basis.

\stopcontents[app]

\end{document}